\newtheorem{thm}{Theorem}[section]
\newtheorem{pro}{Proposition}[section]
\newtheorem{lem}{Lemma}[section]
\newtheorem{cor}{Corollary}[section]
\newtheorem{Ap}{Assumption}[section]
\theoremstyle{definition}
\newtheorem{remark}{Remark}[section]
\newcommand{\mfn}{\mathfrak{n}}
\newcommand{\mfc}{\mathfrak{c}}
\newcommand{\mfa}{\mathfrak{r}}
\newcommand{\mfb}{\mathfrak{b}}
\newcommand{\mfm}{\mathfrak{m}}
\newcommand{\mfv}{\mathfrak{v}}
\newcommand{\mfw}{\mathfrak{w}}
\newcommand{\mfz}{\mathfrak{z}}
\newcommand{\bbA}{\boldsymbol{A}}
\newcommand{\bbB}{\boldsymbol{B}}
\newcommand{\bbD}{\boldsymbol{D}}
\newcommand{\bbF}{\boldsymbol{F}}
\newcommand{\bbH}{\boldsymbol{H}}
\newcommand{\bbI}{\boldsymbol{I}}
\newcommand{\bbM}{\boldsymbol{M}}
\newcommand{\bbP}{\boldsymbol{P}}
\newcommand{\bbR}{\boldsymbol{R}}
\newcommand{\bbS}{\boldsymbol{S}}
\newcommand{\bbT}{\boldsymbol{T}}
\newcommand{\bbU}{\boldsymbol{U}}
\newcommand{\bbV}{\boldsymbol{V}}
\newcommand{\bbW}{\boldsymbol{W}}
\newcommand{\bbX}{\boldsymbol{X}}
\newcommand{\bbY}{\boldsymbol{Y}}
\newcommand{\bba}{\boldsymbol{a}}
\newcommand{\bbe}{\boldsymbol{e}}
\newcommand{\bbi}{\boldsymbol{i}}
\newcommand{\bbm}{\boldsymbol{m}}
\newcommand{\bbq}{\boldsymbol{q}}
\newcommand{\bbr}{\boldsymbol{r}}
\newcommand{\bbv}{\boldsymbol{v}}
\newcommand{\bbw}{\boldsymbol{w}}
\newcommand{\bbx}{\boldsymbol{x}}
\newcommand{\bby}{\boldsymbol{y}}
\newcommand{\bbz}{\boldsymbol{z}}
\newcommand{\bbve}{\boldsymbol{\varepsilon}}
\newcommand{\bbzeta}{\boldsymbol{\zeta}}
\newcommand{\bbtau}{\boldsymbol{\tau}}
\newcommand{\bbGa}{\boldsymbol{\Gamma}}
\newcommand{\bbPi}{\boldsymbol{\Pi}}
\newcommand{\bbXi}{\boldsymbol{\Xi}}
\newcommand{\bbSig}{\boldsymbol{\Sigma}}
\newcommand{\bbOme}{\boldsymbol{\Omega}}
\newcommand{\mbC}{\mathbb{C}}
\newcommand{\mbE}{\mathbb{E}}
\newcommand{\mbH}{\mathbb{H}}
\newcommand{\mbP}{\mathbb{P}}
\newcommand{\mbM}{\mathbb{M}}
\newcommand{\mbN}{\mathbb{N}}
\newcommand{\mbT}{\mathbb{T}}
\newcommand{\mbR}{\mathbb{R}}
\newcommand{\mbZ}{\mathbb{Z}}
\newcommand{\mbU}{\mathbb{U}}
\newcommand{\mrO}{\mathrm{O}}
\newcommand{\mro}{\mathrm{o}}
\newcommand{\mcA}{\mathcal{A}}
\newcommand{\mcB}{\mathcal{B}}
\newcommand{\mcC}{\mathcal{C}}
\newcommand{\mcN}{\mathcal{N}}
\newcommand{\mcV}{\mathcal{V}}
\newcommand{\mcU}{\mathcal{U}}
\newcommand{\mcE}{\mathcal{E}}
\newcommand{\mcF}{\mathcal{F}}
\newcommand{\mcD}{\mathcal{D}}
\newcommand{\mcG}{\mathcal{G}}
\newcommand{\mcI}{\mathcal{I}}
\newcommand{\mcS}{\mathcal{S}}
\newcommand{\mcT}{\mathcal{T}}
\newcommand{\mcR}{\mathcal{R}}
\newcommand{\hla}{\hat{\lambda}}
\newcommand{\hSig}{\hat{\boldsymbol{\Sigma}}}
\newcommand{\hR}{\hat{\boldsymbol{R}}}
\newcommand{\hF}{\hat{F}}
\newcommand{\fM}{\mathcal{M}}
\newcommand{\fE}{\mathcal{E}}
\newcommand{\mfC}{\mathfrak{C}}
\newcommand{\fs}{\mathfrak{s}}
\newcommand{\fe}{\mathfrak{e}}
\newcommand{\tisig}{\tilde{\sigma}}
\newcommand{\tichi}{\tilde{\chi}}
\newcommand{\tiF}{\tilde{F}}
\newcommand{\tiW}{\tilde{\boldsymbol{W}}}
\newcommand{\tsigma}{\beta}
\newcommand{\tF}{\tilde{F}}
\newcommand{\talpha}{\tilde{\alpha}}
\newcommand{\mtm}{\mathtt{m}}
\newcommand{\tr}{\operatorname{Tr}}
\newcommand{\diag}{\operatorname{diag}}
\newcommand{\Cov}{\operatorname{Cov}}
\newcommand{\Var}{\operatorname{Var}}
\title{Testing High-dimensional Nonstationary Time Series}
\author{Ruihan Liu\thanks{rhliu@connect.hku.hk}\quad and\quad Chen Wang\thanks{stacw@hku.hk}\\
	\small{School of Computing and Data Science, The University of Hong Kong}}
\date{}
\begin{document}
\maketitle
\abstract{In this article, we first establish the joint central limit theorem (CLT) for the extreme eigenvalues of the sample correlation matrix of high-dimensional random walks with cross-sectional dependence. We further investigate the asymptotic spectral properties of the sample correlation matrix of high-dimensional autoregressive processes. To apply our theoretical results, we propose a novel high-dimensional  unit root test and develop a forward sequential test to determine the number of unit roots in high-dimensional time series data. Finally, we conduct an empirical study of the purchasing power parity (PPP) hypothesis in high-dimensional settings.}

\vspace{6mm}
\noindent
{\bf Key words:} sample correlation matrix, extreme eigenvalues, central limit theorem, high-dimensional time series, unit root test
\newpage
\tableofcontents
\clearpage
\section{Introduction}\label{Main Sec of introduction}
Sample covariance matrices and sample correlation matrices are popular tools in many statistical inference problems. As the entries of the sample correlation matrices are standardized, sample-correlation based statistics have important advantages when dealing with certain high-dimensional problems. For example, \cite{gao2017high} and \cite{jiang2004asymptotic} argued that the advantage of using sample correlation matrices over sample covariance matrices is that the former does not require the first two population moments, which are usually unknown in real applications. Recently, \cite{fan2022estimating} showed that the estimation of number of factors in high-dimensional factor models using eigenvalues of sample covariance matrices is generally inconsistent due to heterogeneous sample variances. To this end, they developed an adjusted method using eigenvalues of sample correlation matrices. 

Due to technical difficulties, only a few of existing literature focus on high-dimensional sample correlation matrices of \emph{cross-sectional dependent} data; that is, data with non-diagonal population covariance matrices. Specifically, assuming that the population covariance matrices of finite rank factors are non-diagonal, \cite{morales2021asymptotics} studied the extreme eigenvalues of the sample correlation matrices of high-dimensional spiked covariance models. \cite{jiang2019determinant} and \cite{zheng2019test} tested whether the population correlation matrix is a specific matrix. \cite{yin2021spectral,yin2023central} comprehensively studied asymptotic spectral properties and CLT for linear spectral statistics of rescaled sample correlation matrices, assuming a general structure of population covariance matrices. However, assuming cross-sectional independence is rather restrictive for real applications. As mentioned in \cite{o1998overvaluation} and \cite{breitung2008unit}, ignoring the cross-sectional dependence would cause the PPP over-valued.

In this article, we establish the asymptotic behavior of extreme eigenvalues of the sample correlation matrix of high-dimensional nonstationary time series data. To our best knowledge, this is the first theoretical investigation on such topic. Let \(\bbX=[X_1,\cdots,X_T]\in\mbR^{n\times T}\) be the observed data matrix, where the data dimension \(n\) and the number of samples \(T\) tend to infinity proportionally. Our main theoretical contributions are as follows.
\begin{enumerate}
    \item When \(X_t\) is a random walk, we establish the joint CLT for the first \(K\in\mbN^+\) largest eigenvalues of the sample correlation matrix allowing to be cross-sectionally dependent.
    \item We investigate the asymptotic spectral properties of the sample correlation matrix of \(\bbX\) when \(X_t\) is generated by more general high-dimensional autoregressive (AR) processes.
\end{enumerate}    
To establish the asymptotic behavior of extreme eigenvalues of the sample correlation matrix, we need to overcome four main technical challenges.
\begin{enumerate}
    \item \textit{High-dimensionality.} For the fixed dimension scenario, the sample correlation matrices can be estimated entry-wise. This method does not work for the high-dimensional case, since the estimation error could be significant as the curse of dimensionality.
    \item \textit{Nonlinearity.} The entries of correlation matrices are standardized as ratios of quadratic forms of random variables. Such nonlinear structure dramatically complicates the theoretical analysis, as one needs to deal with both the numerator and the denominator simultaneously.
    \item \textit{Dependence.} Besides the cross-sectional dependence among \(X_t\)'s entries, there is also temporal dependence when $X_t$ generated by AR processes, which further complicates our analysis.
    \item \textit{Nonstationarity.} Since \(X_t\) is nonstationary, the spectral norm of the population covariance matrix of \(X_t\) will tend to infinity as \(t\to\infty\). This is distinct from the standard framework of random matrix theory (RMT), so the general RMT tools do not work for our situation.
\end{enumerate}
    Notably, all existing literature of high-dimensional sample correlation matrices assumes that \(X_t\)'s are independent and the population covariance matrices of \(X_t\) are identical, which are essentially different from our technical challenges 3 and 4. Moreover, our technical framework is quite general in the sense that it can deal with not only sample correlation matrices generated by random walks, but also sample correlation matrices generated by more general AR processes.

    For time series generated by AR process, one research interest is to test its stationarity. Readers may refer to \cite{pesaran2015time,choi2015almost} and references therein for a comprehensive literature review of various unit root tests. In spite of diverse established results, testing nonstationarity of high-dimensional time series still needs further investigations. First, as pointed out in \cite{pesaran2007simple}, many existing unit root tests are valid only when the data dimension is fixed. Moreover, the null hypothesis in most of existing literature are simple random walks only. However, the study of general nonstationary time series, e.g. the AR process with more than one roots on the unit circle, or even outside the unit circle, is still scarce. 
    
    Based on asymptotic spectral behaviors of sample correlation matrices, we propose the following applications.
    \begin{enumerate}
        \item We develop a novel high-dimensional unit root test based on the CLT of the largest eigenvalue of the sample correlation matrix. 
        \item We further develop a forward sequential test to determine the number of unit roots in high-dimensional time series data. To our knowledge, there is no established and rigorously justified procedure. Importantly, the power of our sequential test tends to 1 and our estimation number of unit roots is consistent. 
        \item We propose a criterion to determine whether the characteristic polynomial of observed high-dimensional time series data has roots inside, on or outside the unit circle.
    \end{enumerate}
    In the above applications, one key reason for using sample correlation matrices instead of sample covariance matrices is that the extreme eigenvalues of the sample correlation matrices have more stable asymptotic behaviors than those of sample covariance matrices. Roughly speaking, let $X_t\in\mbR^n$ be a high-dimensional nonstationary AR process, the divergence rate of the largest eigenvalue of $X_t$'s the sample covariance matrix depends on the roots of $X_t$'s characteristic polynomial. By contrast, the divergence rate of the largest eigenvalue of $X_t$'s the sample correlation matrix is always the same as the data dimension $n$.
    
    The rest of this article is organized as follows. In Section \ref{Main Sec of CLT correlation}, we establish the joint CLT for the first \(K\) largest eigenvalues of the sample correlation matrix of high-dimensional random walks. In Section \ref{Main Sec of Robusticity}, we investigate the asymptotic spectral behaviors of the sample correlation matrix of the high-dimensional AR processes. We propose a new unit root test and a forward sequential test to determine the number of unit roots in Section \ref{Main Sec of applications}. Several numerical experiments are conducted in Section \ref{Main Sec of numerical}. An empirical study of PPP for high-dimensional data is provided in Section \ref{Main Sec of PPP}. The proofs of all our results are included in the Supplementary Materials (Appendix).

    We end this section by listing some useful notations.
    \begin{enumerate}
		\item \(C_a\) represents a positive constant that depends on some parameter \(a\).
		\item For two real sequences \(\{a_n\}\) and \(\{b_n\}\), we denote $a_n\asymp\mrO(b_n)\quad\Longleftrightarrow\quad M_1b_n\leq a_n\leq M_2b_n$ for some positive constants \(M_1,M_2\). Moreover, if $\{a_n\}$ and $\{b_n\}$ are sequences of random variables, $a_n\asymp\mrO_{\mbP}(b_n)\quad\Longleftrightarrow\quad\lim_{n\to\infty}\mbP(M_1b_n\leq a_n\leq M_2b_n)=1$.
		\item Given any integrable random variable/vector \(X\), \(X^{\circ}:=X-\mbE[X]\) denotes its centered version.
		\item The \(L^2\) convergence, the convergence in probability and in distribution are denoted by \(\overset{L^2}{\longrightarrow},\overset{\mbP}{\longrightarrow}\) and \(\overset{d}{\longrightarrow}\), respectively.
        \item Given a matrix \(\bbA=[A_{i,j}]_{n\times n}\), \(\tr(\bbA)=\sum_{i=1}^nA_{i,i}\), \(\bbA'\) denotes the transpose of \(\bbA\), and \(\diag(\bbA)\) is the diagonal matrix consisting of the main diagonal of \(\bbA\). Moreover, \(\Vert\bbA\Vert\) denotes the spectral norm of \(\bbA\).
	\end{enumerate}    

    \section{CLT for extreme eigenvalues of the sample correlation matrix of high-dimensional random walks}\label{Main Sec of CLT correlation}
    In this section, we establish the joint CLT of the first $K\in\mbN^+$ largest eigenvalues of the sample correlation matrix of an $n$-dimensional random walk $X_t$ defined as follows:
	\begin{align}
		X_t=X_{t-1}+e_t,\quad e_t=\bbGa\sum_{k=0}^{\infty}\Psi_k\varepsilon_{t-k},\quad\varepsilon_t\overset{i.i.d.}{\sim}\mcN(\boldsymbol{0},\bbI_n).\label{Main Eq of nonpanel Xt}
	\end{align}
	Here $\{\Psi_k:k\in\mbN^+\}$ is a sequence of $n\times n$ diagonal matrices satisfying
    \begin{Ap}\label{Main Ap of panel lag polynomial}
        All \(\{\Psi_k:=\diag(\varphi_{1,k},\cdots,\varphi_{n,k})\in\mbR^{n\times n}\},k\in\mbN\) are diagonal matrices and there exist two positive constants \(b,B\) such that 
        $$\sum_{k=0}^{\infty}(1+k)^2\Vert\Psi_k\Vert\leq B\quad\text{and}\quad \min_{1\leq j\leq n}\inf_{x\in[-\pi,\pi]}\left|\sum_{k=0}^{\infty}\varphi_{j,k}e^{{\rm i}kx}\right|\geq b.$$
    \end{Ap}
    Moreover, the \emph{cross-sectional matrix} $\bbGa\in\mbR^{n\times n}$ satisfies the following condition.
    \begin{Ap}\label{Main Ap of nonpanel}
        There exist two positive constants \(m_0,M_0\) such that $m_0\leq\lambda_{\min}(\bbGa\bbGa')\leq\lambda_{\max}(\bbGa\bbGa')\leq M_0$, where \(\lambda_{\max}(\bbGa\bbGa')\) and \(\lambda_{\min}(\bbGa\bbGa')\) are the largest and smallest eigenvalue of \(\bbGa\bbGa'\), respectively.
    \end{Ap}
    The condition \(\sum_{k=0}^{\infty}(1+k)^2\Vert\Psi_k\Vert\leq B\) in Assumption \ref{Main Ap of panel lag polynomial} is widely used in the time series literature (e.g. \cite{buhlmann1997sieve} and \cite{park2002invariance}), which ensures the stationarity of the linear process \(\sum_{k=0}^{\infty}\Psi_k\varepsilon_{t-k}\), including both \({\rm MA}(\infty)\) and \({\rm AR}(1)\) models. Moreover, since $\varepsilon_t$ are i.i.d. $\mcN(\boldsymbol{0},\bbI_n)$, we can conclude that $e_t\sim\mcN\big(\boldsymbol{0},\bbGa(\sum_{k=0}^{\infty}\Psi_ke^{{\rm i}\pi kt/T})(\sum_{k=0}^{\infty}\Psi_ke^{-{\rm i}\pi kt/T})'\bbGa'\big)$ by Theorem 13 in \cite{hannan2009multiple}. Hence, Assumptions \ref{Main Ap of panel lag polynomial} and \ref{Main Ap of nonpanel} ensure that the covariance matrix of $e_t$ is positive semidefinite with bounded spectral norm.
    
    Given observations \(\bbX=[X_1,\cdots,X_T]\) generated by \eqref{Main Eq of nonpanel Xt}, the following high-dimensionality regime is assumed.
    \begin{Ap}\label{Main Ap of highdimensionality}
        As the dimension \(n\to\infty\), the number of observations \(T\) also tends to infinity such that \(\lim_{n\to\infty}n/T=c\in(0,\infty)\).
    \end{Ap}
	Let \(\bbM:=\bbI_T-\boldsymbol{1}_{T\times T}/T\), where \(\bbI_T\) is the identity matrix with a size of \(T\times T\) and \(\boldsymbol{1}_{T\times T}\) is a \(T\times T\) matrix whose entries are all \(1\). Then we have $\bbX-\bar{\bbX}=\bbX\bbM$, where $\bar{\bbX}=[\bar{X},\cdots,\bar{X}]$ and $\bar{X}=T^{-1}\sum_{t=1}^TX_t$ is the sample mean. Note that $\bbM^2=\bbM$, so the sample correlation matrix of \(\bbX\) is $\bbD^{-1/2}\bbX\bbM\bbX'\bbD^{-1/2}$, where \(\bbD:=\diag(\bbX\bbM\bbX')\). Since we only focus on the extreme eigenvalues of the sample correlation matrix, and the nonzero eigenvalues of
	\begin{align}
		\hR:=\bbM\bbX'\bbD^{-1}\bbX\bbM,\label{Main Eq of correlation matrix}
	\end{align}
    and $\bbD^{-1/2}\bbX\bbM\bbX'\bbD^{-1/2}$ are coincide. Therefore, we regard the matrix $\hR$ as the sample correlation matrix of $\bbX$.


    \subsection{Limit of the convergence in probability}
	Let $\hla_1\geq\cdots\geq\hla_K$ be the first $K$ largest eigenvalue of $\hR$. We first establish the limit of $n^{-1}\hla_k$, $k=1,\cdots,K$. To characterize this limit, we define the following random variable: 
     \begin{align}
		\fM_{k,l}(x):=\frac{(kl)^{-x}\mfz_k\mfz_l}{\sum_{t=1}^{\infty}t^{-2x}\mfz_t^2},\quad x\in[1,+\infty),1\leq k,l\leq K,\label{Main Eq of fM}
	\end{align}
    where $\{\mfz_t\overset{i.i.d.}{\sim}\mcN(0,1):t\in\mbN^+\}$.
    \begin{pro}\label{Main Thm of convergence in probability}
        Under Assumptions \ref{Main Ap of panel lag polynomial}, \ref{Main Ap of nonpanel} and \ref{Main Ap of highdimensionality}, for any deterministic $K\in\mbN^+$, we have $n^{-1}\hla_k\overset{\mbP}{\longrightarrow}\mbE[\fM_{k,k}(1)]$ for $1\leq k\leq K$.
    \end{pro}
    Readers can refer to \S\ref{sec of mbP dependent} of the supplement for the proof of Proposition \ref{Main Thm of convergence in probability}. 
    \subsubsection*{Outline of the proof of Proposition \ref{Main Thm of convergence in probability}}
    Note that \(\bbX=\bbe\bbU\) by \eqref{Main Eq of nonpanel Xt}, where \(\bbe:=[e_1,\cdots,e_T]\) is the noise matrix and \(\bbU\) is a \(T\times T\) upper triangular matrix with 1 above and on the main diagonal, so the sample correlation matrix \(\hR\) in (\ref{Main Eq of correlation matrix}) can be rewritten as $\hR=\bbM\bbU'\bbe'\diag(\bbe\bbU\bbM\bbU'\bbe')^{-1}\bbe\bbU\bbM$. Consider the singular value decomposition of \(\bbM\bbU'\). Precisely, we have
	\begin{align}
		\bbM\bbU'=\sum_{s=1}^{T-1}\sigma_s\bbw_s\bbv_s'\quad\text{and}\quad\sigma_s:=[2\sin(\pi k/(2T))]^{-1},\label{Main Eq of sigma_k}
	\end{align}
	where \(\bbv_s:=(v_{s,1},\cdots,v_{s,T})'\) and \(\bbw_s:=(w_{s,1},\cdots,w_{s,T})'\) such that $v_{s,t}=\sqrt{\frac{2}{T}}\sin(\pi s(t-1)/T)$ and $w_{s,t}=-\sqrt{\frac{2}{T}}\cos(\pi s(2t-1)/(2T))$ for \(1\leq s\leq T-1\) and $1\leq t\leq T$. When \(s=T\), \(\sigma_T=0,\bbv_T=(1,0,\cdots,0)'\) and \(\bbw_T=\boldsymbol{1}_T/\sqrt{T}\). Next, let \(\hat{F}_k\) be the normalized eigenvector of \(\hla_k\), i.e. \(\hat{\bbR}\hat{F}_k=\hla_k\hat{F}_k\) and $\Vert\hat{F}_k\Vert_2=1$. Since \(\{\bbw_1,\cdots,\bbw_T\}\) forms an orthogonal basis of \(\mathbb{R}^T\), we represent \(\hat{F}_k\) by \(\hat{F}_k:=\sum_{t=1}^T\alpha_{k,t}\bbw_t\), where \(\sum_{k=1}^T\alpha_{1,k}^2=1\). Therefore, we obtain that
    \begin{align}
	    \frac{\hla_k}{n}=\frac{1}{n}\hat{F}_k'\hR\hat{F}_k=\sum_{s,t=1}^T\alpha_{k,s}\alpha_{k,t}\frac{1}{n}\sum_{j=1}^n\widetilde{M}_{j;s,t},\quad\widetilde{M}_{j;s,t}:=\frac{\sigma_s\sigma_t(\bbe_j\bbv_s)(\bbe_j\bbv_t)}{\sum_{l=1}^T\sigma_l^2(\bbe_j\bbv_l)^2},\label{Main Eq of tilde M st}
	\end{align}
    where $\bbe_j$ is the $j$-th {\bf row} of the noise matrix $\bbe$. Then we can prove Proposition \ref{Main Thm of convergence in probability} by showing that $|\alpha_{k,k}|\overset{\mbP}{\longrightarrow}1$ and $\frac{1}{n}\sum_{j=1}^n\widetilde{M}_{j;s,t}\overset{\mbP}{\longrightarrow}\mbE[\fM_{s,t}(1)]$.

    \subsection{CLT for the extreme eigenvalues}
    To further establish the CLT for \(\hla_k\), we need an additional assumption for the cross-sectional matrix \(\bbGa\):
    \begin{Ap}[$m$-dependence\footnote{The $m$-dependence was first introduced by \cite{hoeffding1948central}, where they assumed that $m\in\mbN^+$ is deterministic. In Assumption \ref{Main Ap of m dependent}, we allow $m=m(n)$ to be a function of data dimension $n$ such that $\lim_{n\to\infty}m(n)=\infty$.}]\label{Main Ap of m dependent}
        The \(n\times n\) cross-sectional matrix \(\bbGa\) satisfies that
		$$\{j=1,\cdots,n:\Gamma_{i_1,j}\neq0\}\cap\{j=1,\cdots,n:\Gamma_{i_2,j}\neq0\}=\emptyset$$
		for all \(|i_1-i_2|>m\), where \(m:=m(n)\leq\mro(n^{1/2})\).
    \end{Ap}
	Note that such matrices in Assumption \ref{Main Ap of m dependent} do exist, for example the \(m\) banded toeplitz matrices. Assumption \ref{Main Ap of m dependent} is widely used in estimating high-dimensional covariance matrices, e.g. \cite{bickel2008covariance,cai2011adaptive,chen2013covariance} and \cite{fan2013large}.

    To characterize the asymptotic variance of $\hla_k$, let $\widehat{M}_{j;s,t}(1)$ be a random variable defined as follows:
    \begin{align}
		\widehat{M}_{j;k,l}(x):=\frac{(kl)^{-x}z_{j,k}z_{j,l}}{\sum_{t=1}^{\infty}t^{-2x}z_{j,t}^2},\quad{\rm where\ }x\in[1,+\infty),1\leq k,l\leq K,\label{Main Eq of Mikl}
	\end{align}
    where $\big\{(z_{1,t},\cdots,z_{n,t})'\overset{i.i.d.}{\sim}\mcN(\boldsymbol{0},\tilde{\bbGa}):t\in\mbN^+\big\}$ is a sequence of $n$-dimensional normal vectors with the covariance matrix
    \begin{align}
		\tilde{\bbGa}:=\diag(\bbGa\Psi(1)\Psi(1)'\bbGa')^{-1/2}\bbGa\Psi(1)\Psi(1)'\bbGa'\diag(\bbGa\Psi(1)\Psi(1)'\bbGa')^{-1/2},\label{Main Eq of tibbGa}
	\end{align}
    and $\Psi(1):=\sum_{k=0}^{\infty}\Psi_k$ is defined in Assumption \ref{Main Ap of panel lag polynomial}. Now, the CLT for $\hla_k$ is given as follows:
    \begin{thm}\label{Main Thm of CLT I1}
        Under Assumptions \ref{Main Ap of panel lag polynomial}, \ref{Main Ap of nonpanel}, \ref{Main Ap of highdimensionality} and \ref{Main Ap of m dependent}, suppose \(X_t\) is generated by {\rm (\ref{Main Eq of nonpanel Xt})}, we have
		$$\frac{\sqrt{n}}{\mfm_{k,k}(1)}\left(\frac{\hla_k}{n}-\mbE[\fM_{k,k}(1)]\right)\overset{d}{\longrightarrow}\mcN(0,1),$$
		where
        \begin{align}
            \mfm_{k,k}^2(1):=\Var\left(\frac{1}{\sqrt{n}}\sum_{j=1}^n\widehat{M}_{j;k,k}(1)\right).\label{Main Eq of mfm}
        \end{align}
        Further let \(\bbA_n=[A_{k,l}]_{K\times K}\) be a \(K\times K\) covariance matrix such that  
        $$A_{k,l}:=n^{-1}\Cov\Bigg(\sum_{j=1}^n\widehat{M}_{j;k,k}(1),\sum_{j=1}^n\widehat{M}_{j;l,l}(1)\Bigg)$$
        for \(1\leq k,l\leq K\), where $\widehat{M}_{j;k,k}(1)$ is defined in \eqref{Main Eq of Mikl}. Suppose \(\liminf_{n\to\infty}\lambda_{\min}(\bbA_n)>0\), then
		$$\sqrt{n}\bbA_n^{-1/2}\Bigg(\frac{\hla_1}{n}-\mbE[\fM_{1,1}(1)],\cdots,\frac{\hla_K}{n}-\mbE[\fM_{K,K}(1)]\Bigg)'\overset{d}{\longrightarrow}\mcN(\boldsymbol{0},\bbI_K).$$
    \end{thm}
	Readers can refer to \S\ref{sec of CLT correlation dependent} in the supplement for the detailed proof of Theorem \ref{Main Thm of CLT I1}.
    \begin{remark}\label{Rem of numerical}
        \begin{itemize}
            \item[a.] Similar to Theorem \ref{Main Thm of CLT I1}, we also establish the joint CLT for the first $K$ largest eigenvalues of the {\bf sample covariance matrix} of $\bbX$ generated by high-dimensional random walks, readers can refer to \S\ref{Sec of covariance} in the supplement for details.
            \item[b.] Particularly, if the cross-sectional matrix $\bbGa$ in \eqref{Main Eq of nonpanel Xt} is diagonal, that is, $X_t$ is cross-sectional independent, we can establish the same joint CLT as in Theorem \ref{Main Thm of CLT I1} for more general non-Guassian $\varepsilon_t$ in \eqref{Main Eq of nonpanel Xt}, see \S\ref{Sec of correlation independent} in the supplement for details.
			\item[c.] To apply Theorem \ref{Main Thm of CLT I1} in unit root tests, we need to compute $\mbE[\fM_{k,k}(1)]$ and $\mfm_{k,k}(1)$ in \eqref{Main Eq of mfm}. According to (\ref{Main Eq of fM}), we can numerically compute $\mbE[\fM_{k,k}(1)]$ by the Monte Carlo method, e.g. \(\mbE[\fM_{1,1}(1)]\approx0.4409\). Although the direct estimation of $\mfm_{k,k}(1)$ is generally difficult as the cross-sectional matrix $\bbGa$ is usually unknown, we can use bootstrap method, see Section \ref{Main sec of unit root test}.
        \end{itemize}
    \end{remark}

    \section{Asymptotic spectral properties of the sample correlation matrix of the high-dimensional AR processes}\label{Main Sec of Robusticity}
	In this section, we further investigate the asymptotic spectral properties of the sample correlation matrices of the more general AR($d$) process. Let \(X_t\) be an \(n\)-dimensional AR process generated by
	\begin{align}
		X_t+\sum_{l=1}^da_lX_{t-l}=e_t,\quad e_t=\bbGa\sum_{k=0}^{\infty}\Psi_k\varepsilon_{t-k},\quad\varepsilon_t\overset{i.i.d.}{\sim}\mcN(\boldsymbol{0},\bbI_n),\label{Main Eq of AR process}
	\end{align}
	where \(a_l\in\mbC\) for \(1\leq l\leq d\), and the coefficients $\{\Psi_k:k\in\mbN\}$ of matrix lag polynomial and the cross-sectional matrix \(\bbGa\) satisfy Assumptions \ref{Main Ap of panel lag polynomial} and \ref{Main Ap of nonpanel}, respectively. The characteristic polynomial of (\ref{Main Eq of AR process}) is defined as
	\begin{align}
		f_X(z)=z^d+\sum_{l=1}^da_lz^{d-l}=\prod_{l=1}^d(z-\mfa_l),\label{Main Eq of characteristic polynomial}
	\end{align}
	where \(\mfa_l\in\mbC\) are roots of (\ref{Main Eq of characteristic polynomial}) for \(1\leq l\leq d\). Rewrite (\ref{Main Eq of AR process}) by 
	\begin{align}
		\prod_{l=1}^d(1-\mfa_l L)X_t=e_t,\label{Main Eq of AR process roots}
	\end{align}
	where \(L\) is the time lag operator.

    It is well-known that \(X_t\) is stationary if and only if all \(|\mfa_l|<1\), see Chapter 3.2 in \cite{shumway2000time}. Based on whether \(\mfa_l\) is inside, on or outside the unit circle, we classify all \(\mfa_l\) into three classes. Precisely, we say \(\mfa_l\) is a
	\begin{enumerate}
		\item \emph{stationary root} if \(|\mfa_l|<1\);
		\item \emph{nonstationary root} if \(|\mfa_l|=1\);
		\item \emph{super nonstationary root} if \(|\mfa_l|>1\).
	\end{enumerate}
    Next, we show that the sample correlation matrix of $\bbX$ generated by \eqref{Main Eq of AR process} has different asymptotic spectral properties when its characteristic polynomial \eqref{Main Eq of characteristic polynomial} has different types of roots. Precisely, we have
    \begin{thm}\label{Main Pro of root critiria}
        Under Assumptions \ref{Main Ap of panel lag polynomial}, \ref{Main Ap of nonpanel}, \ref{Main Ap of highdimensionality} and \ref{Main Ap of m dependent}, let $\hR$ be the sample correlation matrix of $\bbX=[X_1,\cdots,X_T]$ generated by an AR($d$) process \eqref{Main Eq of AR process roots}. Let $\mfa_1,\cdots,\mfa_d$ be roots of $X_t$'s characteristic polynomial \eqref{Main Eq of characteristic polynomial} and $\hla_1\geq\cdots\geq\hla_T$ be eigenvalues of $\hR$, then 
        \begin{enumerate}
            \item if all $\mfa_l$ are stationary, we have $n^{-1}\Vert\hR\Vert\overset{\mbP}{\longrightarrow}0$;
            \item if at least one $\mfa_l$ is super nonstationary, we have $\limsup_{n\to\infty}{\rm rank}(\hR)\leq d$ and $\lim_{n\to\infty}\mbP(n^{-1}\Vert\hR\Vert>C)=1$, where $C\in(0,1)$ is a deterministic positive constant. 
        \end{enumerate}
    \end{thm}
    Moreover, if none of $\tau_l$ is super nonstationary and at least one $\mfa_l$ is nonstationary, let's consider a simplified version\footnote{The proof of Theorem \ref{Main Thm of CLT I1} requires the SVD of $\bbM\bbU'$ \eqref{Main Eq of sigma_k}. However, for the toeplitz matrix $\mbU$ in \eqref{Main Eq of mbU}, the explicit expressions of the singular vectors of $\bbM\mbU'$ are unknown, so we consider a simplification \eqref{Main Eq of AR nonstationary}.} of \eqref{Main Eq of AR process roots} as follows:
	\begin{align}
		\prod_{l=1}^d(1-\mfa_l L)X_t=e_t,\quad e_t=\bbGa\varepsilon_t,\quad\varepsilon_t\overset{i.i.d.}{\sim}\mcN(\boldsymbol{0},\bbI_n),\label{Main Eq of AR nonstationary}
	\end{align}
	where \(|\mfa_l|\leq1\) for \(1\leq l\leq d\) and $|\mfa_1|=1$ without loss of generality. Here, we provide that
    \begin{thm}\label{Main Thm of CLT multiple unit roots}
        Under Assumptions \ref{Main Ap of nonpanel}, \ref{Main Ap of highdimensionality} and \ref{Main Ap of m dependent}, for the cross-sectional matrix $\bbGa$ in \eqref{Main Eq of AR nonstationary}, define \(\bbXi=[\Xi_{i_1,i_2}]_{n\times n}:=\diag(\bbGa\bbGa')^{-1/2}\bbGa\bbGa'\diag(\bbGa\bbGa')^{-1/2}\). Further let \((z_{1,t},\cdots,z_{n,t})'\overset{i.i.d.}{\sim}\mcN(\boldsymbol{0},\bbXi):t=1,\cdots,T\). For any \(K\in\mbN^+\) and \(1\leq k\leq K\), let
		$$\mbM_{k,n}=\frac{1}{n}\sum_{i=1}^n\frac{\beta_k^2z_{i,k}^2}{\sum_{t=1}^T\beta_t^2z_{i,t}^2},$$
		where \(\beta_1\geq\cdots\geq\beta_T\) are singular values of \(\bbM\mbU'\) in \eqref{Main Eq of mbU}. Then given any data matrix \(\bbX=[X_1,\cdots,X_T]\) generated by {\rm (\ref{Main Eq of AR nonstationary})}, we have for \(1\leq k\leq K\)
		$$\frac{\sqrt{n}}{\bbm_{k,n}}\left(\frac{\hla_k}{n}-\mbE[\mbM_{k,n}]\right)\overset{d}{\longrightarrow}\mcN(0,1),$$
		where \(\hla_k\) is the first \(k\)-th largest eigenvalue of the sample correlation matrix of \(\bbX\) and \(\bbm_{k,n}^2=n\Var(\mbM_{k,n})\asymp\mrO(1)\).
    \end{thm}
    Readers can refer to \S\ref{Sec of AR process} in the supplement for proofs of Theorems \ref{Main Pro of root critiria} and \ref{Main Thm of CLT multiple unit roots}. 
    \begin{remark}
        As mentioned in Section \ref{Main Sec of introduction}, the sample correlation matrices will have more advantages than the sample covariance matrices for certain statistical inference problems. This could be illustrated in Theorem \ref{Main Thm of CLT multiple unit roots}. In particular, let \(\hSig\) be the sample covariance matrix of \(\bbX=[X_1,\cdots,X_T]\), where \(X_t\) is generated by \eqref{Main Eq of AR process}. If \(X_t\)'s characteristic polynomial \eqref{Main Eq of characteristic polynomial} has one super nonstationary root \(\mfa_1\) ($|\mfa_1|>1$), we can conclude that \(\lim_{n\to\infty}\mbP(\Vert\hSig\Vert\geq\mrO(|\mfa_1|^n))=1\). If all roots of $X_t$'s characteristic polynomial \eqref{Main Eq of characteristic polynomial} are 1, i.e. $(1-L)^dX_t=e_t$ by \eqref{Main Eq of AR process roots}, we can show that $\lim_{n\to\infty}\mbP(\Vert\hSig\Vert\geq\mrO(n^{2d}))=1$. Readers can refer to Remarks \ref{Rem of why not covariance 1} and \ref{Rem of why not covariance 2} in the supplement for detailed estimations of $\Vert\hSig\Vert$. Thus, the divergence rate of \(\Vert\hSig\Vert\) depends on both the roots of (\ref{Main Eq of characteristic polynomial}) and data dimension $n$. By contrast, for the sample correlation matrix \(\hR\), since the absolute values of \(\hR\)'s entries are no more than 1, we have \(n^{-1}\Vert\hR\Vert\in[0,1]\), which suggests that the asymptotic behaviors of \(\Vert\hR\Vert\) is more stable than those of \(\Vert\hSig\Vert\).
    \end{remark}

    \subsection*{Outline of the proofs of Theorems \ref{Main Pro of root critiria} and \ref{Main Thm of CLT multiple unit roots}}
    Recall that \(\bbX=\bbe\bbU\) when \(X_t\) is generated by a random walk, where \(\bbe=[e_1,\cdots,e_T]\) is the noise matrix. Similarly, when \(X_t\) is generated by an AR process (\ref{Main Eq of AR process roots}), we can also represent the data matrix \(\bbX\) by the product of a noise matrix \(\bbe\) and an upper toeplitz matrix. Precisely, for any \(x\in\mbC\), let
	\begin{align}
		\mcT(x)=\left(\begin{array}{ccccc}
			1&-x&0&\cdots&\\
			0&1&-x&0&\cdots\\
			&\ddots&\ddots&\ddots&\\
			&\cdots&0&1&-x\\
			&&\cdots&0&1
		\end{array}\right)\label{Main Eq of mcT(a)}
	\end{align}
	be an upper toeplitz matrix. By Vieta's theorem, we have \(\mcT(x)\mcT(y)=\mcT(y)\mcT(x)\) for any \(x,y\in\mbC\). Given observations \(\bbX=[X_1,\cdots,X_T]\) generated by (\ref{Main Eq of AR process roots}) and \(X_0=\cdots=X_{1-d}=\boldsymbol{0}\), we have \(\bbX\prod_{l=1}^d\mcT(\mfa_l)=\bbe\). Because \(\mcT(x)\) is commutative with respect to multiplications, the matrix \(\prod_{l=1}^d\mcT(\mfa_l)\) is uniquely determined and independent of the order of \(\mcT(\mfa_l)\) in multiplication. Moreover, the eigenvalues of \(\mcT(x)\) are all 1. Therefore, \(\mcT(x)^{-1}\) exists and we have \(\bbX=\bbe\prod_{l=1}^d\mcT(\mfa_l)^{-1}\). For simplicity, we define 
    \begin{align}
        \mbU:=\prod_{l=1}^d\mcT(\mfa_l)^{-1},\label{Main Eq of mbU}
    \end{align}
    and the sample correlation matrix $\hR$ of $\bbX$ generated by an AR process \eqref{Main Eq of AR process roots} can be written as
    \begin{align}
        \hR=\bbM\mbU'\bbe'\diag(\bbe\mbU\bbM\mbU'\bbe')^{-1}\bbe\mbU\bbM.\label{Main Eq of correlation matrix AR process}
    \end{align}
    Similar as the proof of Proposition \ref{Main Thm of convergence in probability}, one essential step is to use the SVD of $\bbM\bbU'$ in \eqref{Main Eq of sigma_k} to represent the extreme eigenvalues of the sample correlation matrices, see \eqref{Main Eq of tilde M st}. Similarly, for $\hR$ in \eqref{Main Eq of correlation matrix AR process}, we need the SVD of $\bbM\mbU'$ to investigate the asymptotic spectral properties. Indeed, different types of roots $\mfa_l$ will lead to different singular structures of $\bbM\mbU'$ and different asymptotic spectral properties of $\hR$. Readers can refer to \S\ref{Sec of AR process} in the supplement for details.

    \section{Applications}\label{Main Sec of applications}
	In this section, we first propose a new high-dimensional unit root test based on the CLT in Theorem \ref{Main Thm of CLT I1}. As an extension of unit root test, we further propose a forward sequential test to determine the number of unit roots in high-dimensional time series data.


	\subsection{Unit roots test}\label{Main sec of unit root test}
	Suppose \(X_t\) is an \(n\)-dimensional time series data generated by
	\begin{align}
		X_t=(\bbI_n-\bbPi)\phi+\bbPi X_{t-1}+e_t,\label{Main Eq of unit root model}
	\end{align}
	where \(\phi\in\mbR^n\) is deterministic, \(\bbPi\) is an \(n\times n\) matrix and $e_t$ is the noise process defined in \eqref{Main Eq of nonpanel Xt}. The unit root test of (\ref{Main Eq of unit root model}) is to test
	\begin{align}
		H_0:\bbPi=\bbI_n\quad{\rm versus}\quad H_1:\Vert\bbPi\Vert<1.\label{Main Eq of unit root test}
	\end{align}
	For the data matrix \(\bbX=[X_1,\cdots,X_T]\) generated by \eqref{Main Eq of unit root model}, let \(\hR\) be the sample correlation matrix of \(\bbX\). Under \(H_0\), we have established the CLT for the largest eigenvalue of \(\hR\) for cross-sectional dependent \(X_t\) in Theorem \ref{Main Thm of CLT I1}, so we construct our test statistic as follows:
	\begin{align}
		\widehat{T}_n(0):=\sqrt{n}\left(\frac{\hla_1}{n}-\mathbb{E}[\fM_{1,1}(1)]\right).\label{Main Eq of test statistic}
	\end{align}
	Consequently, Theorem \ref{Main Thm of CLT I1} implies that
	\begin{align}
		\frac{\widehat{T}_n(0)}{\mfm_{1,1}(1)}\overset{d}{\longrightarrow}\mcN(0,1),\quad\text{where }\mfm_{1,1}(1)\text{ is defined in \eqref{Main Eq of mfm}.}\label{Main Eq of H0 statistic}
	\end{align}
    Under \(H_1\), we have the following results:
    \begin{thm}\label{Main Thm of H1 2}
        Under Assumptions \ref{Main Ap of panel lag polynomial}, \ref{Main Ap of nonpanel} and \ref{Main Ap of highdimensionality}, suppose \(X_t\) is generated by \eqref{Main Eq of unit root model} such that \(\Vert\bbPi\Vert=\tau_0<1\) and \(e_t=\bbGa\sum_{k=0}^{\infty}\Psi_k\varepsilon_{t-k}\), where \(\varepsilon_t\overset{i.i.d.}{\sim}\mcN(\boldsymbol{0},\bbI_n)\). Then the sample correlation matrix \(\hR\) of \(\bbX=[X_1,\cdots,X_T]\) satisfies that
		\begin{align}
			\mbP\big(n^{-1}\Vert\hR\Vert>\mrO(n^{-1/15})\big)\leq\mrO(n^{-1/6}\log^5(n)).\label{Main Eq of H0 convergence rate}
		\end{align}
    \end{thm}
	Readers can refer to \S\ref{ssec of totally stationary alternatives} in the supplement for proofs of Theorem \ref{Main Thm of H1 2}. Now, given a significance level \(\alpha\in(0,1)\), let \(u_{\alpha/2}\) and \(u_{1-\alpha/2}\) be the lower and upper \(\alpha/2\) quantile of \(\widehat{T}_n(0)\) under \(H_0\), then we will reject \(H_0\) if $\widehat{T}_n(0)\notin[u_{\alpha/2},u_{1-\alpha/2}]$. By \eqref{Main Eq of mfm}, we know that $\mfm_{1,1}(1)\asymp\mrO(1)$. Therefore, given a significance level $\alpha\in(0,1)$, we have $|u_{1-\alpha/2}|=|u_{\alpha/2}|\leq\mrO(1)$. On the other hand, by Theorem \ref{Main Thm of H1 2}, our test statistic \(\widehat{T}_n(0)\asymp\mrO_{\mbP}(-\sqrt{n})\) under \(H_1\). Therefore, the asymptotic power of our test is
    $$\lim_{n\to\infty}\mbP\big(\widehat{T}_n(0)\notin[u_{\alpha/2},u_{1-\alpha/2}]\big|H_1\big)=1.$$
    \begin{remark}\label{Rem of unit root test}
        \begin{itemize}
            \item[a.] In practice, for the efficiency of our unit root test, we can check whether $\widehat{T}_n(0)<-\log(n)$ or not. Note that $\lim_{n\to\infty}\mbP(\widehat{T}_n(0)<-\log(n)|H_1)=1$ due to \(\widehat{T}_n(0)\asymp\mrO_{\mbP}(-\sqrt{n})\) under \(H_1\), so we will reject $H_0$ if $\widehat{T}_n(0)<-\log(n)$ and the asymptotic power is still 1.
            \item[b.] As \cite{pesaran2013panel} suggested, it would be more proper to extend the alternative hypothesis \(H_1\) in (\ref{Main Eq of unit root test}) to
			\begin{align}
				\bbPi=\left(\begin{array}{cc}
					\bbI_{n_1}&\boldsymbol{0}_{n_1\times n_2}\\
					\boldsymbol{0}_{n_2\times n_1}&\widetilde{\bbPi}
				\end{array}\right),\label{Main Eq of partially stationary}
			\end{align}
			where \(\Vert\widetilde{\bbPi}\Vert<1\) and \(c_1:=\lim_{n\to\infty}n_1/n\in[0,1)\). Under this generalized alternative hypothesis, we show that the asymptotic power of \(\widehat{T}_n(0)\) is still 1. Readers can refer to \S\ref{ssec of partially stationary alternatives} in the supplement for details of this extension.
        \end{itemize}
    \end{remark}
	Since $\mfm_{1,1}(1)$ in \eqref{Main Eq of mfm} is usually unknown, we provide a bootstrap method to estimate $u_{\alpha/2}$ and $u_{1-\alpha/2}$. By \eqref{Main Eq of Mikl} and \eqref{Main Eq of mfm}, we first estimate $\tilde{\bbGa}$ in \eqref{Main Eq of tibbGa}. 
    Here, we use the hard thresholding method as in \cite{sun2018large} and \cite{zhang2021convergence} to estimate $\tilde{\bbGa}$ as follows:
	\begin{enumerate}
		\item Given \(\bbX=[X_1,\cdots,X_T]\), let \(\Delta X_t=X_t-X_{t-1}\) for \(2\leq t\leq T\), then construct
		$$\bbH:=\frac{1}{T}\sum_{t=2}^T\Delta X_t(\Delta X_t)'+\sum_{l=1}^{[T^{1/2}]}\frac{1}{T-l}\sum_{t=l+1}^T\left(\Delta X_t(\Delta X_{t-l})'+\Delta X_{t-l}(\Delta X_t)'\right),$$
		and \(\tilde{\bbH}:=\diag(\bbH)^{-1/2}\bbH\diag(\bbH)^{-1/2}\).
		\item Choose a threshold \(\nu\asymp\mrO(T^{-1/2}\log(T))\). Let \(T_{\nu}(\tilde{\bbH}):=\big[\tilde{H}_{i,j}1_{|\tilde{H}_{i,j}|>\nu}\big]_{n\times n}\) and denote the SVD of \(T_{\nu}(\tilde{\bbH})\) by $T_{\nu}(\tilde{\bbH})=\sum_{i=1}^n\gamma_i\bbq_i\bbq_i'$, then $\tilde{\bbGa}$ is estimated by \(\bbOme:=\sum_{i=1}^n\max\{\gamma_i,\mu\}^{1/2}\bbq_i\bbq_i'\), where \(\mu>0\) such that \(\mu^2\asymp\mrO\big(n^{-2}\sum_{i,j=1}^n1_{|\tilde{H}_{i,j}|>\nu}\big)\).  
	\end{enumerate}
    Readers can find more technical details for above procedures in \S\ref{sec of estimation variance} of the supplement. 
    \begin{remark}
        For the choice of \(\nu\), readers can refer to \cite{sun2018large}. Moreover, under Assumptions \ref{Main Ap of panel lag polynomial}, \ref{Main Ap of nonpanel}, \ref{Main Ap of highdimensionality} and \ref{Main Ap of m dependent}, Proposition 3.6 in \cite{sun2018large} and Corollary 4.5 in \cite{zhang2021convergence} give that $\mbP\big(n^{-1}\Vert T_{\nu}(\tilde{\bbH})-\tilde{\bbGa}\Vert_F^2\geq m\nu^2\big)\leq\mrO(n^{-1})$, where \(m=m_n\leq\mro(\sqrt{n})\) is defined in Assumption \ref{Main Ap of m dependent}. Since \(T_{\nu}(\tilde{\bbH})\) may not be positive-definite, then we choose another threshold $\mu>0$ and construct $\bbOme=\sum_{i=1}^n\max\{\gamma_i,\mu\}^{1/2}\bbq_i\bbq_i'$ as the square root of $T_{\nu}(\tilde{\bbH})$. Readers can refer to \S2.2 of \cite{chen2013covariance} for details of this technique. 
    \end{remark}
    Now, given a significance level \(\alpha\in(0,1)\), we estimate the lower and upper \(\alpha/2\) quantiles of \(\widehat{T}_n(0)\) as follows:
	\begin{enumerate}
		\item Given \(\bbX=[X_1,\cdots,X_T]\) and the number of bootstraps \(B\), construct \(\Delta X_t=X_t-X_{t-1}\) for \(t=2,\cdots,T\) and \(\bbOme\) by above procedures.
		\item Simulate \(B\) independent standard Gaussian random matrices \(\bbe^{(1)},\cdots,\bbe^{(B)}\in\mbR^{n\times T}\), that is, the entries $e_{i,t}^{(b)}$ of $\bbe^{(b)}$ are i.i.d. standard normal variables. For each $b=1,\cdots,B$, we construct
        $$\bbX^{(b)}:=\bbOme\bbe^{(b)}\diag(1^{-1},\cdots,T^{-1})\quad\text{and}\quad\bbR^{(b)}:=(\bbX^{(b)})'\diag(\bbX^{(b)}(\bbX^{(b)})')^{-1}\bbX^{(b)},$$
        then compute $\widehat{T}_n^{(b)}(0):=\sqrt{n}\big(n^{-1}\Vert\bbR^{(b)}\Vert-\mbE[\fM_{1,1}(1)]\big)$. Finally, \(u_{\alpha/2}\) and \(u_{1-\alpha/2}\) can be estimated by the lower and upper \(\alpha/2\) sample quantiles of \(\{\widehat{T}_n^{(1)}(0),\cdots,\widehat{T}_n^{(B)}(0)\}\).
	\end{enumerate}

    \subsection{Estimating the number of unit roots}\label{Main sec of multiple unit root tests}
    In this subsection, we propose a forward sequential procedure to determine the number of unit roots in an AR($d$) $X_t$. Precisely, suppose \(X_t\) is generated by (\ref{Main Eq of AR process}). For \(1\leq p\leq d\), define a sequence of hypotheses as follows:  
	\begin{center}
		\(\mbH_0^{(p)}\): \(X_t\) has \(p\) unit roots, i.e. \((1-L)^pX_t=e_t\). 
	\end{center}
	For univariate time series data, \cite{dickey1987determining} conducted a backward sequential test for all \(\mbH_0^{(p)}\), that is, starting from testing \(\mbH_0^{(d_0)}\) versus \(\mbH_0^{(d_0-1)}\) for a pre-determined \(d_0\in\mbN^+\), if \(\mbH_0^{(d_0)}\) is rejected, then they continued to test \(\mbH_0^{(d_0-1)}\) versus \(\mbH_0^{(d_0-2)}\); otherwise, they accepted \(\mbH_0^{(d_0)}\). However, determining a proper \(d_0\) is not trivial. For instance, suppose \(X_t\) has \(p_0\in\mbN^+\) unit roots with \(p_0\) being unknown. Choosing a small \(d_0<p_0\) is problematic, while a large \(d_0>p_0\) will make this backward sequential test inefficient. 
	
	Since we have established the unit root test in Section \ref{Main sec of unit root test}, it would be more proper to conduct a forward sequential test to determine the number of unit roots. Specifically, starting from testing \(\mbH_0^{(0)}\) versus \(\mbH_0^{(1)}\), if \(\mbH_0^{(0)}\) is rejected, then we test \(\mbH_0^{(1)}\) versus \(\mbH_0^{(2)}\) and so on in general. Hence, it remains to construct a test statistic for testing \(\mbH_0^{(p)}\) versus \(\mbH_0^{(p+1)}\).
	
	Given the observations \(\bbX=[X_1,\cdots,X_T]\), define the following operator:
	\begin{align}
		\mathscr{T}_p(\bbX):=\left\{\begin{array}{ll}
			\bbX\bbU^{-p}(\bbM\bbU'\bbU\bbM)^{p/2},&p>0,\ p{\rm\ is\ even};\\
			\bbX\bbU^{-p}\bbM\bbU'(\bbU\bbM\bbU')^{(p-1)/2},&p>0,\ p{\rm\ is\ odd};
		\end{array}\right.\label{Main Eq of operator msT}
	\end{align}
	and
	\begin{align}
		\hR(p):=\mathscr{T}_p(\bbX)'\diag(\mathscr{T}_p(\bbX)\mathscr{T}_p(\bbX)')^{-1}\mathscr{T}_p(\bbX),\label{Main Eq of hat Rp}
	\end{align}
	where \(\bbM=\bbI_T-\boldsymbol{1}_{T\times T}\) and \(\bbU\) is defined in \eqref{Main Eq of sigma_k}. Next, we have
    \begin{thm}\label{Main Thm of multiple unit roots test}
        Under Assumptions \ref{Main Ap of panel lag polynomial}, \ref{Main Ap of nonpanel}, \ref{Main Ap of highdimensionality} and \ref{Main Ap of m dependent}, for any $p\in\mbN^+,p\geq2$, suppose \(X_t\) is generated by
		$$(1-L)^pX_t=\bbGa\sum_{k=0}^{\infty}\Psi_k\varepsilon_{t-k},\quad\varepsilon_t\overset{i.i.d.}{\sim}\mcN(\boldsymbol{0},\bbI_n),$$
		let \(\hla_1(p)\) and $\hla_1(p-1)$ be the largest eigenvalue of \(\hR(p)\) and \(\hR(p-1)\) defined in {\rm (\ref{Main Eq of hat Rp})}, respectively, then we have
		\begin{align}
			\left\{\begin{array}{l}
			     \frac{\sqrt{n}}{\mfm_{1,1}(p)}\left(\frac{\hla_1(p)}{n}-\mbE[\fM_{1,1}(p)]\right)\overset{d}{\longrightarrow}\mcN(0,1),\\
			     \frac{\sqrt{n}}{\mfm_{1,1}(p)}\left(\frac{\hla_1(p-1)}{n}-\mbE[\fM_{1,1}(p)]\right)\overset{d}{\longrightarrow}\mcN(0,1), 
			\end{array}\right.\label{Main Eq of multiple unit roots test}
		\end{align}
		where \(\fM_{1,1}(x)\) and \(\mfm_{1,1}(x)\) are defined in {\rm (\ref{Main Eq of fM})} and {\rm (\ref{Main Eq of mfm})}, respectively.
    \end{thm}
	Readers can refer to \S\ref{ssec of test statistics} in the supplement for the proof of Theorem \ref{Main Thm of multiple unit roots test}. Here, we briefly outline the proof of Theorem \ref{Main Thm of multiple unit roots test}. For example, when \(p=2\), we have \(\bbX=\bbe\bbU^2\). By \eqref{Main Eq of operator msT} and \eqref{Main Eq of hat Rp},
    $$\hR(2)=\bbM\bbU'\bbU\bbM\bbe'\diag(\bbe\bbM\bbU'\bbU\bbM\bbU'\bbU\bbM\bbe')^{-1}\bbe\bbM\bbU'\bbU\bbM.$$
    By the SVD of $\bbM\bbU'$ in \eqref{Main Eq of sigma_k}, then \(\bbM\bbU'\bbU\bbM=\bbM\bbU'(\bbM\bbU')'=\sum_{t=1}^{T-1}\sigma_t^2\bbw_t\bbw_t'\). Similar as \eqref{Main Eq of tilde M st}, we can represent the largest eigenvalue $\hla_1(2)$ of $\hR(2)$ as follows:
	$$\frac{\hla_1(2)}{n}=\sum_{s,t=1}^{T-1}\alpha_{1,s}\alpha_{1,t}\frac{1}{n}\sum_{j=1}^n\frac{\sigma_s^2\sigma_t^2(\bbe_j\bbw_s)(\bbe_j\bbw_t)}{\sum_{l=1}^{T-1}\sigma_l^4(\bbe_j\bbw_l)^2},$$
	and we can establish the CLT for $\hla_1(2)$ by the same argument as proving Theorem \ref{Main Thm of CLT I1}. 
    
    To test $\mbH_0^{(p)}$ versus $\mbH_0^{(p+1)}$, construct the test statistic as follows: 
	\begin{align}
		\widehat{T}_n(p):=\sqrt{n}\Bigg(\frac{\hla_1(p)}{n}-\mbE[\fM_{1,1}(p)]\Bigg).\label{Main Eq of hat Tp}
	\end{align}
	Theorem \ref{Main Thm of multiple unit roots test} implies that
	\begin{align*}
		\left\{\begin{array}{ll}
			     \frac{\sqrt{n}}{\mfm_{1,1}(p)}\left(\frac{\hla_1(p)}{n}-\mbE[\fM_{1,1}(p)]\right)\overset{d}{\longrightarrow}\mcN(0,1),&\text{under }\mbH_0^{(p)},\\
			     \frac{\sqrt{n}}{\mfm_{1,1}(p+1)}\left(\frac{\hla_1(p)}{n}-\mbE[\fM_{1,1}(p+1)]\right)\overset{d}{\longrightarrow}\mcN(0,1),&\text{under }\mbH_0^{(p+1)}. 
			\end{array}\right.
	\end{align*}
	Note that \(\fM_{1,1}(x)\) is strictly increasing by (\ref{Main Eq of fM}), and we have \(\mbE[\fM_{1,1}(p+1)]>\mbE[\fM_{1,1}(p)]\). Moreover, we can show that $\mfm_{1,1}(p)\asymp\mrO(1)$ by the same argument as \eqref{Main Eq of mfm}. Hence, we obtain that 
	$$\left\{\begin{array}{ll}
		\widehat{T}_n(p)/\mfm_{1,1}(p)\overset{d}{\longrightarrow}\mcN(0,1),&{\rm under\ }\mbH_0^{(p)},\\
		\widehat{T}_n(p)\asymp\mrO_{\mbP}(\sqrt{n}),&{\rm under\ }\mbH_0^{(p+1)}.
	\end{array}\right.$$
	We reject \(\mbH_0^{(p)}\) if \(\widehat{T}_n(p)>\log(n)\) and the asymptotic power is \(\lim_{n\to\infty}\mbP\big(\widehat{T}_n(p)>\log(n)\big|\mbH_0^{(p+1)}\big)=1\).
    
    Finally, combining with the unit root test in Section \ref{Main sec of unit root test}, we can determine the number of unit roots in high-dimensional time series by the following forward sequential tests:
	\begin{enumerate}
		\item Given \(\bbX\), we first conduct the unit root test in \eqref{Main Eq of unit root test}. Precisely, we construct the test statistic \(\widehat{T}_n(0)\) by \eqref{Main Eq of test statistic}. If \(\widehat{T}_n(0)<-\log(n)\), we reject \(H_1\) and move to step 2; otherwise, we accept \(H_1\) and stop.
		\item Suppose our current test is
		\begin{center}
			\(\mbH_0^{(p)}\): \(X_t\) has \(p\) unit roots.\quad versus\quad\(\mbH_0^{(p+1)}\): \(X_t\) has \(p+1\) unit roots.
		\end{center} 
		where \(p\geq1\). Construct \(\widehat{T}_n(p)\), if \(\widehat{T}_n(p)>\log(n)\), we reject \(\mbH_0^{(p)}\) and move to test \(\mbH_0^{(p+1)}\) versus \(\mbH_0^{(p+2)}\). Otherwise, we reject \(\mbH_0^{(p+1)}\) and accept \(\mbH_0^{(p)}\).
	\end{enumerate}
	\begin{remark}
	    For the backward sequential test for univariate time series in \cite{dickey1987determining}, the acceptance rate, $\mbP(\text{Accept }\mbH_0^{(p)}|\mbH_0^{(p)})$ is much smaller than 1. On the contrary, the numerical experiment in Section \ref{Main sec of Experiment 3} shows that the acceptance rate of our method is always 1, which suggests that our method can report the true number of unit roots in high-dimensional time series data with high accuracy.
	\end{remark}

    \section{Numerical Experiments}\label{Main Sec of numerical}
	In this section, we conduct three numerical experiments to verify Theorems \ref{Main Pro of root critiria}, \ref{Main Thm of CLT multiple unit roots} and demonstrate the performance of two hypothesis tests in Section \ref{Main Sec of applications}.

    \subsection{Experiment 1: asymptotic spectral properties of the sample correlation matrices}
    In this subsection, we verify Theorems \ref{Main Pro of root critiria} and \ref{Main Thm of CLT multiple unit roots}. Construct the following three AR processes:
	\begin{align}
		\left\{\begin{array}{l}
			(1-0.2L)(1+0.5L)X_t^{[1]}=\bbGa \varepsilon_t,\\
			(1-L)(1-e^{{\rm i}\pi/3}L)(1-e^{-{\rm i}\pi/3}L)X_t^{[2]}=\bbGa \varepsilon_t,\\
			(1-0.6L)(1+L)(1+2L)X_t^{[3]}=\bbGa\varepsilon_t,
		\end{array}\right.\label{Main Eq of experiment 1}
	\end{align}
    where \(\bbGa=[(1+|s-t|)^{-1}1_{|s-t|<T^{1/4}}]_{s,t}\in\mbR^{n\times n}\) is the cross-sectional matrix and $\varepsilon_t\overset{i.i.d.}{\sim}\mcN(\boldsymbol{0},\bbI_n)$. Denote \(\hat{\bbR}^{[r]}\) as the sample correlation matrix of \(\bbX^{[r]}=[X_1^{[r]},\cdots,X_T^{[r]}]\) for \(r=1,2,3\), respectively, and let \(\hla_k^{[r]}\) be the \(k\)-th largest eigenvalue of \(\hat{\bbR}^{[r]}\). For each \(r\) and different values of $n$ and $T$, we calculate the sample mean and standard deviation of \(n^{-1}\hla_1^{[r]},n^{-1}\hla_2^{[r]},n^{-1}\hla_3^{[r]}\) based on 200 independent simulations, see Table \ref{Tab of 2}. 
    \begin{table}[!t]
        \caption{Sample mean (S.M.) and standard deviation (S.D.) of $n^{-1}\hla_k^{[r]}$.\label{Tab of 2}}%
        \begin{tabular*}{\columnwidth}{@{\extracolsep\fill}cc|cc|cc|cc@{\extracolsep\fill}}
            \toprule
            &&\multicolumn{2}{c}{$\hla_1^{[r]}$}&\multicolumn{2}{c}{$\hla_2^{[r]}$}&\multicolumn{2}{c}{$\hla_3^{[r]}$}\\\hline
             &$(n,T)$&S.M.&S.D.&S.M.&S.D.&S.M.&S.D.\\\hline
             $X_t^{[1]}$&$(200,400)$&0.0981&0.0042&0.0838&0.0032&0.0767&0.0031\\
             $X_t^{[1]}$&$(500,1000)$&0.0485&0.0015&0.0457&0.0011&0.0414&0.0011\\
             $X_t^{[1]}$&$(1000,2000)$&0.0288&0.0006&0.0276&0.0005&0.0267&0.0004\\\hline
             $X_t^{[2]}$&$(500,1000)$&0.2755&0.0220&0.2206&0.0174&0.1652&0.0158\\
             $X_t^{[2]}$&$(1000,2000)$&0.2671&0.0178&0.2185&0.0137&0.1613&0.0134\\\hline
             $X_t^{[3]}$&$(100,200)$&1&0&0&0&0&0\\
             $X_t^{[3]}$&$(500,1000)$&1&0&0&0&0&0\\\hline
        \end{tabular*}
    \end{table}
    For $X_t^{[1]}$ whose characteristic polynomial having stationary roots only, Theorem \ref{Main Pro of root critiria} claims that $n^{-1}\hla_1^{[1]}\overset{\mbP}{\longrightarrow}0$. According to Table \ref{Tab of 2}, as $n$ and $T$ increase, both the sample mean and standard deviation of $n^{-1}\hla_1^{[1]}$ decrease to 0, which agrees with the first conclusion in Theorem \ref{Main Pro of root critiria}. For $X_t^{[3]}$, there exists one super nonstationary root $-2$, and Table \ref{Tab of 2} shows that ${\rm rank}(\bbR^{[3]})=1$, which agrees with the second conclusion in Theorem \ref{Main Pro of root critiria}. Finally, for $X_t^{[2]}$ with three nonstationary roots, Theorem \ref{Main Thm of CLT multiple unit roots} claims that $\hla_k^{[2]}$ is asymptotically normal. Such asymptotic normality is demonstrated in QQ plots in Figure \ref{Fig of 1}.
    \begin{figure}[!t]
		\centering
		\subfigure[QQ plots of \(n^{-1}\hla_1^{[2]}\), $(n,T)=(500,1000)$.]{\includegraphics[width=0.49\linewidth]{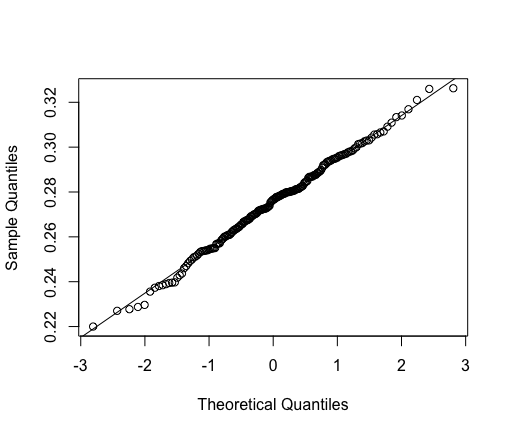}}
		\subfigure[QQ plots of $n^{-1}\hla_2^{[2]}$, $(n,T)=(1000,2000)$.]{\includegraphics[width=0.49\linewidth]{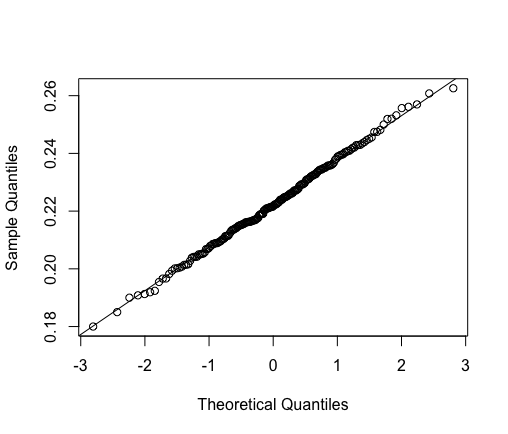}}
		\caption{QQ plots of \(\bbR^{[2]}\)'s first and second largest eigenvalue $\hla_1^{[2]}$ and $\hla_2^{[2]}$ from 200 independent repetitions under different values of data dimension $n$ and sample size $T$.}
		\label{Fig of 1}
	\end{figure}

    \subsection{Experiment 2: unit root test}
	The second experiment is to verify the performance of our unit root test. Suppose $X_t$ is generated by \eqref{Main Eq of unit root model}, where the noise process $e_t$ is generated by $e_t=\bbGa\fe_t$ with the cross-sectional matrix \(\bbGa=[(1+|s-t|)^{-1}1_{|s-t|<T^{1/4}}]_{s,t}\in\mbR^{n\times n}\) and
    \begin{align}
        \fe_t=\diag(P_1,\cdots,P_n)\fe_{t-1}+\varepsilon_t,\quad P_i=0.5+0.2\sin(2\pi i/n),\label{Main Eq of fe_t}
    \end{align}
    where $\varepsilon_t\overset{i.i.d.}{\sim}\mcN(\boldsymbol{0},\bbI_n)$. It is easy to check \(\bbGa\) satisfies Assumptions \ref{Main Ap of nonpanel} and \ref{Main Ap of m dependent}. Consider the following unit root test:
    $$H_0:\bbPi=\bbI_n\quad{\rm versus}\quad H_1:\bbPi=\left(\begin{array}{cc}
		\bbI_{n_1}&\boldsymbol{0}_{n_1\times n_2}\\
		\boldsymbol{0}_{n_2\times n_1}&\widetilde{\bbPi}
	\end{array}\right),$$
	where \(n_1=0.4n\) and $\widetilde{\bbPi}=[0.4^{1+|i-j|}]_{i,j}\in\mbR^{n_2\times n_2}$. It is easy to check that \(\Vert\widetilde{\bbPi}\Vert<1\). We set \(X_0=\phi=\boldsymbol{0}\) and the significance level \(\alpha=0.05\). For different values of data dimension $n$ and sample size $T$, we generate 200 independent \(\bbX=[X_1,\cdots,X_T]\) under both \(H_0\) and \(H_1\). Based on the bootstrap method in Section \ref{Main sec of unit root test}, we compute the empirical size and power and display them in Table \ref{Tab of 1}. 
    \begin{table}[!t]
        \caption{Empirical size/power.\label{Tab of 1}}%
        \begin{tabular*}{\columnwidth}{@{\extracolsep\fill}c|cccc@{\extracolsep\fill}}
            \toprule
            \(n\backslash T\)&60&80&100&120\\\hline
			60&0.095/1&0.090/1&0.060/1&0.055/1\\
			80&0.085/1&0.060/1&0.055/1&0.050/1\\
			100&0.080/1&0.055/1&0.060/1&0.045/1\\
			120&0.070/1&0.060/1&0.050/1&0.045/1\\\hline
        \end{tabular*}
    \end{table}
	All empirical powers are 1, which suggests our unit root can efficiently distinguish \(H_0\) and \(H_1\). When \(n,T\) is relatively small (e.g. \(n=T=60\)), the empirical size is slightly greater than \(0.05\). For larger \(n,T\), all empirical sizes are close to the significance level \(\alpha=0.05\), so our unit root test indeed has good performance for high-dimensional cross-sectional dependent time series data.


    \subsection{Experiment 3: forward sequential test}\label{Main sec of Experiment 3}
	The third experiment is to verify the power of our forward sequential test. The underlying settings are the same as those in Experiment 1 and 2. Precisely, we assume that 
    \begin{align}
        \prod_{l=1}^d(1-\mfa_lL)X_t=\bbGa\fe_t,\label{Main Eq of experiment 3}
    \end{align}
    where the cross-sectional matrix \(\bbGa=[(1+|s-t|)^{-1}1_{|s-t|<T^{1/4}}]_{s,t}\in\mbR^{n\times n}\) and $\fe_t$ is defined as \eqref{Main Eq of fe_t}. For different values of $\mfa_l$ and data dimension $n$ and sample size $T$, we simulate 200 independent samples \(\bbX=[X_1,\cdots,X_T]\), and compute the empirical acceptance rates of our forward sequential test in Section \ref{Main sec of multiple unit root tests} under the following hypotheses:
	\begin{center}
		\(\mbH_0^{(0)}\): \(X_t\) is stationary.\quad\(\mbH_0^{(p)}\): \(X_t\) has \(p\in\mbN^+\) unit roots.
	\end{center} 
    All simulation results about the empirical acceptance rate are summarized in Table \ref{Tab of 3}. For example, roots \((1,1,1)\) means that \(X_t\) is generated by \((1-L)^3X_t=\bbGa\fe_t\). Note that all empirical acceptance rates in Table \ref{Tab of 3} are \(1\), so our forward sequential method is powerful.
    \begin{table}[!t]
        \caption{Empirical acceptance rate.\label{Tab of 3}}%
        \begin{tabular*}{\columnwidth}{@{\extracolsep\fill}c|c|cccc@{\extracolsep\fill}}
            \toprule
            Roots&\(n,T\)&Accept \(\mbH_0^{(0)}\)&Accept \(\mbH_0^{(1)}\)&Accept \(\mbH_0^{(2)}\)&Accept \(\mbH_0^{(3)}\)\\\hline
			\((1,1,1)\)&\((50,100)\)&0&0&0&1\\
			\((1,1,1)\)&\((120,150)\)&0&0&0&1\\
			\((1,1,0.5)\)&\((100,80)\)&0&0&1&0\\
			\((1,0.5{\rm i},-0.5{\rm i})\)&\((100,50)\)&0&1&0&0\\
			\((0.8,0.5,-0.1)\)&\((60,80)\)&1&0&0&0\\\hline
        \end{tabular*}
    \end{table}

     \section{Empirical Study}\label{Main Sec of PPP}
	In this section, we conduct an empirical study about the PPP hypothesis under high-dimensional settings. As a metric for evaluating the relative value of specific goods across different countries, PPP is widely used to compare the absolute purchasing power of countries' currencies. Roughly speaking, if the null hypothesis of PPP holds, real exchange rates of different countries' currencies will be stationary in the long run. Let \(\bbr_t=(r_{1,t},\cdots,r_{n,t})\), where \(r_{i,t}\) is the logarithm of real exchange rates for \(i\)-th country at time \(t\). Suppose
	\begin{align}
		r_{i,t}=\mfc_i+\mfb_i r_{i,t-1}+e_{i,t}\label{Main Eq of exchange rates}
	\end{align} 
	where \(e_{i,t}\) is a stationary error term, \(\mfc_i\) and \(\mfb_i\) are unknown parameters. The null and alternative of PPP hypotheses are
	\begin{align}
		H_0:\mfb_i=1\quad{\rm versus}\quad H_1:\mfb_i<1,\notag
	\end{align}
	Empirical evidence on the stationarity of real exchange rates is abundant but inconclusive. Some earlier works, e.g. \cite{abuaf1990purchasing,jorion1996mean} and \cite{wu1996real}, provided strong evidences to reject \(H_0\) under a ten-country multicurrency system from 1970 to 2000, whereas \cite{cheung1993long,papell1997searching} and \cite{flores1999multivariate} provided some examples to support \(H_0\). In fact, \cite{o1998overvaluation} pointed out some limitations in earlier works indeed making \(H_0\) over-rejected. Later, many researchers aimed to solve this problem by refining the model settings and data structures. For example, \cite{im2003testing} and \cite{hadri2008panel} used the panel unit root tests for the PPP hypothesis to increase the power. \cite{pesaran2012interpretation} suggested that the alternative hypothesis is controversial in many previous literature by assuming \(H_1:{\rm all\ }\mfb_i<1\); instead, a more appropriate alternative should be
	$$H_1:\mfb_i<1,\quad i=1,\cdots,n_1;\ \ \mfb_i=1,\quad i=n_1+1,\cdots,n,\quad\text{for some }n_1<n.$$
	For more recently literature about the PPP hypothesis, readers can refer to \cite{sarno2002purchasing,taylor2002purchasing,pedroni2004panel,banerjee2005testing,moon2005efficient} and \cite{taylor2013real} for in-depth information on the theoretical and empirical aspects.
 
	For our test, we collect the {\bf monthly} (from Jan 2001 to Dec 2021, in total 252 samples) and {\bf quarterly} (from Q1 2001 to Q4 2021, in total 81 samples) period-ending exchange rates (National Currency Per U.S. Dollar) of all countries from the International Monetary Fund’s International Financial Statistics. The sizes of monthly and quarterly data are \(104\) and \(110\), respectively. Let \(\bbr_t\) be the logarithm of real exchange rates of all collected countries at time \(t\), and consider the PPP hypothesis under high-dimensional settings as follows:
	\begin{align}
		H_0:\bbr_t=\bbr_{t-1}+e_t\quad{\rm versus}\quad H_1:\bbr_t=\mfc+\bbPi\bbr_{t-1}+e_t,\label{Main Eq of PPP hypothesis}
	\end{align}
	where \(e_t=\bbGa\sum_{k=0}^{\infty}\Psi_k\varepsilon_{t-k},\varepsilon_t\overset{i.i.d.}{\sim}\mcN(\boldsymbol{0},\bbI_n)\) satisfying Assumptions \ref{Main Ap of panel lag polynomial}, \ref{Main Ap of nonpanel}, \ref{Main Ap of m dependent} and $\bbPi$ is generated as \eqref{Main Eq of partially stationary}
	such that \(\Vert\widetilde{\bbPi}\Vert<1\) and \(c_1:=\lim_{n\to\infty}\frac{n_1}{n}\in[0,1)\). Note that \(\widetilde{\bbPi}\) is not necessarily diagonal. The cross-sectional dependence among \(\bbr_t\) and \(e_t\) is captured by $\bbPi$ and $\bbGa$. We use the \(\widehat{T}_n(0)\) in (\ref{Main Eq of test statistic}) as the test statistic following the established framework in Section \ref{Main sec of unit root test}.
	
	For the monthly data, \(\widehat{T}_n(0)=-14.69<-\log(n)=-4.64\), so we reject \(H_0\). For the quarterly data, \(\widehat{T}_n(0)=-15.11<-\log(n)=-4.70\), so we also reject \(H_0\). In conclusion, our unit root test provides strong evidences that the exchange rates \(\bbr_t\) is not a high-dimensional random walk process, and \(\bbr_t=(r_{1,t},\cdots,r_{n,t})\) indeed contains stationary components as \cite{pesaran2012interpretation} suggested.

\newpage

\newpage
\begin{center}
{\LARGE {\bf Supplementary Materials of the paper ``Testing High-dimensional Nonstationary Time Series''}}
\end{center}
This supplementary document provides all the technical proofs of the results of this paper. It is self-contained without using the results of the main paper.

\appendix
\section{Basic settings}
\setcounter{equation}{0}
\def\theequation{\thesection.\arabic{equation}}
\setcounter{subsection}{0}
In the beginning, let's make some notations here:
\begin{itemize}
	\item \(C_a\) represents a positive constant which depends on some parameter \(a\).
		\item For two real sequence \(\{a_n\}\) and \(\{b_n\}\), $a_n\asymp\mrO(b_n)$ means that for any $n_0\in\mbN^+$, there exists two positive constants $M_1,M_2$ such that
        $$M_1 b_n\leq a_n\leq M_2 b_n,\ \forall n>n_0$$
        Moreover, if \(\{a_n\}\) and \(\{b_n\}\) are sequence of real random variables, then
        \begin{align}
            a_n\asymp\mrO_{\mbP}(b_n)\quad\Longleftrightarrow\quad\lim_{n\to\infty}\mbP(M_1 b_n\leq a_n\leq M_2 b_n)=1\label{Eq of asymp mbP}
        \end{align}
        for two positive constants $M_1=M_1(n_0),M_2=M_2(n_0)$.
		\item Given any integrable random variable/vector \(X\), \(X^{\circ}:=X-\mbE[X]\) denotes its centered version.
		\item The \(L^2\) convergence, the convergence in probability and in distribution are denoted by \(\overset{L^2}{\longrightarrow},\overset{\mbP}{\longrightarrow}\) and \(\overset{d}{\longrightarrow}\), respectively.
        \item Given a matrix \(\bbA=[A_{i,j}]_{n\times n}\), \(\tr(\bbA)=\sum_{i=1}^nA_{i,i}\), \(\bbA'\) denotes the transpose of \(\bbA\), and \(\diag(\bbA)\) is the diagonal matrix made with the main diagonal of \(\bbA\). Moreover, \(\Vert\bbA\Vert\) denotes the spectral norm of \(\bbA\).
\end{itemize}
Next, we present some necessary assumptions and basic settings of our models. Let \(X_t\) be a \(n\)-dimensional time series data, then given the $T$ observations \(X_1,\cdots,X_T\), the high-dimensionality scheme is
\begin{Ap}\label{Ap of highdimensionality}
	As \(n\to\infty\), the number of observations \(T\) also tends to infinity such that \(\lim_{n\to\infty}n/T=c\in(0,\infty)\).
\end{Ap}
Denote $\bbX=[X_1,\cdots,X_T]\in\mbR^{n\times T}$ as the data matrix and \(\bbM:=\bbI_T-\boldsymbol{1}_{T\times T}/T\), where \(\bbI_T\) is the identity matrix with size of \(T\times T\) and \(\boldsymbol{1}_{T\times T}\) is a \(T\times T\) matrix whose entries are all \(1\), then the sample covariance matrix of \(\bbX\) is 
\begin{align}
	\hat{\bbSig}:=\frac{1}{n}\bbM\bbX'\bbX\bbM.\label{Eq of covariance matrix}
\end{align}
Moreover, the sample correlation matrix of $\bbX$ is $\bbD^{-1/2}\bbX\bbM\bbX'\bbD^{-1/2}$, where \(\bbD:=\diag(\bbX\bbM\bbX')\). Since we only focus on the extreme eigenvalues of the sample correlation matrix, and the nonzero eigenvalues of
\begin{align}
	\hat{\bbR}:=\bbM\bbX'\bbD^{-1}\bbX\bbM.\label{Eq of correlation matrix}
\end{align}
and $\bbD^{-1/2}\bbX\bbM\bbX'\bbD^{-1/2}$ are coincide. Without loss of generality, we regard the matrix $\hR$ in \eqref{Eq of correlation matrix} as the sample correlation matrix of $\bbX$ in this article.
Finally, let's briefly introduce the structures of this Supplementary Materials. In \S\ref{Sec of correlation independent} and \S\ref{Sec of correlation dependent}, we establish the joint CLT for the first \(K\) largest eigenvalue of \(\hR\) for cross-sectional independent \(X_t\) and cross-sectional dependent \(X_t\), respectively. In \S\ref{Sec of covariance}, we establish the joint CLT for the first \(K\) largest eigenvalue of \(\hSig\). In \S\ref{Sec of AR process}, we investigate the asymptotic spectral behaviors of the sample correlation matrix generated by more general high-dimensional autoregressive processes. In \S\ref{Sec of applications}, we first investigate the asymptotic behaviors of the sample correlation matrix under the alternative hypothesis of the unit root test in \S\ref{sec of unit root test}, then further construct the statistic for the forward sequential test to determine the number of unit roots in high-dimensional time series in \S\ref{sec of number of unit roots}.

\section{CLT for extreme eigenvalues of the sample correlation matrix of high-dimensional random walks without cross-sectional dependence}\label{Sec of correlation independent}
\setcounter{equation}{0}
\def\theequation{\thesection.\arabic{equation}}
\setcounter{subsection}{0}
Let $X_t$ is a cross-sectional independent random walk generated as follows: 
\begin{align}
	X_t=X_{t-1}+e_t,\quad e_t=\Psi(L)\varepsilon_t=\sum_{k=0}^{\infty}\Psi_k\varepsilon_{t-k},\label{Eq of panel Xt}
\end{align}
where the coefficients \(\{\Psi_k:k\in\mbN\}\) of \(\Psi(L)\) satisfy that
\begin{Ap}\label{Ap of panel lag polynomial}
	All \(\{\Psi_k:=\diag(\varphi_{1,k},\cdots,\varphi_{n,k})\in\mbR^{n\times n}:k\in\mbN^+\}\) are diagonal and there exists two positive constants \(b,B\) such that
	$$\sum_{k=0}^{\infty}(1+k)\Vert\Psi_k\Vert\leq B,\quad{\rm and}\quad\min_{1\leq j\leq n}\inf_{x\in[-\pi,\pi]}\left|\sum_{k=0}^{\infty}\varphi_{j,k}e^{{\rm i}kx}\right|\geq b.$$
\end{Ap}
Moreover, $\varepsilon_t=(\varepsilon_{1,t},\cdots,\varepsilon_{n,t})'$ satisfies that
\begin{Ap}\label{Ap of finite integration}
	All $\varepsilon_{i,t}$ are independent such that $\mbE[\varepsilon_{i,t}]=0$ and $\mbE[\varepsilon_{i,t}^2]=1$ for all $1\leq i\leq n$ and $t\in\mbZ$. Moreover, the densities of all \(\varepsilon_{i,t}\) has a uniform bound for \(i=1,\cdots,n\), i.e. there exists an \(M>0\) such that 
	$${\rm ess}\sup_{x\in\mathbb{R}}p_{\varepsilon_{i,t}}:=\inf\left\{x\in\mbR:\mu(p_{\varepsilon_{i,t}}^{-1}(x,+\infty))=0\right\}\leq M$$
	for all \(i,t\), where \(p_{\varepsilon_{i,t}}(x)\) is the density of \(\varepsilon_{i,t}\) and \(\mu\) is the Borel measure. Besides, we further denote \(\kappa_l:=\sup_{i,t}\mathbb{E}[|\varepsilon_{i,t}|^l]\) for \(l\in\mbN^+\) and assume that all \(\varepsilon_{i,t}\) have uniformly bounded \(8\)-th moment, i.e. \(\kappa_8<\infty\).
\end{Ap}
Given the data matrix \(\bbX=[X_1,\cdots,X_T]\), for any positive integer $K\in\mbN^+$, we will establish the joint CLT for the first $K$ largest eigenvalues of the sample correlation matrix of $\bbX$. Here, let \(\bbU\) be a \(T\times T\) upper triangular matrix with ones above and on the main diagonal, then \eqref{Eq of panel Xt} implies that 
\begin{align}
	\bbX=\bbe\bbU,\label{Eq of bbX}
\end{align}
where \(\bbe=[e_1,\cdots,e_T]\) is the noise matrix. Thus, by \eqref{Eq of correlation matrix}, the sample correlation matrix \(\hR\) of \(\bbX\) in \eqref{Eq of bbX} is
\begin{align}
    \hR=\bbM\bbU'\bbe'\diag(\bbe\bbU\bbM\bbU'\bbe')^{-1}\bbe\bbU\bbM.\label{Eq of correlation matrx random walk 1}
\end{align}
Let \(\hla_1\geq\cdots\geq\hla_K\) be the first \(K\in\mathbb{N}^+\) largest eigenvalues of \(\hat{\bbR}\), we have
\begin{thm}\label{Thm of CLT}
	Under Assumptions {\rm \ref{Ap of highdimensionality}, \ref{Ap of panel lag polynomial}} and {\rm \ref{Ap of finite integration}}, given the observations \(\bbX=[X_1,\cdots,X_T]\) generated by {\rm (\ref{Eq of panel Xt})}, let \(\hla_1\geq\cdots\geq\hla_K\) be the first \(K\in\mbN^+\) largest eigenvalues of the sample correlation matrix \(\hat{\bbR}\) \eqref{Eq of correlation matrx random walk 1} of \(\bbX\), then
	\begin{align}
		\sqrt{n}\left(\frac{\hla_1}{n}-\mbE[\fM_{1,1}],\cdots,\frac{\hla_K}{n}-\mbE[\fM_{K,K}]\right)'\overset{d}{\longrightarrow}\mcN(\bbzeta,\mcS),\label{Eq of correlation CLT}
	\end{align}
	Here, for any $k,l\in\mbN^+$, \(\fM_{k,l}\) is a random variable defined as 
    \begin{align}
        \fM_{k,l}:=\frac{(kl)^{-1}Z_k Z_l}{\sum_{t=1}^{\infty}t^{-2}Z_t^2},\label{Eq of fM}
    \end{align}
    where \(\{Z_t:t\in\mbN^+\}\) is a sequence of i.i.d. \(\mcN(0,1)\). The asymptotic mean \(\bbzeta=(\zeta_1,\cdots,\zeta_K)\)' satisfies that \(|\zeta_k|\leq C_{B,M,\kappa_8,c}\) and \(\mcS\) is the \(K\times K\) covariance matrix such that \(\mcS_{k,l}:=\Cov(\fM_{k,k},\fM_{l,l})\) for \(1\leq k,l \leq K\).
\end{thm}
Next, we present some technical preliminaries for proving Theorem \ref{Thm of CLT}. Denote the singular value decomposition (SVD) of \(\bbM\bbU'\) as
\begin{align}
	\bbM\bbU':=\sum_{k=1}^{T-1}\sigma_k\bbw_k\bbv_k',\quad\quad\sigma_k:=\mu_k^{-1/2}=[2\sin(\pi k/(2T))]^{-1}\label{Eq of SVD of MU}
\end{align}
where \(\bbv_k:=(v_{k,1},\cdots,v_{k,T})'\) and \(\bbw_k:=(w_{k,1},\cdots,w_{k,T})'\) such that
\begin{align}
	v_{k,t}=\sqrt{\frac{2}{T}}\sin(\pi k(t-1)/T),\quad w_{k,t}=-\sqrt{\frac{2}{T}}\cos(\pi k(2t-1)/(2T))\label{Eq of vk}
\end{align}
for \(k=1,\cdots,T-1\). When \(k=T\), \(\sigma_T=0,\bbv_T=(1,0\cdots,0)'\) and \(\bbw_T=\boldsymbol{1}_T/\sqrt{T}\).

Moreover, let \(\hat{F}_1,\cdots,\hat{F}_K\) be the corresponding normalized eigenvectors of $\hla_1,\cdots,\hla_K$. Since \(\{\bbw_1,\cdots,\bbw_T\}\) is a basis of \(\mathbb{R}^T\) and \(\bbM\bbw_T=\boldsymbol{0}\), then $\hR\bbw_T=\boldsymbol{0}$ and $\hat{F}_k'\bbw_T=0$. Hence, we represent \(\hat{F}_k\) by
\begin{align}
	\hat{F}_k:=\sum_{t=1}^{T-1}\alpha_{k,t}\bbw_t.\label{Eq of F1}
\end{align}
where \(\sum_{t=1}^{T-1}\alpha_{k,t}^2=1\). Therefore, we obtain
\begin{align}
	\hla_k=\hat{F}_k'\bbM\bbU'\bbe'\bbD^{-1}\bbe\bbU\bbM\hat{F}_k=\sum_{s,t=1}^{T-1}\alpha_{k,s}\alpha_{k,t}\sigma_s\sigma_t\bbv_s'\bbe'\bbD^{-1}\bbe\bbv_t.\notag
\end{align}
Let \(\bbe_j\) be the \(j\)-th {\bf row} of \(\bbe\) for \(1\leq j\leq n\), then \(D_{j,j}=\sum_{k=1}^{T-1}\sigma_k^2(\bbe_j\bbv_k)^2\) and we define
\begin{align}
	M_{j;k,l}:=\frac{\sigma_k\sigma_l(\bbe_j\bbv_k)(\bbe_j\bbv_l)}{\sum_{t=1}^{T-1}\sigma_t^2(\bbe_j\bbv_t)^2}.\label{Eq of Mjkl}
\end{align}
Thus, we have
\begin{align}
	\frac{\hla_k}{n}=\sum_{s,t=1}^{T-1}\alpha_{k,s}\alpha_{k,t}\frac{1}{n}\sum_{j=1}^nM_{j;k,l}.\label{Eq of la1}
\end{align}
To establish the joint CLT for $(\hla_1,\cdots,\hla_K)'$, we have the following 4 main steps:
\begin{enumerate}
    \item First, we prove that \(\lim_{n\to\infty}\sqrt{n}\mathbb{E}[1-\alpha_{k,k}^2]=0\) for \(1\leq k\leq K\).
		\item Next, we show that
		\begin{align}
			\Bigg(\frac{\hla_1^{\circ}}{\sqrt{n}},\cdots,\frac{\hla_K^{\circ}}{\sqrt{n}}\Bigg)\overset{\mbP}{\longrightarrow}\Bigg(\frac{1}{\sqrt{n}}\sum_{j=1}^n M_{j;1,1}^{\circ},\cdots,\frac{1}{\sqrt{n}}\sum_{j=1}^n M_{j;K,K}^{\circ}\Bigg).\notag
		\end{align}
		\item By the Lindeberg-Feller's CLT, we derive that
		$$\Bigg(\frac{1}{\sqrt{n}}\sum_{j=1}^n M_{j;1,1}^{\circ},\cdots,\frac{1}{\sqrt{n}}\sum_{j=1}^n M_{j;K,K}^{\circ}\Bigg)\overset{d}{\longrightarrow}\mcN(\boldsymbol{0},\mcC).$$
		\item Finally, we can conclude that \(n^{-1/2}\big|\sum_{j=1}^n\mbE[M_{j;k,k}-\fM_{k,k}]\big|\leq C_{B,M,\kappa_8,c}\) by the Lindeberg's principle (see Theorem 1.1 in \cite{chatterjee2006generalization}), so we obtain \eqref{Eq of correlation CLT}.
\end{enumerate}
Finally, as a useful tool for our proof, we cite the following lemma: 
\begin{lem}[Lemma 9 in \cite{onatski2021spurious}]\label{Lem of covariance 1}
    Suppose the noise process \(e_t=\sum_{s=0}^{\infty}\Psi_s\varepsilon_{t-s}\) such that \(\sum_{s=0}^{\infty}(1+s)\Vert\Psi_s\Vert<B\) and \(\varepsilon_s=(\varepsilon_{1,s},\cdots,\varepsilon_{n,s})'\) are independent random vectors with independent entries such that \(\mbE[\varepsilon_{i,s}]=0,\mbE[\varepsilon_{i,s}^2]=1\) and \(\sup_{i,s\in\mbZ}\mbE[\varepsilon_{i,s}^4]<\kappa_4\). Given the noise matrix \(\bbe=[e_1,\cdots,e_T]\), let $\bbe_j$ be the $j$-th {\bf row} of $\bbe$, then for any \(1\leq j\leq n\) and \(1\leq k,l,p,q\leq T\), we have
	\begin{itemize}
        \item Let $f_j(\theta):=\frac{1}{2\pi}\big|\sum_{t=0}^{\infty}\varphi_{j,t}\exp({\rm i}t\theta)\big|^2$ and $\theta_k=2\pi k/T$ for $k=1,\cdots,T$, then
        \begin{align}
	       \bbe_j\bbv_k\overset{d}{\longrightarrow}\mathcal{N}(0,2\pi f_j(\theta_k/2)).\label{Eq of spectral density}
        \end{align}
		\item \(\big|\mathbb{E}[\bbv_l'\bbe_j'\bbe_j\bbv_k]-2\pi f_j(\theta_k/2)\delta_{l,k}\big|\leq CB^2/T\), where \(\delta_{l,k}\) is the Kronecker delta;
		\item \(\big|{\rm Cov}(\bbv_l'\bbe_j'\bbe_j\bbv_k,\bbv_q'\bbe_j'\bbe_j\bbv_p)\big|\leq CB^4(\delta_{l,q}\delta_{k,p}+\delta_{l,p}\delta_{k,q}+(1+\kappa_4)/T)\).
	\end{itemize}
\end{lem}
\subsection{Preliminary lemmas}\label{sec of mean correction}
As we have introduced in the beginning, the first step of establishing the joint CLT in Theorem \ref{Thm of CLT} is to show that $\lim_{n\to\infty}\sqrt{n}\mbE[1-\alpha_{k,k}^2]=0$, where $\alpha_{k,k}$ is defined in \eqref{Eq of F1}. To realized this goal, we need the following lemma: 
\begin{lem}\label{Thm of Lindeberg principle}
	Under Assumptions {\rm \ref{Ap of highdimensionality}, \ref{Ap of panel lag polynomial}} and {\rm \ref{Ap of finite integration}}, for any \(1\leq k,l\leq T-1\), we have
	\begin{align}
	    \big|\mbE[M_{j;k,l}-\fM_{k,l}]\big|\leq C_{B,M,\kappa_8}T^{-1/2}\label{Eq of Lindeberg 1}
	\end{align}
	uniformly in \(1\leq j\leq n\), where \(M_{j;k,l}\) and \(\fM_{k,l}\) are defined in \eqref{Eq of Mjkl} and {\rm (\ref{Eq of fM})}, respectively. Moreover, when $k\neq l$, we have
    \begin{align}
        \big|\mbE[M_{i;k,l}]\big|\leq C_{B,M,\kappa_6}(kl)^{-1}\log^3(T)T^{-1/2}.\label{Eq of Lindeberg 2}
    \end{align}
\end{lem}
Actually, by \eqref{Eq of la1}, to investigate the asymptotic behaviors of $\alpha_{k,k}$, it is essential to have a more comprehensive understanding of the random variables $M_{j;k,l}$ in \eqref{Eq of Mjkl}. The above Lemma \ref{Thm of Lindeberg principle} provide a mean approximation between $M_{j;k,l}$ and $\fM_{k,l}$ in \eqref{Eq of fM}. By \eqref{Eq of fM}, it is relatively easy to handle some basic properties of $\fM_{k,l}$. For example, by symmetry, we know that $\mbE[\fM_{k,l}]=0$ if $k\neq l$; besides, we can use a Monte Carlo method to derive $\mbE[\fM_{k,k}]$ for real applications. Moreover, comparing \eqref{Eq of Lindeberg 1} and \eqref{Eq of Lindeberg 2}, when $k,l$ is relatively small, \eqref{Eq of Lindeberg 1}  indeed provides smaller upper bound of $\big|\mbE[M_{i;k,l}]\big|$ than \eqref{Eq of Lindeberg 2}. However, when $k,l\geq\mrO(\log^2(T))$, \eqref{Eq of Lindeberg 2} will be better than \eqref{Eq of Lindeberg 1}.

Basically, we will prove Lemma \ref{Thm of Lindeberg principle} by the following three steps:
\begin{enumerate}
    \item {\bf Normalization \S\ref{ssec of normalization}:} We extend Theorem 1.1 in \cite{chatterjee2006generalization}, i.e. transform all \(\varepsilon_{i,t}\) in Assumption \ref{Ap of finite integration} into standard normal distributions through the Lindeberg's principle. In this way, all \(\bbe_j\bbv_k\) in \eqref{Eq of Mjkl} will be normal after this transformation, and the difference caused by this transformation will be well controlled by  Lemma \ref{Lem of 1st approximation}.
	\item {\bf Remove the dependence among all \((\bbe_j\bbv_k)^2\) \S\ref{ssec of remove the dependence}:} By Lemma \ref{Lem of covariance 1}, we know that all $(\bbe_j\bbv_t)^2$ are indeed correlated for $t=1,\cdots,T-1$. Here, we will remove dependence among all \((\bbe_j\bbv_k)^2\) while carefully control the error caused by this operation. The key step is to control the total variation distance between high-dimensional Gaussian vectors (Lemma \ref{Thm of cut down TV}).
	\item {\bf Adjust the coefficients \S\ref{ssec of remove the dependence}:} Adjust all coefficients \(\sigma_t\) in \eqref{Eq of Mjkl} to fit the corresponding coefficients in $\fM_{k,l}$ \eqref{Eq of fM}. Moreover, Lemma \ref{Lem of covariance 1} shows that the variance of \(\bbe_j\bbv_t\) are not coincide for $t=1,\cdots,T-1$. Hence, we also unify the variance of \(\bbe_j\bbv_t\) in \eqref{Eq of Mjkl}, see Lemma \ref{Thm of adjust coefficients} for details.
\end{enumerate}
\subsubsection{Some auxiliary results}\label{ssec of preliminary dependent}
In this part, we provide several auxiliary results for proving Lemma \ref{Thm of Lindeberg principle}. For convenience, we simplify the notation \(\operatorname{ess}\sup_{x\in\mathbb{R}}p_{\varepsilon_{i,t}}(x)\) by \(\operatorname{ess}\sup(\varepsilon_{i,t})\) and let
\begin{align}
	x_{j,t}:=\bbe_j\bbv_t.\label{Eq of x}
\end{align}
First, for any \(R\in\mbN^+\), we show that \((\bbe_j\bbv_1,\cdots,\bbe_j\bbv_R)'\) have bounded joint density uniformly for all \(1\leq j\leq n\). Here, we cite the following results:
\begin{lem}[Corollary 2, \cite{bobkov2014bounds}]\label{Lem of bounded density}
	Let \(\{X_k:k=1,\cdots,n\}\) be a sequence of independent \(d\)-dimensional random vectors such that \(\sup_{k=1,\cdots,n}\operatorname{ess}\sup(X_k)<M\), and \(\bba\) be a \(n\)-dimensional unit constant vector, i.e. \(\sum_{k=1}^n a_k^2=1\), then
	$$\operatorname{ess}\sup(S_n)\leq e^{d/2}M,$$
	where \(S_n:=\sum_{k=1}^n a_k X_k\).
\end{lem}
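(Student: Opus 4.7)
The plan is to reduce the problem to an explicit extremal distribution and then evaluate. Since the bound $e^{d/2}M$ is independent of $n$, a naive convolution smoothing argument (which would yield decay in $n$) cannot be right. Following Bobkov and Chistyakov, I would first apply the Brascamp--Lieb--Luttinger symmetric decreasing rearrangement inequality: the density of $S_n$ at any point can be written as an integral of products of the $p_{X_k}$'s against a linear constraint, and rearranging each $p_{X_k}$ to be radially symmetric and decreasing preserves its essential supremum while only increasing this integral. A second rearrangement step would then replace each $X_k$ by the uniform distribution on a Euclidean ball of volume $1/M$, which is the extremal choice among radially symmetric decreasing densities with essential supremum $M$ for the purpose of maximizing $\|p_{S_n}\|_\infty$.

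Once reduced to uniforms on balls, I would estimate $\|p_{S_n}\|_\infty$ by Fourier inversion. By radial symmetry the maximum is attained at the origin, so
\begin{equation*}
	\|p_{S_n}\|_\infty = p_{S_n}(0) = \frac{1}{(2\pi)^d}\int_{\mathbb{R}^d}\prod_{k=1}^n \hat{p}_{X_k}(a_k t)\, dt.
\end{equation*}
The characteristic function of a uniform on a ball of volume $1/M$ is an explicit Bessel function, satisfying a Gaussian-type upper bound $|\hat{p}_{X_k}(s)| \leq \exp\bigl(-c_d (V_d M)^{-2/d}|s|^2/2\bigr)$ near the origin (where $V_d$ denotes the volume of the unit ball in $\mathbb{R}^d$) and decaying rapidly for large $|s|$. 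Using $\sum_k a_k^2 = 1$, the product consolidates into a single exponential in $|t|^2$; evaluating the resulting Gaussian integral and applying Stirling's formula for $V_d = \pi^{d/2}/\Gamma(d/2+1)$ yields the claimed bound $e^{d/2}M$, up to a lower-order multiplicative correction absorbed into the constant.

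The main obstacle is the second rearrangement step: justifying that a uniform on the ball of volume $1/M$ does indeed dominate all other radially symmetric decreasing densities with essential supremum $M$ in the sense required. This is the technical heart of the Bobkov--Chistyakov argument and demands a careful comparison-of-convolutions inequality that goes beyond the standard Brascamp--Lieb--Luttinger rearrangement. The Fourier-integral computation in the second paragraph, though elementary in principle, also requires care to control the Bessel-tail regime away from the Gaussian approximation near the origin, and to verify that the final constant is exactly $e^{d/2}$ rather than a weaker exponent.
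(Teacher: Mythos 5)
The paper does not prove this lemma at all: it is imported verbatim from \cite{bobkov2014bounds}, so there is no in-paper argument to match yours against. Judged on its own terms, your sketch has a genuine gap, and it is exactly the one you flag yourself. The Brascamp--Lieb--Luttinger inequality legitimately reduces to symmetric decreasing densities with the same essential supremum, but the second reduction --- that among such densities the uniform on a ball of volume $1/M$ maximizes $\|p_{S_n}\|_\infty$ --- is not a routine rearrangement fact. In dimension one this is Rogozin's theorem, itself a substantial result; in dimension $d\geq 2$ the analogous extremality statement is not supplied by BLL and you give no argument for it, so the proof does not close. A second, quieter problem is the final step: the target constant is \emph{exactly} $e^{d/2}$, with nothing to spare, so ``up to a lower-order multiplicative correction absorbed into the constant'' is not available. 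A Gaussian approximation of the Bessel-type characteristic functions plus Stirling gives the right order of magnitude but does not by itself certify the bound $e^{d/2}M$ uniformly in $n$ and in the coefficient profile $(a_k)$; the extremal configurations for this constant are degenerate (very unequal $a_k$), precisely where the Gaussian heuristic is least reliable.

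For what it is worth, the proof in the cited reference avoids both difficulties by taking a different route: the sharp Young convolution inequality with Beckner's constants gives
\begin{equation*}
\|p_{a_1X_1}\ast\cdots\ast p_{a_nX_n}\|_\infty\leq\prod_{k=1}^n C_{p_k}^{d}\,\|p_{a_kX_k}\|_{p_k},\qquad \sum_{k=1}^n\frac{1}{p_k'}=1,\quad C_p^2=\frac{p^{1/p}}{p'^{1/p'}},
\end{equation*}
combined with $\|f\|_{p}\leq\|f\|_\infty^{1/p'}$ for a probability density and $\|p_{a_kX_k}\|_\infty=|a_k|^{-d}\|p_{X_k}\|_\infty$. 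Choosing $1/p_k'=a_k^2$ reduces the claim to the elementary inequality $-\sum_k(1-a_k^2)\log(1-a_k^2)\leq\sum_k a_k^2=1$, which yields the constant $e^{d/2}$ exactly, in every dimension, with no extremality discussion. If you want a self-contained proof to accompany the lemma, this is the argument to reproduce; your rearrangement route would require proving the multidimensional Rogozin-type extremality statement first.
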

In fact, the condition that \(\Vert\bba\Vert_2=1\) is not essential, we can replace it by any \(\Vert\bba\Vert_2<\infty\) due to the functional ``\(\operatorname{ess}\sup\)'' is homogeneous of degree \(2\), i.e.
$$\operatorname{ess}\sup(\lambda X)^{-2/d}=\lambda^2\operatorname{ess}\sup(X)^{-2/d},\ \lambda\in\mathbb{R}.$$
Hence, when \(\Vert\bba\Vert_2\neq1\), by Lemma \ref{Lem of bounded density}, we have that
$$\Vert\bba\Vert_2^{-2}\operatorname{ess}\sup(S_n)^{-2/d}=\operatorname{ess}\sup(\Vert\bba\Vert_2^{-1}S_n)^{-2/d}\geq e^{-1}M^{-2/d},$$
i.e.
$$\operatorname{ess}\sup(S_n)^{-2/d}\leq\Vert\bba\Vert_2^{-d}e^{d/2}M.$$
Moreover, we also need the following result to deal with the infinite sum of independent random variables:
\begin{lem}[Chapter 4.2, \cite{shiryaev2019probability}]\label{Lem of infinite sum}
	Let \(\{\xi_n:n\in\mathbb{N}^+\}\) be a sequence of independent random variables such that \(\mathbb{E}[\xi_n]=0\) for all \(n\), then if
	$$\sum_{n=1}^{\infty}\mathbb{E}[\xi_n^2]<\infty,$$
	the series \(\sum_{n=1}^{\infty}\xi_n\) converges with probability \(1\).
\end{lem}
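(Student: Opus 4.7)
The plan is to reduce everything to \emph{Kolmogorov's maximal inequality} for independent, zero-mean, square-integrable summands. Writing $S_n := \sum_{k=1}^n \xi_k$, I would first establish the inequality
$$\mbP\!\left(\max_{1 \leq k \leq N} |S_k| \geq \varepsilon\right) \leq \frac{1}{\varepsilon^2}\sum_{k=1}^N \mbE[\xi_k^2]$$
by the standard stopping-time argument: define $\tau := \min\{k \leq N : |S_k| \geq \varepsilon\}$, bound $\mbE[S_N^2] \geq \sum_{k=1}^N \mbE[S_N^2 \mathbf{1}_{\{\tau = k\}}]$, and use the independence of $S_k \mathbf{1}_{\{\tau = k\}}$ and $S_N - S_k$ to discard the cross term after writing $S_N = S_k + (S_N - S_k)$.

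Second, apply this inequality to the tail block $\xi_{n+1}, \ldots, \xi_{n+m}$, which remain independent and zero-mean, to get
$$\mbP\!\left(\max_{1 \leq k \leq m} |S_{n+k} - S_n| \geq \varepsilon\right) \leq \frac{1}{\varepsilon^2}\sum_{k=n+1}^{n+m} \mbE[\xi_k^2].$$
Letting $m \to \infty$ by monotone convergence on the event yields
$$\mbP\!\left(\sup_{k > n} |S_k - S_n| \geq \varepsilon\right) \leq \frac{1}{\varepsilon^2}\sum_{k=n+1}^{\infty} \mbE[\xi_k^2],$$
and the summability hypothesis $\sum_{k=1}^{\infty} \mbE[\xi_k^2] < \infty$ forces the right-hand side to vanish as $n \to \infty$. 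This already gives $\sup_{k > n}|S_k - S_n| \to 0$ in probability.

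The main obstacle is the upgrade to almost-sure Cauchy convergence, because $\sup_{k > n}|S_k - S_n|$ is not monotone in $n$. My workaround is to pass to the two-sided supremum $W_n := \sup_{j, k \geq n}|S_j - S_k|$, which \emph{is} non-increasing in $n$, hence converges a.s.\ to some $W \geq 0$, and which satisfies $W_n \leq 2\sup_{k \geq n}|S_k - S_n|$ by the triangle inequality. The convergence in probability from the previous step therefore forces $W_n \to 0$ in probability, and a monotone non-increasing nonnegative sequence with this property must converge to $0$ almost surely. Hence $\{S_n\}$ is a.s.\ Cauchy in $\mbR$, and completeness of $\mbR$ yields a.s.\ convergence of the series.
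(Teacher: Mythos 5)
Your proposal is correct, and it is the standard textbook argument. Note that the paper does not prove this lemma at all — it simply cites it as a known result from Chapter 4.2 of Shiryaev's \emph{Probability}, and the proof given there is essentially the one you wrote: Kolmogorov's maximal inequality applied to tail blocks, followed by the upgrade from convergence in probability of the tail suprema to the almost-sure Cauchy property via the monotone quantity $W_n=\sup_{j,k\geq n}|S_j-S_k|$. The only cosmetic point is the passage from $\max_{1\leq k\leq m}$ to $\sup_{k}$ on the event $\{\cdot\geq\varepsilon\}$, where strictness of the inequality can differ on a boundary set; this is handled in the usual way by working with $\varepsilon/2$ or strict inequalities and does not affect the conclusion.
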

Now, we can show that
\begin{lem}\label{Thm of bounded density}
	Under Assumptions {\rm \ref{Ap of panel lag polynomial}} and {\rm \ref{Ap of finite integration}}, for any \(R\in\mathbb{N}^+\), let \(\bby_{j,R}:=(x_{j,1},\cdots,x_{j,R})'\) be a \(R\)-dimensional random vector, where \(x_{j,t}\) is defined in {\rm (\ref{Eq of x})}, then we have
	$$\operatorname{ess}\sup(\bby_{j,R})<C_{R,B,M}$$
	uniformly for all \(1\leq j\leq n\).
\end{lem}
\begin{proof}
	First, for any \(t\in\{1,\cdots,T\}\), define
	\begin{align}
		r_{j,t}^{(0)}:=\sum_{k=0}^{\infty}\varphi_{j,t+k}\varepsilon_{j,-k}=e_{j,t}-\sum_{k=0}^{t-1}\varphi_{j,k}\varepsilon_{j,t-k}:=e_{j,t}-\tilde{e}_{j,t},\label{Eq of rjt}
	\end{align}
	then
	\begin{align}
		x_{j,t}&=\sum_{l=1}^T e_{j,l}v_{t,l}=\sum_{l=1}^T v_{t,l}\tilde{e}_{j,l}+\sum_{l=1}^T v_{t,l}r_{j,l}^{(0)}:=S_{j,t}^1+S_{j,t}^2,\notag
	\end{align}
	where \(v_{t,l}\) has been defined in (\ref{Eq of vk}) and
	$$S_{j,t}^1=\sum_{l=1}^T v_{t,l}\sum_{k=0}^{l-1}\varphi_{j,k}\varepsilon_{j,l-k}=\sum_{l=1}^T\varepsilon_{j,l}\sum_{k=0}^{T-l}\varphi_{j,k}v_{t,k+l}:=\sum_{l=1}^T\varepsilon_{j,l} H_{j,l}^{(t)}.$$
	Next, let's denote \(\bbS_j^1:=(S_{j,1}^1,\cdots,S_{j,R}^1)',\bbS_j^2:=(S_{j,1}^2,\cdots,S_{j,R}^2)'\) and \(\bbH_{j,l}:=(H_{j,l}^{(1)},\cdots,H_{j,l}^{(R)})'\), then
	$$\bby_{j,R}=\bbS_j^1+\bbS_j^2=\sum_{l=1}^T\varepsilon_{j,l}\bbH_{j,l}+\bbS_j^2.$$
	Notice that \(r_{j,t}^{(0)}\) is a infinite sum of random variables, then by Lemma \ref{Lem of infinite sum} and Assumption \ref{Ap of panel lag polynomial}, since
	$$\sum_{k=0}^{\infty}\varphi_{j,t+k}^2\mathbb{E}[\varepsilon_{j,-k}^2]\leq t^{-2}\left(\sum_{k=0}^{\infty}(t+k)|\varphi_{j,t+k}|\right)^2\leq B^2 t^{-2}<\infty,$$
	it implies that \(r_{j,t}^{(0)}\) are well-defined random variables for \(t=1,\cdots,T\), so \(\bbS_j^2\) is a well-defined random vector. Moreover, since all \(r_{j,t}^{(0)}\) depend on \(\{\varepsilon_{j,-k}:k\in\mathbb{N}\}\), then \(\{r_{j,t}^{(0)}:t=1,\cdots,T\}\) and \(\{\tilde{e}_{j,t}:t=1,\cdots,T\}\) are independent, which yields that \(\bbS_j^1\) and \(\bbS_j^2\) are also independent; since the density maximum cannot increase due to convolution multiplication, it is enough to show that \(\bbS_j^1\) has bounded density. Let
	$$\tilde{H}_{j,l}:=\arg\min_{t=1,\cdots,R}|H_{j,l}^{(t)}|\ \ {\rm and\ \ }\tilde{\bbH}_{j,l}:=(H_{j,l}^{(1)}/\tilde{H}_{j,l},\cdots,H_{j,l}^{(R)}/\tilde{H}_{j,l})'$$
	and consider \(\operatorname{ess}\sup(\varepsilon_{j,l}\tilde{\bbH}_{j,l})\). Although \(\varepsilon_{j,l}\tilde{\bbH}_{j,l}\) is a random vector, its density is indeed determined by the univariate random variable \(\varepsilon_{j,l}\), i.e.
	$$\operatorname{ess}\sup(\varepsilon_{j,l}\tilde{\bbH}_{j,l})\leq\operatorname{ess}\sup(\varepsilon_{j,l})\times\max_{t=1,\cdots,R}\big|\tilde{H}_{j,l}/H_{j,l}^{(t)}\big|\leq\operatorname{ess}\sup(\varepsilon_{j,l}).$$
	Moreover, since \(v_{t,r}=\sqrt{2/T}\sin(\pi(r-1)t/T)\), then \(|v_{t,r}|\leq\sqrt{2/T}\) and
	$$\big(H_{j,l}^{(t)}\big)^2\leq\frac{2}{T}\left(\sum_{k=0}^{T-l}|\varphi_{j,k}|\right)^2\leq\frac{2B^2}{T},$$
	where we use Assumption \ref{Ap of panel lag polynomial}. Hence, it gives that
	$$\sum_{l=1}^T\big(\tilde{H}_{j,l}\big)^2\leq 2B^2.$$
	Notice that 
	$$\bbS_j^1=\sum_{l=1}^T\tilde{H}_{j,l}\times\varepsilon_{j,l}\tilde{\bbH}_{j,l},$$
	by Lemma \ref{Lem of bounded density} and Assumption \ref{Ap of finite integration}, it concludes that
	$$\operatorname{ess}\sup(\bbS_j^1)\leq CB^{-R}e^{R/2}M,$$
	which completes our proof.
\end{proof}
Based on Lemma \ref{Thm of bounded density}, \(\bby_{j,R}\) have uniformly bounded densities for $1\leq j\leq n$, then we can further show that:
\begin{lem}\label{Thm of cdf order}
	Under Assumptions {\rm \ref{Ap of panel lag polynomial}} and {\rm \ref{Ap of finite integration}}, for any \(R\in\mathbb{N}^+\), we have
	$$\mathbb{P}\left(\sum_{k=1}^R(\bbe_j\bbv_k)^2\leq x\right)\leq C_{B,M,R}x^{R/2},\quad\forall x\in[0,1].$$
\end{lem}
\begin{proof}
	Since
	\begin{align}
		\mathbb{P}\big(\bby_{j,R}\leq x\big)&=\int_{y_1^2+\cdots+y_R^2\leq x} p_{\bby_{j,R}}(y_1,\cdots,y_R){\rm d}y_1\cdots {\rm d}y_R\notag\\
		&=\int_{\mathcal{D}_x} r^{R-1}\sin^{R-2}\varphi_1\sin^{R-3}\varphi_2\cdots\sin\varphi_{R-2}p_{\bby_{j,R}}(y_1,\cdots,y_R){\rm d}r{\rm d}\varphi_1\cdots{\rm d}\varphi_{R-1},\notag
	\end{align}
	where
	$$\left\{\begin{array}{l}
		y_1:=r\cos\varphi_1\\
		y_2:=r\sin\varphi_1\cos\varphi_2\\
		\vdots\\
		y_{R-1}:=r\sin\varphi_1\sin\varphi_2\cdots\sin\varphi_{R-2}\cos\varphi_{R-1}\\
		y_R:=r\sin\varphi_1\sin\varphi_2\cdots\sin\varphi_{R-2}\sin\varphi_{R-1}
	\end{array}\right.$$
	and \(\mathcal{D}_x:=\{r\in[0,\sqrt{x}];\varphi_1,\cdots,\varphi_{R-2}\in[0,\pi];\varphi_{R-1}\in[0,2\pi]\}\). By Lemma \ref{Thm of bounded density}, we know that \(\bby_{j,R}\) has bounded density, then
	\begin{align}
		\mathbb{P}\big(\bby_{j,R}\leq x\big)&\leq C_{R,B,M}\int_{\mathcal{D}_x}r^{R-1}\sin^{R-2}\varphi_1\sin^{R-3}\varphi_2\cdots\sin\varphi_{R-2}{\rm d}r{\rm d}\varphi_1\cdots{\rm d}\varphi_{R-1}\notag\\
		&\leq C_{R,B,M}\int_0^{\sqrt{x}}r^{R-1}{\rm d}r=C_{R,B,M}x^{R/2},\notag
	\end{align}
	where we use the fact \(\int_0^{\pi}\sin^n\varphi{\rm d}\varphi\leq\int_0^{\pi}\sin\varphi{\rm d}\varphi=1\) in the second inequality.
\end{proof}
Finally, we show that \(x_{j,t}\) has uniformly bounded higher moments, i.e.
\begin{lem}\label{Lem of finite 8th moment}
	Under Assumptions {\rm \ref{Ap of highdimensionality}, \ref{Ap of panel lag polynomial}} and {\rm \ref{Ap of finite integration}}, we have
	$$\mathbb{E}[x_{j,t}^{2l}]\leq C_{B,\kappa_{2l}},\quad1\leq l\leq 4,$$
	where \(x_{j,t}\) is defined in {\rm (\ref{Eq of x})}.
\end{lem}
\begin{proof}
	Without loss of generality, we only prove the case of \(l=4\). For simplicity, we extend the definition of \(r_{j,t}^{(0)}\) in the (\ref{Eq of rjt}) as follows:
	$$r_{j,t}^{(T)}:=\sum_{k=T}^{\infty}\varphi_{j,t+k}\varepsilon_{j,-k},$$
	and \(\tilde{e}_{j,t}:=e_{j,t}-r_{j,t}^{(T)}\), where we abuse the notation \(\tilde{e}_{j,t}\) in the proof of Lemma \ref{Lem of bounded density}. Then we have
	$$\tilde{e}_{j,t}=\sum_{k=0}^{t+T-1} \varphi_{j,k}\varepsilon_{j,t-k}$$
	and
	$$x_{j,t}=\sum_{s=1}^T v_{t,s}e_{j,s}=\sum_{s=1}^T v_{t,s}\tilde{e}_{j,s}+\sum_{s=1}^T v_{t,s}r_{j,s}^{(T)}.$$
	Therefore, by the Hölder's inequality, it gives that
	$$x_{j,t}^8\leq2^7\left[\left(\sum_{s=1}^T v_{t,s}\tilde{e}_{j,s}\right)^8+\left(\sum_{s=1}^T v_{t,s}r_{j,s}^{(T)}\right)^8\right].$$
	Notice that all \(\varepsilon_{j,t}\) are independent with zero mean and unite variance, it implies that
	$$\mathbb{E}\big[(r_{j,s}^{(T)})^8\big]\leq\left(\kappa_8\sum_{k=T}^{\infty}|\varphi_{j,t+k}|^2\right)^4\leq T^{-8}\left(\kappa_8\sum_{k=T}^{\infty}(t+k)|\varphi_{j,t+k}|\right)^8\leq C_{\kappa_8,B}T^{-8},$$
	where we use Assumption \ref{Ap of panel lag polynomial}. Hence, by the Cauchy's inequality and the fact \(\Vert\bbv_t\Vert_2=1\) defined in (\ref{Eq of vk}), it implies that
	$$\mathbb{E}\left[\left(\sum_{s=1}^T v_{t,s}r_{j,s}^{(T)}\right)^8\right]\leq\mathbb{E}\left[\left( \sum_{s=1}^T(r_{j,s}^{(T)})^2\right)^4\right]\leq T^3\sum_{s=1}^T\mathbb{E}\big[(r_{j,s}^{(T)})^8\big]\leq C_{\kappa_8,B}T^{-4}.$$
	Moreover, notice that
	$$\sum_{s=1}^T v_{t,s}\tilde{e}_{j,s}=\sum_{s=1}^T v_{t,s}\sum_{k=0}^{s+T-1}\varphi_{j,k}\varepsilon_{j,s-k}=\sum_{l=1}^T\varepsilon_{j,l}\sum_{k=0}^{T-l}\varphi_{j,k}v_{t,k+l}+\sum_{l=1}^T\varepsilon_{j,1-l}\sum_{k=1}^T\varphi_{j,l+k-1}v_{t,k},$$
	then by Assumption \ref{Ap of panel lag polynomial} again, we have
	\begin{align}
		&\mathbb{E}\left[\left(\sum_{l=1}^T\varepsilon_{j,l}\sum_{k=0}^{T-l}\varphi_{j,k}v_{t,k+l}\right)^8\right]\leq\left[\kappa_8\sum_{l=1}^T\left(\sum_{k=0}^{T-l}\varphi_{j,k}v_{t,k+l}\right)^2\right]^4\notag\\
		&\leq\left[2\kappa_8 T^{-1}\sum_{l=1}^T\left(\sum_{k=0}^{T-l}|\varphi_{j,k}|\right)^2\right]^4\le C_{\kappa_8,B}\notag
	\end{align}
	and
	\begin{align}
		&\mathbb{E}\left[\left(\sum_{l=1}^T\varepsilon_{j,1-l}\sum_{k=1}^T\varphi_{j,l+k-1}v_{t,k}\right)^8\right]\leq\left[\kappa_8\sum_{l=1}^T\left(\sum_{k=1}^T\varphi_{j,l+k-1}v_{t,k}\right)^2\right]^4\notag\\
		&\leq\left[2\kappa_8 T^{-1}\sum_{l=1}^Tl^{-2}\left(\sum_{k=1}^T(l+k)|\varphi_{j,l+k-1}|\right)^2\right]^4\leq C_{\kappa_8,B}T^{-4},\notag
	\end{align}
	then
	\begin{align*}
		&\mathbb{E}\left[\left(\sum_{s=1}^T v_{t,s}\tilde{e}_{j,s}\right)^8\right]\leq2^7\mathbb{E}\left[\left(\sum_{l=1}^T\varepsilon_{j,l}\sum_{k=0}^{T-l}\varphi_{j,k}v_{t,k+l}\right)^8\right]\\
		&+2^7\mathbb{E}\left[\left(\sum_{l=1}^T\varepsilon_{j,1-l}\sum_{k=1}^T\varphi_{j,l+k-1}v_{t,k}\right)^8\right]\leq C_{B,\kappa_8},
	\end{align*}
	which completes our proof.
\end{proof}
\subsubsection{Lindeberg's principle}\label{ssec of normalization}
In this part, we will transform all $\bbe_j\bbv_t$ in $M_{j;k,l}$ \eqref{Eq of Mjkl} by normal random variables. The basic framework follows Lemma 10 in \cite{onatski2021spurious}. Recall the definition of \(\sigma_k\) in (\ref{Eq of SVD of MU}), let
\begin{align}
	\beta_t:=\sigma_t/\sigma_1\quad{\rm and}\quad g_{k,l}(\eta):=\frac{\beta_k\eta_k\beta_l\eta_l}{\sum_{t=1}^{T-1}\beta_t^2\eta_t^2},\label{Eq of beta}
\end{align}
where \(t=1,\cdots,T-1\) and \(\eta=(\eta_1,\cdots,\eta_{T-1})'\). It is easy to see that
\begin{align}
	\frac{2}{\pi t}\leq\beta_t=\frac{\sin(\pi/(2T))}{\sin(\pi t/(2T))}\leq\frac{\pi}{2t}.\label{Eq of beta upper}
\end{align}
Here, we define two \(3T\)-dimensional random vectors \(\vec{x}=(x_1,\cdots,x_{3T})',\vec{y}=(y_1,\cdots,y_{3T})'\) such that
\begin{align}
	x_i:=\left\{\begin{array}{cl}
		\varepsilon_{j,T+1-i}&i=1,\cdots,2T,\\
		\sum_{k=2T+1-i}^{\infty}\varphi_{j,k+2T}\varepsilon_{j,T+1-i-k}&i=2T+1,\cdots,3T,
	\end{array}\right.\label{Eq of xi}
\end{align}
and 
\begin{align}
	y_i:=\left\{\begin{array}{cl}
		{\rm i.i.d.\ }\mathcal{N}(0,1)&i=1,\cdots,2T,\\
		0&i=2T+1,\cdots,3T,
	\end{array}\right.\label{Eq of yi}
\end{align}
where $\varphi_{j,k+2T}$ is defined in Assumption \ref{Ap of panel lag polynomial}, then we have
\begin{align}
	e_{j,t}(\vec{x}):=\sum_{k=0}^{T+t-1}\varphi_{j,k}x_{k+T-t+1}+x_{3T-t+1},\quad t=1,\cdots,T-1,\label{Eq of ejt(x)}
\end{align}
and so does \(e_{j,t}(\vec{y})\). Based on the proof of Theorem 1.1 in \cite{chatterjee2006generalization}, let's show that
\begin{lem}\label{Lem of 1st approximation}
	Under Assumptions {\rm \ref{Ap of highdimensionality}, \ref{Ap of panel lag polynomial}} and {\rm \ref{Ap of finite integration}}, for any integer \(K\in\mbN^+\), let 
	$$h_{k,l}(\vec{x})=\frac{\beta_k(\bbe_j(\vec{x})\bbv_k)\beta_l(\bbe_j(\vec{x})\bbv_l)}{\sum_{t=1}^{T-1}\beta_t^2(\bbe_j(\vec{x})\bbv_t)^2},$$
	where \(\bbe_j(\vec{x})=(e_{j,1}(\vec{x}),\cdots,e_{j,T}(\vec{x}))\), then we have
	\begin{align}
		&\big|\mathbb{E}[h_{k,l}(\vec{x})-h_{k,l}(\vec{y})]\big|\leq C_{B,M,\kappa_8}T^{-1/2},\label{Eq of Lindeberg}\\
        &\big|\mathbb{E}[h_{k,l}(\vec{x})-h_{k,l}(\vec{y})]\big|\leq C_{B,M,\kappa_6}(kl)^{-1}\log^3(T)T^{-1/2}.\label{Eq of Lindeberg principle 1}
	\end{align}
\end{lem}
\begin{proof}
    Actually, the proof of \ref{Eq of Lindeberg} and \eqref{Eq of Lindeberg principle 1} are similar. Before presenting detailed calculations, we make some necessary notations here. First, let \(\vec{z}_i:=(x_1,\cdots,x_i,y_{i+1},\cdots,y_{3T})',\vec{z}_i^0:=(x_1,\cdots,x_{i-1},0,y_{i+1},\cdots,y_{3T})'\) and
	\begin{align}
		A_i:=\mathbb{E}[x_i|x_1,\cdots,x_{i-1}]-\mathbb{E}[y_i]\ \ {\rm and\ \ }B_i:=\mathbb{E}[x_i^2|x_1,\cdots,x_{i-1}]-\mathbb{E}[y_i^2].\notag
	\end{align}
	Since \(\mathbb{E}[h_{k,l}(\vec{x})-h_{k,l}(\vec{y})]=\sum_{i=1}^{3T}\mathbb{E}[h_{k,l}(\vec{z}_i)-h_{k,l}(\vec{z}_{i-1})]\), by third-order Taylor approximation with integral remainder, we have
	$$\left\{\begin{array}{l}
		h_{k,l}(\vec{z}_i)-h_{k,l}(\vec{z}_i^0)=\partial_i h_{k,l}(\vec{z}_i^0)x_i+\frac{1}{2}\partial_i^2 h_{k,l}(\vec{z}_i^0)x_i^2+\frac{1}{2}\int_0^1(1-t)^2\partial_i^3 h_{k,l}(\vec{z}_i^0+t x_i)x_i^3{\rm d}t\\
		h_{k,l}(\vec{z}_{i-1})-h_{k,l}(\vec{z}_i^0)=\partial_i h_{k,l}(\vec{z}_i^0)y_i+\frac{1}{2}\partial_i^2 h_{k,l}(\vec{z}_i^0)y_i^2+\frac{1}{2}\int_0^1(1-t)^2\partial_i^3 h_{k,l}(\vec{z}_i^0+t y_i)y_i^3{\rm d}t
	\end{array}\right.,$$
	where \(\vec{z}_i^0+t x_i=(x_1,\cdots,x_{i-1},t x_i,y_{i+1},\cdots,y_{3T})'\), so does \(\vec{z}_i^0+t y_i\). Then
	\begin{align}
		&\big|\mathbb{E}[h_{k,l}(\vec{z}_i)-h_{k,l}(\vec{z}_{i-1})]\big|\leq\big|\mathbb{E}[\partial_i h_{k,l}(\vec{z}_i^0)(x_i-y_i)]\big|+\frac{1}{2}\big|\mathbb{E}[\partial_i^2 h_{k,l}(\vec{z}_i^0)(x_i^2-y_i^2)]\big|\notag\\
		&+\frac{1}{2}\Bigg|\mathbb{E}\left[\int_0^1(1-t)^2\partial_i^3 h_{k,l}(\vec{z}_i^0+t x_i)x_i^3{\rm d}t\right]\Bigg|+\frac{1}{2}\Bigg|\mathbb{E}\left[\int_0^1(1-t)^2\partial_i^3 h_{k,l}(\vec{z}_i^0+t y_i)y_i^3{\rm d}t\right]\Bigg|.\notag
	\end{align}
    Notice that
	\begin{align}
		&\mathbb{E}[\partial_i h_{k,l}(\vec{z}_i^0)(x_i-y_i)]=\mathbb{E}\big[\mathbb{E}[\partial_i h_{k,l}(\vec{z}_i^0)(x_i-y_i)|\vec{z}_i^0]\big]\notag\\
		&=\mathbb{E}\big[(\mathbb{E}[x_i|x_1,\cdots,x_{i-1}]-\mbE[y_i])\partial_i h_{k,l}(\vec{z}_i^0)\big]=\mathbb{E}[A_i\partial_i h_{k,l}(\vec{z}_i^0)],\notag
	\end{align}
	where we use the fact that all \(y_i\) are independent. Similarly, we have
	$$\mathbb{E}[\partial_i h_{k,l}(\vec{z}_i^0)(x_i^2-y_i^2)]=\mathbb{E}\big[(\mathbb{E}[x_i^2|x_1,\cdots,x_{i-1}]-y_i^2)\partial_i h_{k,l}(\vec{z}_i^0)\big]=\mathbb{E}[B_i\partial_i h_{k,l}(\vec{z}_i^0)].$$
	Hence, it gives that
	\begin{align}
		&\big|\mathbb{E}[h_{k,l}(\vec{z}_i)-h_{k,l}(\vec{z}_{i-1})]\big|\leq\big|\mathbb{E}[A_i\partial_i h_{k,l}(\vec{z}_i^0)]\big|+\frac{1}{2}\big|\mathbb{E}[B_i\partial_i^2 h_{k,l}(\vec{z}_i^0)]\big|\label{Eq of Lindeberg derivatives}\\
		&+\frac{1}{2}\Bigg|\mathbb{E}\left[\int_0^1(1-t)^2\partial_i^3 h_{k,l}(\vec{z}_i^0+t x_i)x_i^3{\rm d}t\right]\Bigg|+\frac{1}{2}\Bigg|\mathbb{E}\left[\int_0^1(1-t)^2\partial_i^3 h_{k,l}(\vec{z}_i^0+t y_i)y_i^3{\rm d}t\right]\Bigg|.\notag
	\end{align}
    It is easy to see that \(A_i=B_i=0\) for \(i=1,\cdots,2T\) according to the (\ref{Eq of xi}) and (\ref{Eq of yi}). For \(i>2T\), since 
	$$|A_i|=\big|\mathbb{E}[x_i|x_1,\cdots,x_{i-1}]\big|\leq\big|\mathbb{E}[x_i^2|x_1,\cdots,x_{i-1}]\big|^{1/2}=B_i^{1/2}$$
	and
	\begin{align}
		&\mathbb{E}[B_i]=\mathbb{E}[x_i^2]=\sum_{k=2T+1-i}^{\infty}\varphi_{j,k+2T}^2\leq T^{-2}\left(\sum_{k=2T+1-i}^{\infty}(k+2T)|\varphi_{j,k+2T}|\right)^2\leq\frac{B^2}{T^2},\label{Eq of EB_i}\\
		&\mathbb{E}[B_i^2]\leq\mathbb{E}[x_i^4]\leq\kappa_4\left(\sum_{k=2T+1-i}^{\infty}\varphi_{j,k+2T}^2\right)^2\leq\frac{\kappa_4 B^4}{T^4},\label{Eq of EB_i2}
	\end{align}
	where we use Assumption \ref{Ap of panel lag polynomial}. By \eqref{Eq of Lindeberg derivatives}, to derive \eqref{Eq of Lindeberg} and \eqref{Eq of Lindeberg principle 1}, it suffices to find the upper bounds of all terms relating the first, second and third (remainders) derivatives of $h_{k,l}(\vec{z}_i^0)$ in \eqref{Eq of Lindeberg derivatives}.
 
    \vspace{5mm}
    \noindent
    {\bf First derivatives.} For \(i>2T\), we have
		$$\big|\mathbb{E}[\partial_i^1 h_{k,l}(\vec{z}_i^0)A_i]\big|\leq\mathbb{E}[|A_i|^2]^{1/2}\mathbb{E}[|\partial_i^1 h_{k,l}(\vec{z}_i^0)|^2]^{1/2}\leq\mathbb{E}[B_i]^{1/2}\mathbb{E}[|\partial_i^1 h_{k,l}(\vec{z}_i^0)|^2]^{1/2}.$$
		By (\ref{Eq of beta}), it implies that
		$$|\partial_i^1 h_{k,l}(\vec{z}_i^0)|\leq\sum_{t=1}^{T-1}|\partial_t^1 g_{k,l}(\vec{z}_i^0)|\cdot|\partial_i^1\bbe_j(\vec{z}_i^0)\bbv_t|,\quad i=1,\cdots,3T,$$
		where
		\begin{align}
			&\partial_i^1\bbe_j(\vec{z}_i^0)\bbv_t=\sum_{s=1}^T v_{t,s}\partial_i^1\Big(\sum_{k=0}^{T+s-1}\varphi_{j,k}x_{k+T-s+1}+x_{3T-s+1}\Big)\notag\\
			&=\left\{\begin{array}{cc}
				\sum_{s=T+1-i}^T v_{t,s}\varphi_{j,i+s-T-1}&i=1,\cdots,2T\\
				v_{t,3T+1-t}&i=2T+1,\cdots,3T
			\end{array}\right.,\notag
		\end{align}
		which implies that 
		\begin{align}
			|\partial_i^1\bbe_j(\vec{z}_i^0)\bbv_t|\leq\sqrt{2/T},\quad i>2T\label{Eq of partial Xt}
		\end{align}
		due to \(|v_{t,3T+1-t}|\leq\sqrt{2/T}\) and Assumption \ref{Ap of panel lag polynomial}. Hence, for \(i>2T\), it gives that
		$$\mathbb{E}[|\partial_i^1 h_{k,l}(\vec{z}_i^0)|^2]\leq\frac{2}{T}\mathbb{E}\left[\left(\sum_{t=1}^{T-1}|\partial_t^1 g_{k,l}(\vec{z}_i^0)|\right)^2\right]=\frac{2}{T}\sum_{t_1,t_2=1}^{T-1}\mathbb{E}\big[|\partial_{t_1}^1 g_{k,l}(\vec{z}_i^0)\partial_{t_2}^1 g_{k,l}(\vec{z}_i^0)|\big],$$
		and
		\begin{align}
			|\partial_t^1 g_{k,l}(\eta)|&\leq\frac{2\beta_t^2\beta_k\beta_l|\eta_k\eta_l\eta_t|}{(\sum_{s=1}^{T-1}\beta_s^2\eta_s^2)^2}+\frac{\beta_k\beta_l(\delta_{k,t}|\eta_l|+\delta_{l,t}|\eta_k|)}{\sum_{s=1}^{T-1}\beta_s^2\eta_s^2}.\label{Eq of 1st derivatives}
		\end{align}
        To derive \eqref{Eq of Lindeberg}, by \eqref{Eq of 1st derivatives}, we have
        \begin{align}
            |\partial_t^1 g_{k,l}(\eta)|\leq\frac{2\beta_t^2|\eta_t|+\beta_k\beta_l(\delta_{k,t}|\eta_l|+\delta_{l,t}|\eta_k|)}{\sum_{s=1}^{T-1}\beta_s^2\eta_s^2}.\notag
        \end{align}
		By Cauchy's inequality, it implies that
        \begin{align}
			&\mathbb{E}\left[|\partial_t^1 g_{k,l}(\vec{z}_i^0)|^2\right]\leq C\mbE[(\beta_t^2|\bbe_j(\vec{z}_i^0)\bbv_t|+\beta_k\beta_l\delta_{k,t}|\bbe_j(\vec{z}_i^0)\bbv_l|+\beta_k\beta_l\delta_{l,t}|\bbe_j(\vec{z}_i^0)\bbv_k|)^4]^{1/2}\notag\\
			&\times\mbE\left[\left(\sum_{s=1}^{T-1}\beta_s^2(\bbe_j(\vec{z}_i^0)\bbv_s)^2\right)^{-4}\right]^{1/2}.\label{Eq of Lindeberg trick}
		\end{align}
        By Lemma \ref{Lem of finite 8th moment} and \eqref{Eq of beta upper}, we know that 
        $$\mbE[(\beta_t^2|\bbe_j(\vec{z}_i^0)\bbv_t|+\beta_k\beta_l\delta_{k,t}|\bbe_j(\vec{z}_i^0)\bbv_l|+\beta_k\beta_l\delta_{l,t}|\bbe_j(\vec{z}_i^0)\bbv_k|)^4]^{1/2}\leq C_{B,\kappa_4}(t^{-4}+(kl)^{-2}(\delta_{k,t}+\delta_{l,t}))$$
        Next, let's show that 
		$$\mbE\left[\left(\sum_{s=1}^{T-1}\beta_s^2(\bbe_j(\vec{z}_i^0)\bbv_s)^2\right)^{-4}\right]\leq C_{B,M},\quad\forall i>2T.$$
		For a pre-specified $R\in\mbN^+$, we know that $(\bbe_j(\vec{z}_i^0)\bbv_1,\cdots,\bbe_j(\vec{z}_i^0)\bbv_R)'$ has bounded density by Lemma \ref{Thm of bounded density}. Furthermore, by Lemma \ref{Thm of cdf order}, it implies that for any $r\in(0,1)$
		\begin{align}
			\mathbb{P}\left(\sum_{k=1}^R(\bbe_j(\vec{z}_i^0)\bbv_k)^2\leq r\right)\leq C_{B,M}r^{R/2}.\label{Eq of 1st bound 1}
		\end{align}
		Therefore, let $R=9$, we have
		\begin{align}
			&\mbE\left[\left(\sum_{s=1}^{T-1}\beta_s^2(\bbe_j(\vec{z}_i^0)\bbv_s)^2\right)^{-4}\right]\leq\beta_9^{-8}\mbE\left[\left(\sum_{s=1}^9(\bbe_j(\vec{z}_i^0)\bbv_s)^2\right)^{-4}\right]\notag\\
			&\leq\beta_9^{-8}+\beta_9^{-8}\mathbb{E}\Bigg[\left(\sum_{s=1}^9(\bbe_j(\vec{z}_i^0)\bbv_s)^2\right)^{-1}\Bigg|\sum_{s=1}^9(\bbe_j(\vec{z}_i^0)\bbv_s)^2\leq1\Bigg]\notag\\
			&\leq\beta_9^{-8}+4\beta_9^{-8}\int_1^{\infty}r^3\mathbb{P}\left(\sum_{k=1}^9(\bbe_j(\vec{z}_i^0)\bbv_k)^2\leq r^{-1}\right){\rm d}r\leq C_{B,M},\label{Eq of 1st bound 2}
		\end{align} 
		and
		\begin{align*}
			&\mbE[|\partial_t^1 g_{k,l}(\vec{z}_i^0)|^2]^{1/2}\leq C_{B,M,\kappa_4}(t^{-2}+(kl)^{-1}(\delta_{k,t}+\delta_{l,t})).
		\end{align*}
		Finally, we can conclude that
		\begin{align}
			&\sum_{i=1}^{3T}\big|\mathbb{E}[\partial_i^1 h_{k,l}(\vec{z}_i^0)A_i]\big|\leq\frac{B}{T}\sum_{i=2T+1}^{3T}\mathbb{E}[|\partial_i^1 h_{k,l}(\vec{z}_i^0)|^2]^{1/2}\label{Eq of 1st bound 3}\\
			&\leq\frac{2B}{T^{3/2}}\sum_{i=2T+1}^{3T}\left(\sum_{t_1,t_2=1}^{T-1}\mathbb{E}\big[|\partial_{t_1}^1 g_{k,l}(\vec{z}_i^0)\partial_{t_2}^1 g_{k,l}(\vec{z}_i^0)|\big]\right)^{1/2}\notag\\
			&\leq\frac{2B}{T^{3/2}}\sum_{i=2T+1}^{3T}\left(\sum_{t_1,t_2=1}^{T-1}\mathbb{E}\big[|\partial_{t_1}^1 g_{k,l}(\vec{z}_i^0)|^2\big]^{1/2}\mathbb{E}\big[|\partial_{t_2}^1 g_{k,l}(\vec{z}_i^0)|^2\big]^{1/2}\right)^{1/2}\notag\\
			&\leq\frac{C_{B,M,\kappa_4}}{T^{1/2}}\left(\sum_{t_1,t_2=1}^{T-1}(t_1^{-2}+\delta_{k,t_1}+\delta_{l,t_1})(t_2^{-2}+\delta_{k,t_2}+\delta_{l,t_2})\right)^{1/2}\leq C_{B,M,\kappa_4}T^{-1/2}.\notag
		\end{align}
        To derive \eqref{Eq of Lindeberg principle 1}, by \eqref{Eq of 1st derivatives}, we have
        \begin{align*}
            |\partial_t^1 g_{k,l}(\eta)|&\leq\frac{2\beta_t\beta_k\beta_l|\eta_k\eta_l|}{(\sum_{s=1}^{T-1}\beta_s^2\eta_s^2)^{3/2}}+\frac{\beta_k\beta_l(\delta_{k,t}|\eta_l|+\delta_{l,t}|\eta_k|)}{\sum_{s=1}^{T-1}\beta_s^2\eta_s^2},
        \end{align*}
        then we can use the same trick as \eqref{Eq of Lindeberg trick} and \eqref{Eq of 1st bound 2} to derive that
        \begin{align*}
            \mbE[|\partial_t^1 g_{k,l}(\vec{z}_i^0)|^2]^{1/2}\leq C_{B,M,\kappa_4}(kl)^{-1}(t^{-1}+(\delta_{k,t}+\delta_{l,t})),
        \end{align*}
        where we use $\beta_k\asymp\mrO(k^{-1})$ by \eqref{Eq of beta upper} (``$\asymp$'' is defined in \eqref{Eq of asymp mbP}). Similar as \eqref{Eq of 1st bound 3}, we can further deduce that
        \begin{align}
            &\sum_{i=1}^{3T}\big|\mathbb{E}[\partial_i^1 h_{k,l}(\vec{z}_i^0)A_i]\big|\leq\frac{B}{T}\sum_{i=2T+1}^{3T}\mathbb{E}[|\partial_i^1 h_{k,l}(\vec{z}_i^0)|^2]^{1/2}\label{Eq of 1st another bound 1}\\
			&\leq\frac{C_{B,M,\kappa_4}(kl)^{-1}}{T^{1/2}}\left(\sum_{t_1,t_2=1}^{T-1}(t_1^{-1}+\delta_{k,t_1}+\delta_{l,t_1})(t_2^{-1}+\delta_{k,t_2}+\delta_{l,t_2})\right)^{1/2}\leq C_{B,M,\kappa_4}(kl)^{-1}\log(T)T^{-1/2}.\notag
        \end{align}
        {\bf Second derivatives.} Similarly, by (\ref{Eq of EB_i2}), for \(i>2T\), we have
		$$\big|\mathbb{E}[\partial_i^2 h_{k,l}(\vec{z}_i^0)B_i]\big|\leq\mathbb{E}\big[|\partial_i^2 h_{k,l}(\vec{z}_i^0)|^2\big]^{1/2}\mathbb{E}[B_i^2]^{1/2}\leq C_{\kappa_4,B}T^{-2}\mathbb{E}\big[|\partial_i^2 h_{k,l}(\vec{z}_i^0)|^2\big]^{1/2},$$
		where
		\begin{align}
			|\partial_i^2 h_{k,l}(\vec{z}_i^0)|\leq\sum_{t_1,t_2=1}^{T-1}|\partial_{t_1 t_2}^2 g_{k,l}(\vec{z}_i^0)|\cdot|\partial_i^1\bbe_j(\vec{z}_i^0)\bbv_{t_1}|\cdot|\partial_i^1\bbe_j(\vec{z}_i^0)\bbv_{t_2}|,\label{Eq of second derivative}
		\end{align}
		and
		\begin{align*}
			&\big|\partial_{t_1 t_2}^2 g_{k,l}(\eta)\big|\leq\frac{8\beta_{t_1}^2\beta_{t_2}^2\beta_k\beta_l|\eta_k\eta_l\eta_{t_1}\eta_{t_2}|}{(\sum_{s=1}^{T-1}\beta_s^2\eta_s^2)^3}+\frac{2\beta_{t_1}^2\beta_k\beta_l(\delta_{t_1,t_2}|\eta_k\eta_l|+\delta_{t_2,k}|\eta_{t_1}\eta_l|+\delta_{t_2,l}|\eta_k\eta_{t_1}|)}{(\sum_{s=1}^{T-1}\beta_s^2\eta_s^2)^2}\\
			&+\frac{2\beta_{t_2}^2\beta_k\beta_l(\delta_{k,t_1}|\eta_l\eta_{t_2}|+\delta_{l,t_1}|\eta_k\eta_{t_2}|)}{(\sum_{s=1}^{T-1}\beta_s^2\eta_s^2)^2}+\frac{\beta_k\beta_l(\delta_{t_1,k}\delta_{t_2,l}+\delta_{t_1,l}\delta_{t_2,k})}{\sum_{s=1}^{T-1}\beta_s^2\eta_s^2}.
		\end{align*}
        To derive \eqref{Eq of Lindeberg}, by the above equation, we have
        $$\big|\partial_{t_1 t_2}^2 g_{k,l}(\eta)\big|\leq\frac{8\beta_{t_1}\beta_{t_2}+4\beta_{t_1}(\delta_{t_1,t_2}\beta_{t_1}+\delta_{t_1,k}+\delta_{t_1,l})+2\beta_{t_2}(\delta_{k,t_1}+\delta_{l,t_1})+\delta_{t_1,k}\delta_{t_2,l}+\delta_{t_1,l}\delta_{t_2,k}}{\sum_{s=1}^{T-1}\beta_s^2\eta_s^2}.$$
		According to (\ref{Eq of second derivative}) and (\ref{Eq of partial Xt}), we have
		\begin{align}
			\mathbb{E}\big[|\partial_i^2 h_{k,l}(\vec{z}_i^0)|^2\big]&\leq4T^{-2}\sum_{t_1,t_2=1}^{T-1}\sum_{t_3,t_4=1}^{T-1}\mathbb{E}\big[|\partial_{t_1 t_2}^2 g_{k,l}(\vec{z}_i^0)\partial_{t_3 t_4}^2 g_{k,l}(\vec{z}_i^0)|\big].\notag
		\end{align}
		It suffices to show that 
		$$\mathbb{E}\big[|\partial_{t_1 t_2}^2 g_{k,l}(\vec{z}_i^0)|^2\big]\leq C_{B,M}(t_1^{-2}t_2^{-2}+t_1^{-2}(\delta_{t_1,t_2}+\delta_{t_1,k}+\delta_{t_1,l})+t_2^{-2}(\delta_{k,t_1}+\delta_{l,t_1})+\delta_{t_1,k}+\delta_{t_1,l}).$$
		The proof is the same as (\ref{Eq of 1st bound 2}), we omit details here. Therefore, we conclude that
		$$\mathbb{E}\big[|\partial_i^2 h_{k,l}(\vec{z}_i^0)|^2\big]\leq C_{B,M}T^{-2}\left(\sum_{t_1,t_2=1}^{T-1}t_1^{-1}t_2^{-1}\right)^2\leq C_{B,M}\log^4 T\times T^{-2}$$
		and
		\begin{align}
			\sum_{i=1}^{3T}\big|\mathbb{E}[\partial_i^2 h_{k,l}(\vec{z}_i^0)B_i]\big|\leq\frac{C_B}{T^2}\sum_{i=2T+1}^{3T}\mathbb{E}\big[|\partial_i^2 h_{k,l}(\vec{z}_i^0)|^2\big]^{1/2}\leq C_{B,M}\frac{\log^2(T)}{T}.\label{Eq of Lindeberg principle 2}
		\end{align}
        Similarly, to derive \eqref{Eq of Lindeberg principle 1}, since
        \begin{align*}
            &\big|\partial_{t_1 t_2}^2 g_{k,l}(\eta)\big|\leq\frac{8\beta_{t_1}\beta_{t_2}\beta_k\beta_l|\eta_k\eta_l|+2\beta_{t_1}^2\beta_k\beta_l\delta_{t_2,t_1}|\eta_l\eta_k|}{(\sum_{s=1}^{T-1}\beta_s^2\eta_s^2)^2}+\frac{2\beta_{t_1}\beta_k\beta_l(\delta_{t_2,k}|\eta_l|+\delta_{t_2,l}|\eta_k|)}{(\sum_{s=1}^{T-1}\beta_s^2\eta_s^2)^{3/2}}\\
			&+\frac{2\beta_{t_2}\beta_k\beta_l(\delta_{k,t_1}|\eta_l|+\delta_{l,t_1}|\eta_k|)}{(\sum_{s=1}^{T-1}\beta_s^2\eta_s^2)^{3/2}}+\frac{\beta_k\beta_l(\delta_{t_1,k}\delta_{t_2,l}+\delta_{t_1,l}\delta_{t_2,k})}{\sum_{s=1}^{T-1}\beta_s^2\eta_s^2},
        \end{align*}
        then we can use the same trick as \eqref{Eq of Lindeberg trick} and \eqref{Eq of 1st bound 2} to conclude  that
        $$\mathbb{E}\big[|\partial_{t_1 t_2}^2 g_{k,l}(\vec{z}_i^0)|^2\big]\leq C_{B,M,\kappa_6}(kl)^{-1}(t_1^{-2}t_2^{-2}+t_1^{-2}(\delta_{t_1,t_2}+\delta_{t_1,k}+\delta_{t_1,l})+t_2^{-2}(\delta_{k,t_1}+\delta_{l,t_1})+\delta_{t_1,k}+\delta_{t_1,l}),$$
        and similar as \eqref{Eq of Lindeberg principle 2}, we can further derive that
        \begin{align}
            \sum_{i=1}^{3T}\big|\mathbb{E}[\partial_i^2 h_{k,l}(\vec{z}_i^0)B_i]\big|\leq\frac{C_B}{T^2}\sum_{i=2T+1}^{3T}\mathbb{E}\big[|\partial_i^2 h_{k,l}(\vec{z}_i^0)|^2\big]^{1/2}\leq C_{B,M,\kappa_6}(kl)^{-1}\log^2(T)T^{-1}.\label{Eq of 1st another bound 2}
        \end{align}
        {\bf Integral remainder.} Finally, we only consider the following case, since the other one is totally the same:
		$$\Bigg|\mathbb{E}\Bigg[\int_0^1(1-t)^2\partial_i^3 h_{k,l}(\vec{z}_i^0+t x_i)x_i^3{\rm d}t\Bigg]\Bigg|\leq\int_0^1(1-t)^2\mathbb{E}\big[|\partial_i^3 h_{k,l}(\vec{z}_i^0+t x_i)||x_i|^3\big]{\rm d}t,$$
		where we use the Fubini's theorem. By the Cauchy's inequality and Lemma \ref{Lem of finite 8th moment}, it gives that
		\begin{align*}
			&\mathbb{E}\big[|\partial_i^3 h_{k,l}(\vec{z}_i^0+t x_i)||x_i|^3\big]\leq\mathbb{E}\big[|\partial_i^3 h_{k,l}(\vec{z}_i^0+t x_i)|^2\big]^{1/2}\mathbb{E}[x_i^6]^{1/2}\\
			&\leq C_{\kappa_6,B}\mathbb{E}\big[|\partial_i^3 h_{k,l}(\vec{z}_i^0+t x_i)|^2\big]^{1/2},
		\end{align*}
		and (\ref{Eq of partial Xt}) further implies that
		\begin{align}
			&|\partial_i^3 h_{k,l}(\vec{z}_i^0+t x_i)|\leq\sum_{t_1,t_2,t_3=1}^{T-1}|\partial_{t_1 t_2 t_3}^3 g_{k,l}(\vec{z}_i^0+t x_i)|\prod_{l=1}^3|\partial_i^1\bbe_j(\vec{z}_i^0+t x_i)\bbv_{t_l}|\notag\\
			&\leq C_B T^{-3/2}\sum_{t_1,t_2,t_3=1}^{T-1}\big|\partial_{t_1 t_2 t_3}^3 g_{k,l}(\vec{z}_i^0+t x_i)\big|.\notag
		\end{align}
		Hence, we obtain that
		\begin{align}
			&\mathbb{E}\big[|\partial_i^3 h_{k,l}(\vec{z}_i^0+t x_i)|^2\big]\leq\label{Eq of Lindeberg principle 5}\\
			&\frac{C_B}{T^3}\sum_{t_1,t_2,t_3=1}^{T-1}\sum_{s_1,s_2,s_3=1}^{T-1}\mathbb{E}\big[\big|\partial_{t_1 t_2 t_3}^3 g_{k,l}(\vec{z}_i^0+t x_i)\big|^2\big]^{1/2}\mathbb{E}\big[\big|\partial_{s_1 s_2 s_3}^3 g_{k,l}(\vec{z}_i^0+t x_i)\big|^2\big]^{1/2},\notag
		\end{align}
		where 
		\begin{align*}
			&\big|\partial_{t_1 t_2 t_3}^3 g_{k,l}(\eta)\big|\leq\frac{48\beta_{t_1}^2\beta_{t_2}^2\beta_{t_3}^2\beta_k\beta_l|\eta_k\eta_l\eta_{t_1}\eta_{t_2}\eta_{t_3}|}{(\sum_{s=1}^{T-1}\beta_s^2\eta_s^2)^4}\\
			&+\frac{8\beta_{t_1}^2\beta_{t_2}^2\beta_k\beta_l(\delta_{t_3,k}|\eta_l\eta_{t_1}\eta_{t_2}|+\delta_{t_3,l}|\eta_k\eta_{t_1}\eta_{t_2}|+\delta_{t_3,t_1}|\eta_k\eta_l\eta_{t_2}|+\delta_{t_3,t_2}|\eta_k\eta_l\eta_{t_1}|)}{(\sum_{s=1}^{T-1}\beta_s^2\eta_s^2)^3}\\
            &+\frac{8\beta_{t_1}^2\beta_{t_3}^2\beta_k\beta_l(\delta_{t_1,t_2}|\eta_k\eta_l\eta_{t_3}|+\delta_{t_2,k}|\eta_{t_1}\eta_l\eta_{t_3}|+\delta_{t_2,l}|\eta_k\eta_{t_1}\eta_{t_3}|)}{(\sum_{s=1}^{T-1}\beta_s^2\eta_s^2)^3}\\
            &+\frac{8\beta_{t_2}^2\beta_{t_3}^2\beta_k\beta_l(\delta_{k,t_1}|\eta_l\eta_{t_2}\eta_{t_3}|+\delta_{l,t_1}|\eta_k\eta_{t_2}\eta_{t_3}|)}{(\sum_{s=1}^{T-1}\beta_s^2\eta_s^2)^3}\\
            &+\frac{2\beta_{t_1}^2\beta_k\beta_l(\delta_{t_1,t_2}(\delta_{t_3,k}|\eta_l|+\delta_{t_3,l}|\eta_k|)+\delta_{t_2,k}(\delta_{t_3,t_1}|\eta_l|+\delta_{t_3,l}|\eta_{t_1}|)+\delta_{t_2,l}(\delta_{t_3,k}|\eta_{t_1}|+\delta_{t_3,t_1}|\eta_k|))}{(\sum_{s=1}^{T-1}\beta_s^2\eta_s^2)^2}\\
			&+\frac{2\beta_{t_2}^2\beta_k\beta_l(\delta_{k,t_1}(\delta_{l,t_3}|\eta_{t_2}|+\delta_{t_2,t_3}|\eta_l|)+\delta_{l,t_1}(\delta_{k,t_3}|\eta_{t_2}|+\delta_{t_2,t_3}|\eta_k|))}{(\sum_{s=1}^{T-1}\beta_s^2\eta_s^2)^2}\\
			&+\frac{2\beta_{t_3}^2\beta_k\beta_l(\delta_{t_1,k}\delta_{t_2,l}|\eta_3|+\delta_{t_1,l}\delta_{t_2,k}|\eta_3|)}{(\sum_{s=1}^{T-1}\beta_s^2\eta_s^2)^2}.
		\end{align*}
		Similar as (\ref{Eq of 1st bound 3}), to prove \eqref{Eq of Lindeberg}, we can use the Hölder's inequality, Lemma \ref{Lem of finite 8th moment} and (\ref{Eq of 1st bound 2}) to derive
		\begin{align}
			&\mbE\big[|\partial_{t_1 t_2 t_3}^3 g_{k,l}(\vec{z}_i^0+tx_i)|^2\big]\leq C_{B,M,\kappa_8}\big(t_1^{-4}t_2^{-4}t_3^{-4}+t_1^{-4}t_2^{-4}(\delta_{t_3,t_1}+\delta_{t_3,t_2}+\delta_{t_3,k}+\delta_{t_3,l})\notag\\
			&+t_1^{-4}t_3^{-4}(\delta_{t_2,t_1}+\delta_{t_2,k}+\delta_{t_2,l})+t_2^{-4}t_3^{-4}(\delta_{t_1,k}+\delta_{t_1,l})\notag\\
            &+t_1^{-4}(\delta_{t_1,t_2}(\delta_{t_3,k}+\delta_{t_3,l})+\delta_{t_2,k}(\delta_{t_3,t_1}+\delta_{t_3,l})+\delta_{t_2,l}(\delta_{t_3,k}+\delta_{t_3,t_1}))\notag\\
            &+t_2^{-4}(\delta_{k,t_1}(\delta_{l,t_3}+\delta_{t_2,t_3})+\delta_{l,t_1}(\delta_{k,t_3}+\delta_{t_2,t_3}))+t_3^{-4}(\delta_{t_1,k}\delta_{t_2,l}+\delta_{t_1,l}\delta_{t_2,k})\big).\label{Eq of Lindeberg principle 3}
		\end{align}
		Since the arguments are totally the same, here we use the following example to explicitly present the calculations. Note that
		\begin{align*}
			&\frac{\beta_{t_1}^2\beta_{t_2}^2\beta_{t_3}^2\beta_k\beta_l|\eta_k\eta_l\eta_{t_1}\eta_{t_2}\eta_{t_3}|}{(\sum_{s=1}^{T-1}\beta_s^2\eta_s^2)^4}\leq\frac{C(t_1t_2t_3)^{-2}|\eta_{t_1}\eta_{t_2}\eta_{t_3}|}{(\sum_{s=1}^{T-1}\beta_s^2\eta_s^2)^3},
		\end{align*}
		by the Hölder's inequality and Lemma \ref{Lem of finite 8th moment}, we have
		\begin{align}
			&\mbE\left[\frac{|(\bbe_j\bbv_{t_1})(\bbe_j\bbv_{t_2})(\bbe_j\bbv_{t_3})|^2}{(\sum_{s=1}^{T-1}\beta_s^2(\bbe_j\bbv_s)^2)^6}\right]\leq\prod_{i=1}^3\mbE\left[(\bbe_j\bbv_{t_i})^8\right]^{1/4}\times\mbE\left[\left(\sum_{s=1}^{T-1}\beta_s^2(\bbe_j\bbv_s)^2\right)^{-24}\right]^{1/4}\notag\\
			&\leq C_{B,M,\kappa_8}\mbE\left[\left(\sum_{s=1}^{T-1}\beta_s^2(\bbe_j\bbv_s)^2\right)^{-24}\right]^{1/4},\label{Eq of Lindeberg trick 2}
		\end{align}
		where we omit \((\vec{z}_i^0+tx_i)\) in \(\bbe_j(\vec{z}_i^0+tx_i)\bbv_t\) to save space. Since we have shown that the joint distribution of \(\big(\bbe_j(\vec{z}_i^0)\bbv_1,\cdots,\bbe_j(\vec{z}_i^0)\bbv_R\big)'\) has the bounded density based on the (\ref{Eq of ejt(x)}) and Lemma \ref{Lem of bounded density}, notice that for \(i\leq2T\), we have
		\begin{align}
			&e_{j,s}(\vec{z}_0^i+tx_i)=\sum_{k=0}^{T+s-1}\varphi_{j,k}(x_{k+T-s+1}1_{k+T-s+1<i}+y_{k+T-s+1}1_{k+T-s+1>i})\notag\\
			&+t(\Psi_{s+i-T-1})_{jj}x_i1_{i\leq 2T,s+i\geq T+2}:=e_{j,t}^1(\vec{z}_0^i+tx_i)+t(\Psi_{s+i-T-1})_{jj}x_i1_{i\leq 2T,s+i\geq T+2},\notag
		\end{align}
		and for \(i>2T\), we have
		\begin{align}
			&e_{j,s}(\vec{z}_0^i+tx_i)=\sum_{k=0}^{T+s-1}\varphi_{j,k}x_{k+T-s+1}+(x_{3T-s+1}1_{i\neq3T-s+1}+tx_i1_{i=3T-s+1})\notag\\
			&:=e_{j,t}^1(\vec{z}_0^i+tx_i)+(x_{3T-s+1}1_{i\neq3T-s+1}+tx_i1_{i=3T-s+1}),\notag
		\end{align}
		by the definitions of \(x_i,y_i\) in (\ref{Eq of xi}) and (\ref{Eq of yi}), it is easy to see that \(e_{j,t}^1(\vec{z}_0^i+tx_i)\) is independent with the rest part, so it is enough to show that \(e_{j,t}^1(\vec{z}_0^i+tx_i)\) has uniformly bounded density, which can be proved by the same argument in Lemma \ref{Lem of bounded density}. Thus, we can conduct Lemma \ref{Thm of bounded density} for \(\big(\bbe_j(\vec{z}_i^0+tx_i)\bbv_1,\cdots,\bbe_j(\vec{z}_i^0+tx_i)\bbv_R\big)'\) and deduce \eqref{Eq of 1st bound 1}. Here, choosing \(R=31\), by the same argument as \eqref{Eq of 1st bound 2}, it gives that
		\begin{align}
			&\mathbb{E}\left[\left(\sum_{s=1}^{T-1}\beta_s^2(\bbe_j\bbv_s)^2\right)^{-24}\right]\leq\beta_{49}^{-48}+\beta_{49}^{-48}\mathbb{E}\left[\left(\sum_{s=1}^{49}\beta_s^2(\bbe_j\bbv_s)^2\right)^{-24}\Bigg|\sum_{s=1}^{49}\beta_s^2(\bbe_j\bbv_s)^2\leq1\right]\notag\\
			&\leq\beta_{49}^{-48}+C_{B,M}\beta_{49}^{-48}\int_1^{\infty}r^{23}\mathbb{P}\left(\sum_{s=1}^{49}\beta_s^2(\bbe_j\bbv_s)^2>r^{-1}\right){\rm d}r\leq C_{B,M}.\label{Eq of Lindeberg trick 3}
		\end{align}
        Now, combining \eqref{Eq of Lindeberg principle 3} and \eqref{Eq of Lindeberg principle 5}, we conclude that
		$$\mathbb{E}\big[|\partial_i^3 h_{k,l}(\vec{z}_i^0+t x_i)|^2\big]\leq\frac{C_{B,M,\kappa_8}}{T^3}\left(\sum_{t_1,t_2,t_3=1}^{T-1}(t_1t_2t_3)^{-2}\right)^2\leq\frac{C_{B,M,\kappa_8}}{T^3},$$
		and
		\begin{align}
			&\sum_{i=1}^{3T}\Bigg|\mathbb{E}\Bigg[\int_0^1(1-t)^2\partial_i^3 h_{k,l}(\vec{z}_i^0+t x_i)x_i^3{\rm d}t\Bigg]\Bigg|\leq C_{\kappa_6,B}\sum_{i=1}^{3T}\int_0^1(1-t)^2\mathbb{E}\big[|\partial_i^3 h_{k,l}(\vec{z}_i^0+t x_i)|^2\big]^{1/2}{\rm d}t\notag\\
			&\leq\frac{C_{B,M,\kappa_8}}{T^{3/2}}\sum_{i=1}^{3T}\int_0^1(1-t)^2{\rm d}t\leq C_{B,M,\kappa_8}T^{-1/2}.\label{Eq of Lindeberg principle 4}
		\end{align}
        To derive \eqref{Eq of Lindeberg principle 1}, similar as \eqref{Eq of Lindeberg principle 3}, we first can prove that
        \begin{align}
			&\mbE\big[|\partial_{t_1 t_2 t_3}^3 g_{k,l}(\vec{z}_i^0+tx_i)|^2\big]\leq C_{B,M,\kappa_6}(kl)^{-1}\big(t_1^{-2}t_2^{-2}t_3^{-2}+t_1^{-2}t_2^{-2}(\delta_{t_3,t_1}+\delta_{t_3,t_2}+\delta_{t_3,k}+\delta_{t_3,l})\notag\\
			&+t_1^{-2}t_3^{-2}(\delta_{t_2,t_1}+\delta_{t_2,k}+\delta_{t_2,l})+t_2^{-2}t_3^{-2}(\delta_{t_1,k}+\delta_{t_1,l})\notag\\
            &+t_1^{-2}(\delta_{t_1,t_2}(\delta_{t_3,k}+\delta_{t_3,l})+\delta_{t_2,k}(\delta_{t_3,t_1}+\delta_{t_3,l})+\delta_{t_2,l}(\delta_{t_3,k}+\delta_{t_3,t_1}))\notag\\
            &+t_2^{-2}(\delta_{k,t_1}(\delta_{l,t_3}+\delta_{t_2,t_3})+\delta_{l,t_1}(\delta_{k,t_3}+\delta_{t_2,t_3}))+t_3^{-2}(\delta_{t_1,k}\delta_{t_2,l}+\delta_{t_1,l}\delta_{t_2,k})\big).\label{Eq of Lindeberg principle 6}
		\end{align}
        For example, note that
        \begin{align*}
			&\frac{\beta_{t_1}^2\beta_{t_2}^2\beta_{t_3}^2\beta_k\beta_l|\eta_k\eta_l\eta_{t_1}\eta_{t_2}\eta_{t_3}|}{(\sum_{s=1}^{T-1}\beta_s^2\eta_s^2)^4}\leq\frac{C(t_1t_2t_3)^{-1}|\eta_k\eta_l|}{(\sum_{s=1}^{T-1}\beta_s^2\eta_s^2)^{5/2}},
		\end{align*}
        then we can use the same trick as \eqref{Eq of Lindeberg trick 2} and \eqref{Eq of Lindeberg trick 3} to show that
        $$\mbE\left[\frac{|\bbe_j\bbv_k\bbe_j\bbv_l|^2}{(\sum_{s=1}^{T-1}\beta_s^2(\bbe_j\bbv_s)^2)^5}\right]\leq C_{B,M,\kappa_6},$$
        which can further concludes \eqref{Eq of Lindeberg principle 6}. Similar as \eqref{Eq of Lindeberg principle 4}, we can obtain that
        \begin{align}
			&\sum_{i=1}^{3T}\Bigg|\mathbb{E}\Bigg[\int_0^1(1-t)^2\partial_i^3 h_{k,l}(\vec{z}_i^0+t x_i)x_i^3{\rm d}t\Bigg]\Bigg|\leq C_{\kappa_6,B}\sum_{i=1}^{3T}\int_0^1(1-t)^2\mathbb{E}\big[|\partial_i^3 h_{k,l}(\vec{z}_i^0+t x_i)|^2\big]^{1/2}{\rm d}t\notag\\
			&\leq\frac{C_{B,M,\kappa_6}(kl)^{-1}\log^3(T)}{T^{3/2}}\sum_{i=1}^{3T}\int_0^1(1-t)^2{\rm d}t\leq C_{B,M,\kappa_6}(kl)^{-1}\log^3(T)T^{-1/2}.\label{Eq of 1st another bound 3}
		\end{align}
		Finally, combining with (\ref{Eq of 1st bound 3}), (\ref{Eq of Lindeberg principle 2}) and (\ref{Eq of Lindeberg principle 4}), we derive \eqref{Eq of Lindeberg}, Combining with \eqref{Eq of 1st another bound 1}, \eqref{Eq of 1st another bound 2} and \eqref{Eq of 1st another bound 3}, we derive \eqref{Eq of Lindeberg principle 1}.
\end{proof}
\subsubsection{Proof of (\ref{Eq of Lindeberg 1})}\label{ssec of remove the dependence}
In this part, we will prove \eqref{Eq of Lindeberg 1} in Lemma \ref{Thm of Lindeberg principle}. By Lemma \ref{Lem of 1st approximation}, we have concluded that
$$\Bigg|\mathbb{E}\Bigg[\frac{\beta_k(\bbe_j(\vec{x})\bbv_k)\beta_l(\bbe_j(\vec{x})\bbv_l)}{\sum_{t=1}^{T-1}\beta_t^2(\bbe_j(\vec{x})\bbv_t)^2}-\frac{\beta_k(\bbe_j(\vec{y})\bbv_k)\beta_l(\bbe_j(\vec{y})\bbv_l)}{\sum_{t=1}^{T-1}\beta_t^2(\bbe_j(\vec{y})\bbv_t)^2}\Bigg]\Bigg|\leq\frac{C_{B,M,\kappa_8}}{\sqrt{T}},\quad1\leq k,l\leq T-1.$$
where all \(\bbe_j(\vec{y})\bbv_t\) are normal for all \(t=1,\cdots,T-1\) due to \(\vec{y}\) \eqref{Eq of yi} is a normal random vector. However, \(|\Cov(\bbe_j(\vec{y})\bbv_k,\bbe_j(\vec{y})\bbv_l)|\leq C_{B,\kappa_4}T^{-1}\) for \(k\neq l\) by Lemma \ref{Lem of covariance 1}. In this part, we will remove these weak dependence among all \(\bbe_j(\vec{y})\bbv_k\). As we have mentioned before, to estimate the error caused by removing weak dependence among all \(\bbe_j(\vec{y})\bbv_k\), we will leverage the total variation distance between high-dimensional Gaussian vectors. For preliminary, we need the following lemma first.
\begin{lem}\label{Lem of cut down}
	Denote \([t]\) to be the largest integer which is no more than \(t\) and let \(\mtm:=[\sqrt{T}]\), then under Assumption {\rm \ref{Ap of panel lag polynomial}}, we have
	$$\left|\mathbb{E}\Bigg[\frac{\beta_k(\bbe_j(\vec{y})\bbv_k)\beta_l(\bbe_j(\vec{y})\bbv_l)}{\sum_{t=1}^{\mtm}\beta_t^2(\bbe_j(\vec{y})\bbv_t)^2}-\frac{\beta_k(\bbe_j(\vec{y})\bbv_k)\beta_l(\bbe_j(\vec{y})\bbv_l)}{\sum_{t=1}^{T-1}\beta_t^2(\bbe_j(\vec{y})\bbv_t)^2}\Bigg]\right|\leq\frac{C_B}{\sqrt{T}}.$$
\end{lem}
\begin{proof}
	Notice that
	\begin{align*}
		&\left|\frac{(\bbe_j(\vec{y})\bbv_k)(\bbe_j(\vec{y})\bbv_l)}{\sum_{t=1}^{\mtm}\beta_t^2(\bbe_j(\vec{y})\bbv_t)^2}-\frac{(\bbe_j(\vec{y})\bbv_k)(\bbe_j(\vec{y})\bbv_l)}{\sum_{t=1}^{T-1}\beta_t^2(\bbe_j(\vec{y})\bbv_t)^2}\right|\leq\frac{|(\bbe_j(\vec{y})\bbv_k)(\bbe_j(\vec{y})\bbv_l)\sum_{t=\mtm+1}^{T-1}\beta_t^2(\bbe_j(\vec{y})\bbv_t)^2}{\big(\sum_{t=1}^{\mfm}\beta_t^2(\bbe_j(\vec{y})\bbv_t)^2\big)^2}
	\end{align*}
	so by the Cauchy's inequality, we have
	\begin{align}
		&\mbE\Bigg[\frac{|(\bbe_j(\vec{y})\bbv_k)(\bbe_j(\vec{y})\bbv_l)\sum_{t=\mtm+1}^{T-1}\beta_t^2(\bbe_j(\vec{y})\bbv_t)^2}{\big(\sum_{t=1}^{\mfm}\beta_t^2(\bbe_j(\vec{y})\bbv_t)^2\big)^2}\Bigg]\leq\mbE\left[(\bbe_j(\vec{y})\bbv_k)^4\right]^{1/4}\mbE\left[(\bbe_j(\vec{y})\bbv_l)^4\right]^{1/4}\notag\\
		&\times\mbE\Bigg[\Bigg(\sum_{t=\mtm+1}^{T-1}\beta_t^2(\bbe_j(\vec{y})\bbv_t)^2\Bigg)^4\Bigg]^{1/4}\times\mbE\Bigg[\Bigg(\sum_{t=\mtm+1}^{T-1}\beta_t^2(\bbe_j(\vec{y})\bbv_t)^2\Bigg)^{-8}\Bigg]^{1/4}.\notag
	\end{align}
	Notice that for any \(k\in\{1,\cdots,T\}\),
	$$\bbe_j(\vec{y})\bbv_k=\sum_{t=1}^Te_{j,t}(\vec{y})v_{k,t}=\sum_{t=1}^T v_{k,t}\sum_{s=0}^{T+t-1}\varphi_{j,s}y_{s+T-t+1}$$
	are normal due to all \(y_i\) are i.i.d. \(\mcN(0,1)\), so we have \(\mbE\left[(\bbe_j(\vec{y})\bbv_l)^8\right]<C_B\) and
	\begin{align}
		&\mbE\Bigg[\Bigg(\sum_{t=\mtm+1}^{T-1}\beta_t^2(\bbe_j(\vec{y})\bbv_t)^2\Bigg)^4\Bigg]\leq C_B\sum_{t_1,t_2,t_3,t_4=\mtm+1}^{T-1}(t_1t_2t_3t_4)^{-2}\leq C_BT^{-2}\notag
	\end{align}
	where we use the (\ref{Eq of beta upper}) and Assumption \ref{Ap of panel lag polynomial}. Similar as (\ref{Eq of 1st bound 1}) and (\ref{Eq of 1st bound 2}), we can also obtain that
	$$\mbE\Bigg[\Bigg(\sum_{t=1}^{\mtm}\beta_t^2(\bbe_j(\vec{y})\bbv_t)^2\Bigg)^{-8}\Bigg]<C_B,$$
	which completes our proof.
\end{proof}
Actually, recall the definition of \(\fM_{k,l}\) in (\ref{Eq of fM}), we define
\begin{align}
	\tilde{\fM}_{k,l}:=\frac{(kl)^{-1}Z_kZ_l}{\sum_{t=1}^{\mtm}t^{-2}Z_t^2},\notag
\end{align}
where \(\mtm=[\sqrt{T}]\). By the same arguments as in Lemma \ref{Lem of cut down}, we can also obtain that 
\begin{align}
    \big|\mbE\big[\tilde{\fM}_{k,l}-\fM_{k,l}\big]\big|\leq CT^{-1/2},\label{Eq of tilde fM}
\end{align}
Now, combining Lemmas \ref{Lem of 1st approximation} and \ref{Lem of cut down}, it gives that for $1\leq k,l\leq K$
\begin{align}
    \big|\mbE\big[M_{j;k,l}-\tilde{g}_{k,l}(\bbz_j^{(1)})\big]\big|\leq C_{B,M,\kappa_8}T^{-1/2},\label{Eq of cut down for remove}
\end{align}
where $\tilde{g}_{k,l}(\vec{z})$ is an $\mfm$-dimensional multivariate function as follows:
\begin{align}
    \tilde{g}_{k,l}(\vec{z})=\frac{x_kx_l}{\sum_{t=1}^{\mfm}x_t^2},\quad\vec{z}=(z_1,\cdots,z_{\mfm})',\label{Eq of tilde g kl}
\end{align}
and $\bbz_j^{(1)}:=\big(\beta_1\bbe_j(\vec{y})\bbv_1,\cdots,\beta_m\bbe_j(\vec{y})\bbv_{\mtm}\big)'$ is an $\mfm$-dimensional normal vector. Hence, to prove \eqref{Eq of Lindeberg 1}, it suffices to show that 
$$\big|\mbE[\tilde{\fM}_{k,l}-\tilde{g}_{k,l}(\bbz_j^{(1)})]\big|\leq C_{B,M,\kappa_8}T^{-1/2}.$$
Note that the entries of normal vector $\bbz_j^{(1)}$ are weakly correlated by Lemma \ref{Lem of covariance 1}. Next, we remove these weak dependence. Let \(\{u_{j,t}:t=1,\cdots,\mtm\}\) be a sequence of {\bf independent} normal variables with zero mean and \(\operatorname{Var}(u_{j,t})=\Var(\bbe_j(\vec{y})\bbv_t)\), further denote
\begin{align}
	\bbz_j^{(1)}:=\big(\beta_1\bbe_j(\vec{y})\bbv_1,\cdots,\beta_m\bbe_j(\vec{y})\bbv_{\mtm}\big)'\ \ {\rm and\ \ }\bbz_j^{(2)}:=\big(\beta_1 u_{j,1},\cdots,\beta_m u_{j,\mtm}\big)'\label{Eq of z1 z2}
\end{align}
to be two $\mfm$-dimensional normal random vectors, whose covariance matrices are \(\Sigma^{(1)},\Sigma^{(2)}\in\mbR^{\mfm\times\mfm}\) respectively. Here, we will show that
$$\big|\mbE\big[\tilde{g}_{k,l}(\bbz_j^{(1)})-\tilde{g}_{k,l}(\bbz_j^{(2)})\big]\big|\leq C_{B,\kappa_4}T^{-1/2}.$$
Note that $|\tilde{g}_{k,l}(\vec{z})|\leq1$, it implies that
\begin{align}
	&\Bigg|\int_{\mathbb{R}^{T-1}}\tilde{g}_{k,l}(\bbz)\big(p_{\bbz^{(1)}}(\bbz)-p_{\bbz^{(2)}}(\bbz)\big){\rm d}\bbz\Bigg|\leq \int_{\mathbb{R}^{T-1}}\big|p_{\bbz^{(0)}}(\bbz)-p_{\bbz^{(1)}}(\bbz)\big|{\rm d}\bbz\leq \operatorname{TV}(\bbz^{(1)},\bbz^{(2)}),\label{Eq of TV}
\end{align}
where \(\operatorname{TV}(\bbz^{(1)},\bbz^{(2)})\) is the total variation distance between \(\bbz^{(1)}\) and \(\bbz^{(2)}\), see \cite{devroye2018total} for the definition of total variation, and \(p_{\bbz^{(i)}}(\bbz)\) is the density function of \(\bbz^{(i)}\), i.e.
$$p_{\bbz^{(i)}}(\bbz)=(2\pi)^{-(T-1)/2}\det(\Sigma^{(i)})^{-1/2}\exp\Big(-\frac{1}{2}\bbz'(\Sigma^{(i)})^{-1}\bbz\Big),\ i=1,2.$$
To control the total variation distance between high-dimensional Gaussian vectors, we cite the following result:
\begin{lem}[Theorem 1.1, \cite{devroye2018total}]\label{Thm of TV}
	Let \(\Sigma_1,\Sigma_1\) be two positive definite \(d\times d\) matrices, then
	$$\operatorname{TV}(\mathcal{N}(\boldsymbol{0},\Sigma_1),\mathcal{N}(\boldsymbol{0},\Sigma_2))\leq\frac{3}{2}\min\Big\{1,\tr(S_d^2)^{1/2}\Big\},$$
	where \(S_d:=\bbI_d-\Sigma_1^{-1/2}\Sigma_2\Sigma_1^{-1/2}\).
\end{lem}
With the help of above results, let's show that
\begin{lem}\label{Thm of cut down TV}
	For two normal vectors \(\bbz^{(1)},\bbz^{(2)}\) defined in the {\rm (\ref{Eq of z1 z2})}, we have \(\operatorname{TV}(\bbz^{(1)},\bbz^{(2)})\leq C_{B,\kappa_4}T^{-1/2}\).
\end{lem}
\begin{proof}
	First, it is clear to see that \(\Sigma^{(2)}\) is positive definite since it is diagonal and all diagonal terms are positive. For \(\Sigma^{(1)}\), since it is symmetric and its diagonal terms are strictly positive, it is enough to show that \(\Sigma^{(1)}\) is diagonal dominated. Given \(t\in\{1,\cdots,m\}\), by Lemma \ref{Lem of covariance 1}, we know that
	\begin{align}
		|\operatorname{Cov}(\bbe_j(\vec{y})\bbv_k,\bbe_j(\vec{y})\bbv_l)-\delta_{k,l}2\pi f_j(\theta_k/2)|\leq C_{B,\kappa_4}T^{-1}.\label{Eq of weak correlation}
	\end{align}
	Recall that \(\beta_t\leq C t^{-1}\) by (\ref{Eq of beta upper}), then for fixed \(l\), it gives that
	$$\sum_{k\neq l}^{\mtm}\beta_k\beta_l|\operatorname{Cov}(\bbe_j(\vec{y})\bbv_k,\bbe_j(\vec{y})\bbv_l)|\leq C_{B,\kappa_4}T^{-1}t^{-1}\sum_{k\neq l}^m k^{-1}\leq C_B T^{-1}l^{-1}\log(T)$$
	and
	$$\operatorname{Var}(\bbe_j(\vec{y})\bbv_l)\geq2\pi f_j(\theta_l/2)-C_{B,\kappa_4}T^{-1}.$$
	Notice that
	$$2\pi|f_j(\theta_t/2)-f_j(0)|\leq\Big|\sum_{k=0}^{\infty}\varphi_{j,k}(\exp({\rm i}k\theta_t/2)-1)\Big|\times\Big(\Big|\sum_{k=0}^{\infty}\varphi_{j,k}\Big|+\Big|\sum_{k=0}^{\infty}\varphi_{j,k}\exp({\rm i}k\theta_t/2)\Big|\Big),$$
	where
	$$\Big|\sum_{k=0}^{\infty}\varphi_{j,k}\exp({\rm i}k\theta_t/2)\Big|\leq\sum_{k=0}^{\infty}|\varphi_{j,k}|\leq B$$
	by Assumption \ref{Ap of panel lag polynomial}. Since \(|\exp({\rm i}x)-1|\leq |x|\) for any \(x\in\mathbb{R}\), then
	$$\Big|\sum_{k=0}^{\infty}\varphi_{j,k}(\exp({\rm i}k\theta_t/2)-1)\Big|\leq\frac{\pi t}{T}\sum_{k=0}^{\infty}k|\varphi_{j,k}|\leq\frac{\pi t B}{T}.$$
	Recall that \(\mtm=[\sqrt{T}]\), so it yields that
	\begin{align}
		2\pi|f_j(\theta_t/2)-f_j(0)|\leq2\pi t B^2T^{-1}\leq C_B T^{-1/2}.\label{Eq of variance difference}
	\end{align}
	By Assumption \ref{Ap of panel lag polynomial}, we have \(2\pi f_j(0)\geq b^2\), so \(2\pi f_j(\theta_t/2)\geq b^2-C_B T^{-1/2}>b^2/2\) and
	$$\beta_t^2\operatorname{Var}(x_{j,t}(y))\geq C_b t^{-2}-C_B T^{-1}t^{-2}.$$
	Since
	$$C_b t^{-2}-C_B T^{-1}t^{-2}>C_B T^{-1}t^{-1}\log T$$
	is equivalent to
	$$\frac{C_{b,B} T-C_B}{\log T}>t,$$
	where \(t\leq\sqrt{T}\), so the left hand side is strictly greater than \(\sqrt{T}\) when \(T\) is sufficiently large, which can imply that
	$$\beta_t^2\operatorname{Var}(x_{j,t}(y))>\sum_{k\neq t}^m\beta_k\beta_t|\operatorname{Cov}(x_{j,t}(y),x_{j,k}(y))|,$$
	i.e. \(\Sigma^{(1)}\) is diagonal dominated, hence positive definite. Finally, by Lemma \ref{Thm of TV}, it gives that
	$$\operatorname{TV}(\bbz^{(1)},\bbz^{(2)})\leq\frac{3}{2}\min\Big\{1,\tr(S^2)^{1/2}\Big\},$$
	where \(S:=\bbI_m-(\Sigma^{(2)})^{-1/2}\Sigma^{(1)}(\Sigma^{(2)})^{-1/2}\). By the (\ref{Eq of weak correlation}) and the definition of \(\bbz^{(2)}\) in (\ref{Eq of z1 z2}), it is easy to see that \(|S_{i,j}|\leq C_{B,\kappa_4}T^{-1}\) for all \(i,j\in\{1,\cdots,m\}\), hence
	$$\tr(S^2)=\sum_{i,j=1}^m|S_{ij}|^2\leq C_B m^2T^{-2}\leq C_B T^{-1},$$
	and \(\operatorname{TV}(\bbz^{(1)},\bbz^{(2)})\leq C_B T^{-1/2}\) as \(T\to\infty\).
\end{proof}
Now, by \eqref{Eq of cut down for remove}, (\ref{Eq of TV}) and Lemma \ref{Thm of cut down TV}, we obtain that for \(1\leq k,l\leq K\)
$$\big|\mbE[M_{j;k,l}-\tilde{g}(\bbz_j^{(2)})]\big|\leq C_{B,M,\kappa_8}T^{-1/2}.$$
Finally, recall that we need to prove $\big|\mbE\big[\tilde{\fM}_{k,l}-\tilde{g}_{k,l}(\bbz_j^{(1)})\big]\big|\leq C_{B,\kappa_4}T^{-1/2}$ to conclude \eqref{Eq of Lindeberg 1}. Now, let \(\{u_{j,t}:t=1,\cdots,m\}\) and \(\{\tilde{u}_{j,t}:t=1,\cdots,m\}\) be two sequences of independent normal variables with zero mean and \(\operatorname{Var}(u_{j,t})=2\pi f_j(\theta_t/2),\operatorname{Var}(\tilde{u}_{j,t})=2\pi f_j(0)\), then define
\begin{align}
		\left\{\begin{array}{l}
			\bbz^{(3)}:=(u_{j,1},\cdots,t^{-1}u_{j,t},\cdots,m^{-1}u_{j,m}),\\
			\bbz^{(4)}:=(\tilde{u}_{j,1},\cdots,t^{-1}\tilde{u}_{j,t},\cdots,m^{-1}\tilde{u}_{j,m}).
		\end{array}\right.\label{Eq of z3 z4}
\end{align}
Note that $\mbE[\tilde{\fM}_{k,l}]=\mbE[\tilde{g}_{k,l}(\bbz_j^{(4)})]$, it suffices to show that $\big|\mbE[\tilde{g}_{k,l}(\bbz_j^{(2)})-\tilde{g}_{k,l}(\bbz_j^{(4)})]\big|\leq C_{B,M,\kappa_8}T^{-1/2}$. Here, we provide that
\begin{lem}\label{Thm of adjust coefficients}
	For Gaussian vectors $\bbz_j^{(2)},\bbz_j^{(3)}$ and $\bbz_j^{(4)}$ defined in \eqref{Eq of z1 z2} and \eqref{Eq of z3 z4}, we have
	$$\big|\mathbb{E}[\tilde{g}_{k,l}(\bbz^{(2)})-\tilde{g}_{k,l}(\bbz^{(3)})]\big|,\big|\mathbb{E}[\tilde{g}_{k,l}(\bbz^{(3)})-\tilde{g}_{k,l}(\bbz^{(4)})]\big|=\mathrm{o}(T^{-1/2}),$$
	where \(\tilde{g}_{k,l}(\vec{z})\) is defined in \eqref{Eq of tilde g kl}.
\end{lem}
\begin{proof}
	Let \(\Sigma^{(3)},\Sigma^{(4)}\) be the covariance matrices of \(\bbz^{(3)},\bbz^{(4)}\) respectively. Similar as the proofs of Lemma \ref{Thm of cut down TV}, we abuse the notation \(S:=\bbI_m-(\Sigma^{(3)})^{-1/2}\Sigma^{(2)}(\Sigma^{(3)})^{-1/2}\) here. Since both \(\Sigma^{(2)}\) and \(\Sigma^{(3)}\) are diagonal, it gives that \(S\) is also diagonal and
	$$S_{tt}=1-t^2\beta_t^2=1-\Big(\frac{t\sin(\pi/2T)}{\sin(\pi t/2T)}\Big)^2=\Big(1-\frac{t\sin(\pi/2T)}{\sin(\pi t/2T)}\Big)\Big(1+\frac{t\sin(\pi/2T)}{\sin(\pi t/2T)}\Big).$$
	Since \(\sin x\geq x-x^3/6\) and \(2x/\pi\leq\sin x\leq x\) for \(x\in[0,\pi/2]\), it implies that
	$$\frac{t\sin(\pi/2T)}{\sin(\pi t/2T)}\geq t\Big(\frac{\pi}{2T}-\frac{1}{6}\Big(\frac{\pi}{2T}\Big)^3\Big)/\Big(\frac{\pi t}{2T}\Big)=1-\frac{1}{6}\Big(\frac{\pi}{2T}\Big)^2$$
	and
	$$\frac{t\sin(\pi/2T)}{\sin(\pi t/2T)}\leq\frac{\pi}{2}.$$
	So
	$$S_{tt}\leq\frac{1}{6}\Big(\frac{\pi}{2T}\Big)^2\Big(1+\frac{\pi}{2}\Big)=\mathrm{O}(T^{-2})$$
	and
	$$\tr(S^2)^{1/2}=\Big(\sum_{t=1}^m S_{tt}^2\Big)^{1/2}=\mathrm{O}(T^{-7/4}),$$
	which implies that \(\operatorname{TV}(\bbz^{(2)},\bbz^{(3)})=\mathrm{o}(T^{-1/2})\), i.e. \(\big|\mathbb{E}[g_{k,l}(\bbz^{(2)})-g_{k,l}(\bbz^{(3)})]\big|=\mathrm{o}(T^{-1/2})\) due to \(g(\bbz)\in[0,1]\). Next, for \(\big|\mathbb{E}[g_{k,l}(\bbz^{(3)})-g_{k,l}(\bbz^{(4)})]\big|\), let's define
	$$\iota_t=\sqrt{\frac{f_j(\theta_t/2)}{f_j(0)}}\ t=1,\cdots,m\quad{\rm and}\quad\tilde{\bbz}^{(4)}:=(\iota_1\tilde{u}_{j,1},\cdots,\iota_t t^{-1}\tilde{u}_{j,t},\cdots,\iota_m m^{-1}\tilde{u}_{j,m}),$$
	where \(f_j(\cdot)\) has been defined in the (\ref{Eq of spectral density}) and \(\theta_t=2\pi t/T\). Due to \(\tilde{\bbz}^{(4)}\) and \(\bbz^{(3)}\) have the same distribution, it is enough to show that 
	$$\big|\mathbb{E}[g_{k,l}(\bbz^{(4)})-g_{k,l}(\tilde{\bbz}^{(4)})]\big|=\mathrm{o}(T^{-1/2}).$$
	Since
	$$g_{k,l}(\bbz^{(4)})-g_{k,l}(\tilde{\bbz}^{(4)})=\frac{\tilde{u}_{j,k}\tilde{u}_{j,l}\sum_{t=1}^m(\iota_t^2-1)t^{-2}\tilde{u}_{j,t}^2}{(\sum_{t=1}^m t^{-2}\tilde{u}_{j,t}^2)(\sum_{t=1}^m \iota_t^2t^{-2}\tilde{u}_{j,t}^2)}$$
	and
	$$\iota_t^2-1=\frac{f_j(\theta_t/2)-f_j(0)}{f_j(0)}.$$
	By Assumption \ref{Ap of panel lag polynomial} and (\ref{Eq of variance difference}), we can conclude that \(|\iota_t^2-1|\leq C_{B,b}t/T\) and
	\begin{align}
		&\big|\mathbb{E}[g_{k,l}(\bbz^{(4)})-g_{k,l}(\tilde{\bbz}^{(4)})]\big|\leq\mathbb{E}\Bigg[\Bigg(\sum_{t=1}^m|\iota_t^2-1|t^{-2}\tilde{u}_{j,k}\tilde{u}_{j,l}\tilde{u}_{j,t}^2\Bigg)^2\Bigg]^{1/2}\mathbb{E}\Bigg[\Bigg(\sum_{t=1}^m t^{-2}\tilde{u}_{j,t}^2\Bigg)^{-4}\Bigg]^{1/2}\notag\\
		&\leq C_{B,b,\kappa_4}T^{-1}\mathbb{E}\Bigg[\Bigg(\sum_{t=1}^mt^{-1}\tilde{u}_{j,k}\tilde{u}_{j,l}\tilde{u}_{j,t}^2\Bigg)^2\Bigg]^{1/2}\mathbb{E}\Bigg[\Bigg(\sum_{t=1}^m t^{-2}\tilde{u}_{j,t}^2\Bigg)^{-4}\Bigg]^{1/2},\notag
	\end{align}
	where we can use the same arguments as those in Lemma \ref{Lem of cut down} to show that
	$$\mathbb{E}\Bigg[\Bigg(\sum_{t=1}^m t^{-2}\tilde{u}_{j,t}^2\Bigg)^{-4}\Bigg]<\infty.$$
	And recall that all \(\tilde{u}_{j,t}\) are i.i.d. normal, then
	$$\mathbb{E}\Bigg[\Bigg(\sum_{t=1}^mt^{-1}\tilde{u}_{j,k}\tilde{u}_{j,l}\tilde{u}_{j,t}^2\Bigg)^2\Bigg]=\sum_{t_1,t_2=1}^m t_1^{-1}t_2^{-1}\mathbb{E}[\tilde{u}_{j,k}^2\tilde{u}_{j,l}^2\tilde{u}_{j,t_1}^2\tilde{u}_{j,t_2}^2]\leq C_B\sum_{t_1,t_2=1}^m t_1^{-1}t_2^{-1}\leq C_B\log^2(T)$$
	hence
	$$\big|\mathbb{E}[g_{k,l}(\bbz^{(4)})-g_{k,l}(\tilde{\bbz}^{(4)})]\big|\leq C_{B,b,\kappa_4}T^{-1}\log T=\mathrm{o}(T^{-1/2}),$$
	which completes our proof.
\end{proof}
Finally, combining \eqref{Eq of tilde fM}, \eqref{Eq of cut down for remove} and Lemma \ref{Thm of adjust coefficients}, we can obtain that 
\begin{align*}
    \big|\mbE\big[M_{j;k,l}-\tilde{\fM}_{k,l}\big]\big|\leq C_{B,M,\kappa_8}T^{-1/2},
\end{align*}
which concludes the (\ref{Eq of Lindeberg 1}) in Lemma \ref{Thm of Lindeberg principle}.
\subsubsection{Proof of (\ref{Eq of Lindeberg 2})}\label{ssec of mean correction for asymmetric}
In this part, we will prove \eqref{Eq of Lindeberg 2} in Lemma \ref{Thm of Lindeberg principle}.
\begin{proof}
    By \eqref{Eq of Lindeberg principle 1} in Lemma \ref{Lem of 1st approximation}, we have concluded that for any \(1\leq k,l\leq T-1\),
    \begin{align}
        \big|\mbE\big[h_{k,l}(\vec{x})-h_{k,l}(\vec{y})\big]\big|\leq C_{B,M,\kappa_6}(kl)^{-1}\log^3(T)T^{-1/2}.\label{Eq of zero mean 1}
    \end{align}
    Combining with Lemma \ref{Lem of cut down}, we further obtain that
    \begin{align}
		\Bigg|\mbE\Bigg[\frac{x_{j,k}(y)x_{j,l}(y)}{\sum_{t=1}^{T-1}\beta_t^2x_{j,t}^2(y)}-\frac{x_{j,k}(y)x_{j,l}(y)}{\sum_{t=1}^{\mtm}\beta_t^2x_{j,t}^2(y)}\Bigg]\Bigg|\leq C_{B,M,\kappa_6}\log^3(T)T^{-1/2},\label{Eq of zero mean 2}
	\end{align}
	where \(\mfm=[\sqrt{T}]\). Without loss of generality, suppose \(k<l\leq\mfm\) and define
	$$\bbz=(x_{j,k}(y),x_{j,1}(y),\cdots,x_{j,k-1}(y),x_{j,k+1}(y),\cdots,x_{j,\mtm}(y))'=(x_{j,k}(y),\bbz_k'),$$
    where $\bbz_k=(x_{j,1}(y),\cdots,x_{j,k-1}(y),x_{j,k+1}(y),\cdots,x_{j,\mtm}(y))'$. The covariance matrix of \(\bbz\) is
	$$\Sigma=\left(\begin{array}{cc}
		\Var(x_{j,k}(y))&\Sigma_k^{12}\\
		\Sigma_k^{21}&\Sigma_k^{22}
	\end{array}\right)\in\mbR^{\mfm\times\mfm}.$$
	By Lemma \ref{Thm of cut down TV}, we know that \((\Sigma_k^{22})^{-1}\) exists, so
	$$x_{j,k}(y)|\bbz_k\sim\mcN(\Sigma_k^{12}(\Sigma_k^{22})^{-1}\bbz_k,\Var(x_{j,k}(y))-\Sigma_k^{12}(\Sigma_k^{22})^{-1}\Sigma_k^{21}).$$
	For simplicity, let \(\mfa_k:=\Sigma_k^{12}(\Sigma_k^{22})^{-1}\in\mbR^{(\mfm-1)\times(\mfm-1)}\), by Lemma \ref{Lem of covariance 1}, we have
	$$\Vert\Sigma_k^{12}\Vert_2\leq C_B T^{-3/4},$$
	and
	$$\min_{\Vert\bba\Vert_2=1}\bba'\Sigma_k^{22}\bba\geq C_b-C_B T^{-1}\Bigg(\sum_{i=1}^{\mtm}|a_i|\Bigg)^2=C_b,$$
	which implies that \(\Vert(\Sigma_k^{22})^{-1}\Vert\leq C_b\) and hence \(\Vert\mfa_k\Vert_2\leq C_{B,b}T^{-3/4}\).
	Consider
	\begin{align}
		&\Bigg|\mbE\Bigg[\frac{x_{j,k}(y)x_{j,l}(y)}{\sum_{t=1}^{\mtm}\beta_t^2x_{j,t}^2(y)}-\frac{x_{j,k}(y)x_{j,l}(y)}{\sum_{t\neq k}^{\mtm}\beta_t^2x_{j,t}^2(y)+\beta_k^2(x_{j,k}(y)-\mfa_k\bbz_k)^2}\Bigg]\Bigg|\notag\\
		&=\Bigg|\mbE\Bigg[\frac{\beta_k^2\mfa_k\bbz_kx_{j,k}(y)x_{j,l}(y)(2x_{j,k}(y)-\mfa_k\bbz_k)}{(\sum_{t=1}^{\mtm}\beta_t^2x_{j,t}^2(y))(\sum_{t\neq k}^{\mtm}\beta_t^2x_{j,t}^2(y)+\beta_k^2(x_{j,k}(y)-\mfa_{k,l}\bbz_k)^2)}\Bigg]\Bigg|.\label{Eq of dominator difference}
	\end{align}
	Since
	\begin{align}
		&\mbE[(\mfa_k\bbz_k)^2]=\sum_{t_1,t_2\neq k}^{\mtm}\mfa_{k,t_1}\mfa_{k,t_2}\mbE[x_{j,t_1}(y)x_{j,t_2}(y)]\notag\\
		&\leq\sum_{t\neq k}^{\mtm}\mfa_{k,t}^2\mbE[x_{j,t}^2(y)]+\sum_{\substack{t_1,t_2\neq k\\t_1\neq t_2}}^{\mtm}|\mfa_{k,t_1}\mfa_{k,t_2}\Cov(x_{j,t_1}(y),x_{j,t_2}(y))|\notag\\
		&\leq C_B\Vert\mfa_k\Vert_2^2+C_{B,\kappa_4}T^{-1}\Bigg(\sum_{t\neq k}^{\mtm}|\mfa_{k,t}|\Bigg)^2\leq C_{B,\kappa_4}\Vert\mfa_k\Vert_2^2\leq C_{B,b,\kappa_4}T^{-3/2},\notag
	\end{align}
	and 
	$${\rm (\ref{Eq of dominator difference})}\leq\mbE[(\mfa_k\bbz_k)^2]^{1/2}\mbE\Bigg[\frac{\beta_k^4x_{j,k}^2(y)x_{j,l}^2(y)(2x_{j,k}(y)-\mfa_k\bbz_k)^2}{(\sum_{t\neq k}^{\mtm}\beta_t^2x_{j,t}^2(y))^4}\Bigg]^{1/2},$$
	note that \(x_{j,t}(y)\) are normal with bounded density, then we can use the same method in Lemmas \ref{Lem of 1st approximation} and \ref{Lem of cut down} to show that  
	$$\mbE\Bigg[\frac{\beta_k^4x_{j,k}^2(y)x_{j,l}^2(y)(2x_{j,k}(y)-\mfa_k\bbz_k)^2}{(\sum_{t\neq k}^{\mtm}\beta_t^2x_{j,t}^2(y))^4}\Bigg]\leq C_B.$$
	Thus, (\ref{Eq of dominator difference}) is bounded by \(C_{B,b,\kappa_4}T^{-3/4}\). Moreover, since \(x_{j,k}(y)-\mfa_k\bbz_k\) is independent with \(\bbz_k\) and $x_{j,k}(y)-\mfa_k\bbz_k$ is a normal variable with zero mean, by symmetry, we have
	\begin{align}
		\mbE\Bigg[\frac{(x_{j,k}(y)-\mfa_k\bbz_k)x_{j,l}(y)}{\sum_{t\neq k}^{\mtm}\beta_t^2x_{j,t}(y)^2+\beta_k^2(x_{j,k}(y)-\mfa_k\bbz_k)^2}\Bigg]=0.\label{Eq of zero mean 3}
	\end{align}
	Similarly, by the Cauchy's inequality and above method, we can also show that
	\begin{align}
		\Bigg|\mbE\Bigg[\frac{\mfa_k\bbz_kx_{j,l}(y)}{\sum_{t\neq k}^{\mtm}\beta_t^2x_{j,t}(y)^2+\beta_k^2(x_{j,k}(y)-\mfa_k\bbz_k)^2}\Bigg]\Bigg|\leq C_B\mbE[(\mfa_k\bbz_k)^2]^{1/2}\leq C_{B,b,\kappa_4}T^{-3/4}.\label{Eq of zero mean 4}
	\end{align}
	Now, combining (\ref{Eq of zero mean 1}), (\ref{Eq of zero mean 2}), (\ref{Eq of dominator difference}), (\ref{Eq of zero mean 3}) and (\ref{Eq of zero mean 4}), we can conclude \eqref{Eq of Lindeberg 2}. For the case when at least one of \(k,l\) is greater than \(m\), we can define (e.g. \(l\leq m<k\))
	$$\bbz:=(x_{j,k}(y),x_{j,1}(y),\cdots,x_{j,m}(y))',$$
	then the above arguments are still valid, so we omit details here to save space.
\end{proof}
\subsection{Asymptotic behaviors of eigenvectors}\label{sec of alpha kt correlation}
Recall that \(\hF_k=\sum_{t=1}^{T-1}\alpha_{k,t}\bbw_t\) is the eigenvector of \(\hla_k\) in (\ref{Eq of F1}). In this section, we will establish the asymptotic behaviors for \(\alpha_{k,t}\) as follows:
\begin{lem}\label{Lem of at convergence}
	Under Assumptions {\rm \ref{Ap of highdimensionality}, \ref{Ap of panel lag polynomial}} and {\rm \ref{Ap of finite integration}}, for any \(K\in\mbN^+\), let \(\hat{F}_k=\sum_{t=1}^{T-1}\alpha_{k,t}\bbw_t\) be the eigenvector corresponding to the \(k\)-th largest eigenvalue of the sample correlation matrix \eqref{Eq of correlation matrx random walk 1} of $\bbX=[X_1,\cdots,X_T]$ generated by \eqref{Eq of panel Xt}, then
	$$\lim_{n\to\infty}\sqrt{n}\mathbb{E}[1-\alpha_{k,k}^2]=0$$
	for \(1\leq k\leq K\).
\end{lem}
As a consequence of above lemma, we can further obtain
\begin{cor}\label{Cor of L2 convergence}
	Under Assumptions {\rm \ref{Ap of highdimensionality}, \ref{Ap of panel lag polynomial}} and {\rm \ref{Ap of finite integration}}, for any \(K\in\mbN^+\), let \(\hat{F}_k=\sum_{t=1}^{T-1}\alpha_{k,t}\bbw_t\) be the eigenvector corresponding to the \(k\)-th largest eigenvalue of the sample correlation matrix \eqref{Eq of correlation matrx random walk 1} of $\bbX=[X_1,\cdots,X_T]$ generated by \eqref{Eq of panel Xt}, then
	$$\sqrt{n}(1-\alpha_{k,k}^2)\overset{\mbP}{\longrightarrow}0$$
	for \(1\leq k\leq K\).
\end{cor}
For preliminary, we first show that \(\lim_{n\to\infty}\sqrt{n}\mathbb{E}[1-\alpha_{1,1}^2]=0\). By the definition of $\hat{F}_1$ in \eqref{Eq of F1}, we know that
$$\alpha_{1,t}\hat{\lambda}_1=\bbw_t'\hat{\bbR}\hat{F}_1=\sigma_t\sum_{k=1}^{T-1}\alpha_{1,k}\sigma_k\bbv_t'\bbe'\bbD^{-1}\bbe\bbv_k=\alpha_{1,t}\sum_{j=1}^nM_{j;t,t}+\sum_{k\neq t}^{T-1}\alpha_{1,k}\sum_{j=1}^n M_{j;k,t},$$
where $M_{j;k,t}$ is defined in \eqref{Eq of Mjkl}. Hence, it gives that
\begin{align}
    \alpha_{1,t}=\frac{\sum_{k\neq t}^{T-1}\alpha_{1,k}n^{-1}\sum_{j=1}^n M_{j;k,t}}{n^{-1}\big(\hla_1-\sum_{j=1}^nM_{j;t,t}\big)}.\label{Eq of alpha 1t example}
\end{align}
Note that $1-\alpha_{1,1}^2=\sum_{t=2}^{T-1}\alpha_{1,t}^2$, to show that $\lim_{n\to\infty}\sqrt{n}\mbE[1-\alpha_{1,1}^2]=0$, we need to investigate the asymptotic behaviors of $\sqrt{n}\mbE[\alpha_{1,t}^2]$ for $2\leq t\leq T-1$. By \eqref{Eq of alpha 1t example}, we first derive the asymptotic behaviors of $\alpha_{1,t}$'s denominators as follows:
\begin{lem}\label{Lem of upper bound}
	Under Assumptions {\rm \ref{Ap of highdimensionality}, \ref{Ap of panel lag polynomial}} and {\rm \ref{Ap of finite integration}}, for any \(K\in\mbN^+\) and \(\hat{F}_1=\sum_{t=1}^{T-1}\alpha_{1,t}\bbw_t\) defined in \eqref{Eq of F1}, let \(\delta_K:=1-\sum_{k=1}^K\mathbb{E}[\fM_{k,k}]\), then there exists a constant \(C_1>0\) such that
	$$\mathbb{P}\Bigg(n^{-1}\Bigg|\hla_1-\sum_{j=1}^nM_{j;1,1}\Bigg|\leq C_1\delta_K^{1/2}\Bigg)\geq1-C_Kn^{-3/5}.$$
\end{lem}
\begin{proof}
    First, let \(N_K\) be a pre-specified integer only depending on \(K\) such that \(N_K\geq K\) and the precise definition for $N_K$ is given in \eqref{Eq of N_K} later, then define
    \begin{align}
        &A_1:=\sum_{k=1}^{N_K}\alpha_{1,k}\sigma_k\bbD^{-1/2}\bbe\bbv_k\quad{\rm and}\quad B_1:=\sum_{k=N_K+1}^{T-1}\alpha_{1,k}\sigma_k\bbD^{-1/2}\bbe\bbv_k,\label{Eq of A1 A2}
    \end{align}
	by (\ref{Eq of la1}), we know that \(\hat{\lambda}_1=\Vert A_1\Vert^2+2\langle A_1,B_1\rangle+\Vert B_1\Vert^2\) and \(\Vert A_1\Vert^2=\sum_{k,l=1}^{N_K}\alpha_{1,k}\alpha_{1,l}\sum_{j=1}^n M_{j;k,l}\), then by the Chebyshev's inequality, we have
	$$\mathbb{P}\Bigg(n^{-1}\Bigg|\sum_{j=1}^n M_{j;k,l}^{\circ}\Bigg|>n^{-1/5}\Bigg)\leq n^{-8/5}\sum_{j=1}^n{\rm Var}(M_{j;k,l})\leq 4n^{-3/5},$$
	where we use the fact that \(|M_{j;k,l}|\leq1\). By (\ref{Eq of Lindeberg 1}), we have \(|\mathbb{E}[M_{j;k,l}-\fM_{k,l}]|\leq C_{B,M,\kappa_8}T^{-1/2}\) uniformly in \(j\) for \(1\leq k,l\leq T-1\), combining with $\mbE[\fM_{k,l}]=0$ for $k\neq l$, it implies that
	\begin{align}
		&n^{-1}\Vert A_1\Vert^2\leq n^{-1}\sum_{k=1}^{N_K}\alpha_{1,k}^2\sum_{j=1}^n\mathbb{E}[M_{j;k,k}]+C_K n^{-1/5}\leq\sum_{k=1}^K\alpha_{1,k}^2\mathbb{E}[\fM_{k,k}]+C_K n^{-1/5}\notag\\
		&\leq\alpha_{1,1}^2\mathbb{E}[\fM_{1,1}]+(1-\alpha_{1,1}^2)\mathbb{E}[\fM_{2,2}]+C_K n^{-1/5}\ {\rm with\ probability}\geq1-C_Kn^{-3/5},\notag
	\end{align}
	where we use the fact that \(\mathbb{E}[\fM_{k,k}]\geq\mathbb{E}[\fM_{l,l}]\) for \(1\leq l\leq k\leq K\) in the last inequality. Similarly, for the \(\Vert B_2\Vert^2\), by the Cauchy's inequality and previous arguments, we have
	\begin{align}
		&n^{-1}\Vert B_1\Vert^2\leq\Bigg(\sum_{k=N_K+1}^{T-1}\alpha_{1,k}^2\Bigg)\Bigg(n^{-1}\sum_{k=N_K+1}^{T-1}\sum_{j=1}^n M_{j;k,k}\Bigg)=\Bigg(\sum_{k=N_K+1}^{T-1}\alpha_{1,k}^2\Bigg)\Bigg(1-n^{-1}\sum_{k=1}^{N_K}\sum_{j=1}^n M_{j;k,k}\Bigg)\notag\\
		&\leq(1-\alpha_{1,1}^2)\Bigg(1-\sum_{k=1}^{N_K}\mathbb{E}[\fM_{k,k}]+C_K n^{-1/5}\Bigg)\ {\rm with\ probability}\geq1-C_Kn^{-3/5}.\notag
	\end{align}
	Here, define 
    \begin{align}
        \delta_K:=1-\sum_{k=1}^{N_K}\mathbb{E}[\fM_{k,k}]\quad{\rm and}\quad\epsilon_K:=\sum_{k=N_K+1}^{T-1}\alpha_{1,k}^2,\label{Eq of delta_K and epsilon_K}
    \end{align}
    since \(n^{-1}|\langle A_1,B_1\rangle|\leq n^{-1}\Vert A_1\Vert\Vert B_1\Vert\leq n^{-1/2}\Vert B_1\Vert\), where we use the fact that \(n^{-1}\Vert A_1\Vert^2\leq1\). Therefore, with probability greater than \(1-C_Kn^{-3/5}\), we obtain that
	\begin{align}
		&n^{-1}\hat{\lambda}_1\leq\alpha_{1,1}^2\mathbb{E}[\fM_{1,1}]+(1-\alpha_{1,1}^2)\mathbb{E}[\fM_{2,2}]+C_K n^{-1/5}\notag\\
		&+(1-\alpha_{1,1}^2)(\delta_K+C_K n^{-1/5})+2(1-\alpha_{1,1}^2)^{1/2}(\delta_K+C_K n^{-1/5})^{1/2}.\label{Eq of upper bound for l1}
	\end{align}
	On the other hand, since \(\hat{\lambda}_1\) is the largest eigenvalue of \(\hat{\bbR}\), then \(\hat{\lambda}_1=\hat{F}_1'\hat{\bbR}\hat{F}_1\geq\bbw_1'\hat{\bbR}\bbw_1\), i.e. \(n^{-1}\sum_{j=1}^n M_{j;1,1}\leq n^{-1}\hat{\lambda}_1\), so we obtain that
	\begin{align}
		\mathbb{E}[\fM_{1,1}]\leq n^{-1}\hat{\lambda}_1+n^{-1/5}\ {\rm with\ probability}\geq1-4n^{-3/5}.\label{Eq of lower bound for l1}
	\end{align}
	Now, combining (\ref{Eq of upper bound for l1}) and (\ref{Eq of lower bound for l1}), we conclude that
	\begin{align}
		1-\alpha_{1,1}^2\leq\frac{2(\delta_K+C_K n^{-1/5})^{1/2}+C_K n^{-1/5}}{\mathbb{E}[\fM_{1,1}-\fM_{2,2}]-\delta_K-C_K n^{-1/5}}\ {\rm with\ probability}\geq1-C_Kn^{-3/5}.\notag
	\end{align}
    For a sufficiently large constant \(N_K\in\mbN^+\), we have \(\delta_K^{1/2}+C_K n^{-1/5}<\mathbb{E}[\fM_{1,1}-\fM_{2,2}]/2\) as \(n\to\infty\), then we obtain
	$$1-\alpha_{1,1}^2\leq\frac{6\delta_K^{1/2}}{\mathbb{E}[\fM_{1,1}-\fM_{2,2}]}\ {\rm with\ probability}\geq1-C_Kn^{-3/5}.$$
	Finally, by (\ref{Eq of A1 A2}), since
	$$n^{-1}\Bigg|\hat{\lambda}_1-\sum_{j=1}^n M_{j;1,1}\Bigg|\leq n^{-1}\Bigg|\Vert A_1\Vert^2-\sum_{j=1}^n M_{j;1,1}\Bigg|+2n^{-1}\Vert A_1\Vert\Vert B_1\Vert+n^{-1}\Vert B_1\Vert^2,$$
	and
	\begin{align}
		n^{-1}\Bigg(\Vert A_1\Vert^2-\sum_{j=1}^n M_{j;1,1}\Bigg)=n^{-1}(\alpha_{1,1}^2-1)\sum_{j=1}^n M_{j;1,1}+n^{-1}\sum_{k\neq1\ {\rm or\ }l\neq1}^{N_K}\alpha_{1,k}\alpha_{1,l}\sum_{j=1}^n M_{j;k,l}.\notag
	\end{align}
	then by Lemma \ref{Lem of upper bound}, with probability greater than \(1-C_K n^{-3/5}\),
	\begin{align}
		&n^{-1}\Bigg|\Vert A_1\Vert^2-\sum_{j=1}^n M_{j;1,1}\Bigg|\notag\\
        &\leq n^{-1}(1-\alpha_{1,1}^2)\sum_{j=1}^n\mathbb{E}[M_{j;1,1}]+n^{-1}\sum_{k=2}^{N_K}\alpha_{1,k}^2\sum_{j=1}^n\mathbb{E}[M_{j;k,k}]+C_K n^{-1/5}\notag\\
		&\leq(1-\alpha_{1,1}^2)\sum_{k=1}^{N_K}\mathbb{E}[\fM_{k,k}]+C_K n^{-1/5}\leq 6\mbE[\fM_{1,1}-\fM_{2,2}]^{-1}\delta_K^{1/2}.\notag
	\end{align}
	Moreover, we have shown that $n^{-1}\Vert B_1\Vert^2\leq(1-\alpha_{1,1}^2)\delta_K$, so it implies that \(2n^{-1}\Vert A_1\Vert\Vert B_1\Vert+n^{-1}\Vert B_1\Vert^2\leq 3\delta_K^{1/2}\) with probability at least \(1-C_K n^{-3/5}\), and
	\begin{align}
		n^{-1}\Bigg|\hat{\lambda}_1-\sum_{j=1}^n M_{j;1,1}\Bigg|\leq(3+6\mbE[\fM_{1,1}-\fM_{2,2}]^{-1})\delta_K^{1/2}\label{Eq of C1}
	\end{align}
    with probability greater than \(1-C_K n^{-3/5}\). And the constant $C_1$ in Lemma \ref{Lem of upper bound} is \(C_1:=3+6\mbE[\fM_{1,1}-\fM_{2,2}]^{-1}\).
\end{proof}
Next, let's prove that \(\sqrt{n}\mathbb{E}[1-\alpha_{1,1}^2]\to0\).
\begin{proof}[Proof of Lemma \ref{Lem of at convergence} for \(\alpha_{1,1}\)]
	Recall the \(N_K\) defined in Lemma \ref{Lem of upper bound}, since 
    \begin{align*}
        \mbE[1-\alpha_{1,1}^2]=\sum_{t=2}^{N_K}\mbE[\alpha_{1,t}^2]+\sum_{t=N_K+1}^{T-1}\mbE[\alpha_{1,t}^2]
    \end{align*}
    let's first show that \(\sqrt{n}\mathbb{E}[\alpha_{1,t}^2]\to0\) for \(2\leq t\leq N_K\). Notice that 
	$$\alpha_{1,t}\hat{\lambda}_1=\bbw_t'\hat{\bbR}\hat{F}_1=\sigma_t\sum_{k=1}^{T-1}\alpha_{1,k}\sigma_k\bbv_t'\bbe'\bbD^{-1}\bbe\bbv_k=\alpha_{1,t}\sum_{j=1}^nM_{j;t,t}+\sum_{k\neq t}^{T-1}\alpha_{1,k}\sum_{j=1}^n M_{j;k,t},$$
	i.e.
	\begin{align}
		\alpha_{1,t}=\frac{\sum_{k\neq t}^{T-1}\alpha_{1,k}n^{-1}\sum_{j=1}^n M_{j;k,t}}{n^{-1}(\hat{\lambda}_1-\sum_{j=1}^nM_{j;t,t})}.\label{Eq of square n at}
	\end{align}
    For \(2\leq t\leq N_K\), define 
    \begin{align}
		\mathcal{F}_t(\epsilon):=\Bigg\{n^{-1}\Bigg|\hat{\lambda}_1-\sum_{j=1}^n M_{j;t,t}\Bigg|>\epsilon\Bigg\},\label{Eq of dominator event}
	\end{align}
    for some \(\epsilon>0\). By the Chebyshev's inequality, we can also show that
	$$\mathbb{P}\Bigg(n^{-1}\Bigg|\sum_{j=1}^n(M_{j;1,1}-M_{j;t,t})^{\circ}\Bigg|>n^{-1/5}\Bigg)\leq 4n^{-3/5},$$
    combining with Lemma \ref{Lem of upper bound}, for a sufficiently large \(N_K\), we have \(C_1\delta_K^{1/2}<\mbE[\fM_{1,1}-\fM_{2,2}]/2\). Thus, let \(\epsilon=\mathbb{E}[\fM_{1,1}-\fM_{2,2}]/2\) in (\ref{Eq of dominator event}), it gives that
	\begin{align}
		&n^{-1}\Bigg|\hat{\lambda}_1-\sum_{j=1}^n M_{j;t,t}\Bigg|\geq n^{-1}\Bigg|\sum_{j=1}^n(M_{j;1,1}-M_{j;t,t})\Bigg|-n^{-1}\Bigg|\hat{\lambda}_1-\sum_{j=1}^n M_{j;1,1}\Bigg|\notag\\
		&\geq\mathbb{E}[\fM_{1,1}-\fM_{2,2}]-n^{-1/5}-C_1\delta_K^{1/2}\notag\\
		&>\mathbb{E}[\fM_{1,1}-\fM_{2,2}]/2\ \ {\rm with\ probability}\geq1-C_Kn^{-3/5},\label{Eq of denominator events 1}
	\end{align}
    i.e. \(\mathbb{P}(\mathcal{F}_t(\epsilon))\geq1- C_Kn^{-3/5}\). Next, by (\ref{Eq of square n at}), we have for \(2\leq t\leq K\)
	\begin{align}
		&\sqrt{n}\mathbb{E}[\alpha_{1,t}^2]=\sqrt{n}\mathbb{E}[\alpha_{1,t}^2|\mathcal{F}_t(\epsilon)]\mathbb{P}(\mathcal{F}_t(\epsilon))+\sqrt{n}\mathbb{E}[\alpha_{1,t}^2|\mathcal{F}_t(\epsilon)^c]\mathbb{P}(\mathcal{F}_t(\epsilon)^c)\notag\\
		&\leq\sqrt{n}\mathbb{P}(\mathcal{F}_t(\epsilon)^c)+\epsilon^{-2}n^{-3/2}\mathbb{E}\Bigg[\Bigg(\sum_{k\neq t}^{T-1}\alpha_{1,k}\sum_{j=1}^n M_{j;k,t}\Bigg)^2\Bigg|\mathcal{F}_t(\epsilon)\Bigg]\mathbb{P}(\mathcal{F}_t(\epsilon))\notag\\
		&\leq C_Kn^{-1/10}+\epsilon^{-2}n^{-3/2}\mathbb{E}\Bigg[\Bigg(\sum_{k\neq2}^{T-1}\alpha_{1,k}\sum_{j=1}^n M_{j;k,2}\Bigg)^2\Bigg].\label{Eq of alpha2}
	\end{align}
    Next, for any \(t>N_K\), by (\ref{Eq of square n at}), the denominator \(n^{-1}(\hla_1-\sum_{j=1}^nM_{j;t,t})\) is greater than \(n^{-1}(\hat{\lambda}_1-\sum_{j=1}^n\sum_{t=N_K+1}^{T-1}M_{j;t,t})\). By the Chebyshev's inequality again, we can obtain that
	$$\frac{1}{n}\sum_{j=1}^n\sum_{t=N_K+1}^{T-1}M_{j;t,t}=1-\frac{1}{n}\sum_{j=1}^n\sum_{t=1}^{N_K} M_{j;t,t}\leq\delta_K+C_Kn^{-1/5}\ {\rm with\ probability}\geq1-C_Kn^{-3/5},$$
	where \(\delta_K\) is defined in (\ref{Eq of delta_K and epsilon_K}). Let's define an event 
    \begin{align}
        \mcE_K(\epsilon)=\Bigg\{n^{-1}\Bigg|\hat{\lambda}_1-\sum_{j=1}^n\sum_{t=N_K+1}^{T-1}M_{j;t,t}\Bigg|>\epsilon\Bigg\},\label{Eq of mcE_K}
    \end{align}
    for \(\epsilon>0\). Here, we choose a sufficiently large \(N_K\) such that \(\delta_K<\mbE[\fM_{1,1}-\fM_{2,2}]/2\), then let \(\epsilon=\mbE[\fM_{1,1}-\fM_{2,2}]/2\) and we can show that \(\mbP(\mcE_K(\epsilon))\geq1- C_Kn^{-3/5}\) by the same method as \eqref{Eq of denominator events 1}. Next, recall that \(\epsilon_K=\sum_{k=K+1}^{T-1}\alpha_{1,k}^2\) defined in (\ref{Eq of delta_K and epsilon_K}), we still have
	$$\sqrt{n}\mathbb{E}[\epsilon_K]=\sqrt{n}\mathbb{E}[\epsilon_K|\mcE_K(\epsilon)]\mathbb{P}(\mcE_K(\epsilon))+\sqrt{n}\mathbb{E}[\epsilon_K|\mcE_K(\epsilon)^c]\mathbb{P}(\mcE_K(\epsilon)^c)\leq C_K n^{-1/10}+\sqrt{n}\mathbb{E}[\epsilon_K|\mcE_K(\epsilon)],$$
    where
    \begin{align}
        &\mbE[\epsilon_K|\mcE_K(\epsilon)]=\sum_{t=N_K+1}^{T-1}\mbE\Bigg[\frac{(\sum_{k\neq t}^{T-1}\alpha_{1,k}n^{-1}\sum_{j=1}^n M_{j;k,t})^2}{[n^{-1}(\hat{\lambda}_1-\sum_{j=1}^nM_{j;t,t})]^2}\Bigg|\mcE_K(\epsilon)\Bigg]\notag\\
        &\leq\epsilon^{-2}\sum_{t=N_K+1}^{T-1}\mbE\Bigg[\Bigg(\sum_{k\neq t}^{T-1}\alpha_{1,k}n^{-1}\sum_{j=1}^n M_{j;k,t}\Bigg)^2\Bigg].\label{Eq of epsilon_K}
    \end{align}
    Now, combining (\ref{Eq of alpha2}) and (\ref{Eq of epsilon_K}), we conclude that
    $$\sqrt{n}\mbE[1-\alpha_{1,1}^2]\leq C_Kn^{-1/10}+\epsilon^{-2}n^{-3/2}\sum_{t=2}^{T-1}\mbE\Bigg[\Bigg(\sum_{k\neq t}^{T-1}\alpha_{1,k}\sum_{j=1}^n M_{j;k,t}\Bigg)^2\Bigg],$$
    where \(\epsilon=\mbE[\fM_{1,1}-\fM_{2,2}]/2\). Hence, it is enough to show that
    $$\lim_{n\to\infty}n^{-3/2}\sum_{t=2}^{T-1}\mbE\Bigg[\Bigg(\sum_{k\neq t}^{T-1}\alpha_{1,k}\sum_{j=1}^n M_{j;k,t}\Bigg)^2\Bigg]=0.$$
    By Cauchy's inequality and $\sum_{k\neq t}^{T-1}\alpha_{1,k}^2\leq1$, we have
	\begin{align*}
		&\mbE\Bigg[\Bigg(\sum_{k\neq t}^{T-1}\alpha_{1,k}\sum_{j=1}^n M_{j;k,t}\Bigg)^2\Bigg]\leq\sum_{k\neq t}^{T-1}\mbE\Bigg[\Bigg(\sum_{j=1}^n M_{j;k,t}\Bigg)^2\Bigg]\\
        &=\sum_{k\neq t}^{T-1}\sum_{j=1}^n\mbE[M_{j;k,t}^2]+\sum_{k\neq t}^{T-1}\sum_{j_1\neq j_2}^n\mbE[M_{j_1;k,t}]\mbE[M_{j_2;k,t}].
	\end{align*}
	Since \(M_{j;k,t}^2=M_{j;k,k}M_{j;t,t}\) and $\sum_{k\neq t}^{T-1}M_{j;k,k}\leq1$ by \eqref{Eq of Mjkl}, then 
    \begin{align}
        &\sum_{t=2}^{T-1}\sum_{k\neq t}^{T-1}\sum_{j=1}^nM_{j;k,t}^2\leq\sum_{j=1}^n\sum_{t=2}^{T-1}M_{j;t,t}\sum_{k\neq t}^{T-1}M_{j;k,k}\leq n.\label{Eq of alpha sum 1}
    \end{align}
    Moreover, by (\ref{Eq of Lindeberg 2}), i.e. \(|\mbE[M_{j;k,t}]|\leq C_{B,M,\kappa_6}(kt)^{-1}\log^3(T)T^{-1/2}\) for \(1\leq k,t\leq T-1\) amd \(k\neq t\), so
	\begin{align}
		&\sum_{t=2}^{T-1}\sum_{k\neq t}^{T-1}\sum_{j_1\neq j_2}^n\mbE[M_{j_1;k,t}]\mbE[M_{j_2;k,t}]\leq C_{B,M,\kappa_6,c}\log^6(T)n.\label{Eq of alpha sum 2}
	\end{align}
	Finally, combining (\ref{Eq of alpha sum 1}) and (\ref{Eq of alpha sum 2}), it gives that
    \begin{align*}
        &n^{-3/2}\sum_{t=2}^{T-1}\mbE\Bigg[\Bigg(\sum_{k\neq t}^{T-1}\alpha_{1,k}\sum_{j=1}^n M_{j;k,t}\Bigg)^2\Bigg]\leq C_{B,M,\kappa_8,c}\log^6(T)T^{-1/2},
    \end{align*}
    which completes our proof.
\end{proof}
Finally, we will inductively prove Lemma \ref{Lem of at convergence}.
\begin{proof}[Proof of Lemma \ref{Lem of at convergence}]
	Currently, we have proven that \(\sqrt{n}\mbE[1-\alpha_{1,1}^2]=0\), let's continue to show that \(\sqrt{n}\mbE[1-\alpha_{2,2}^2]=0\). Similar as Lemma \ref{Lem of upper bound}, we first need to show that \(n^{-1}\big|\hla_2-\sum_{j=1}^nM_{j;2,2}\big|\) is stochastically bounded by \(C_2\delta_K^{1/2}\) with probability at least of \(1-C_K n^{-3/5}\). The key step is to first show \(\alpha_{2,1}\) is stochastically bounded by \(2(1-\alpha_{1,1}^2)\). Since \(\langle\hat{F}_1,\hat{F}_2\rangle=0\), then we have
	$$\langle\hat{F}_1,\hat{F}_2\rangle=\sum_{t=1}^{T-1}\alpha_{2,t}\langle\hat{F}_1,\bbw_t\rangle=\sum_{t=1}^{T-1}\alpha_{2,t}\alpha_{1,t}=0,$$
	i.e. \(\alpha_{2,1}^2=\alpha_{1,1}^{-2}|\sum_{t=2}^{T-1}\alpha_{2,t}\alpha_{1,t}|^2\leq\alpha_{1,1}^{-2}(1-\alpha_{1,1}^2)\leq2(1-\alpha_{1,1}^2)\). By Lemma \ref{Lem of upper bound}, we can conclude that \(\alpha_{2,1}^2\leq12\mbE[\fM_{1,1}-\fM_{2,2}]^{-1}\delta_K^{1/2}\) with probability at least \(1-C_Kn^{-3/5}\) for sufficiently large constant \(N_K\in\mbN^+\). Next, we can repeat the same argument as in Lemma \ref{Lem of upper bound} and obtain that
	\begin{align}
		&n^{-1}\hat{\lambda}_2\leq\alpha_{2,1}^2\mathbb{E}[\fM_{1,1}]+\alpha_{2,2}^2\mathbb{E}[\fM_{2,2}]+(1-\alpha_{2,2}^2)\mathbb{E}[\fM_{3,3}]+3\delta_K^{1/2}\notag\\
		&\leq\alpha_{2,2}^2\mathbb{E}[\fM_{2,2}]+(1-\alpha_{2,2}^2)\mathbb{E}[\fM_{3,3}]+(3+12\mbE[\fM_{1,1}-\fM_{2,2}]^{-1})\delta_K^{1/2}\notag
	\end{align}
    with probability greater than \(1-C_Kn^{-3/5}\). Similarly, since \(n^{-1}\sum_{j=1}^n M_{j;2,2}\leq n^{-1}\hat{\lambda}_2\), we can further imply that 
	$$1-\alpha_{2,2}^2\leq(3+12\mbE[\fM_{1,1}-\fM_{2,2}]^{-1})\mathbb{E}[\fM_{2,2}-\fM_{3,3}]^{-1}\delta_K^{1/2}$$
	with probability greater than \(1-C_Kn^{-3/5}\), so we can consequently obtain \(n^{-1}\big|\hla_2-\sum_{j=1}^nM_{j;2,2}\big|\leq C_2\delta_K^{1/2}\) as (\ref{Eq of C1}), where \(C_2:=3+(3+12\mbE[\fM_{1,1}-\fM_{2,2}]^{-1})\mathbb{E}[\fM_{2,2}-\fM_{3,3}]^{-1}\). By this way, we can inductively construct $C_k$ such that for $1\leq k\leq K$
    $$\mbP\left(n^{-1}\left|\hla_k-\sum_{j=1}^nM_{j;k,k}\right|\leq C_k\delta_K^{1/2}\right)\geq1-C_Kn^{-3/5}.$$
    Based on the above result and
	\begin{align}
		\alpha_{2,t}=\frac{n^{-1}\sum_{k\neq t}^{T-1}\alpha_{2,k}\sum_{j=1}^n M_{j;k,t}}{n^{-1}(\hat{\lambda}_2-\sum_{j=1}^n M_{j;t,t})}\ \ {\rm where\ }t\neq2,\label{Eq of square n general at}
	\end{align}
	we can repeat the same argument as those for \(\lim_{n\to\infty}\sqrt{n}\mbE[1-\alpha_{1,1}^2]=0\) to show that \(\sqrt{n}\mathbb{E}[\alpha_{2,t}^2]\to0\) for \(t\in\{1,\cdots,K\},t\neq2\) and \(\sqrt{n}\sum_{k=N_K+1}^{T-1}\mathbb{E}[\alpha_k^2]\to0\). For instance, we can show that the event \(\mathcal{F}_t(\epsilon):=\{n^{-1}|\hat{\lambda}_2-\sum_{j=1}^n M_{j;t,t}|>\epsilon\}\) has probability greater than \(1-C_Kn^{-3/5}\) for \(\epsilon=\mbE[\fM_{2,2}-\fM_{3,3}]/2\) and sufficiently large \(N_K\), then it gives that
	\begin{align*}
	    &\sqrt{n}\mathbb{E}[\alpha_{2,t}^2]=\sqrt{n}\mathbb{E}[\alpha_{2,t}^2|\mcF_t(\epsilon)]\mathbb{P}(\mcF_t(\epsilon))+\sqrt{n}\mathbb{E}[\alpha_{2,t}^2|\mcF_t(\epsilon)^c]\mathbb{P}(\mcF_t(\epsilon)^c)\\
        &\leq C_K n^{-1/10}+\sqrt{n}\mathbb{E}[\alpha_{2,t}^2|\mcF_t(\epsilon)^c]\mathbb{P}(\mcF_t(\epsilon)^c),
	\end{align*}
	to show the remain term \(\sqrt{n}\mathbb{E}[\alpha_{2,t}^2|\mcF_t(\epsilon)^c]\mathbb{P}(\mcF_t(\epsilon)^c)\) converges to \(0\), it is enough to prove the expectation of the numerator in (\ref{Eq of square n general at}) converges to \(0\), since the proof arguments are totally the same as Lemma \ref{Lem of at convergence}, we omit the details here to save space.
	
	Finally, let's determine the value of \(N_K\) defined in Lemma \ref{Lem of upper bound}. According the above procedures, we need to control the absolute lower bound of the denominator in (\ref{Eq of square n general at}), that is, we hope to find a proper constant \(\epsilon>0\) such that the probabilities of \(n^{-1}|\hat{\lambda}_k-\sum_{j=1}^n M_{j;t,t}|<\epsilon\) (\(t\neq k,1\leq t\leq N_K\)) and \(n^{-1}|\hat{\lambda}_k-\sum_{t=N_K+1}^{T-1}\sum_{j=1}^n M_{j;t,t}|<\epsilon\) have order of \(\mro(n^{-1/2})\). To realize this conclusion, since we show that the probability of \(n^{-1}|\hat{\lambda}_k-\sum_{j=1}^n M_{j;k,k}|<C_k\delta_K^{1/2}\) and \(n^{-1}|\sum_{j=1}^n(M_{j;k,k}-M_{j;t,t})^{\circ}|<n^{-1/5}\) have order of \(\mro(n^{-1/2})\). Similar as \ref{Eq of denominator events 1}, we can obtain that for any $t\neq k$
	\begin{align*}
		&n^{-1}\Bigg|\hat{\lambda}_k-\sum_{j=1}^nM_{j;t,t}\Bigg|\\
        &\geq n^{-1}\Bigg|\sum_{j=1}^n\mbE[M_{j;k,k}-M_{j;t,t}]\Bigg|-n^{-1}\Bigg|\hat{\lambda}_k-\sum_{j=1}^nM_{j;k,k}\Bigg|-n^{-1}\Bigg|\sum_{j=1}^n(M_{j;k,k}-M_{j;t,t})^{\circ}\Bigg|\\
		&\geq\mathbb{E}[\fM_{k,k}-\fM_{t,t}]-C_k\delta_K^{1/2}-\mrO(n^{-1/5})\ {\rm with\ probability}\geq1-\mro(n^{-1/2}).
	\end{align*}
	Therefore, we need \(\delta_K\) is small enough such that for \(1\leq k\leq K\)
    \begin{align*}
        &C_k\delta_K^{1/2}<\min_{1\leq t\leq N_K,t\neq k}\big|\mathbb{E}[\fM_{k,k}-\fM_{t,t}]\big|/2=\mathbb{E}[\fM_{k,k}-\fM_{k+1,k+1}]/2.
    \end{align*}
    Besides, similar as (\ref{Eq of mcE_K}), we also have for \(1\leq k\leq K\)
    \begin{align*}
        &n^{-1}\left|\hla_k-\sum_{j=1}^n\sum_{t=N_K+1}^{T-1}M_{j;t,t}\right|\geq n^{-1}\Bigg|\sum_{j=1}^n\left(\mbE[M_{j;k,k}]-\sum_{t=N_K+1}^{T-1}\mbE[M_{j;t,t}]\right)\Bigg|-n^{-1}\Bigg|\hat{\lambda}_k-\sum_{j=1}^nM_{j;k,k}\Bigg|\\
        &-n^{-1}\Bigg|\sum_{j=1}^n\left(M_{j;k,k}-\sum_{t=N_K+1}^{T-1}M_{j;t,t}\right)^{\circ}\Bigg|\geq\mbE[\fM_{k,k}]-\delta_K-C_k\delta_K^{1/2}-\mrO(n^{-1/5}),
    \end{align*}
    with probability greater than $1-\mro(n^{-1/2})$, so we require that
    $$(1+C_k)\delta_K^{1/2}<\mbE[\fM_{k,k}]/2.$$
    Thus, the \(N_K\) should be chosen such that
    \begin{align}
    	&\delta_K=\sum_{t=N_K+1}^{T-1}\mbE[\fM_{t,t}]\notag\\
        &<\min\big\{\mathbb{E}[\fM_{k,k}-\fM_{k+1,k+1}]^2/(2C_k)^2,\mbE[\fM_{k,k}]^2/(2+2C_k)^2:1\leq k\leq K\big\},\label{Eq of N_K}
    \end{align}
    so we complete the proof of Lemma \ref{Lem of at convergence}.
\end{proof}
\subsection{Joint CLT for the extreme eigenvalues of the sample correlation matrix}\label{sec of CLT bbR}
In this section, we will prove Theorem \ref{Thm of CLT}. Let's briefly outline the proof here. First, we will establish the joint centered CLT for $(\hla_1,\cdots,\hla_K)'$, i.e.
$$n^{-1/2}(\hla_1^{\circ},\cdots,\hla_K^{\circ})'\overset{d}{\longrightarrow}\mcN(\boldsymbol{0},\mcS),$$
where $\hla_k^{\circ}=\hla_k-\mbE[\hla_k]$ and $\mcS\in\mbR^{K\times K}$ is defined in Theorem \ref{Thm of CLT}. Recall that we have shown that $\lim_{n\to\infty}\sqrt{n}\mbE[1-\alpha_{k,k}^2]=0$ in Lemma \ref{Lem of at convergence}. Hence, according to \eqref{Eq of la1}, we can show that
$$\frac{\hla_k^{\circ}}{\sqrt{n}}\overset{\mbP}{\longrightarrow}\frac{1}{\sqrt{n}}\sum_{j=1}^nM_{j;k,k}^{\circ}.$$
Consequently, we obtain that
$$n^{-1/2}(\hla_1^{\circ},\cdots,\hla_K^{\circ})'\overset{\mbP}{\longrightarrow}\Bigg(\frac{1}{\sqrt{n}}\sum_{j=1}^nM_{j;1,1}^{\circ},\cdots,\frac{1}{\sqrt{n}}\sum_{j=1}^nM_{j;K,K}^{\circ}\Bigg)',$$
so it suffices to establish the joint CLT as follows:
$$\Bigg(\frac{1}{\sqrt{n}}\sum_{j=1}^nM_{j;1,1}^{\circ},\cdots,\frac{1}{\sqrt{n}}\sum_{j=1}^nM_{j;K,K}^{\circ}\Bigg)'\overset{d}{\longrightarrow}\mcN(\boldsymbol{0},\mcS).$$
After deriving the joint centered CLT for $n^{-1/2}(\hla_1^{\circ},\cdots,\hla_K^{\circ})'$, combining with Lemma \ref{Thm of Lindeberg principle}, we know that $n^{-1/2}\sum_{j=1}^n\big|\mbE[M_{j;k,k}-\fM_{k,k}]\big|\leq C_{B,M,\kappa_8}$, which finally concludes Theorem \ref{Thm of CLT}.
\subsubsection{Preliminary CLT}\label{ssec of preliminary CLT}
In this part, let's show that
\begin{pro}\label{Thm of pre joint CLT}
	Under Assumptions {\rm \ref{Ap of highdimensionality} \ref{Ap of finite integration}} and {\rm \ref{Ap of panel lag polynomial}}, for any \(K\in\mbN^+\), then
	$$\Bigg(\frac{1}{\sqrt{n}}\sum_{j=1}^nM_{j;1,1}^{\circ},\cdots,\frac{1}{\sqrt{n}}\sum_{j=1}^nM_{j;K,K}^{\circ}\Bigg)'\overset{d}{\longrightarrow}\mcN(\boldsymbol{0},\mcS),$$
	where $M_{j;k,k}$ is defined in \eqref{Eq of Mjkl} and \(M_{j;k,k}^{\circ}:=M_{j;k,k}-\mathbb{E}[M_{j;k,k}]\), \(\mcS\) is a \(K\times K\) covariance matrix defined in Theorem {\rm \ref{Thm of CLT}}.
\end{pro}
For preliminaries, we first prove the following lemma:
\begin{lem}\label{Lem of covariance error}
	For any $K\in\mbN^+$, \(k,l,p,q\in\{1,\cdots,K\}\) and \(1\leq j\leq n\), we have
	$$\lim_{n\to\infty}\big|{\rm Cov}(M_{j;k,l},M_{j;p,q})-{\rm Cov}(\fM_{k,l},\fM_{p,q})\big|=0$$
	uniformly in \(1\leq j\leq n\), where $M_{j;k,l}$ and $\fM_{k,l}$ are defined in \eqref{Eq of Mjkl} and \eqref{Eq of fM}, respectively.
\end{lem}
\begin{proof}
	Note that
	$${\rm Cov}(M_{j;k,l},M_{j;p,q})=\mathbb{E}[M_{j;k,l}M_{j;p,q}]-\mathbb{E}[M_{j;k,l}]\mathbb{E}[M_{j;p,q}],$$
	let's first show that
	$$\lim_{n\to\infty}\big|\mathbb{E}[M_{j;k,l}M_{j;p,q}-\fM_{k,l}\fM_{p,q}]\big|=0$$
	is uniform in \(j\). For any small \(\epsilon>0\), define an event \(\mcG_{1,j}(\epsilon):=\{(\bbe_j\bbv_1)^2\leq\epsilon\}\) and let \(R:=R(\epsilon)\) be a pre-specified integer depending on \(\epsilon\) such that \(R>K\), then denote
	$$\bar{M}_{j;k,l}:=\frac{\beta_k\beta_l(\bbe_j\bbv_k)(\bbe_j\bbv_l)}{\sum_{t=1}^R\beta_t^2(\bbe_j\bbv_t)^2}\quad{\rm and}\quad Z_j(R):=\sum_{t=R+1}^{\infty}\beta_t^2(\bbe_j\bbv_t)^2,$$
	where \(\beta_t\) has been defined in (\ref{Eq of beta}). Since \(|\bar{M}_{j;k,l}|,|M_{j;k,l}|,|\bar{M}_{j;p,q}|,|M_{j;p,q}|\leq1\), we have
	\begin{align}
		&\mathbb{E}\big[|\bar{M}_{j;k,l}\bar{M}_{j;p,q}-M_{j;k,l}M_{j;p,q}|1_{\mcG_{1,j}(\epsilon)}\big]\leq2\mathbb{P}(\mcG_{1,j}(\epsilon)),\notag\\
		&\mathbb{E}\big[|\bar{M}_{j;k,l}\bar{M}_{j;p,q}-M_{j;k,l}M_{j;p,q}|1_{\mcG_{1,j}(\epsilon)^c}\big]\leq\mathbb{E}\big[(|\bar{M}_{j;p,q}(\bar{M}_{j;k,l}-M_{j;k,l})|+|M_{j;k,l}(\bar{M}_{j;p,q}-M_{j;p,q})|)1_{\mcG_{1,j}(\epsilon)^c}\big]\notag\\
		&\leq\mathbb{E}\big[(|\bar{M}_{j;k,l}-M_{j;k,l}|+|\bar{M}_{j;p,q}-M_{j;p,q}|)1_{\mcG_{1,j}(\epsilon)^c}\big]\leq\frac{(\beta_k\beta_l+\beta_p\beta_q)\mathbb{E}[Z_j(R)]}{\beta_1^2\epsilon}.\notag
	\end{align}
	By Lemma \ref{Lem of covariance 1}, we have
	$$\beta_1^{-2}\mathbb{E}[Z_j(R)]\leq\frac{\pi^2}{4}\sum_{t=R+1}^{T-1}t^{-2}(B+CB^2/T)\leq C_BR^{-1}.$$
    and \(\mbP(\mcG_{1,j}(\epsilon))\leq C_B\epsilon^{1/2}\) due to \(\bbe_j\bbv_1\overset{d}{\longrightarrow}\mcN(0,2\pi f_j(0))\) by (\ref{Eq of spectral density}), then we obtain
	$$\mathbb{E}\big[|\bar{M}_{j;k,l}\bar{M}_{j;p,q}-M_{j;k,l}M_{j;p,q}|\times1_{\mcG_{1,j}(\epsilon)^c}\big]\leq C_{K,B}(R\epsilon)^{-1}$$
	and
	$$\mathbb{E}\big[|\bar{M}_{j;k,l}\bar{M}_{j;p,q}-M_{j;k,l}M_{j;p,q}|\big]\leq C_{K,B}(\epsilon^{1/2}+(R\epsilon)^{-1}),$$
    we choose \(R>\epsilon^{-3/2}\), then
	$$\mathbb{E}\big[|\bar{M}_{j;k,l}\bar{M}_{j;p,q}-M_{j;k,l}M_{j;p,q}|\big]\leq C_{K,B}\epsilon^{1/2}.$$
    Moreover, define
	$$\bar{\fM}_{k,l}:=\frac{(kl)^{-1}Z_kZ_l}{\sum_{t=1}^Rt^{-2}Z_t^2}$$
	and we can use the same argument to show that \(\mathbb{E}\big[|\fM_{k,l}\fM_{p,q}-\bar{\fM}_{k,l}\bar{\fM}_{p,q}|\big]\leq C_K\epsilon^{1/2}\). Finally, it remains to bound \(|\mathbb{E}[\bar{\fM}_{k,l}\bar{\fM}_{p,q}-\bar{M}_{j;k,l}\bar{M}_{j;p,q}]|\) uniformly in \(j\). Let \(\tau_{\epsilon}:[0,\infty)\to\mathbb{R}\) be thrice continuously differential function such that
	$$\tau_{\epsilon}(x)=\left\{\begin{array}{ll}
		x&x>\epsilon\\>\epsilon/2&x\in[0,\epsilon]
	\end{array}\right.$$
	and the first three derivatives of \(\tau_{\epsilon}(x)\) bounded for \(x\in[0,\epsilon]\). Furthermore, define
	$$\bar{M}_{j;k,l}^{(\tau)}:=\frac{\beta_k\beta_l(\bbe_j\bbv_k)(\bbe_j\bbv_l)}{\beta_1^2\tau_{\epsilon}((\bbe_j\bbv_1)^2)+\sum_{t=2}^R\beta_t^2(\bbe_j\bbv_t)^2}\quad{\rm and}\quad\bar{\fM}_{k,l}^{(\tau)}:=\frac{(kl)^{-1}Z_kZ_l}{\tau_{\epsilon}(Z_1^2)+\sum_{t=2}^Rt^{-2}Z_t^2},$$
	by Lemma 10 in \cite{onatski2021spurious}, we have \(\big|\mathbb{E}[\bar{M}_{j;k,l}^{(\tau)}\bar{M}_{j;p,q}^{(\tau)}-\bar{\fM}_{k,l}^{(\tau)}\bar{\fM}_{p,q}^{(\tau)}]\big|\leq\epsilon^{1/2}\) uniformly in \(j\) as \(n\to\infty\), since \(|\bar{M}_{j;k,l}^{(\tau)}|,|\tilde{M}_{j,pq}|\leq2\) and
	$$\bar{M}_{j;k,l}\bar{M}_{j;p,q}-\bar{M}_{j;k,l}^{(\tau)}\bar{M}_{j;p,q}^{(\tau)}=\big(\bar{M}_{j;k,l}\bar{M}_{j;p,q}-\bar{M}_{j;k,l}^{(\tau)}\bar{M}_{j;p,q}^{(\tau)}\big)1_{\mcG_{1,j}(\epsilon)},$$
	then we have \(\mathbb{E}\big[|\bar{M}_{j;k,l}\bar{M}_{j;p,q}-\bar{M}_{j;k,l}^{(\tau)}\bar{M}_{j;p,q}^{(\tau)}|\big]\leq4\mbP(\mcG_{1,j}(\epsilon))<C_B\epsilon^{1/2}\). As a result, we conclude that \(\big|\mathbb{E}[M_{j;k,l}M_{j;p,q}-\fM_{k,l}\fM_{p,q}]\big|<C_{B,K}\epsilon^{1/2}\) uniformly in \(j\), combining with Lemma \ref{Thm of Lindeberg principle}, it implies that
	$$\big|{\rm Cov}(M_{j;k,l},M_{j;p,q})-{\rm Cov}(\fM_{k,l},\fM_{p,q})\big|<C_{B,K}\epsilon^{1/2}$$
	uniformly in \(j\) for any small \(\epsilon>0\).
\end{proof}
Now we prove Proposition \ref{Thm of pre joint CLT} as follows:
\begin{proof}[Proof of Proposition \ref{Thm of pre joint CLT}]
	Define \(\mathfrak{M}_{j,K}=(M_{j;1,1},\cdots,M_{j;K,K})'\), it is enough to show that 
	$$\frac{1}{\sqrt{n}}\sum_{j=1}^n\langle\bba,\mathfrak{M}_{j,K}^{\circ}\rangle:=\frac{1}{\sqrt{n}}\sum_{j=1}^n\sum_{k=1}^K a_kM_{j;k,k}^{\circ}$$
	asymptotically weakly converges to a normal distribution for any \(\bba=(a_1,\cdots,a_K)'\in\mathbb{R}^K\). Further let \(\mathfrak{m}_K:=(\fM_{1,1},\cdots,\fM_{K,K})'\) and
	$$s_n^2(\bba):=\sum_{j=1}^n{\rm Var}(\langle\bba,\mathfrak{M}_{j,K}^{\circ}\rangle)\ \ {\rm and\ \ }\fs^2(\bba):={\rm Var}(\langle\bba,\mathfrak{m}_K^{\circ}\rangle)>0.$$
    Here, let's briefly explain why \(\fs^2(\bba)>0\). Otherwise, if there exists \(\bba\) such that \(\fs^2(\bba)=0\), then it implies that \(\langle\bba,\mathfrak{m}_K^{\circ}\rangle=C\) almost surely, where \(C\) is a constant. Hence, we obtain
    $$\frac{\sum_{k=1}^Ka_kk^{-2}z_k^2}{\sum_{t=1}^{\infty}t^{-2}z_t}=C\quad\Rightarrow\quad\sum_{k=1}^K(a_k-C)k^{-2}z_k^2=C\sum_{t=K+1}^{\infty}t^{-2}z_t,$$
    since \(\sum_{t=K+1}^{\infty}t^{-2}z_t\) and \(\sum_{k=1}^K(a_k-C)k^{-2}z_k^2\) are independent, then \(C=0\). Besides, all \(z_k\) are also independent for \(1\leq k\leq K\), so all \(a_k=0\), i.e. \(\bba=\boldsymbol{0}\), which is a contradiction. Thus \(\fs^2(\bba)>0\), then by Lemma \ref{Lem of covariance error}, as \(n,T\to\infty\), we have \({\rm Var}(\langle\bba,\mathfrak{M}_{j,K}^{\circ}\rangle)\to\fs^2(\bba)\) uniformly in \(j\), so it implies that \({\rm Var}(\langle\bba,\mathfrak{M}_{j,K}^{\circ}\rangle)>\fs^2(\bba)/2\) as \(n\to\infty\). Since all \(\langle\bba,\mathfrak{M}_{j,K}^{\circ}\rangle\) are independent and absolutely bounded by \(K\), then for any fixed \(\epsilon>0\)
    $$s_n^2(\bba)=\sum_{j=1}^n{\rm Var}\big(\langle\bba,\mathfrak{M}_{j,K}^{\circ}\rangle\big)>n\fs^2(\bba)/2>K.$$
    as \(n\to\infty\), so the Lindeberg's condition holds:
	$$\frac{1}{s_n^2(\bba)}\sum_{j=1}^n\mathbb{E}\big[|\langle\bba,\mathfrak{M}_{j,K}^{\circ}\rangle|^2 1_{|\langle\bba,\mathfrak{M}_{j,K}^{\circ}\rangle|>\epsilon s_n(\bba)}\big]=0,\ \ {\rm for\ }\forall\epsilon>0.$$
	Therefore, by the Lindeberg-Feller CLT, we obtain
	$$\frac{1}{s_n(\bba)}\sum_{j=1}^n\langle\bba,\mathfrak{M}_{j,K}^{\circ}\rangle\overset{d}{\longrightarrow}\mathcal{N}(0,1),$$
	i.e.
	$$\frac{1}{\sqrt{n}}\sum_{j=1}^n\langle\bba,\mathfrak{M}_{j,K}^{\circ}\rangle\overset{d}{\longrightarrow}\mathcal{N}(0,\fs^2(\bba)),$$
	which completes our proof.
\end{proof}
\subsubsection{Centered CLT}\label{ssec of centered joint CLT}
\begin{pro}\label{Thm of univariate CLT}
	Under Assumptions {\rm \ref{Ap of highdimensionality}, \ref{Ap of panel lag polynomial}} and {\rm \ref{Ap of finite integration}}, for any \(K\in\mathbb{N}^+\), let $\hla_1,\cdots,\hla_K$ be the first $K$ largest eigenvalue of $\hR$ in \eqref{Eq of correlation matrx random walk 1}, then
	$$\Bigg(\frac{\hla_1^{\circ}}{\sqrt{n}},\cdots,\frac{\hla_K^{\circ}}{\sqrt{n}}\Bigg)\overset{d}{\longrightarrow}\mathcal{N}(\boldsymbol{0},\mcS),$$
	where \(\mcS\) is the \(K\times K\) covariance matrix defined in Theorem {\rm \ref{Thm of CLT}}.
\end{pro}
\begin{proof}
	First, we will show that 
	$$n^{-1/2}\Bigg(\hat{\lambda}_k-\sum_{j=1}^n M_{j;k,k}\Bigg)\overset{\mbP}{\longrightarrow}0\quad{\rm for\ }1\leq k\leq K.$$
	Without loss of generality, we only present the proof for \(k=1\), since others are totally the same. In this proof, we will {\bf abuse} the notations \(A_1,B_1\) defined in (\ref{Eq of A1 A2}) as follows:
    $$A_1:=\sum_{k=1}^K\alpha_{1,k}\sigma_k\bbD^{-1/2}\bbe\bbv_k\quad{\rm and}\quad B_1:=\sum_{k=K+1}^{T-1}\alpha_{1,k}\sigma_k\bbD^{-1/2}\bbe\bbv_k,$$
    then we have
	$$\frac{\hat{\lambda}_1^{\circ}}{\sqrt{n}}=\frac{(\Vert A_1\Vert^2)^{\circ}}{\sqrt{n}}+\frac{(\Vert B_1\Vert^2)^{\circ}}{\sqrt{n}}+2\frac{\langle A_1,B_1\rangle^{\circ}}{\sqrt{n}},$$
	Since
	$$\frac{(\Vert A_1\Vert^2)^{\circ}}{\sqrt{n}}=\sum_{k,l=1}^K\frac{1}{\sqrt{n}}\sum_{j=1}^n\big(\alpha_{1,k}\alpha_{1,l}(M_{j;k,l})^{\circ}+\alpha_{1,k}\alpha_{1,l}\mathbb{E}[M_{j;k,l}]-\mathbb{E}[\alpha_{1,k}\alpha_{1,l}M_{j;k,l}]\big),$$
	by Corollary \ref{Cor of L2 convergence} and \(\Var\big(n^{-1/2}\sum_{j=1}^nM_{j;k,l}\big)\leq4\), we have that
	$$\sum_{k\neq1\ {\rm or\ }l\neq1}^{K}\alpha_{1,k}\alpha_{1,l}\frac{1}{\sqrt{n}}\sum_{j=1}^n(M_{j;k,l})^{\circ}\overset{\mbP}{\longrightarrow}0.$$
	By Corollary \ref{Cor of L2 convergence} and (\ref{Eq of Lindeberg 1}) in Lemma \ref{Thm of Lindeberg principle}, we have that
	\begin{align*}
		&\Bigg|\sum_{k\neq1\ {\rm or\ }l\neq1}^K\frac{1}{\sqrt{n}}\sum_{j=1}^n\alpha_{1,k}\alpha_{1,l}\mathbb{E}[M_{j;k,l}]\Bigg|\leq\Bigg|\sum_{k,t=2}^K\sqrt{n}\alpha_{1,k}\alpha_{1,t}n^{-1}\sum_{j=1}^n\mathbb{E}[M_{j;k,t}]\Bigg|\notag\\
		&+|\alpha_{1,1}|\sum_{k=2}^K|\alpha_{1,k}|n^{-1/2}\Bigg|\sum_{j=1}^n\mathbb{E}[M_{j;1,k}]\Bigg|\\
        &\leq\sum_{k,t=2}^K\sqrt{n}|\alpha_{1,k}\alpha_{1,t}|+C_{B,M,\kappa_8,c}\sum_{k=2}^K|\alpha_{1,k}|\overset{\mbP}{\longrightarrow}0.
	\end{align*}
	In addition, since \(|{\rm Cov}(\alpha_{1,k}\alpha_{1,l},n^{-1/2}\sum_{j=1}^nM_{j;k,l})|\leq2\mathbb{E}[\alpha_{1,k}^2\alpha_{1,l}^2]^{1/2}\), where we use the fact that \({\rm Var}(n^{-1/2}\sum_{j=1}^nM_{j;k,l})\leq4\) for \(k\neq l\), we have that
	$$\Bigg|\sum_{k\neq1\ {\rm or\ }l\neq1}^K\frac{1}{\sqrt{n}}\sum_{j=1}^n\mathbb{E}[\alpha_{1,k}\alpha_{1,l}M_{j;kl}]-\mathbb{E}[\alpha_{1,k}\alpha_{1,l}]\mathbb{E}[M_{j;k,l}]\Bigg|\leq2\sum_{k\neq1\ {\rm or\ }l\neq1}^K\mathbb{E}[\alpha_{1,k}^2\alpha_{1,l}^2]^{1/2}\longrightarrow0,$$
	by Lemma \ref{Lem of at convergence} and (\ref{Eq of Lindeberg 1}), we can obtain
	\begin{align*}
	    &\sum_{k\neq1\ {\rm or\ }l\neq1}^K\mathbb{E}[\alpha_{1,k}\alpha_{1,l}]\frac{1}{\sqrt{n}}\sum_{j=1}^n\mathbb{E}[M_{j;k,l}]\leq C_{B,M,\kappa_8,c}\sum_{k\neq1\ {\rm or\ }l\neq1}^K\sqrt{n}\mathbb{E}[\alpha_{1,k}\alpha_{1,l}](\delta_{k,l}+T^{-1/2})\\
        &\leq\sum_{k=2}^K\sqrt{n}\mathbb{E}[\alpha_{1,k}^2]+C_{B,M,\kappa_8,c}T^{-1/2}\sum_{k\neq l;k,l\geq2}^K\sqrt{n}\mbE[\alpha_{1,k}^2]^{1/2}\mbE[\alpha_{1,l}^2]^{1/2}\longrightarrow0.
	\end{align*}
	As a result, we conclude that
	$$\frac{(\Vert A_1\Vert^2)^{\circ}}{\sqrt{n}}=\frac{1}{\sqrt{n}}\sum_{j=1}^n(\alpha_{1,1}^2 M_{j;1,1})^{\circ}+\mro_{\mbP}(1)+\mro(1).$$
	Similarly, by Lemma \ref{Lem of at convergence} and Corollary \ref{Cor of L2 convergence}, we also have
	\begin{align}
		&\frac{1}{\sqrt{n}}\sum_{j=1}^n[M_{j;1,1}^{\circ}-(\alpha_{1,1}^2 M_{j;1,1})^{\circ}]=\frac{1}{\sqrt{n}}\sum_{j=1}^n[(1-\alpha_1^2)M_{j;1,1}]^{\circ}\notag\\
		&=\sqrt{n}(1-\alpha_{1,1}^2)n^{-1}\sum_{j=1}^nM_{j;1,1}-\mathbb{E}\Bigg[\sqrt{n}(1-\alpha_{1,1}^2)n^{-1}\sum_{j=1}^n M_{j;1,1}\Bigg]\overset{\mbP}{\longrightarrow}0,\notag
	\end{align}
	where we use the fact that \(n^{-1}\sum_{j=1}^n M_{j;1,1}\leq1\). Therefore, we conclude that
	$$\Bigg|\frac{(\Vert A_1\Vert^2)^{\circ}}{\sqrt{n}}-\frac{1}{\sqrt{n}}\sum_{j=1}^n M_{j;1,1}^{\circ}\Bigg|\overset{\mbP}{\longrightarrow}0.$$
	Next, by the Cauchy's inequality, we have
	$$\frac{\Vert B_1\Vert^2}{\sqrt{n}}\leq\sqrt{n}\epsilon_K\times\frac{1}{n}\sum_{j=1}^n\sum_{k=K+1}^{T-1}M_{j;k,k},$$
	where \(\epsilon_K=\sum_{k=K+1}^{T-1}\alpha_{1,k}^2\). Since \(n^{-1}\sum_{j=1}^n\sum_{k=K+1}^{T-1}M_{j;k,k}\leq1\) and \(\sqrt{n}\epsilon_K\leq\sqrt{n}(1-\alpha_{1,1}^2)\), we can show that \(n^{-1/2}\mathbb{E}[\Vert B_1\Vert^2]\to0\) and \(n^{-1/2}\Vert B_1\Vert^2\overset{\mbP}{\longrightarrow}0\) by Lemma \ref{Lem of at convergence}, i.e.
	$$\frac{|(\Vert B_1\Vert^2)^{\circ}|}{\sqrt{n}}\leq\frac{\Vert B_1\Vert^2}{\sqrt{n}}+\frac{\mathbb{E}[\Vert B_1\Vert^2]}{\sqrt{n}}\overset{\mbP}{\longrightarrow}0.$$
	Finally, since
	\begin{align}
		\frac{\langle A_1,B_1\rangle}{\sqrt{n}}&=\sum_{k=1}^K\alpha_{1,k}n^{-1/2}\sum_{t=K+1}^{T-1}\alpha_{1,t}\sum_{j=1}^n M_{j;k,t},\notag
	\end{align}
    for each \(k\in\{1,\cdots,K\}\), we have
    \begin{align*}
        &\mbE\Bigg[\Bigg|\sum_{t=K+1}^{T-1}\alpha_{1,t}\sum_{j=1}^n M_{j;k,t}\Bigg|\Bigg]\leq\mbE[1-\alpha_{1,1}^2]^{1/2}\mbE\Bigg[\sum_{t=K+1}^{T-1}\Bigg(\sum_{j=1}^n M_{j;k,t}\Bigg)^2\Bigg]^{1/2},
    \end{align*}
    then by (\ref{Eq of Lindeberg 2}), it yields that
    \begin{align*}
        &\mbE\Bigg[\sum_{t=K+1}^{T-1}\Bigg(\sum_{j=1}^n M_{j;k,t}\Bigg)^2\Bigg]\\
        &=\sum_{j=1}^n \sum_{t=K+1}^{T-1}\mbE[M_{j;k,t}^2]+\sum_{t=K+1}^{T-1}\sum_{j_1\neq j_2}^n\mbE[M_{j_1;k,t}]\mbE[M_{j_2;k,t}]\leq C_{B,M,\kappa_8,c}\log^6(T)T,
    \end{align*}
    where we use the fact $k\neq t$. Hence, by Lemma \ref{Lem of at convergence}, we can conclude that
    \begin{align*}
        &n^{-1/2}\mbE\Bigg[\Bigg|\sum_{t=K+1}^{T-1}\alpha_{1,t}\sum_{j=1}^n M_{j;k,t}\Bigg|\Bigg]\leq C_{B,M,\kappa_8,c}\log^3(T)\mbE[1-\alpha_{1,1}^2]^{1/2}\longrightarrow0.
    \end{align*}
    Since \(K\in\mbN^+\) is a fixed integer
	and \(|\alpha_{1,k}|\leq1\) for \(1\leq k\leq K\), so
    \begin{align*}
        &n^{-1/2}\mbE[|\langle A_1,B_1\rangle|]\leq\sum_{k=1}^Kn^{-1/2}\mbE\Bigg[\Bigg|\sum_{t=K+1}^{T-1}\alpha_{1,t}\sum_{j=1}^n M_{j;k,t}\Bigg|\Bigg]\longrightarrow0,
    \end{align*}
    which implies that \(n^{-1/2}\langle A_1,B_1\rangle^{\circ}\overset{\mbP}{\longrightarrow}0.\) In summary, we conclude that
	$$\Bigg|\frac{\hat{\lambda}_1^{\circ}}{\sqrt{n}}-\frac{1}{\sqrt{n}}\sum_{j=1}^n M_{j;1,1}^{\circ}\Bigg|\overset{\mbP}{\longrightarrow}0.$$
	For \(k\in\{2,\cdots,K\}\), by Lemma \ref{Lem of at convergence}, we can repeat the previous arguments to show that
	$$\Bigg|\frac{\hat{\lambda}_k^{\circ}}{\sqrt{n}}-\frac{1}{\sqrt{n}}\sum_{j=1}^n M_{j;k,k}^{\circ}\Bigg|\overset{\mbP}{\longrightarrow}0.$$
	We omit details here to save space. Now, we have proved that
	$$\Bigg(\frac{\hat{\lambda}_1^{\circ}}{\sqrt{n}},\cdots,\frac{\hat{\lambda}_K^{\circ}}{\sqrt{n}}\Bigg)'=\Bigg(\frac{1}{\sqrt{n}}\sum_{j=1}^nM_{j;1,1}^{\circ},\cdots,\frac{1}{\sqrt{n}}\sum_{j=1}^nM_{j;K,K}^{\circ}\Bigg)'+\mro_{\mbP}(\boldsymbol{1}_K),$$
	where \(\boldsymbol{1}_K\) is a \(K\times 1\) vector with all entries are \(1\). Therefore, by the Slutsky’s lemma, we conclude that 
	$$\Bigg(\frac{\hat{\lambda}_1^{\circ}}{\sqrt{n}},\cdots,\frac{\hat{\lambda}_K^{\circ}}{\sqrt{n}}\Bigg)'\ \ {\rm and\ \ }\Bigg(\frac{1}{\sqrt{n}}\sum_{j=1}^nM_{j;1,1}^{\circ},\cdots,\frac{1}{\sqrt{n}}\sum_{j=1}^nM_{j;K,K}^{\circ}\Bigg)'$$
	have the same asymptotic distribution, then we can complete our proof by Proposition \ref{Thm of pre joint CLT}.
\end{proof}
\subsubsection{Proof of Theorem \ref{Thm of CLT}}\label{ssec of mean corrected CLT}
Until now, we have shown that
\begin{align}
	\big|\mathbb{E}[M_{j;k,k}-\fM_{k,k}]\big|\leq C_{B,M,\kappa_8}T^{-1/2}\notag
\end{align}
in Lemma \ref{Thm of Lindeberg principle}. Next, we will show that
$$\sqrt{n}\Bigg|\frac{\mathbb{E}[\hat{\lambda}_k]}{n}-\mathbb{E}[\fM_{k,k}]\Bigg|\leq C_{B,M,\kappa_8,c},$$
so that we can conclude Theorem \ref{Thm of CLT}.
\begin{proof}[Proof of Theorem \ref{Thm of CLT}]
	Without loss of generality, we only give the detailed proof for \(k=1\). Let's first show that
	$$n^{-1/2}\Bigg|\mbE[\hla_1]-\sum_{j=1}^n\mbE[M_{j;1,1}]\Bigg|=\mro(1).$$
	Note that
	\begin{align*}
	    &n^{-1/2}\Bigg(\mathbb{E}[\hat{\lambda}_1]-\sum_{j=1}^n\mathbb{E}[M_{j;1,1}]\Bigg)=n^{-1/2}\Bigg(\sum_{j=1}^n\mathbb{E}[(\alpha_{1,1}^2-1)M_{j;1,1}]+\sum_{k\neq1\ {\rm or\ }l\neq1}^{T-1}\sum_{j=1}^n\mathbb{E}[\alpha_{1,k}\alpha_{1,l}M_{j;k,l}]\Bigg),
	\end{align*}
	then by Lemma \ref{Lem of at convergence}, we have
    \begin{align*}
        &n^{-1/2}\Bigg|\sum_{j=1}^n\mathbb{E}[(\alpha_{1,1}^2-1)M_{j;1,1}]\Bigg|\leq\sqrt{n}\mathbb{E}[1-\alpha_{1,1}^2]\longrightarrow0.
    \end{align*}
	Next, let's split 
	$$n^{-1/2}\sum_{k\neq1\ {\rm or\ }l\neq1}^{T-1}\sum_{j=1}^n\mathbb{E}[\alpha_{1,k}\alpha_{1,l}M_{j;k,l}]$$
	into the following two parts and claim that
	$$n^{-1/2}\sum_{l=2}^{T-1}\sum_{j=1}^n\mathbb{E}[\alpha_{1,1}\alpha_{1,l}M_{j;1,l}]=\mathrm{o}(1)\ \ {\rm and\ \ }n^{-1/2}\sum_{k,l=2}^{T-1}\sum_{j=1}^n\mathbb{E}[\alpha_{1,k}\alpha_{1,l}M_{j;k,l}]=\mathrm{o}(1).$$
	Since the proof arguments are totally the same as those in Proposition \ref{Thm of univariate CLT}, we only briefly repeat them. For the first term, by the Cauchy's inequality, we have
	$$\mathbb{E}\Bigg[\Bigg|\sum_{l=2}^{T-1}\sum_{j=1}^n\alpha_{1,1}\alpha_{1,l}M_{j;1,l}\Bigg|\Bigg]\leq\mathbb{E}[1-\alpha_{1,1}^2]^{1/2}\mathbb{E}\Bigg[\sum_{t=2}^{T-1}\Bigg(\sum_{j=1}^n M_{j;1,t}\Bigg)^2\Bigg]^{1/2},$$
	by \eqref{Eq of Lindeberg 2}, we have
	\begin{align}
		&\mathbb{E}\Bigg[\sum_{t=2}^{T-1}\Bigg(\sum_{j=1}^n M_{j;1,t}\Bigg)^2\Bigg]=\sum_{t=2}^{T-1}\sum_{j=1}^n\mbE[M_{j;1,t}^2]+\sum_{t=2}^{T-1}\sum_{j_1\neq j_2}^n\mbE[M_{j_1;1,t}]\mbE[M_{j_2;1,t}]\leq C_{B,M,\kappa_8,c}\log^6(T)T,\notag
	\end{align}
	then we obtain that
	$$n^{-1/2}\mathbb{E}\Bigg[\Bigg|\sum_{l=2}^{T-1}\sum_{j=1}^n\alpha_{1,1}\alpha_{1,l}M_{j;1,l}\Bigg|\Bigg]\leq C_{B,M,\kappa_8,c}\log^3(T)\mathbb{E}[1-\alpha_{1,1}^2]^{1/2}\longrightarrow0.$$
	Similarly, for the second part, we have
	\begin{align}
		&n^{-1/2}\mbE\Bigg[\Bigg|\sum_{k,t=2}^{T-1}\alpha_{1,k}\alpha_{1,l}\sum_{j=1}^nM_{j;k,l}\Bigg|\Bigg]\leq n^{-1/2}\mbE[1-\alpha_{1,1}^2]^{1/2}\mbE\Bigg[\sum_{k,t=2}^{T-1}\Bigg(\sum_{j=1}^n M_{j;k,t}\Bigg)^2\Bigg]^{1/2}\notag\\
		&\leq C_{B,M,\kappa_8,c}\log^3(T)\mathbb{E}[1-\alpha_{1,1}^2]^{1/2}\longrightarrow0,\notag
	\end{align}
	Therefore, we show that
	$$n^{-1/2}\Bigg|\mathbb{E}[\hat{\lambda}_1]-\sum_{j=1}^n\mathbb{E}[M_{j;1,1}]\Bigg|=\mathrm{o}(1),$$
	combining with the (\ref{Eq of Lindeberg 1}), it implies that
	\begin{align}
		&\sqrt{n}\Bigg|\frac{\mathbb{E}[\hat{\lambda}_1]}{n}-\mathbb{E}[\fM_{1,1}]\Bigg|\leq n^{-1/2}\Bigg|\mathbb{E}[\hat{\lambda}_1]-\sum_{j=1}^n\mathbb{E}[M_{j;1,1}]\Bigg|+n^{-1/2}\sum_{j=1}^n\big|\mathbb{E}[M_{j;1,1}-\fM_{1,1}]\big|\notag\\
		&\leq\mathrm{o}(1)+C_{B,M,\kappa_8}(n/T)^{1/2}.\notag
	\end{align}
    Now, based on above results and Proposition \ref{Thm of univariate CLT}, we obtain that
    \begin{align*}
        &\sqrt{n}\Bigg(\frac{\hat{\lambda}_1}{n}-\mathbb{E}[\fM_{1,1}],\cdots,\frac{\hat{\lambda}_K}{n}-\mathbb{E}[\fM_{K,K}]\Bigg)'=\Bigg(\frac{\hat{\lambda}_1^{\circ}}{\sqrt{n}},\cdots,\frac{\hat{\lambda}_K^{\circ}}{\sqrt{n}}\Bigg)'\\
        &+\sqrt{n}\Bigg(\frac{\mathbb{E}[\hat{\lambda}_1]}{n}-\mathbb{E}[\fM_{1,1}],\cdots,\frac{\mathbb{E}[\hat{\lambda}_K]}{n}-\mathbb{E}[\fM_{K,K}]\Bigg)\overset{d}{\longrightarrow}\mathcal{N}(\bbzeta,\mcS),
    \end{align*}
	which completes the proof of Theorem \ref{Thm of CLT}.
\end{proof}
\section{CLT for extreme eigenvalues of the sample correlation matrix of high-dimensional random walks with cross-sectional dependence}\label{Sec of correlation dependent}
\setcounter{equation}{0}
\def\theequation{\thesection.\arabic{equation}}
\setcounter{subsection}{0}
Currently, we have established the joint CLT of \((\hla_1,\cdots,\hla_K)'\) for cross-sectional independent \(X_t\) defined in (\ref{Eq of panel Xt}). In this section, we will establish the same CLT for cross-sectional dependent \(X_t\) defined as follows:
\begin{align}
    X_t=X_{t-1}+e_t,\quad e_t=\bbGa\sum_{k=0}^{\infty}\Psi_k\varepsilon_{t-k},\quad\varepsilon_t\overset{i.i.d.}{\sim}\mcN(\boldsymbol{0},\bbI_n),\label{Eq of nonpanel Xt}
\end{align}
where \(\{\Psi_k:k\in\mbN^+\}\) satisfies Assumption \ref{Ap of panel lag polynomial} and a additional assumption 
\begin{align}
    \sum_{k=0}^{\infty}(1+k)^2\Vert\Psi_k\Vert<B.\label{Eq of Ap of panel lag polynomial}
\end{align}
Moreover, \(\bbGa\in\mbR^{n\times n}\) is the cross-sectional matrix such that
\begin{Ap}\label{Ap of nonpanel}
	There exists two positive constants \(m_0,M_0\) such that
	$$\Vert\bbGa\bbGa'\Vert\leq M_0,\quad{\rm and}\quad n^{-1}\tr(\bbGa\bbGa')\geq m_0.$$
\end{Ap}
For simplicity, given the observations \(\bbX=[X_1,\cdots,X_T]\) generated by (\ref{Eq of nonpanel Xt}), we still denote \(\hR\) to be the sample correlation matrix of \(\bbX\), and \(\hla_1\geq\cdots\geq\hla_K\) are the first \(K\in\mbN^+\) largest eigenvalue of \(\hR\). Moreover, let \(\bbe_j\) be the \(j\)-th row of \(\bbe=[e_1,\cdots,e_T]\) defined in (\ref{Eq of nonpanel Xt}). Similar as (\ref{Eq of la1}), for any $K\in\mbN^+$ and $1\leq k\leq K$, we can still represent \(\hla_k\) as follows 
\begin{align}
    \hla_k=\sum_{s,t=1}^{T-1}\alpha_{k,s}\alpha_{k,t}\sum_{i=1}^n\frac{\beta_s\beta_t(\bbe_i\bbv_s)(\bbe_i\bbv_t)}{\sum_{l=1}^{T-1}\tsigma_l^2(\bbe_i\bbv_l)^2},\label{Eq of alpha cross sectional dependent}
\end{align}
where \(\beta_t\) is defined in (\ref{Eq of beta}) and we denote \(\hat{F}_1=\sum_{k=1}^{T-1}\alpha_{1,k}\bbw_k\) (\(\bbw_k\) are defined in (\ref{Eq of vk})) to be the normalized eigenvector of \(\hla_k\). Here, we {\bf abuse} the notation \(M_{j;k,l}\) in (\ref{Eq of fM}) as follows:
\begin{align}
    M_{i;k,l}:=\frac{\beta_k\beta_l(\bbe_i\bbv_k)(\bbe_i\bbv_l)}{\sum_{t=1}^{T-1}\tsigma_t^2(\bbe_i\bbv_t)^2}.\label{Eq of Mikl}
\end{align}
Similar as the proof of Theorem \ref{Thm of CLT}, to establish the CLT for $\hla_k$ for cross-sectional dependent $\bbX$ generated by \eqref{Eq of nonpanel Xt}, the most essential step is to show that
\begin{align*}
	\frac{\hla_k}{\sqrt{n}}\overset{\mbP}{\longrightarrow}\frac{1}{\sqrt{n}}\sum_{i=1}^nM_{i;k,k}=\frac{1}{\sqrt{n}}\sum_{i=1}^n\frac{\beta_k^2(\bbe_i\bbv_k)^2}{\sum_{t=1}^{T-1}\tsigma_t^2(\bbe_i\bbv_t)^2}.
\end{align*}
However, note that $\bbe_{i_1}\bbv_k$ and $\bbe_{i_2}\bbv_k$ are generally not independent for $i_1\neq i_2$ due to the existence of the cross-sectional matrix $\bbGa$ in \eqref{Eq of nonpanel Xt}. Thus, all $\{M_{i;k,k}:1\leq i\leq n\}$ are indeed correlated. Under this situation, proving
$$\Var\left(\frac{1}{\sqrt{n}}\sum_{i=1}^nM_{i;k,k}\right)=\mrO(1)$$
is not trivial as the cross-sectional independent case in Theorem \ref{Thm of CLT}. Precisely, since
$$\Var\left(\frac{1}{\sqrt{n}}\sum_{i=1}^nM_{i;1,1}\right)=\frac{1}{n}\sum_{i_1,i_2=1}^n\Cov(M_{i_1;k,l},M_{i_2;k,l}).$$
we need to find the upper bound of \(\big|\Cov(M_{i_1;k,l},M_{i_2;k,l})\big|\) for \(i_1\neq i_2\). However, there are two the main difficulties of calculating \(\Cov(M_{i_1;k,l},M_{i_2;k,l})\):
\begin{enumerate}
	\item the nonlinearity of \(M_{i;k,l}\);
	\item the correlations among all \(\{\bbe_i\bbv_k:1\leq i\leq n,1\leq k\leq T-1\}\).
\end{enumerate}
To overcome these technical difficulties caused by cross-sectional dependence, we need a stronger version of Assumptions \ref{Ap of panel lag polynomial} and \ref{Ap of finite integration}, that is, all \(\varepsilon_t\overset{i.i.d.}{\sim}\mcN(\boldsymbol{0},\bbI_n)\) and (\ref{Eq of Ap of panel lag polynomial}). Finally, let's briefly introduce the outline of \S\ref{Sec of correlation dependent}. In \S\ref{sec of finite variance}, for any $1\leq k,l\leq T-1$, we show that
$$\Var\left(\frac{1}{\sqrt{n}}\sum_{i=1}^nM_{i;k,l}\right)\leq \mrO((kl)^{-2}).$$
By the above conclusion, in \S\ref{sec of eigenvectors dependent}, we further deduce that $\alpha_{k,k}$ in \eqref{Eq of alpha cross sectional dependent} satisfies that for $1\leq k\leq K$
$$\lim_{n\to\infty}\sqrt{n}\mbE[1-\alpha_{k,k}^2]=0.$$
Thus, in \S\ref{sec of mbP dependent}, we conclude that
$$\frac{\hla_k}{n}\overset{\mbP}{\longrightarrow}\frac{1}{n}\sum_{i=1}^nM_{i;k,k}\overset{\mbP}{\longrightarrow}\mbE[\fM_{k,k}],$$
where $\fM_{k,k}$ is defined in \eqref{Eq of fM}. Finally, in \S\ref{sec of CLT correlation dependent}, we establish the joint CLT for $(\hla_1,\cdots,\hla_K)'$.
\subsection{Preliminary lemmas}\label{sec of finite variance}
In this section, we will show that
\begin{lem}\label{Lem of finite variance}
    Under Assumptions {\rm \ref{Ap of highdimensionality}, \ref{Ap of panel lag polynomial}, \ref{Ap of nonpanel}} and {\rm (\ref{Eq of Ap of panel lag polynomial})}, for any $1\leq k,l\leq T-1$, we have
    \begin{align}
	   \Var\left(\frac{1}{\sqrt{n}}\sum_{i=1}^nM_{i;k,l}\right)\leq C_{B,b,M_0,m_0}(kl)^{-2},\label{Eq of finite variance}
    \end{align}
    where \(M_{i;k,l}\) is defined in {\rm (\ref{Eq of Mikl})}.
\end{lem}
Here, let's first make some notational preliminaries, to distinguish with \(e_t\) in \eqref{Eq of nonpanel Xt}, let
\begin{align}
    \fe_t:=\sum_{k=0}^{\infty}\Psi_k\varepsilon_{t-k}\quad{\rm and}\quad\fe=[\fe_1,\cdots,\fe_T],\label{Eq of fe_j}
\end{align}
and further let \(\vec{\fe}_j\) be the \(j\)-th row of \(\fe=[\fe_1,\cdots,\fe_T]\) for \(1\leq j\leq n\). By (\ref{Eq of nonpanel Xt}) and (\ref{Eq of fe_j}), we have \(\bbe_j=\Gamma_j\fe\), where \(\Gamma_j\) is the \(j\)-th row of \(\bbGa\). Hence, the $M_{i;k,l}$ in \eqref{Eq of Mikl} can be rewritten as 
$$M_{i;k,l}=\frac{\beta_k(\Gamma_i\fe\bbv_k)\beta_l(\Gamma_i\fe\bbv_l)}{\sum_{t=1}^{T-1}\beta_t^2(\Gamma_i\fe\bbv_t)^2}.$$
Note that all $\varepsilon_t\overset{i.i.d.}{\sim}\mcN(\boldsymbol{0},\bbI_n)$ in \eqref{Eq of fe_j}, then \(\vec{\fe}_j\bbv_k\) are all normal. By \eqref{Eq of fe_j}, since $\{\Psi_k:k\in\mbN\}$ in Assumption \ref{Ap of panel lag polynomial} are diagonal, it implies that \(\vec{\fe}_{j_1}\bbv_{k_1},\vec{\fe}_{j_2}\bbv_{k_2}\) are independent for $j_1\neq j_2$, and Lemma \ref{Lem of covariance 1} implies that
$$\left|\Cov(\vec{\fe}_j\bbv_k,\vec{\fe}_j\bbv_l)-2\pi\delta_{k,l}f_j(\pi k/T)\right|\leq C_BT^{-1}.$$
Similar as what we have done in \S\ref{ssec of preliminary dependent}, for each $1\leq j\leq n$ we will replace all $\vec{\fe}_j\bbv_k$ by independent Gaussian variables, that is, remove the weak dependence among all $\{\vec{\fe}_j\bbv_1,\cdots,\vec{\fe}_j\bbv_{T-1}\}$. Importantly, for the cross-sectional dependent $X_t$ in \eqref{Eq of nonpanel Xt}, we need to show that the difference caused by removing these weak dependence has order of \(\mro(T^{-1/2})\). For this reason, we will first refine the upper bound for the weak correlations among all \(\vec{\fe}_j\bbv_k\) in Lemma \ref{Lem of covariance 1}, i.e.
$$\left|\Cov(\vec{\fe}_j\bbv_k,\vec{\fe}_j\bbv_l)\right|\leq C_BT^{-1},\quad k\neq l,$$
see \S\ref{ssec of spectral density approximation} for details. Precisely, we will prove Lemma \ref{Lem of finite variance} by the following three steps:
\begin{enumerate}
    \item First, we will derive a refined error bound of \(|\Cov(\vec{\fe}_j\bbv_k,\vec{\fe}_j\bbv_l)|\) for \(k\neq l\) in Lemma \ref{Lem of covariance adjust}.
    \item Next, let \(z_{i,t}\overset{i.i.d.}{\sim}\mcN(0,1)\) for \(1\leq i\leq n,1\leq t\leq T-1\) and \(\hat{x}_{i,t}=\sum_{j=1}^n\Gamma_{i,j}f_j(0)^{1/2}z_{j,t}\), then we will show that for \(1\leq k,l\leq T-1\)
    $$\mbE\left[\left|\frac{1}{\sqrt{n}}\sum_{i=1}^n\left(\frac{(\Gamma_i\fe\bbv_k)(\Gamma_i\fe\bbv_l)}{\sum_{t=1}^{T-1}\tsigma_t^2(\Gamma_i\fe\bbv_t)^2}-\frac{\hat{x}_{i,k}\hat{x}_{i,l}}{\sum_{t=1}^{T-1}\beta_t^2\hat{x}_{i,t}}\right)\right|^2\right]=\mro(n^{-1/2}).$$
    Precisely, the basic frameworks are similar as those in \S\ref{ssec of remove the dependence}, we first remove the weak dependence among all $\{\vec{\fe}_j\bbv_1,\cdots,\vec{\fe}_j\bbv_{T-1}\}$ for each $1\leq j\leq n$ in \S\ref{ssec of remove the dependence dependent}, then adjust the variance of all $\vec{\fe}_j\bbv_k$ in \S\ref{ssec of adjust the variance dependent}.
    \item Finally, in \S\ref{ssec of positive covariance}, we will show that for \(1\leq k,l\leq T-1\)
    $$\Var\left(\frac{1}{\sqrt{n}}\sum_{i=1}^n\frac{\hat{x}_{i,k}\hat{x}_{i,l}}{\sum_{t=1}^{T-1}\beta_t^2\hat{x}_{i,t}}\right)\leq C_{B,b,M_0,m_0}.$$
\end{enumerate}
Finally, let's make some notations here. Recall that all \(\varepsilon_t\overset{i.i.d.}{\sim}\mcN(\boldsymbol{0},\bbI_n)\), then we have \(\fe_t\sim\mcN(\boldsymbol{0},\sum_{k=0}^{\infty}\Psi_k^2)\) by (\ref{Eq of nonpanel Xt}) and 
\begin{align}
    \vec{\fe}_j=(\fe_{j,1},\cdots,\fe_{j,T})'\sim\mcN(\boldsymbol{0},\mcA^j),\label{Eq of vec fe j}
\end{align}
where \(\mcA^j=[\mcA_{s,t}^j]_{T\times T}\) is a \(T\times T\) Toeplitz matrix such that
\begin{align}
    \mcA_{s,t}^j=\mcA_{t,s}^j=\Cov(\fe_{j,t},\fe_{j,s})=\sum_{k=0}^{\infty}\varphi_{j,k}\varphi_{j,k+|t-s|}:=\phi_{|t-s|}^j.\label{Eq of mcA}
\end{align}
Further let \(\bbV=[\bbv_1,\cdots,\bbv_{T-1}]\) defined in (\ref{Eq of vk}), then
\begin{align}
    (\vec{\fe}_j\bbv_1,\cdots,\vec{\fe}_j\bbv_{T-1})\sim\mcN(0,\bbV'\mcA^j\bbV):=\mcN(0,\mcB^j).\label{Eq of mcB}
\end{align}
\subsubsection{Spectral density approximation}\label{ssec of spectral density approximation}
In this part, we will prove that
\begin{lem}\label{Lem of covariance adjust}
	Under Assumptions {\rm \ref{Ap of highdimensionality}, \ref{Ap of panel lag polynomial}} and {\rm (\ref{Eq of Ap of panel lag polynomial})}, for any sufficiently small \(\delta>0\), we have
	$$\left|\Cov(\vec{\fe}_j\bbv_k,\vec{\fe}_j\bbv_l)\right|\leq C_BT^{3\delta-2},\quad{\rm when}\ \left\{\begin{array}{cc}
		1\leq k\leq T^{\delta},&T^{1-\delta}<l<T-T^{1-\delta},\\
		1\leq l\leq T^{\delta},&T^{1-\delta}<k<T-T^{1-\delta},
	\end{array}\right.$$
    where $\vec{\fe}_j$ and $\bbv_k$ are defined in \eqref{Eq of vec fe j} and \eqref{Eq of vk}, respectively.
\end{lem}
\begin{proof}
    By $\bbv_k$ defined in \eqref{Eq of vk}, we have
    $$\vec{\fe}_j\bbv_k=\sqrt{\frac{2}{T}}\sum_{t=1}^Te_{j,t}\sin(\pi k(t-1)/T)=\sqrt{\frac{2}{T}}\sum_{t=1}^Te_{j,t}\Im(\exp({\rm i}\pi k(t-1)/T)),$$
    then \(\vec{\fe}_j\bbv_k=\sqrt{\pi}{\rm i}\big(e^{{\rm i}\theta_k/2}d_j(\theta_k/2)-e^{-{\rm i}\theta_k/2}d_j(-\theta_k/2)\big)\), where
    \begin{align}
	   d_j(\theta)=\frac{1}{\sqrt{2\pi T}}\sum_{t=1}^Te_{j,t}\exp(-{\rm i}t\theta)\quad{\rm and}\quad\theta_k:=2\pi k/T.\label{Eq of dj}
    \end{align}
    By Theorem 4.4.1 in \cite{brillinger2001time}, the spectral density of \(d_j(\theta)\) is
    \begin{align}
	   f_j(\theta)=\frac{1}{2\pi}\left|\sum_{t=0}^{\infty}\varphi_{j,t}e^{-{\rm i}t\theta}\right|^2.\notag
    \end{align}
	Since
	\begin{align}
		&\Cov(\vec{\fe}_j\bbv_k,\vec{\fe}_j\bbv_l)\label{Eq of cov ejvk 1}\\
		&=\pi\Cov\left(e^{{\rm i}\theta_k/2}d_j(\theta_k/2)-e^{-{\rm i}\theta_k/2}d_j(-\theta_k/2),e^{{\rm i}\theta_l/2}d_j(\theta_l/2)-e^{-{\rm i}\theta_l/2}d_j(-\theta_l/2)\right)\notag\\
		&=\pi e^{{\rm i}(\theta_k-\theta_l)/2}\Cov\left(d_j(\theta_k/2),d_j(\theta_l/2)\right)-\pi e^{{\rm i}(\theta_k+\theta_l)/2}\Cov\left(d_j(\theta_k/2),d_j(-\theta_l/2)\right)\notag\\
		&-\pi e^{-{\rm i}(\theta_k+\theta_l)/2}\Cov\left(d_j(-\theta_k/2),d_j(\theta_l/2)\right)+\pi e^{{\rm i}(\theta_l-\theta_k)/2}\Cov\left(d_j(-\theta_k/2),d_j(-\theta_l/2)\right).\notag
	\end{align}
	Let's first compute
	\begin{align*}
		&\Cov(d_j(\theta_k/2),d_j(\theta_l/2))=\mbE[d_j(\theta_k/2)\cdot d_j(-\theta_l/2)]\\
		&=\frac{1}{2\pi}\int_{-\pi}^{\pi}H_T\left(x-\frac{\pi k}{T}\right)\left(\sum_{t=0}^{\infty}\varphi_{j,t}e^{-{\rm i}xt}\right)\overline{H_T\left(x-\frac{\pi l}{T}\right)\left(\sum_{t=0}^{\infty}\varphi_{j,t}e^{-{\rm i}xt}\right)}dx,
	\end{align*}
	where
	$$H_T(x)=\frac{1}{\sqrt{2\pi T}}\sum_{s=1}^Te^{{\rm i}xs}.$$
	Hence,
	\begin{align*}
		H_T\left(x-\frac{\pi k}{T}\right)\left(\sum_{t=0}^{\infty}\varphi_t e^{-{\rm i}xt}\right)&=\frac{1}{\sqrt{2\pi T}}\sum_{s=1}^Te^{\frac{-\pi{\rm i}ks}{T}}\sum_{t=0}^{\infty}\varphi_{j,t}e^{{\rm i}x(s-t)},\\
		\overline{H_T\left(x-\frac{\pi l}{T}\right)\left(\sum_{t=0}^{\infty}\varphi_{j,t}e^{-{\rm i}xt}\right)}&=\frac{1}{\sqrt{2\pi T}}\sum_{s=1}^Te^{\frac{\pi{\rm i}ls}{T}}\sum_{t=0}^{\infty}\varphi_t e^{-{\rm i}x(s-t)},
	\end{align*}
	which implies that
	\begin{align*}
		&\int_{-\pi}^{\pi}H_T\left(x-\frac{\pi k}{T}\right)\left(\sum_{t=0}^{\infty}\varphi_{j,t}e^{-{\rm i}xt}\right)\overline{H_T\left(x-\frac{\pi l}{T}\right)\left(\sum_{t=0}^{\infty}\varphi_{j,t}e^{-{\rm i}xt}\right)}dx\\
		&=\frac{1}{2\pi T}\sum_{s_1,s_2=1}^Te^{\frac{\pi{\rm i}ls_2}{T}-\frac{\pi{\rm i}ks_1}{T}}\sum_{t_1,t_2=0}^{\infty}\varphi_{j,t_1}\varphi_{j,t_2}\int_{-\pi}^{\pi}e^{{\rm i}x[s_1-s_2-(t_1-t_2)]}dx\\
		&=\frac{1}{2\pi T}\sum_{s_1,s_2=1}^Te^{\frac{\pi{\rm i}ls_2}{T}-\frac{\pi{\rm i}ks_1}{T}}\sum_{\substack{t_1,t_2=0\\t_1-t_2=s_1-s_2}}^{\infty}\varphi_{j,t_1}\varphi_{j,t_2}=\frac{1}{2\pi T}\sum_{s_1,s_2=1}^Te^{\frac{\pi{\rm i}ls_2}{T}-\frac{\pi{\rm i}ks_1}{T}}\phi_{|s_1-s_2|}^j.
	\end{align*}
	First, when \(s_1=s_2\), then \(t_1=t_2\) and
	$$\sum_{s=1}^Te^{\frac{\pi{\rm i}(l-k)s}{T}}=\left\{\begin{array}{cc}
		0&k\equiv l\mod2\\
		\frac{-2}{1-e^{\frac{\pi{\rm i}(k-l)}{T}}}&k\not\equiv l\mod2
	\end{array}\right..$$
	Next, let \(r=|s_1-s_2|\), then we have \(s_1=s_2+r\) or \(s_2=s_1+r\), for the previous case, it gives that
	\begin{align*}
		&\phi_r^j\sum_{s=1}^{T-r}e^{\frac{\pi{\rm i}l(s+r)}{T}-\frac{\pi{\rm i}ks}{T}}=\phi_r^j e^{\frac{\pi{\rm i}lr}{T}}\sum_{s=1}^{T-r}e^{\frac{\pi{\rm i}(l-k)s}{T}}=\phi_r^j\frac{(-1)^{k-l}e^{\frac{\pi{\rm i}kr}{T}}-e^{\frac{\pi{\rm i}lr}{T}}}{1-e^{\frac{\pi{\rm i}(k-l)}{T}}}.
	\end{align*}
	Similarly, for the latter case, we have
	\begin{align*}
		&\phi_r^j\sum_{s=1}^{T-r}e^{\frac{\pi{\rm i}ls}{T}-\frac{\pi{\rm i}k(s+r)}{T}}=\phi_r^j e^{-\frac{\pi{\rm i}kr}{T}}\sum_{s=1}^{T-r}e^{\frac{\pi{\rm i}(l-k)s}{T}}=\phi_r^j\frac{(-1)^{k-l}e^{-\frac{\pi{\rm i}lr}{T}}-e^{-\frac{\pi{\rm i}kr}{T}}}{1-e^{\frac{\pi{\rm i}(k-l)}{T}}}.
	\end{align*}
	Now, let's consider the following two cases:
	\begin{itemize}
		\item \(k\equiv l\mod2\): Notice that \((-1)^{\pm k \pm l}=1\), we have that
		\begin{align*}
			&e^{{\rm i}(\theta_k-\theta_l)/2}\Cov(d_j(\theta_k/2),d_j(\theta_l/2))=\frac{e^{{\rm i}(\theta_k-\theta_l)/2}}{4\pi^2T}\sum_{s_1,s_2=1}^Te^{\frac{\pi{\rm i}ls_2}{T}-\frac{\pi{\rm i}ks_1}{T}}\phi_{|s_1-s_2|}\\
			&=\frac{e^{{\rm i}(\theta_k-\theta_l)/2}}{4\pi^2T\left[1-e^{\frac{\pi{\rm i}(k-l)}{T}}\right]}\sum_{r=1}^{T-1}\phi_r^j\left(e^{\frac{\pi{\rm i}kr}{T}}-e^{-\frac{\pi{\rm i}kr}{T}}-e^{\frac{\pi{\rm i}lr}{T}}+e^{-\frac{\pi{\rm i}lr}{T}}\right)\\
			&=\frac{{\rm i}e^{{\rm i}(\theta_k-\theta_l)/2}}{2\pi^2T\left[1-e^{{\rm i}(\theta_k-\theta_l)/2}\right]}\sum_{r=1}^{T-1}\phi_r^j\left(\sin\left(\frac{\pi kr}{T}\right)-\sin\left(\frac{\pi lr}{T}\right)\right)\\
			&=\left(\frac{1}{1-e^{{\rm i}(\theta_k-\theta_l)/2}}-1\right)\frac{{\rm i}}{2\pi^2T}\sum_{r=1}^{T-1}\phi_r^j\left(\sin\left(r\theta_k/2\right)-\sin\left(r\theta_l/2\right)\right).
		\end{align*}
		Similarly, we have that
		\begin{align*}
			&e^{{\rm i}(\theta_l-\theta_k)/2}\Cov(d_j(-\theta_k/2),d_j(-\theta_l/2))\\
			&=\left(\frac{1}{1-e^{{\rm i}(\theta_l-\theta_k)/2}}-1\right)\frac{{\rm i}}{2\pi^2T}\sum_{r=1}^{T-1}\phi_r^j\left(-\sin\left(r\theta_k/2\right)+\sin\left(r\theta_l/2\right)\right),
		\end{align*}
		and
		\begin{align*}
			&e^{{\rm i}(\theta_k+\theta_l)/2}\Cov(d_j(\theta_k/2),d_j(-\theta_l/2))\\
			&=\left(\frac{1}{1-e^{{\rm i}(\theta_k+\theta_l)/2}}-1\right)\frac{{\rm i}}{2\pi^2T}\sum_{r=1}^{T-1}\phi_r^j\left(\sin\left(r\theta_k/2\right)+\sin\left(r\theta_l/2\right)\right),
		\end{align*}
		and
		\begin{align*}
			&e^{-{\rm i}(\theta_k+\theta_l)/2}\Cov(d_j(-\theta_k/2),d_j(\theta_l/2))\\
			&=\left(\frac{1}{1-e^{-{\rm i}(\theta_k+\theta_l)/2}}-1\right)\frac{{\rm i}}{2\pi^2T}\sum_{r=1}^{T-1}\phi_r^j\left(-\sin\left(r\theta_k/2\right)-\sin\left(r\theta_l/2\right)\right).
		\end{align*}
		Hence, by (\ref{Eq of cov ejvk 1}), we know that
		\begin{align}
			&\Cov\left(\vec{\fe}_j\bbv_k,\vec{\fe}_j\bbv_l\right)=\frac{{\rm i}\pi}{2\pi^2T\left[1-e^{{\rm i}(\theta_k-\theta_l)/2}\right]}\sum_{r=1}^{T-1}\phi_r^j\left(\sin\left(r\theta_k/2\right)-\sin\left(r\theta_l/2\right)\right)\notag\\
			&+\frac{{\rm i}\pi}{2\pi^2T\left[1-e^{{\rm i}(\theta_l-\theta_k)/2}\right]}\sum_{r=1}^{T-1}\phi_r^j\left(\sin\left(r\theta_l/2\right)-\sin\left(r\theta_k/2\right)\right)\notag\\
			&-\frac{{\rm i}\pi}{2\pi^2T\left[1-e^{{\rm i}(\theta_k+\theta_l)/2}\right]}\sum_{r=1}^{T-1}\phi_r^j\left(\sin\left(r\theta_k/2\right)+\sin\left(r\theta_l/2\right)\right)\notag\\
			&-\frac{{\rm i}\pi}{2\pi^2T\left[1-e^{-{\rm i}(\theta_k+\theta_l)/2}\right]}\sum_{r=1}^{T-1}\phi_r^j\left(-\sin\left(r\theta_k/2\right)-\sin\left(r\theta_l/2\right)\right).\label{Eq of cov ejvk 2}
		\end{align}
		When \(k\leq T^{\delta}\), where \(\delta>0\) is a sufficiently small number, consider
		\begin{align*}
			&\sum_{r=1}^{T-1}\phi_r^j\sin\left(r\theta_k/2\right)=\sum_{r=1}^{T-1}\phi_r^j\sin\left(\frac{\pi kr}{T}\right)=\sum_{r=1}^{[T^{1-2\delta}]}+\sum_{r=[T^{1-2\delta}]+1}^{T-1}\phi_r^j\sin\left(\frac{\pi kr}{T}\right).
		\end{align*}
		According to Assumption \ref{Ap of panel lag polynomial}, we know that
		\begin{align*}
			&|\phi_{r}^j|\leq\left(\sum_{k=0}^{\infty}|\varphi_{j,k}|\right)\left(\sum_{k=r}^{\infty}|\varphi_{j,k}|\right)\leq B\sum_{k=r}^{\infty}|\varphi_{j,k}|,
		\end{align*}
		then
		\begin{align*}
			&\sum_{r=1}^{\infty}r|\phi_{r}^j|\leq B\sum_{r=1}^{\infty}r\sum_{k=r}^{\infty}|\varphi_{j,k}|=B\sum_{r=1}^{\infty}r(r+1)|\varphi_{j,r}|/2<B\sum_{r=1}^{\infty}r^2|\varphi_{j,r}|<B^2.
		\end{align*}
		For the previous part, since \(kr\leq T^{1-\delta}\), the for sufficiently large \(T\), we have
		\begin{align*}
			&\left|\sum_{r=1}^{[T^{1-2\delta}]}\phi_r^j\sin\left(\frac{\pi kr}{T}\right)\right|\leq\frac{\pi k}{T}\sum_{r=1}^{[T^{1-2\delta}]}r|\phi_r^j|\leq C_BT^{\delta-1}.
		\end{align*}
		For the later part, we have
		\begin{align*}
			&\left|\sum_{r=[T^{1-2\delta}]+1}^{T-1}\phi_r^j\sin\left(\frac{\pi kr}{T}\right)\right|\leq T^{2\delta-1}\sum_{r=[T^{1-2\delta}]+1}^{T-1}r|\phi_r^j|\leq C_BT^{2\delta-1}.
		\end{align*}
		As a result, we have
		$$\left|\sum_{r=1}^{T-1}\phi_r^j\sin\left(r\theta_k/2\right)\right|\leq C_BT^{2\delta-1}.$$
		Next, consider
		\begin{align*}
			\left|\frac{1}{1-e^{{\rm i}(\theta_k-\theta_l)/2}}-\frac{1}{1-e^{-{\rm i}\theta_l/2}}\right|=\frac{|e^{{\rm i}\theta_k/2}-1|}{|(1-e^{{\rm i}(\theta_k-\theta_l)/2})(1-e^{-{\rm i}\theta_l/2})|},
		\end{align*}
		where \(1\leq k\leq T^{\delta}\) and
		$$|e^{{\rm i}\theta_k/2}-1|\leq|\cos(\theta_k/2)-1|+|\sin(\theta_k/2)|\leq\mrO(T^{\delta-1}).$$
		On the other hand, when \(T^{1-\delta}<l<T-T^{1-\delta}\), we have
		\begin{align*}
			&|1-e^{-{\rm i}\theta_l/2}|\geq|\sin(\theta_l/2)|\geq\sin(\pi T^{-\delta})\geq2T^{-\delta}.
		\end{align*}
		Since \(k\leq T^{\delta}\ll T^{1-\delta}<l\), it implies that
		\begin{align*}
			&|1-e^{{\rm i}(\theta_k-\theta_l)/2}|\geq|\sin(\pi(l-k)/T)|\geq|\sin(\theta_l/4)|\geq T^{-\delta}.
		\end{align*}
		Hence, we can obtain that
		$$\left|\frac{1}{1-e^{{\rm i}(\theta_k-\theta_l)/2}}-\frac{1}{1-e^{-{\rm i}\theta_l/2}}\right|\leq\mrO(T^{3\delta-1}).$$
		In fact, by the same arguments, we can obtain that
		\begin{align}
			\left|\frac{1}{1-e^{{\rm i}(\pm\theta_k\pm\theta_l)/2}}-\frac{1}{1-e^{\pm{\rm i}\theta_l/2}}\right|\leq\mrO(T^{3\delta-1}).\label{Eq of SP density approx 1}
		\end{align}
		Combine with
		$$\sum_{r=1}^{T-1}|\phi_r^j|\left|\pm\sin\left(r\theta_k/2\right)\pm\sin\left(r\theta_l/2\right)\right|\leq2\sum_{r=1}^{T-1}|\phi_r^j|<C_B,$$
		it yields that
		\begin{align}
			&\left|\left(\frac{1}{1-e^{{\rm i}(\pm\theta_k\pm\theta_l)/2}}-\frac{1}{1-e^{\pm{\rm i}\theta_l/2}}\right)\left(\sum_{r=1}^{T-1}\phi_r^j\left(\pm\sin\left(r\theta_k/2\right)\pm\sin\left(r\theta_l/2\right)\right)\right)\right|\leq C_BT^{3\delta-1}.\notag
		\end{align}
		Consequently, it gives that
		\begin{align*}
			&\frac{{\rm i}}{2\pi^2T\left[1-e^{{\rm i}(\theta_k-\theta_l)/2}\right]}\sum_{r=1}^{T-1}\phi_r^j\left(\sin\left(r\theta_k/2\right)-\sin\left(r\theta_l/2\right)\right)\\
			&=\frac{{\rm i}}{2\pi^2T\left[1-e^{-{\rm i}\theta_l/2}\right]}\sum_{r=1}^{T-1}\phi_r^j\left(\sin\left(r\theta_k/2\right)-\sin\left(r\theta_l/2\right)\right)+C_BT^{3\delta-2}.
		\end{align*}
		Since we have shown that \(|1-e^{-{\rm i}\theta_l/2}|\geq|\sin(\theta_l/2)|\geq\sin(\pi T^{-\delta})\geq2T^{-\delta}\) and
		$$\left|\sum_{r=1}^{T-1}\phi_r^j\sin\left(r\theta_k/2\right)\right|\leq C_BT^{2\delta-1},$$
		then
		$$\frac{1}{|1-e^{-{\rm i}\theta_l/2}|}\left|\sum_{r=1}^{T-1}\phi_r^j\sin\left(r\theta_k/2\right)\right|\leq C_BT^{3\delta-1},$$
		and
		\begin{align*}
			&\frac{{\rm i}}{2\pi^2T\left[1-e^{{\rm i}(\theta_k-\theta_l)/2}\right]}\sum_{r=1}^{T-1}\phi_r^j\left(\sin\left(r\theta_k/2\right)-\sin\left(r\theta_l/2\right)\right)\\
			&=\frac{-{\rm i}}{2\pi^2T\left[1-e^{-{\rm i}\theta_l/2}\right]}\sum_{r=1}^{T-1}\phi_r^j\sin\left(r\theta_l/2\right)+C_BT^{3\delta-2}.
		\end{align*}
		Similarly, we can also show that
		\begin{align*}
			&\frac{{\rm i}}{2\pi^2T\left[1-e^{-{\rm i}(\theta_k+\theta_l)/2}\right]}\sum_{r=1}^{T-1}\phi_r^j\left(-\sin\left(r\theta_k/2\right)-\sin\left(r\theta_l/2\right)\right)\\
			&=\frac{-{\rm i}}{2\pi^2T\left[1-e^{-{\rm i}\theta_l/2}\right]}\sum_{r=1}^{T-1}\phi_r^j\sin\left(r\theta_l/2\right)+C_BT^{3\delta-2},
		\end{align*}
		then by (\ref{Eq of cov ejvk 2}), combining the first and last terms, we have
		\begin{align*}
			&\Bigg|\frac{{\rm i}}{2\pi^2T\left[1-e^{{\rm i}(\theta_k-\theta_l)/2}\right]}\sum_{r=1}^{T-1}\phi_r^j\left(\sin\left(r\theta_k/2\right)-\sin\left(r\theta_l/2\right)\right)\\
			&-\frac{{\rm i}}{2\pi^2T\left[1-e^{-{\rm i}(\theta_k+\theta_l)/2}\right]}\sum_{r=1}^{T-1}\phi_r^j\left(-\sin\left(r\theta_k/2\right)-\sin\left(r\theta_l/2\right)\right)\Bigg|\leq C_BT^{3\delta-2}.
		\end{align*}
		Similarly, for the the other two terms, we have the same results, so we omit the details here.
		\item \(k\not\equiv l\mod2\): Notice that \((-1)^{\pm k \pm l}=-1\), then we will obtain that
		\begin{align*}
			&e^{{\rm i}(\theta_k-\theta_l)/2}\Cov(d_j(\theta_k/2),d_j(\theta_l/2))=\frac{1}{4\pi^2T}\sum_{s_1,s_2=1}^Te^{\frac{\pi{\rm i}ls_2}{T}-\frac{\pi{\rm i}ks_1}{T}}\phi_{|s_1-s_2|}\\
			&=\frac{-e^{{\rm i}(\theta_k-\theta_l)/2}}{4\pi^2T\left[1-e^{\frac{\pi{\rm i}(k-l)}{T}}\right]}\left(2\phi_0^j+\sum_{r=1}^{T-1}\phi_r^j\left(e^{\frac{\pi{\rm i}kr}{T}}+e^{-\frac{\pi{\rm i}kr}{T}}+e^{\frac{\pi{\rm i}lr}{T}}+e^{-\frac{\pi{\rm i}lr}{T}}\right)\right)\\
			&=\frac{-e^{{\rm i}(\theta_k-\theta_l)/2}}{2\pi^2T\left[1-e^{{\rm i}(\theta_k-\theta_l)/2}\right]}\left(\phi_0^j+\sum_{r=1}^{T-1}\phi_r^j\left(\cos\left(\frac{\pi kr}{T}\right)+\cos\left(\frac{\pi lr}{T}\right)\right)\right)\\
			&=\left(1-\frac{1}{1-e^{{\rm i}(\theta_k-\theta_l)/2}}\right)\frac{1}{2\pi^2T}\left(\phi_0^j+\sum_{r=1}^{T-1}\phi_r^j\left(\cos\left(r\theta_k/2\right)+\cos\left(r\theta_l/2\right)\right)\right).
		\end{align*}
		Similarly, we have that
		\begin{align*}
			&e^{{\rm i}(\theta_l-\theta_k)/2}\Cov(d_j(-\theta_k/2),d_j(-\theta_l/2))\\
			&=\left(1-\frac{1}{1-e^{{\rm i}(\theta_l-\theta_k)/2}}\right)\frac{1}{2\pi^2T}\left(\phi_0^j+\sum_{r=1}^{T-1}\phi_r^j\left(\cos\left(r\theta_k/2\right)+\cos\left(r\theta_l/2\right)\right)\right),
		\end{align*}
		and
		\begin{align*}
			&e^{{\rm i}(\theta_k+\theta_l)/2}\Cov(d_j(\theta_k/2),d_j(-\theta_l/2))\\
			&=\left(1-\frac{1}{1-e^{{\rm i}(\theta_k+\theta_l)/2}}\right)\frac{1}{2\pi^2T}\left(\phi_0^j+\sum_{r=1}^{T-1}\phi_r^j\left(\cos\left(r\theta_k/2\right)+\cos\left(r\theta_l/2\right)\right)\right),
		\end{align*}
		and
		\begin{align*}
			&e^{-{\rm i}(\theta_k+\theta_l)/2}\Cov(d_j(-\theta_k/2),d_j(\theta_l/2))\\
			&=\left(1-\frac{1}{1-e^{-{\rm i}(\theta_k+\theta_l)/2}}\right)\frac{1}{2\pi^2T}\left(\phi_0^j+\sum_{r=1}^{T-1}\phi_r^j\left(\cos\left(r\theta_k/2\right)+\cos\left(r\theta_l/2\right)\right)\right).
		\end{align*}
		According to (\ref{Eq of SP density approx 1}) and the fact that
		$$\left|\phi_0^j+\sum_{r=1}^{T-1}\phi_r^j\cos\left(r\theta_k/2\right)+\cos\left(r\theta_l/2\right)\right|\leq2\sum_{r=0}^{T-1}|\phi_r^j|<C_B,$$
		we can derive that
		\begin{align}
			&\left|\left(\frac{1}{1-e^{{\rm i}(\pm\theta_k\pm\theta_l)/2}}-\frac{1}{1-e^{\pm{\rm i}\theta_l/2}}\right)\left(\phi_0^j+\sum_{r=1}^{T-1}\phi_r^j\left(\cos\left(r\theta_k/2\right)+\cos\left(r\theta_l/2\right)\right)\right)\right|\leq C_BT^{3\delta-1},\notag
		\end{align}
		i.e.
		\begin{align*}
			&e^{{\rm i}(\pm\theta_k+\theta_l)/2}\Cov(d_j(\pm\theta_k/2),d_j(-\theta_l/2))\\
			&=\left(1-\frac{1}{1-e^{{\rm i}\theta_l/2}}\right)\frac{1}{2\pi^2T}\left(\phi_0^j+\sum_{r=1}^{T-1}\phi_r^j\left(\cos\left(r\theta_k/2\right)+\cos\left(r\theta_l/2\right)\right)\right)+C_BT^{3\delta-2},
		\end{align*}
		which implies that \(\big|\Cov(d_j(\theta_k/2)-d_j(-\theta_k/2),d_j(-\theta_l/2))\big|\leq C_BT^{3\delta-2}\). Similarly, we also have \(\big|\Cov(d_j(\theta_k/2)-d_j(-\theta_k/2),d_j(\theta_l/2))\big|\leq C_BT^{3\delta-2}\).
	\end{itemize}
	which completes our proof.
\end{proof}
Moreover, we give the following results for \(\Var(\vec{\fe}_j\bbv_k)\):
\begin{lem}\label{Lem of variance adjust}
	Under Assumptions {\rm \ref{Ap of panel lag polynomial}} and {\rm \ref{Ap of panel lag polynomial}}, for \(1\leq t\leq T-1\), we have
	$$\left|\Var(\vec{\fe}_j\bbv_t)-2\pi f_j(0)\right|\leq C_Bt/T.$$
\end{lem}
\begin{proof}
	According to Lemma 10 in \cite{onatski2021spurious}, we know that
	$$\left|\Var(\vec{\fe}_j\bbv_t)-2\pi f_j(\pi t/T)\right|\leq C_BT^{-1}.$$
	Notice that
	\begin{align*}
		&2\pi\left|f_j(\pi t/T)-f_j(0)\right|\leq\left|\sum_{k=0}^{\infty}\varphi_{j,k}\left(e^{-{\rm i}\pi kt/T}-1\right)\right|\cdot\left(\left|\sum_{k=0}^{\infty}\varphi_{j,k}e^{-{\rm i}\pi kt/T}\right|+\left|\sum_{k=0}^{\infty}\varphi_{j,k}\right|\right),
	\end{align*}
	by Assumption \ref{Ap of panel lag polynomial} and (\ref{Eq of spectral density}), we know that \(|f_j(\theta)|\leq C_B\) for all \(\theta\in[0,2\pi]\). On the other hand, since
	$$\left|e^{-{\rm i}\pi kt/T}-1\right|\leq\mrO(kt/T),$$
	then
	\begin{align*}
		&\left|\sum_{k=0}^{\infty}\varphi_{j,k}\left(e^{-{\rm i}\pi kt/T}-1\right)\right|\leq\sum_{k=0}^{\infty}|\varphi_{j,k}|\cdot\left|e^{-{\rm i}\pi kt/T}-1\right|\leq\mrO\left(\frac{t}{T}\sum_{k=1}^{\infty}k|\varphi_{j,k}|\right)\leq C_Bt/T,
	\end{align*}
	which completes our proof.
\end{proof}
\subsubsection{Remove the dependence}\label{ssec of remove the dependence dependent}
In this part, we will remove the weak dependence among all \(\vec{\fe}_j\bbv_k\) for $k=1,\cdots,T-1$, where $\fe$ is defined in \eqref{Eq of fe_j}, $\vec{\fe}_j$ is the $j$-th {\bf row} of $\fe$. Recall that $(\vec{\fe}_j\bbv_1,\cdots,\vec{\fe}_j\bbv_{T-1})=\vec{\fe}_j\bbV\sim\mcN(\boldsymbol{0},\mcB^j)$, where \(\mcB^j\) is defined in (\ref{Eq of mcB}), then denote
$$\mcB^j=\diag(\mcB^j)+\Delta^j:=\mcD^j+\Delta^j,$$
and Lemma \ref{Lem of covariance adjust} implies that
$$\Vert\Delta_{k\cdot}^j\Vert_2^2\leq C_B(2T^{1-\delta}\cdot T^{-2}+(T-2T^{1-\delta})T^{6\delta-4})\leq C_BT^{-1-\delta}$$
for \(1\leq k\leq T^{\delta}\). Here, we claim that
\begin{lem}\label{Lem of positive definite Toeplitz}
	Under Assumption {\rm \ref{Ap of panel lag polynomial}}, \(\mcA^j=[\phi_{|s-t|}^j]_{s\times t}\) defined in {\rm (\ref{Eq of mcA})} is a positive definite symmetric Toeplitz matrix.
\end{lem}
\begin{proof}
	According to Lemma 4.1 in \cite{gray2006toeplitz}, the smallest eigenvalue of \(\mcA^j\) is no less than 
	$${\rm ess}\inf_{x\in[0,2\pi]}\left|\sum_{k=0}^{T-1}\phi_k^je^{{\rm i}kx}\right|={\rm ess}\inf_{x\in[0,2\pi]}\left|\sum_{k=0}^{T-1}\varphi_{j,k}e^{{\rm i}kx}\right|^2>b^2$$
	so we can conclude this lemma by Assumption \ref{Ap of panel lag polynomial} when \(T\) is sufficiently large.
\end{proof}
Since \(\mcA^j\) is positive definite, so does \(\mcB^j=\bbV'\mcA^j\bbV\), where \(\bbV=[\bbv_1,\cdots,\bbv_{T-1}]\), then \((\mcB^j)^{1/2}\) exists and we further claim that
\begin{align}
	\tilde{\Delta}^j:=(\mcB^j)^{1/2}-(\mcD^j)^{1/2},\quad\Vert\tilde{\Delta}_{k\cdot}^j\Vert_2^2\leq C_BT^{-1-\delta},\label{Eq of square root error}
\end{align}
for \(1\leq k\leq T^{\delta}\). In fact, let's first find the square root of \((\mcD^j)^{-1/2}\mcB^j(\mcD^j)^{-1/2}=\bbI_{T-1}+(\mcD^j)^{-1/2}\Delta^j(\mcD^j)^{-1/2}:=\bbI_{T-1}+\hat{\Delta}^j\). It is easy to see that \(\Vert\hat{\Delta}_{k\cdot}^j\Vert_2^2\leq C_BT^{-1-\delta}\) for \(1\leq k\leq T^{\delta}\), and since \(\mcD^j\) is invertible according to Assumption \ref{Ap of panel lag polynomial}, the \(\bbI_{T-1}+\hat{\Delta}^j\) is also invertible, which deduces that \(\Vert\hat{\Delta}^j\Vert<1\). Moreover, since
\begin{align*}
	&\left(\bbI_{T-1}+\hat{\Delta}^j\right)^{1/2}=\bbI_{T-1}+\sum_{r=1}^{\infty}\frac{(-1)^r(2r!)}{4^r(1-2r)(r!)^2}(\hat{\Delta}^j)^r,
\end{align*}
we can derive that
\begin{align}
    (\mcB^j)^{1/2}=(\mcD^j)^{1/2}+\sum_{r=1}^{\infty}\frac{(-1)^r(2r!)}{4^r(1-2r)(r!)^2}(\mcD^j)^{1/2}(\hat{\Delta}^j)^r=(\mcD^j)^{1/2}+\tilde{\Delta}^j.\label{Eq of mcB decomposition}
\end{align}
Hence, it implies that
\begin{align*}
	&\tilde{\Delta}_{k\cdot}^j=(\mcD_{k,k}^j)^{1/2}\hat{\Delta}_{k\cdot}^j\sum_{r=1}^{\infty}\frac{(-1)^r(2r!)}{4^r(1-2r)(r!)^2}(\hat{\Delta}^j)^{r-1},
\end{align*}
then
\begin{align*}
	&\Vert\tilde{\Delta}_{k\cdot}^j\Vert_2\leq(\mcD_{k,k}^j)^{1/2}\Vert\hat{\Delta}_{k\cdot}^j\Vert_2\sum_{r=1}^{\infty}\frac{1}{2^r(2r-1)}\Vert\hat{\Delta}^j\Vert^{r-1}\leq\mrO(\Vert\hat{\Delta}_{k\cdot}^j\Vert_2),
\end{align*}
which concludes our claim. 

Now, given \(\vec{z}_j=(z_{j,1},\cdots,z_{j,T-1})'\overset{i.i.d.}{\sim}\mcN(\boldsymbol{0},\bbI_{T-1})\) for \(j=1,\cdots,n\), it gives that $(\vec{\fe}_j\bbV)'$ and $(\mcB^j)^{1/2}\vec{z}_j$ have the same distribution. For simplicity, by \eqref{Eq of mcB decomposition}, we define
$$(\mcB^j)^{1/2}\vec{z}_j=(\mcD^j)^{1/2}\vec{z}_j+\tilde{\Delta}^j\vec{z}_j:=\vec{y}_j+\tilde{\Delta}^j\vec{z}_j,$$
and for $1\leq k\leq T-1$
\begin{align}
    x_{i,k}:=\sum_{j=1}^n\Gamma_{i,j}y_{j,k}+\tilde{\Delta}_{k\cdot}^j\vec{z}_j,\quad\tilde{x}_{i,k}:=\sum_{j=1}^n\Gamma_{i,j}y_{j,k},\label{Eq of X ik cross sectional dependent}
\end{align}
where $\vec{y}_j=(y_{j,1},\cdots,y_{j,T-1})'$, $\tilde{\Delta}_{k\cdot}^j$ is the $k$-th row of $\tilde{\Delta}^j$ and $\bbGa$ is the cross-sectional matrix defined in \eqref{Eq of nonpanel Xt}. Consequently, by \eqref{Eq of Mikl}, we know that $M_{i;k,l}$ and
$$\frac{\beta_k\beta_lx_{i,k}x_{i,l}}{\sum_{t=1}^{T-1}\tsigma_t^2x_{1,t}^2}$$
have the same distribution. Therefore, we will {\bf abuse} the notation $M_{i;k,l}$ as follows:
\begin{align}
    M_{i;k,l}=\frac{\beta_k\beta_lx_{i,k}x_{i,l}}{\sum_{t=1}^{T-1}\tsigma_t^2x_{1,t}^2},\label{Eq of Mikl abuse}
\end{align}
without special clarification, we always assume that $M_{i;k,l}$ is defined by \eqref{Eq of Mikl abuse} instead of \eqref{Eq of Mikl} in the following context. Finally, let
\begin{align}
	\mu_{i,t}:=\sum_{j=1}^n\Gamma_{i,j}\tilde{\Delta}_{t\cdot}^j\vec{z}_j,\label{Eq of mu 1t}
\end{align}
and we will show that 
\begin{lem}\label{Lem of remove dependence}
	Under Assumptions {\rm \ref{Ap of highdimensionality}} and {\rm \ref{Ap of panel lag polynomial}}, for any \(k,l\in\{1,\cdots,T-1\}\), we have
	$$\frac{1}{\sqrt{n}}\sum_{i=1}^n\frac{x_{i,k}x_{i,l}}{\sum_{t=1}^{T-1}\tsigma_t^2x_{i,t}^2}\overset{L^2}{\longrightarrow}\frac{1}{\sqrt{n}}\sum_{i=1}^n\frac{\tilde{x}_{i,k}\tilde{x}_{i,l}}{\sum_{t=1}^{T-1}\tsigma_t^2\tilde{x}_{i,t}^2},$$
    where $x_{i,k}$ and \(\tilde{x}_{i,k}\) are defined in {\rm (\ref{Eq of X ik cross sectional dependent})}, $\beta_t$ is defined in \eqref{Eq of beta}.
\end{lem}
\begin{proof}
	For convenience, we only give the detailed proofs for \(k=l=1\), since the arguments for others are totally the same. First, notice that 
	\begin{align*}
		&\mbE\left[\left(\frac{1}{\sqrt{n}}\sum_{i=1}^n\frac{x_{i,1}^2}{\sum_{t=1}^{T-1}\tsigma_t^2x_{i,t}^2}-\frac{\tilde{x}_{i,1}^2}{\sum_{t=1}^{T-1}\tsigma_t^2\tilde{x}_{i,t}^2}\right)^2\right]\leq\sum_{i=1}^n\mbE\left[\left(\frac{x_{i,1}^2}{\sum_{t=1}^{T-1}\tsigma_t^2x_{i,t}^2}-\frac{\tilde{x}_{i,1}^2}{\sum_{t=1}^{T-1}\tsigma_t^2\tilde{x}_{i,t}^2}\right)^2\right],
	\end{align*}
	and we will show that
	\begin{align}
		\sup_i T\mbE\left[\left(\frac{x_{i,1}^2}{\sum_{t=1}^{T-1}\tsigma_t^2x_{i,t}^2}-\frac{\tilde{x}_{i,1}^2}{\sum_{t=1}^{T-1}\tsigma_t^2\tilde{x}_{i,t}^2}\right)^2\right]\leq C_{B,M_0}T^{-\delta^2},\label{Eq of remove dependence}
	\end{align}
	where \(\delta>0\) is a fixed sufficiently small constant. Without loss of generality, we only present the proofs for \(i=1\) since the constant \(C_{B,M_0}\) in (\ref{Eq of remove dependence}) is independent of \(i\). By (\ref{Eq of mu 1t}), we have
	$$\mu_{1,t}\sim\mcN\left(0,\sum_{j=1}^n\Gamma_{1,j}^2\Vert\tilde{\Delta}_{t\cdot}^j\Vert_2^2\right).$$
	According to (\ref{Eq of square root error}), we have \(\Var(\mu_{1,t})\leq C_{B,M_0}T^{-1-\delta}\) for \(1\leq t<T^{\delta}\), then
	$$\mbP(|\mu_{1,t}|>T^{-1/2-\delta/4})\leq\mrO(\exp(-C_{B,M_0}T^{\delta^2})).$$
	When \(T^{\delta}\leq t<T\), we have \(\Var(\mu_{1,t})\leq C_{B,M_0}T^{-1}\), then
	$$\mbP(|\mu_{1,t}|>t^{\delta}T^{-1/2-\delta^2})\leq\mrO(\exp(-t^{2\delta}T^{-\delta^2}))\leq\mrO(\exp(-C_{B,M_0}T^{\delta^2})).$$
	Now, define
	$$\mcE_1:=\{|\mu_{1,t}|\leq T^{-1/2-\delta/4}:1\leq t<T^{\delta}\}\cup\{|\mu_{1,t}|\leq t^{\delta}T^{-1/2-\delta^2}:T^{\delta}\leq t<T\},$$
	then we can obtain that \(\mbP(\mcE_1)\geq1-\mrO(T\exp(-C_{B,M_0}T^{\delta^2}))\). Hence, conditional on \(\mcE_1\), consider
	\begin{align*}
		&\sum_{t=2}^{T-1}\tsigma_t^2(x_{1,t}^2-\tilde{x}_{1,t}^2)=\sum_{t=2}^{T-1}\tsigma_t^2(2\tilde{x}_{1,t}\mu_{1,t}+\mu_{1,t}^2),
	\end{align*}
	where
	$$\tsigma_t=\frac{\sin(\theta_1)}{\sin(\theta_t)}=\frac{\sin(\pi/2T)}{\sin(\pi t/2T)}\asymp\mrO(t^{-1}),$$
	(``$\asymp$'' is defined in \eqref{Eq of asymp mbP}) and
	\begin{align*}
		&\sum_{t=2}^{T-1}\tsigma_t^2\mu_{1,t}^2\leq\sum_{t=2}^{[T^{\delta}]}\tsigma_t^2\mu_{1,t}^2+\sum_{t=[T^{\delta}]+1}^{T-1}\tsigma_t^{2(1-\delta)}(\tsigma_t^{\delta}\mu_{1,t})^2\\
		&\leq T^{-1-\delta/2}\sum_{t=2}^{[T^{\delta}]}t^{-2}+T^{-1-2\delta^2}\sum_{t=[T^{\delta}]+1}^{T-1}t^{-2(1-\delta)}\leq\mrO(T^{-1-2\delta^2}).
	\end{align*}
	Next, it is easy to see that all \(\tilde{x}_{1,t}\) are independent with each other, and for \(1\leq t<T^{\delta}\), we have
	\begin{align*}
		&\mbP\left(|\tilde{x}_{1,t}\mu_{1,t}|>T^{-1/2-\delta/8}|\mcE_1\right)\leq\mbP\left(|\tilde{x}_{1,t}|>T^{\delta/8}\right)\leq\mrO(\exp(-C_{B,M_0}T^{\delta/8})).
	\end{align*}
	For \(T^{\delta}\leq t<T\), we have
	\begin{align*}
		&\mbP\left(t^{-\delta}|\tilde{x}_{1,t}\mu_{1,t}|>T^{-1/2-\delta^2/2}|\mcE_1\right)\leq\mbP\left(|\tilde{x}_{1,t}|>T^{\delta^2/2}\right)\leq\mrO(\exp(-C_{B,M_0}T^{\delta^2/2})).
	\end{align*}
	Hence, we can derive that
	\begin{align*}
		&\left|\sum_{t=2}^{T-1}\tsigma_t^2\tilde{x}_{1,t}\mu_{1,t}\right|\leq\sum_{t=1}^{[T^{\delta}]}t^{-2}|\tilde{x}_{1,t}\mu_{1,t}|+\sum_{t=[T^{\delta}]+1}^{T-1}t^{-2+\delta}|\tilde{x}_{1,t}t^{-\delta}\mu_{1,t}|\leq\mrO(T^{-1/2-\delta^2/2})
	\end{align*}
	with probability at least of \(1-\mrO(T\exp(-C_{B,M_0}T^{\delta^2/2}))\), i.e.
	$$\mbP\left(\left|\sum_{t=2}^{T-1}\tsigma_t^2(x_{1,t}^2-\tilde{x}_{1,t}^2)\right|>T^{-1/2-\delta^2/2}\Bigg|\mcE_1\right)\leq\mrO(T\exp(-C_{B,M_0}T^{\delta^2/2})).$$
	Consequently, let 
	\begin{align*}
		&\tilde{\mcE}_{1,2}:=\mcE_1\cap\left\{\left|\sum_{t=2}^{T-1}\tsigma_t^2(x_{1,t}^2-\tilde{x}_{1,t}^2)\right|\leq T^{-1/2-\delta^2/2}\right\},
	\end{align*}
	it yields that \(\mbP(\tilde{\mcE}_{1,2})\geq1-\mrO(T\exp(-C_{B,M_0}T^{\delta^2/2}))\). Now, consider
	\begin{align*}
		&T\mbE\left[\left|\frac{x_{1,1}^2}{\sum_{t=1}^{T-1}\tsigma_t^2x_{1,t}^2}-\frac{x_{1,1}^2}{x_{1,1}^2+\sum_{t=2}^{T-1}\tsigma_t^2\tilde{x}_{1,t}^2}\right|^2\right]=T\mbE\left[\left|\frac{x_{1,1}^2\sum_{t=2}^{T-1}\tsigma_t^2(x_{1,t}^2-\tilde{x}_{1,t}^2)}{(\sum_{t=1}^{T-1}\tsigma_t^2x_{1,t}^2)(x_{1,1}^2+\sum_{t=2}^{T-1}\tsigma_t^2\tilde{x}_{1,t}^2)}\right|^2\right]\\
		&=T\mbE\left[\left|\frac{x_{1,1}^2\sum_{t=2}^{T-1}\tsigma_t^2(x_{1,t}^2-\tilde{x}_{1,t}^2)}{(\sum_{t=1}^{T-1}\tsigma_t^2x_{1,t}^2)(x_{1,1}^2+\sum_{t=2}^{T-1}\tsigma_t^2\tilde{x}_{1,t}^2)}\right|^2\Bigg|\tilde{\mcE}_{1,2}\right]\mbP(\tilde{\mcE}_{1,2})\\
		&+T\mbE\left[\left|\frac{x_{1,1}^2\sum_{t=2}^{T-1}\tsigma_t^2(x_{1,t}^2-\tilde{x}_{1,t}^2)}{(\sum_{t=1}^{T-1}\tsigma_t^2x_{1,t}^2)(x_{1,1}^2+\sum_{t=2}^{T-1}\tsigma_t^2\tilde{x}_{1,t}^2)}\right|^2\Bigg|\tilde{\mcE}_{1,2}^c\right]\mbP(\tilde{\mcE}_{1,2}^c)\\
		&\leq T^{-\delta^2}\mbE\left[\frac{x_{1,1}^4}{\big(\sum_{t=1}^{T-1}\tsigma_t^2x_{1,t}^2\big)^2\big(x_{1,1}^2+\sum_{t=2}^{T-1}\tsigma_t^2\tilde{x}_{1,t}^2\big)^2}\right]+4T^3\mbP(\tilde{\mcE}_{1,2}^c)\leq C_{B,M_0}T^{-\delta^2},
	\end{align*}
	where we claim that
	$$\mbE\left[\frac{x_{1,1}^4}{\big(\sum_{t=1}^{T-1}\tsigma_t^2x_{1,t}^2\big)^2\big(x_{1,1}^2+\sum_{t=2}^{T-1}\tsigma_t^2\tilde{x}_{1,t}^2\big)^2}\right]<C_{B,M_0}.$$
	In fact, by the Holder's inequality, we have that
	\begin{align*}
		&\mbE\left[\frac{x_{1,1}^4}{\big(\sum_{t=1}^{T-1}\tsigma_t^2x_{1,t}^2\big)^2\big(x_{1,1}^2+\sum_{t=2}^{T-1}\tsigma_t^2\tilde{x}_{1,t}^2\big)^2}\right]\\
		&\leq\mbE\left[x_{1,1}^{12}\right]^{1/3}\mbE\left[\left(\sum_{t=1}^{T-1}\tsigma_t^2x_{1,t}^2\right)^{-6}\right]^{1/3}\mbE\left[\left(x_{1,1}^2+\sum_{t=2}^{T-1}\tsigma_t^2\tilde{x}_{1,t}^2\right)^{-6}\right]^{1/3}.
	\end{align*}
	Since \(x_{1,t}\sim\mcN\big(0,2\pi\sum_{j=1}^n\Gamma_{1,j}^2f_j(\theta_t/2)\big)\), then \(\mbE\left[x_{1,1}^{12}\right]<C_{B,M_0}\) by Assumptions \ref{Ap of panel lag polynomial} and \ref{Ap of nonpanel}. Moreover, according to Lemma \ref{Thm of cdf order}, we know that \(\mbP\left(\sum_{t=1}^{14}x_{1,t}^2\leq x\right)\leq C_{B,M_0}x^7\) for \(x\in[0,1]\), then
	\begin{align*}
		&\mbE\left[\left(\sum_{t=1}^{T-1}\tsigma_t^2x_{1,t}^2\right)^{-6}\right]\leq\tsigma_{14}^{-12}\mbE\left[\left(\sum_{t=1}^{14}x_{1,t}^2\right)^{-6}\right]\leq\tsigma_{14}^{-12}\int_1^{\infty}r^5\mbP\left(\sum_{t=1}^{14}x_{1,t}^2\leq r^{-1}\right)dr+\tsigma_{14}^{-12}\leq C_{B,M_0},
	\end{align*}
	so does \(\mbE\big[\big(x_{1,1}^2+\sum_{t=2}^{T-1}\tsigma_t^2\tilde{x}_{1,t}^2\big)^{-6}\big]\).
	Finally, notice that
	\begin{align*}
		&\left|\frac{\tilde{x}_{1,1}^2}{\sum_{t=1}^{T-1}\tsigma_t^2\tilde{x}_{1,t}^2}-\frac{x_{1,1}^2}{x_{1,1}^2+\sum_{t=2}^{T-1}\tsigma_t^2\tilde{x}_{1,t}^2}\right|\leq\frac{|\tilde{x}_{1,1}^2-x_{1,1}^2|}{\sum_{t=2}^{T-1}\tsigma_t^2\tilde{x}_{1,t}^2},
	\end{align*}
	by the previous argument, we can also derive that
	$$\mbP\left(\left|x_{1,1}^2-\tilde{x}_{1,1}^2\right|>T^{-1/2-\delta^2/2}|\mcE_1\right)\leq\mrO(T\exp(-C_{B,M_0}T^{\delta^2/2})),$$
	then define
	$$\tilde{\mcE}_{1,1}:=\mcE_1\cap\left\{\left|x_{1,1}^2-\tilde{x}_{1,1}^2\right|\leq T^{-1/2-\delta^2/2}\right\},$$
	we can conclude that \(\mbP(\tilde{\mcE}_{1,1})\geq1-\mrO(T\exp(-C_{B,M_0}T^{\delta^2/2}))\) and
	\begin{align*}
		&T\mbE\left[\left|\frac{\tilde{x}_{1,1}^2}{\sum_{t=1}^{T-1}\tsigma_t^2\tilde{x}_{1,t}^2}-\frac{x_{1,1}^2}{x_{1,1}^2+\sum_{t=2}^{T-1}\tsigma_t^2\tilde{x}_{1,t}^2}\right|^2\right]\leq T\mbE\left[\frac{|\tilde{x}_{1,1}^2-x_{1,1}^2|^2}{\big|\sum_{t=2}^{T-1}\tsigma_t^2\tilde{x}_{1,t}^2\big|^2}\Bigg|\tilde{\mcE}_{1,1}\right]\mbP(\tilde{\mcE}_{1,1})\\
		&+T\mbE\left[\frac{|\tilde{x}_{1,1}^2-x_{1,1}^2|^2}{\big|\sum_{t=2}^{T-1}\tsigma_t^2\tilde{x}_{1,t}^2\big|^2}\Bigg|\tilde{\mcE}_{1,1}^c\right]\mbP(\tilde{\mcE}_{1,1}^c)\leq C_{B,M_0}T^{-\delta^2}.
	\end{align*}
	Now, combine the previous two results, we prove that
	$$T\mbE\left[\left|\frac{\tilde{x}_{1,1}^2}{\sum_{t=1}^{T-1}\tsigma_t^2\tilde{x}_{1,t}^2}-\frac{x_{1,1}^2}{\sum_{t=1}^{T-1}\tsigma_t^2x_{1,t}^2}\right|^2\right]\leq C_{B,M_0}T^{-\delta^2},$$
	which completes our proof.
\end{proof}
\subsubsection{Adjust the variance}\label{ssec of adjust the variance dependent}
After removing the weak dependence among \(x_{i,t}\) by Lemma \ref{Lem of remove dependence}, all \(\tilde{x}_{i,t}\) defined in \eqref{Eq of X ik cross sectional dependent} are indeed independent. However, since
$$\tilde{x}_{i,t}=\sum_{j=1}^n\Gamma_{i,j}(\mcD_{t,t}^j)^{1/2}z_{j,t},$$
it is easy to see
$$\Var(\tilde{x}_{i,t})=\sum_{j=1}^n\Gamma_{i,j}^2\mcD_{t,t}^j,$$
i.e. the variance of all \(\tilde{x}_{i,1},\cdots,\tilde{x}_{i,T-1}\) could be different. Next, we will unify the variance of \(\tilde{x}_{i,t}\) as follows:
\begin{lem}\label{Lem of unify variance}
	Under Assumptions {\rm \ref{Ap of highdimensionality}, \ref{Ap of panel lag polynomial}} and {\rm \ref{Ap of nonpanel}}, let 
	$$\hat{x}_{i,t}=\sum_{j=1}^n\Gamma_{i,j}(2\pi f_j(0))^{1/2}z_{j,t},$$
    where \(z_{j,t}\overset{i.i.d.}{\sim}\mcN(0,1)\), then for \(1\leq k,l\leq T-1\), we have
	$$\frac{1}{\sqrt{n}}\sum_{i=1}^n\frac{\tilde{x}_{i,k}\tilde{x}_{i,l}}{\sum_{t=1}^{T-1}\tsigma_t^2\tilde{x}_{i,t}^2}\overset{L^2}{\longrightarrow}\frac{1}{\sqrt{n}}\sum_{i=1}^n\frac{\hat{x}_{i,k}\hat{x}_{i,l}}{\sum_{t=1}^{T-1}\tsigma_t^2\hat{x}_{i,t}^2},$$
    where $\tilde{x}_{i,k}$ is defined in \eqref{Eq of X ik cross sectional dependent}.
\end{lem}
\begin{proof}
	For convenience, we only give the detailed proofs for \(k=l=1\), since the arguments for others are totally the same. Similar as what we have done in Lemma \ref{Lem of remove dependence}, we will show that 
	\begin{align}
		\sup_i T\mbE\left[\left|\frac{\tilde{x}_{i,1}^2}{\sum_{t=1}^{T-1}\tsigma_t^2\tilde{x}_{i,1}^2}-\frac{\hat{x}_{i,1}^2}{\sum_{t=1}^{T-1}\tsigma_t^2\hat{x}_{i,t}^2}\right|^2\right]\leq C_{B,M_0}\log(T)T^{-1/3}.\label{Eq of unify variance}
	\end{align}
	Notice that
	$$\tilde{x}_{i,t}-\hat{x}_{i,t}=\sum_{j=1}^n\Gamma_{i,j}\left((\mcD_{t,t}^j)^{1/2}-(2\pi f_j(0))^{1/2}\right)z_{j,t}=\sum_{j=1}^n\Gamma_{i,j}\frac{\mcD_{t,t}^j-2\pi f_j(0)}{(\mcD_{t,t}^j)^{1/2}+(2\pi f_j(0))^{1/2}}z_{j,t},$$
	according to Assumption \ref{Ap of panel lag polynomial}, we know that \((\mcD_{t,t}^j)^{1/2}+(2\pi f_j(0))^{1/2}>C_b\), then by Lemma \ref{Lem of variance adjust}, we have
	\begin{align*}
		&\Var(\tilde{x}_{i,t}-\hat{x}_{i,t})\leq C_b\sum_{j=1}^n\Gamma_{i,j}^2\left|\mcD_{t,t}^j-2\pi f_j(0)\right|^2\leq C_{B,M_0}t^2T^{-2},
	\end{align*}
	for \(t=1,\cdots,T-1\), i.e.
	\begin{align*}
		&\Var(t^{-1}(\tilde{x}_{i,t}-\hat{x}_{i,t}))\leq C_{B,M_0}T^{-2},
	\end{align*}
	so we obtain that
	$$\mbP\left(t^{-1}\left|\tilde{x}_{i,t}-\hat{x}_{i,t}\right|>T^{-5/6}\right)\leq\mrO\left(\exp(-C_{B,M_0}T^{1/3})\right).$$
	Here, we define an event
	$$\fE_i:=\left\{t^{-1}\left|\tilde{x}_{i,t}-\hat{x}_{i,t}\right|\leq T^{-5/6}:t=1,\cdots,T-1\right\},$$
	then \(\mbP(\fE_i)\geq1-\mrO\left(T\exp(-C_{B,M_0}T^{1/3})\right)\). Next, conditional on \(\fE_i\), since
	$$t^{-2}\left(\tilde{x}_{i,t}^2-\hat{x}_{i,t}^2\right)=2t^{-2}\hat{x}_{i,t}\left(\tilde{x}_{i,t}-\hat{x}_{i,t}\right)+\left(\tilde{x}_{i,t}-\hat{x}_{i,t}\right)^2,$$
	where \(t^{-2}\left(\tilde{x}_{i,t}-\hat{x}_{i,t}\right)^2\leq T^{-5/3}\) and 
	\begin{align*}
		&\mbP\left(t^{-1}\left|\tilde{x}_{i,t}-\hat{x}_{i,t}\right|\cdot|\hat{x}_{i,t}|>T^{-2/3}|\fE_i\right)\leq\mrO(\exp(-C_{B,M_0}T^{1/3})),
	\end{align*}
	so we can deduce that
	$$t^{-2}\left|\tilde{x}_{i,t}^2-\hat{x}_{i,t}^2\right|\leq2(T^{-5/3}+t^{-1}T^{-2/3})$$
	with probability at least of \(1-\mrO(\exp(-C_{B,M_0}T^{1/3}))\), which further implies that
	$$\sum_{t=2}^{T-1}\tsigma_t^2\left|\tilde{x}_{i,t}^2-\hat{x}_{i,t}^2\right|\Big|\fE_i\leq2(T^{-2/3}+\log(T)T^{-2/3})\leq4\log(T)T^{-2/3}.$$
	Now, define
	$$\hat{\fE}_i:=\fE_i\cap\left\{\sum_{t=2}^{T-1}\tsigma_t^2\left|\tilde{x}_{i,t}^2-\hat{x}_{i,t}^2\right|\leq4\log(T)T^{-2/3}\right\},$$
	then \(\mbP(\hat{\fE}_i)\geq1-\mrO(T\exp(-C_{B,M_0}T^{1/3}))\). Hence,
	\begin{align*}
		&T\mbE\left[\left|\frac{\tilde{x}_{i,1}^2}{\sum_{t=1}^{T-1}\tsigma_t^2\tilde{x}_{i,t}^2}-\frac{\tilde{x}_{i,1}^2}{\tilde{x}_{i,1}^2+\sum_{t=2}^{T-1}\tsigma_t^2\hat{x}_{i,t}^2}\right|^2\right]\leq T\mbE\left[\frac{\big(\sum_{t=2}^{T-1}\tsigma_t^2\left|\tilde{x}_{i,t}^2-\hat{x}_{i,t}^2\right|\big)^2}{\big(\sum_{t=1}^{T-1}\tsigma_t^2\tilde{x}_{i,t}^2\big)^2}\Bigg|\hat{\fE}_i\right]\mbP(\hat{\fE}_i)\\
		&+T\mbE\left[\frac{\big(\sum_{t=2}^{T-1}\tsigma_t^2\left|\tilde{x}_{i,t}^2-\hat{x}_{i,t}^2\right|\big)^2}{\big(\sum_{t=1}^{T-1}\tsigma_t^2\tilde{x}_{i,t}^2\big)^2}\Bigg|\hat{\fE}_i^c\right]\mbP(\hat{\fE}_i^c)\leq C_{B,M_0}\log(T)T^{-1/3}.
	\end{align*}
	Similarly, we can also show that
	$$T\mbE\left[\left|\frac{\hat{x}_{i,1}^2}{\sum_{t=1}^{T-1}\tsigma_t^2\hat{x}_{i,t}^2}-\frac{\tilde{x}_{i,1}^2}{\tilde{x}_{i,1}^2+\sum_{t=2}^{T-1}\tsigma_t^2\hat{x}_{i,t}^2}\right|^2\right]\leq C_{B,M_0}\log(T)T^{-1/3},$$
	which completes our proof.
\end{proof}
\subsubsection{Covariance estimation}\label{ssec of positive covariance}
Now, by Lemmas \ref{Lem of remove dependence} and \ref{Lem of unify variance}, we have shown that 
$$\frac{1}{\sqrt{n}}\sum_{i=1}^n\frac{x_{i,k}x_{i,l}}{\sum_{t=1}^{T-1}\tsigma_t^2x_{i,1}^2}\overset{L^2}{\longrightarrow}\frac{1}{\sqrt{n}}\sum_{i=1}^n\frac{\hat{x}_{i,k}\hat{x}_{i,l}}{\sum_{t=1}^{T-1}\tsigma_t^2\hat{x}_{i,t}^2},$$
where $x_{i,k}$ and $\hat{x}_{i,k}$ are defined in \eqref{Eq of X ik cross sectional dependent} and Lemma \ref{Lem of unify variance}, then it implies that
$$\frac{1}{n}\Var\left(\sum_{i=1}^n\frac{x_{i,k}x_{i,l}}{\sum_{t=1}^{T-1}\tsigma_t^2x_{i,1}^2}\right)\to\frac{1}{n}\Var\left(\sum_{i=1}^n\frac{\hat{x}_{i,k}\hat{x}_{i,l}}{\sum_{t=1}^{T-1}\tsigma_t^2\hat{x}_{i,t}^2}\right).$$
Note that
$$\frac{1}{n}\Var\left(\sum_{i=1}^n\frac{\hat{x}_{i,k}\hat{x}_{i,l}}{\sum_{t=1}^{T-1}\tsigma_t^2\hat{x}_{i,t}^2}\right)=\frac{1}{n}\sum_{i_1,i_2=1}^n\Cov\left(\frac{\hat{x}_{i_1,k}\hat{x}_{i_1,l}}{\sum_{t=1}^{T-1}\tsigma_t^2\hat{x}_{i_1,t}},\frac{\hat{x}_{i_2,k}\hat{x}_{i_2,l}}{\sum_{t=1}^{T-1}\tsigma_t^2\hat{x}_{i_2,t}^2}\right),$$
we will show that
\begin{lem}\label{Lem of positive covariance}
	For any \(i_1,i_2\in\{1,\cdots,n\}\), we have
	\begin{align}
		\Cov\left(\frac{\hat{x}_{i_1,k}^2}{\sum_{t=1}^{T-1}\tsigma_t^2\hat{x}_{i_1,1}^2},\frac{\hat{x}_{i_2,k}^2}{\sum_{t=1}^{T-1}\tsigma_t^2\hat{x}_{i_2,1}^2}\right)\geq0,\label{Eq of positive covariance 1}
	\end{align}
	and
	\begin{align}
		\left|\Cov\left(\frac{\hat{x}_{i_1,k}\hat{x}_{i_1,l}}{\sum_{t=1}^{T-1}\tsigma_t^2\hat{x}_{i_1,1}^2},\frac{\hat{x}_{i_2,k}\hat{x}_{i_2,l}}{\sum_{t=1}^{T-1}\tsigma_t^2\hat{x}_{i_2,1}^2}\right)\right|\leq\mrO\left(\rho_{i_1,i_2}^2\right),\label{Eq of positive covariance 2}
	\end{align}
	where \(k,l\in\{1,\cdots,T-1\}\), $\hat{x}_{i,k}$ is defined in Lemma \ref{Lem of unify variance} and
    $$\rho_{i_1,i_2}:=\frac{\sum_{j=1}^n\Gamma_{i_1,j}\Gamma_{i_2,j}f_j(0)}{(\sum_{j=1}^n\Gamma_{i_1,j}^2f_j(0))^{1/2}(\sum_{j=1}^n\Gamma_{i_2,j}^2f_j(0))^{1/2}}.$$
\end{lem}
\begin{proof}
	By the definition of $\hat{x}_{i,k}$ in Lemma \ref{Lem of unify variance}, we have
	$$\hat{x}_i=(\hat{x}_{i,1},\cdots,\hat{x}_{i,T-1})\sim\mcN\left(\boldsymbol{0},2\pi\sum_{j=1}^n\Gamma_{i,j}^2f_j(0)\bbI_{T-1}\right),$$
	since \(M_{i_1;1,1}\) is a ratio of quadratic forms, then we assume \((\hat{x}_{i,1},\cdots,\hat{x}_{i,T-1})\sim\mcN(\boldsymbol{0},\bbI_{T-1})\) without loss of generality, then
	$$\Cov(\hat{x}_{i_1,t_1},\hat{x}_{i_2,t_2})=\frac{\delta_{t_1,t_2}\sum_{j=1}^n\Gamma_{i_1,j}\Gamma_{i_2,j}f_j(0)}{(\sum_{j=1}^n\Gamma_{i_1,j}^2f_j(0))^{1/2}(\sum_{j=1}^n\Gamma_{i_2,j}^2f_j(0))^{1/2}}.$$
	For simplicity, denote
	$$\rho_{i_1,i_2}:=\frac{\sum_{j=1}^n\Gamma_{i_1,j}\Gamma_{i_2,j}f_j(0)}{(\sum_{j=1}^n\Gamma_{i_1,j}^2f_j(0))^{1/2}(\sum_{j=1}^n\Gamma_{i_2,j}^2f_j(0))^{1/2}}:=\arcsin\tau_{i_1,i_2},$$
	where \(\tau_{i_1,i_2}\in[-\pi/2,\pi/2]\), then conditional \(\hat{x}_{i_2,t}\), we have
	$$\hat{x}_{i_1,t}|\hat{x}_{i_2,t}\sim\mcN\left(\hat{x}_{i_2,t}\sin\tau_{i_1,i_2},\cos^2\tau_{i_1,i_2}\right),$$
	which further implies
	\begin{align*}
		&\mbE\left[\frac{\hat{x}_{i_1,k}\hat{x}_{i_1,l}}{\sum_{t=1}^{T-1}\tsigma_t^2\hat{x}_{i_1,1}^2}\frac{\hat{x}_{i_2,k}\hat{x}_{i_2,l}}{\sum_{t=1}^{T-1}\tsigma_t^2\hat{x}_{i_2,1}^2}\right]=\mbE\left[\frac{\hat{x}_{i_2,k}\hat{x}_{i_2,l}}{\sum_{t=1}^{T-1}\tsigma_t^2\hat{x}_{i_2,1}^2}\mbE\left[\frac{\hat{x}_{i_1,k}\hat{x}_{i_1,l}}{\sum_{t=1}^{T-1}\tsigma_t^2\hat{x}_{i_1,1}^2}\Bigg|\hat{x}_{i_2}\right]\right]\\
		&=\mbE\left[\frac{(\hat{x}_{i_2,k}\sin\tau_{i_1,i_2}+w_k\cos\tau_{i_1,i_2})(\hat{x}_{i_2,l}\sin\tau_{i_1,i_2}+w_l\cos\tau_{i_1,i_2})}{\sum_{t=1}^{T-1}\tsigma_t^2(\hat{x}_{i_2,t}\sin\tau_{i_1,i_2}+w_t\cos\tau_{i_1,i_2})^2}\frac{\hat{x}_{i_2,1}^2}{\sum_{t=1}^{T-1}\tsigma_t^2\hat{x}_{i_2,1}^2}\right],
	\end{align*}
	where \(w=(w_1,\cdots,w_{T-1})\sim\mcN(\boldsymbol{0},\bbI_{T-1})\) is independent with \(\hat{x}_{i_2}\). For convenience, we simplify the above equation by the following forms:
	$$H_{k,l}^{i_1,i_2}(\tau):=\mbE\left[\frac{(z_k\sin\tau+w_k\cos\tau)(z_l\sin\tau+w_l\cos\tau)}{\sum_{t=1}^{T-1}\tsigma_t^2(z_t\sin\tau+w_t\cos\tau)^2}\frac{z_kz_l}{\sum_{t=1}^{T-1}\tsigma_t^2z_t^2}\right],$$
	where \(z=(z_1,\cdots,z_{T-1})\sim\mcN(\boldsymbol{0},\bbI_{T-1})\) is independent with \(w\) and 
	$$G_{k,l}^{i_1,i_2}(\tau):=\mbE\left[\frac{(z_k\sin\tau+w_k\cos\tau)(z_l\sin\tau+w_l\cos\tau)}{\sum_{t=1}^{T-1}\tsigma_t^2(z_t\sin\tau+w_t\cos\tau)^2}\right].$$
	Hence,
	\begin{align*}
		&\mcC_{k,l}^{i_1,i_2}:=\Cov\left(\frac{\hat{x}_{i_1,k}\hat{x}_{i_1,l}}{\sum_{t=1}^{T-1}\tsigma_t^2\hat{x}_{i_1,1}^2},\frac{\hat{x}_{i_2,k}\hat{x}_{i_2,l}}{\sum_{t=1}^{T-1}\tsigma_t^2\hat{x}_{i_2,1}^2}\right)=H_{k,l}^{i_1,i_2}(\tau)-G_{k,l}^{i_1,i_2}(\tau)\mbE\left[\frac{\hat{x}_{i_2,k}\hat{x}_{i_2,l}}{\sum_{t=1}^{T-1}\tsigma_t^2\hat{x}_{i_2,1}^2}\right]
	\end{align*}
	is a smooth function of \(\tau\). In fact, since all \(h_t\overset{i.i.d.}{\sim}\mcN(0,1)\), it implies that \(\frac{d}{d\tau}G_{i_1,i_2}^{k,l}(\tau)=0\). Hence, it yields that \(\frac{d}{d\tau}\mcC_{k,l}^{i_1,i_2}(\tau)=\frac{d}{d\tau}H_{k,l}^{i_1,i_2}(\tau)\).
	\begin{itemize}
		\item Proof of (\ref{Eq of positive covariance 1}): Without loss of generality, we assume \(k=1\) and abbreviate \(H_{i_1,i_2}^{1,1}(\tau)\) by \(H(\tau)\). Define \(h_t:=z_t\sin\tau+w_t\cos\tau\), then
		\begin{align*}
			\frac{d}{d\tau}H(\tau)&=2\mbE\left[\frac{z_1^2(z_1\sin\tau+w_1\cos\tau)(z_1\cos\tau-w_1\sin\tau)\sum_{t=2}^{T-1}\tsigma_t^2h_t^2}{\big(\sum_{t=1}^{T-1}\tsigma_t^2z_t^2\big)\big(\sum_{t=1}^{T-1}\tsigma_t^2h_t^2\big)^2}\right]\\
			&+2\sum_{s=2}^{T-1}\tsigma_s^2\mbE\left[\frac{z_1^2h_1^2(z_s\sin\tau+w_s\cos\tau)(-z_s\cos\tau+w_s\sin\tau)}{\big(\sum_{t=1}^{T-1}\tsigma_t^2h_t^2\big)^2\big(\sum_{t=1}^{T-1}\tsigma_t^2z_t^2\big)}\right].
		\end{align*}
		Next, let \(z_s:=r_s\cos\beta_s\) and \(w_s:=r_s\sin\beta_s\), where \(\beta_s\in[0,2\pi]\), then we have
		\begin{align*}
			&\mbE\left[\frac{z_1^2h_1^2(z_s\sin\tau+w_s\cos\tau)(-z_s\cos\tau+w_s\sin\tau)}{\big(\sum_{t=1}^{T-1}\tsigma_t^2h_t^2\big)^2\big(\sum_{t=1}^{T-1}\tsigma_t^2z_t^2\big)}\right]\\
			&=\mbE\left[\frac{-z_1^2h_1^2r_s^2\sin(2(\tau+\beta_s))}{\big(\tsigma_s^2r_s^2\cos^2\beta_s+\sum_{t=1,t\neq s}^{T-1}\tsigma_t^2z_t\big)\big(\tsigma_s^2r_s^2\sin^2(\tau+\beta_s)+\sum_{t=1,t\neq s}^{\infty}\tsigma_t^2h_t^2\big)^2}\right]\\
			&=\mbE\left[\frac{-z_1^2h_1^2r_s^2\sin(2(\gamma_s))}{\big(\tsigma_s^2r_s^2\cos^2(\gamma_s-\tau)+\sum_{t=1,t\neq s}^{T-1}\tsigma_t^2z_t\big)\big(\tsigma_s^2r_s^2\sin^2\gamma_s+\sum_{t=1,t\neq s}^{\infty}\tsigma_t^2h_t^2\big)^2}\right],
		\end{align*}
		where \(\gamma_s:=\beta_s+\tau\). It is easy to see that
		$$g_s(\gamma_s):=\frac{-z_1^2h_1^2r_s^2\sin(2\gamma_s)}{\big(\tsigma_s^2r_s^2\sin^2\gamma_s+\sum_{t=1,t\neq s}^{\infty}\tsigma_t^2h_t^2\big)^2}$$
		is periodic function of \(\gamma_s\) with the period of \(\pi\), and \(g_s(\pi/2+\delta)=-g_s(\pi/2-\delta)\geq0\) for \(\delta\in[0,\pi/2]\). On the other hand, since \(\delta,\tau\in[0,\pi/2]\), we have
		\begin{align*}
			&\cos^2(\pi/2-\delta-\tau)=\sin^2(\delta+\tau)\geq\sin^2(\tau-\delta)=\cos^2(\pi/2+\delta-\tau),
		\end{align*}
		so
		\begin{align*}
			0&\leq\frac{-g_s(\pi/2-\delta)}{\tsigma_s^2r_s^2\cos^2(\pi/2-\delta-\tau)+\sum_{t=1,t\neq s}^{T-1}\tsigma_t^2z_t^2}\leq\frac{g_s(\pi/2+\delta)}{\tsigma_s^2r_s^2\cos^2(\pi/2+\delta-\tau)+\sum_{t=1,t\neq s}^{T-1}\tsigma_t^2z_t^2},
		\end{align*}
		and
		$$\mbE\left[\frac{z_1^2h_1^2(z_s\sin\tau+w_s\cos\tau)(-z_s\cos\tau+w_s\sin\tau)}{\big(\sum_{t=1}^{T-1}\tsigma_t^2h_t^2\big)^2\big(\sum_{t=1}^{T-1}\tsigma_t^2z_t^2\big)}\right]\geq0,\quad{\rm for\ }\tau\in[0,\pi/2].$$
		Similarly, when \(\tau\in[-\pi/2,0]\) and \(\delta\in[0,\pi/2]\), we have
		\begin{align*}
			&\cos^2(\pi/2-\delta-\tau)=\sin^2(\delta+\tau)\leq\sin^2(\delta-\tau)=\cos^2(\pi/2+\delta-\tau),
		\end{align*}
		i.e.
		\begin{align*}
			\frac{-g_s(\pi/2-\delta)}{\tsigma_s^2r_s^2\cos^2(\pi/2-\delta-\tau)+\sum_{t=1,t\neq s}^{T-1}\tsigma_t^2z_t^2}\geq\frac{g_s(\pi/2+\delta)}{\tsigma_s^2r_s^2\cos^2(\pi/2+\delta-\tau)+\sum_{t=1,t\neq s}^{T-1}\tsigma_t^2z_t^2}\geq0,
		\end{align*}
		and
		$$\mbE\left[\frac{z_1^2h_1^2(z_s\sin\tau+w_s\cos\tau)(-z_s\cos\tau+w_s\sin\tau)}{\big(\sum_{t=1}^{T-1}\tsigma_t^2h_t^2\big)^2\big(\sum_{t=1}^{T-1}\tsigma_t^2z_t^2\big)}\right]\leq0,\quad{\rm for\ }\tau\in[-\pi/2,0].$$
		For \(s=1\), notice that
		\begin{align*}
			&\mbE\left[\frac{z_1^2(z_1\sin\tau+w_1\cos\tau)(z_1\cos\tau-w_1\sin\tau)\sum_{t=2}^{T-1}\tsigma_t^2h_t^2}{\big(\sum_{t=1}^{T-1}\tsigma_t^2h_t^2\big)^2\big(\sum_{t=1}^{T-1}\tsigma_t^2z_t^2\big)}\right]\\
			&=\mbE\left[\frac{r_1^2\cos^2(\gamma_1-\tau)}{r_1^2\cos^2(\gamma_1-\tau)+\sum_{t=2}^{T-1}\tsigma_t^2z_t^2}\frac{r_1^2\sin(2\gamma_1)\sum_{t=2}^{T-1}\tsigma_t^{-2}h_t^2}{\big(r_1^2\sin^2\gamma_1+\sum_{t=2}^{T-1}\tsigma_t^{-2}h_t^2\big)^2}\right],
		\end{align*}
		where \(\gamma_1:=\beta_1+\tau\). Let
		$$g_1(\gamma_1):=\frac{r_1^2\sin(2\gamma_1)\sum_{t=2}^{\infty}t^{-2}h_t^2}{\big(r_1^2\sin^2\gamma_1+\sum_{t=2}^{\infty}t^{-2}h_t^2\big)^2},$$
		which is a periodic function of \(\gamma_1\) with the period of \(\pi\), and \(g_1(\pi/2-\delta)=-g_1(\pi/2+\delta)\geq0\) for \(\delta\in[0,\pi/2]\). By the same arguments for \(s\geq2\), when \(\delta,\tau\in[0,\pi/2]\), we have
		\begin{align*}
			\frac{r_1^2\cos^2(\pi/2-\delta-\tau)g_1(\pi/2-\delta)}{r_1^2\cos^2(\pi/2-\delta-\tau)+\sum_{t=2}^{T-1}\tsigma_t^2z_t^2}&\geq\frac{-r_1^2\cos^2(\pi/2+\delta-\tau)g_1(\pi/2+\delta)}{r_1^2\cos^2(\pi/2+\delta-\tau)+\sum_{t=2}^{T-1}\tsigma_t^2z_t^2}\geq0,
		\end{align*}
		and
		$$\mbE\left[\frac{z_1^2(z_1\sin\tau+w_1\cos\tau)(z_1\cos\tau-w_1\sin\tau)\sum_{t=2}^{T-1}\tsigma_t^2h_t^2}{\big(\sum_{t=1}^{T-1}\tsigma_t^2h_t^2\big)^2\big(\sum_{t=1}^{T-1}\tsigma_t^2z_t^2\big)}\right]\geq0,\quad{\rm for\ }\tau\in[0,\pi/2].$$
		Similarly, when \(\tau\in[-\pi/2,0]\) and \(\delta\in[0,\pi/2]\), we have
		\begin{align*}
			0&\leq\frac{r_1^2\cos^2(\pi/2-\delta-\tau)g_1(\pi/2-\delta)}{r_1^2\cos^2(\pi/2-\delta-\tau)+\sum_{t=2}^{T-1}\tsigma_t^2z_t^2}\leq\frac{-r_1^2\cos^2(\pi/2+\delta-\tau)g_1(\pi/2+\delta)}{r_1^2\cos^2(\pi/2+\delta-\tau)+\sum_{t=2}^{T-1}\tsigma_t^2z_t^2},
		\end{align*}
		and
		$$\mbE\left[\frac{z_1^2(z_1\sin\tau+w_1\cos\tau)(z_1\cos\tau-w_1\sin\tau)\sum_{t=2}^{T-1}\tsigma_t^2h_t^2}{\big(\sum_{t=1}^{T-1}\tsigma_t^2h_t^2\big)^2\big(\sum_{t=1}^{T-1}\tsigma_t^2z_t^2\big)}\right]\leq0,\quad{\rm for\ }\tau\in[-\pi/2,0].$$
		Consequently, we conclude that  
		$$\tau\frac{d}{d\tau}H(\tau)\geq0,\quad{\rm for\ }\tau_1\in[-\pi/2,\pi/2].$$
		When \(\tau=0\), i.e. \(\hat{x}_{i_1,t}\) and \(\hat{x}_{i_2,t}\) are independent for \(t=1,\cdots,T-1\), then we have
		$$\Cov\left(\frac{\hat{x}_{i_1,1}^2}{\sum_{t=1}^{T-1}\tsigma_t^2\hat{x}_{i_1,1}^2},\frac{\hat{x}_{i_2,1}^2}{\sum_{t=1}^{T-1}\tsigma_t^2\hat{x}_{i_2,1}^2}\right)=0.$$
		Therefore, given \(\tau\in[-\pi/2,\pi/2]\), we can derive that
		\begin{align*}
			&\Cov\left(\frac{\hat{x}_{i_1,1}^2}{\sum_{t=1}^{T-1}\tsigma_t^2\hat{x}_{i_1,1}^2},\frac{\hat{x}_{i_2,1}^2}{\sum_{t=1}^{T-1}\tsigma_t^2\hat{x}_{i_2,1}^2}\right)=\int_0^{\tau}\frac{d}{d\tau}H(\tau)d\tau\geq0.
		\end{align*}
		\item Proof of (\ref{Eq of positive covariance 2}): For any \(k,l\in\{1,\cdots,T-1\}\), we abbreviate \(H_{k,l}^{i_1,i_2}(\tau)\) by \(H_{k,l}(\tau)\), since
		\begin{align*}
			&\frac{d}{d\tau}H_{k,l}(\tau)=\mbE\left[\frac{z_kz_l(h_k(z_l\cos\tau-w_l\sin\tau)+h_l(z_k\cos\tau-w_k\sin\tau))}{\big(\sum_{t=1}^{T-1}\tsigma_t^2z_t^2\big)\big(\sum_{t=1}^{T-1}\tsigma_t^2h_t^2\big)}\right]\\
			&+2\sum_{s=1}^{T-1}\tsigma_s^2\mbE\left[\frac{z_kz_lh_kh_lh_s(-z_s\cos\tau+w_s\sin\tau)}{\big(\sum_{t=1}^{T-1}\tsigma_t^2z_t^2\big)\big(\sum_{t=1}^{T-1}\tsigma_t^2h_t^2\big)^2}\right],
		\end{align*}
		and
		\begin{align*}
			&\frac{d}{d\tau}H_{k,l}(0)=\mbE\left[\frac{z_kz_l^2w_k+z_k^2z_lw_l}{\big(\sum_{t=1}^{T-1}\tsigma_t^2z_t^2\big)\big(\sum_{t=1}^{T-1}\tsigma_t^2w_t^2\big)}\right]-2\sum_{s=2}^{T-1}\tsigma_s^2\mbE\left[\frac{z_kz_lz_sw_kw_lw_s}{\big(\sum_{t=1}^{T-1}\tsigma_t^2z_t^2\big)\big(\sum_{t=1}^{T-1}\tsigma_t^2w_t^2\big)^2}\right],
		\end{align*}
		where
		\begin{align*}
			&\mbE\left[\frac{z_kz_l^2w_k}{\big(\sum_{t=1}^{T-1}\tsigma_t^2z_t^2\big)\big(\sum_{t=1}^{T-1}\tsigma_t^2w_t^2\big)}\right]=\mbE\left[\frac{z_kz_l^2}{\sum_{t=1}^{T-1}\tsigma_t^2z_t^2}\right]\mbE\left[\frac{w_k}{\sum_{t=1}^{T-1}\tsigma_t^2w_t^2}\right]=0,
		\end{align*}
		and
		\begin{align*}
			&\mbE\left[\frac{z_kz_lz_sw_kw_lw_s}{\big(\sum_{t=1}^{T-1}\tsigma_t^2z_t^2\big)\big(\sum_{t=1}^{T-1}\tsigma_t^2w_t^2\big)^2}\right]=\mbE\left[\frac{z_kz_lz_s}{\sum_{t=1}^{T-1}\tsigma_t^2z_t^2}\right]\mbE\left[\frac{w_kw_lw_s}{\big(\sum_{t=1}^{T-1}\tsigma_t^2w_t^2\big)^2}\right]=0.
		\end{align*}
		Hence, \(\frac{d}{d\tau}H_{k,l}(0)=0\). Next, notice that
		\begin{align*}
			&\frac{d^2}{d\tau^2}H_{k,l}(\tau)=\mbE\left[\frac{2z_kz_l(h_k'h_l'-h_kh_l)}{\big(\sum_{t=1}^{T-1}\tsigma_t^2z_t^2\big)\big(\sum_{t=1}^{T-1}\tsigma_t^2h_t^2\big)}\right]-\sum_{s=1}^{T-1}\tsigma_s^2\mbE\left[\frac{2z_kz_l(h_kh_l'+h_k'h_l)h_sh_s'}{\big(\sum_{t=1}^{T-1}\tsigma_t^2z_t^2\big)\big(\sum_{t=1}^{T-1}\tsigma_t^2h_t^2\big)^2}\right]\\
			&-\sum_{s=1}^{T-1}\tsigma_s^2\mbE\left[\frac{2z_kz_l\big(h_kh_lh_sh_s'\big)'}{\big(\sum_{t=1}^{T-1}\tsigma_t^2z_t^2\big)\big(\sum_{t=1}^{T-1}\tsigma_t^2h_t^2\big)^2}\right]+\sum_{s,r=1}^{T-1}\tsigma_s^2\tsigma_r^2\mbE\left[\frac{8z_kz_l\big(h_kh_lh_sh_s'h_rh_r'\big)}{\big(\sum_{t=1}^{T-1}\tsigma_t^2z_t^2\big)\big(\sum_{t=1}^{T-1}\tsigma_t^2h_t^2\big)^3}\right],
		\end{align*}
		where \(h_t'=\frac{d}{d\tau}h_t=z_t\cos\tau-w_t\sin\tau\sim\mcN(0,1)\) and \((h_t')'=-h_t\). For the first term in the above equation, by the Cauchy's inequality, we have
		\begin{align}
			&\left|\mbE\left[\frac{z_kz_l(h_k'h_l'-h_kh_l)}{\big(\sum_{t=1}^{T-1}\tsigma_t^2z_t^2\big)\big(\sum_{t=1}^{T-1}\tsigma_t^2h_t^2\big)}\right]\right|\leq\mbE\left[\left(\sum_{t=1}^{T-1}\tsigma_t^2h_t^2\right)^{-4}\right]^{1/4}\times\mbE\left[\left(\sum_{t=1}^{T-1}\tsigma_t^2z_t^2\right)^{-4}\right]^{1/4}\notag\\
			&\times\mbE[z_k^4z_l^4]^{1/4}\mbE[(h_k'h_l'-h_kh_l)^4]^{1/4}\leq C,\label{Eq of finite integration of positive covariance}
		\end{align}
		where we use the fact that \(\sum_{t=1}^{T-1}\tsigma_t^2h_t^2\geq\tsigma_{10}^2\sum_{t=1}^{10}h_t^2\sim\tsigma_{10}^2\chi^2(10)\) and the inverse chi square distribution with degree of freedom greater than \(10\) has the finite 4th moment; and all \(z_k,z_l,h_k,h_l,h_k',h_l'\) are standard normal. It is easy to see this constant \(C\) is independent of \(\tau\). Similarly, we can also show that all other three terms are bounded by some constants independent of \(\tau\), so we omit details here for convenience. In a word, we show that
		$$\left|\frac{d^2}{d\tau^2}H_{k,l}(\tau)\right|<C,\quad{\rm for\ }\tau\in[-\pi/2,\pi/2],$$
		then
		$$\left|\frac{d}{d\tau}H_{k,l}(\tau)\right|\leq\frac{d}{d\tau}H_{k,l}(0)+\left|\int_0^{\tau}\frac{d^2}{d\tau^2}H_{k,l}(\tau)d\tau\right|\leq\int_0^{\tau}\left|\frac{d^2}{d\tau^2}H_{k,l}(\tau)\right|d\tau<C|\tau|,$$
		and
		$$|\mcC_{k,l}(\tau)|=\left|\int_0^{\tau}\frac{d}{d\tau}H_{k,l}(\tau)d\tau\right|\leq\int_0^{\tau}\left|\frac{d}{d\tau}H_{k,l}(\tau)\right|d\tau\leq C\tau^2\leq C\rho^2.$$
		where we use \(\rho=\sin\tau\geq\frac{2}{\pi}\tau\) in the last inequality.
	\end{itemize}
	Now we complete our proof.
\end{proof}
Finally, let's prove Lemma \ref{Lem of finite variance} as follows: 
\begin{proof}[Proof of Lemma \ref{Lem of finite variance}]
    Recall the definition of $M_{i;k,l}$ in \eqref{Eq of Mikl abuse}, by Lemmas \ref{Lem of remove dependence} and \ref{Lem of unify variance}, we have concluded that
$$\frac{1}{n}\mbE\left[\left|\sum_{i=1}^n\left(M_{i;k,l}-\frac{\beta_k\beta_l\hat{x}_{i,k}\hat{x}_{i,l}}{\sum_{t=1}^{T-1}\tsigma_t^2\hat{x}_{i,t}^2}\right)\right|^2\right]\leq\frac{C_{B,M_0,c}}{(kl)^2T^{\delta^2}},$$
where we use $\beta_k\leq \mrO(k^{-1})$ by \eqref{Eq of beta upper}. In other words, it gives that 
$$\Var\left(\frac{1}{\sqrt{n}}\sum_{i=1}^nM_{i;k,l}\right)\to\frac{1}{n}\sum_{i_1,i_2=1}^n\Cov\left(\frac{\tsigma_k\tsigma_l\hat{x}_{i_1,k}\hat{x}_{i_1,l}}{\sum_{t=1}^{T-1}\tsigma_t^2\hat{x}_{i_1,t}^2},\frac{\tsigma_k\tsigma_l\hat{x}_{i_2,k}\hat{x}_{i_2,l}}{\sum_{t=1}^{T-1}\tsigma_t^2\hat{x}_{i_2,t}^2}\right).$$
By Lemma \ref{Lem of positive covariance}, it gives that
$$\left|\Cov\left(\frac{\tsigma_k\tsigma_l\hat{x}_{i_1,k}\hat{x}_{i_1,l}}{\sum_{t=1}^{T-1}\tsigma_t^2\hat{x}_{i_1,t}^2},\frac{\tsigma_k\tsigma_l\hat{x}_{i_2,k}\hat{x}_{i_2,l}}{\sum_{t=1}^{T-1}\tsigma_t^2\hat{x}_{i_2,t}^2}\right)\right|\leq\mrO\left((kl)^{-2}\rho_{i_1,i_2}^2\right),$$
where
$$\rho_{i_1,i_2}^2=\frac{(\sum_{j=1}^n\Gamma_{i_1,j}\Gamma_{i_2,j}f_j(0))^2}{(\sum_{j=1}^n\Gamma_{i_1,j}^2f_j(0))(\sum_{j=1}^n\Gamma_{i_2,j}^2f_j(0))}.$$
Hence, define
\begin{align}
	\bbF:=\diag(f_1(0),\cdots,f_n(0)),\quad\tilde{\bbGa}=\diag(\bbGa\bbF\bbGa')^{-1/2}\bbGa\bbF\bbGa'\diag(\bbGa\bbF\bbGa')^{-1/2},\label{Eq of bbF}
\end{align}
where $f_i(0)$ is defined in \eqref{Eq of spectral density}. Note that \(\rho_{i_1,i_2}\) is the \((i_1,i_2)\)-th entry of \(\tilde{\bbGa}\), and
$$\Var\left(\frac{1}{\sqrt{n}}\sum_{i=1}^nM_{i;k,l}\right)\leq\frac{1}{n}\mrO\left((kl)^{-2}\Vert\tilde{\bbGa}\Vert_F^2\right).$$
According to Assumptions \ref{Ap of panel lag polynomial} and \ref{Ap of nonpanel}, since
$$\sum_{j=1}^n\Gamma_{i,j}^2f_j(0)\geq C_{b,m_0},\quad{\rm for\ }i=1,\cdots,n,$$
then
$$\frac{1}{n}\Vert\tilde{\bbGa}\Vert_F^2\leq\frac{C_{b,m_0}}{n}\Vert\bbGa\bbF\bbGa'\Vert_F^2\leq C_{b,m_0}\Vert\bbGa\bbF\bbGa'\Vert^2\leq C_{B,b,M_0,m_0}.$$
which concludes (\ref{Eq of finite variance}).
\end{proof}

\subsection{Asymptotic behaviors of eigenvectors}\label{sec of eigenvectors dependent}
\begin{lem}\label{Lem of alpha}
	Under Assumptions {\rm \ref{Ap of highdimensionality}, \ref{Ap of panel lag polynomial}} and {\rm \ref{Ap of nonpanel}}, for any $K\in\mbN^+$ and \(k\in\{1,\cdots,K\}\), let $\hat{F}_k=\sum_{t=1}^{T-1}\alpha_{k,t}\bbw_t$ be the eigenvector corresponding to the $k$-th largest eigenvalue of the sample correlation matrix of $\bbX=[X_1,\cdots,X_T]$ generated by \eqref{Eq of nonpanel Xt}, then
    $$\lim_{n\to\infty}\sqrt{n}\mbE[1-\alpha_{k,k}^2]=0.$$
\end{lem}
The proof of above Lemma is the same as Lemma \ref{Lem of at convergence}. Before proving Lemma \ref{Lem of alpha}, we first need to show that:
\begin{align}
	\left|\frac{1}{\sqrt{n}}\sum_{i=1}^n\mbE\left[M_{i;k,l}-\fM_{k,l}\right]\right|\leq\frac{C_{B,M_0,c}}{(kl)T^{\delta^2/2}}.\label{Eq of mean difference}
\end{align}
where 
$$\fM_{k,l}=\frac{(kl)^{-1}Z_kZ_l}{\sum_{t=1}^{\infty}t^{-2}Z_t^2},$$
is defined in \eqref{Eq of fM} and \(\{Z_t:t\in\mbN^+,Z_t\overset{i.i.d.}{\sim}\mcN(0,1)\}\). By Lemmas \ref{Lem of remove dependence} and \ref{Lem of unify variance}, combining with \eqref{Eq of beta upper}, we know that for any sufficiently small $\delta>0$
$$\left|\mbE\left[\frac{1}{n}\sum_{i=1}^nM_{i;k,l}-\frac{\beta_k\beta_lZ_kZ_l}{\sum_{t=1}^{T-1}\beta_t^2Z_t^2}\right]\right|\leq C_{B,M_0,c}(kl)^{-1}T^{-1/2-\delta^2/2}.$$
By the same tricks as those in Lemma \ref{Lem of cut down}, we can conclude that 
$$\mbE\left|\left[\frac{Z_kZ_l}{\sum_{t=1}^{\infty}t^{-2}Z_t^2}-\frac{Z_kZ_l}{\sum_{t=1}^{T-1}t^{-2}Z_t^2}\right]\right|\leq\mrO(T^{-1})$$
Hence, to prove \eqref{Eq of mean difference}, it suffices to show that
$$\left|\mbE\left[\frac{Z_kZ_l}{\sum_{t=1}^{T-1}\beta_t^2Z_t^2}-\frac{Z_kZ_l}{\sum_{t=1}^{T-1}t^{-2}Z_t^2}\right]\right|<\mro(T^{-1}).$$
By \eqref{Eq of beta upper}, notice that
$$|t^{-2}-\tsigma_t^2|=t^{-2}\left(1-\frac{t\sin(\pi/2T)}{\sin(\pi t/2T)}\right)\left(1+\frac{t\sin(\pi/2T)}{\sin(\pi t/2T)}\right)\leq\frac{t^{-2}\pi^2}{8T^2},$$
where we use the fact that \(\sin x\geq x-x^3/6\) and \(2x/\pi\leq\sin x\leq x\) for \(x\in[0,\pi/2]\), then
\begin{align*}
	&\left|\mbE\left[\frac{Z_kZ_l}{\sum_{t=1}^{T-1}t^{-2}Z_t^2}-\frac{Z_kZ_l}{\sum_{t=1}^{T-1}\tsigma_t^2Z_t^2}\right]\right|\leq\sum_{t=1}^{T-1}|t^{-2}-\tsigma_t^2|\mbE\left[\frac{Z_kZ_lZ_t^2}{(\sum_{t=1}^{T-1}t^{-2}Z_t^2)(\sum_{t=1}^{T-1}\tsigma_t^2Z_t^2)}\right]\\
	&\leq\mrO\left(\sum_{t=1}^{T-1}|t^{-2}-\tsigma_t^2|\right)=\mrO(T^{-2}),
\end{align*}
which concludes that (\ref{Eq of mean difference}).
\begin{proof}[Proof of Lemma \ref{Lem of alpha}]
	Since the whole proofs of this lemma is nearly the same as those for Lemma \ref{Lem of at convergence}, then we only consider the case when \(k=1\). Similar as (\ref{Eq of A1 A2}), denote
	$$A_1:=\sum_{k=1}^{N_K}\alpha_{1,k}\sigma_k\bbD^{-1/2}\bbGa\bbe\bbv_k,\quad{\rm and}\quad B_1:=\sum_{k=N_K+1}^{T-1}\alpha_{1,k}\sigma_k\bbD^{-1/2}\bbGa\bbe\bbv_k,$$
	where \(N_K\) is a pre-specified integer only depending on \(K\), then
	$$\hla_1=\Vert A_1\Vert_2^2+\Vert B_1\Vert_2^2+2\langle A_1,B_1\rangle.$$
	To prove this lemma, it is enough to show that \(\lim_{n\to\infty}\sqrt{n}\mbE[\alpha_{1,t}^2]=0\) for \(t\geq2\). Notice that
	$$\alpha_{1,t}=\frac{\sum_{k\neq t}^{T-1}\alpha_{1,k}n^{-1}\sum_{i=1}^nM_{i;k,t}}{n^{-1}\hla_1-n^{-1}\sum_{i=1}^nM_{i;t,t}}.$$
	\begin{itemize}
		\item Let's first show that for \(t=1,\cdots,N_K\),
		\begin{align}
			\mbP\left(\frac{1}{n}\left|\hla_1-\sum_{i=1}^nM_{i;t,t}\right|>\mbE[\fM_{1,1}-\fM_{2,2}]/2\right)\geq1-C_{B,b,M_0,m_0,c}n^{-3/5},\label{Eq of denominators 1}
		\end{align}
		and
		\begin{align}
			\mbP\left(\frac{1}{n}\left|\hla_1-\sum_{t=N_K+1}^{T-1}\sum_{i=1}^nM_{i;t,t}\right|>\mbE[\fM_{1,1}-\fM_{2,2}]/2\right)\geq1-C_{B,b,M_0,m_0,c}n^{-3/5}.\label{Eq of denominators 2}
		\end{align}
		First, according to (\ref{Eq of mean difference}), (\ref{Eq of finite variance}) and the Chebyshev's inequality, it yields that
		\begin{align}
			\mbP\left(\left|n^{-1}\sum_{i=1}^nM_{i;k,l}-\mbE[\fM_{k,l}]\right|>n^{-1/5}\right)\leq C_{B,b,M_0,m_0,c}n^{-3/5}.\label{Eq of concentration M kl}
		\end{align}
		By the Cauchy's inequality, we have
		\begin{align*}
			&\frac{1}{n}\Vert B_1\Vert_2^2\leq\left(1-\sum_{k=1}^{N_K}\alpha_{1,k}^2\right)\sum_{t=N_K+1}^{T-1}\frac{1}{n}\sum_{i=1}^nM_{i;t,t}\\
			&\leq(1-\alpha_{1,1}^2)\left(1-\sum_{k=1}^{N_K}\mbE[\fM_{k,k}]+\mrO(n^{-1/5})\right)=(1-\alpha_{1,1}^2)\left(\mfb_K+\mrO(n^{-1/5})\right)
		\end{align*}
		with probability at least of \(1-C_{B,b,M_0,m_0,c}n^{-3/5}\), where 
        $$\mfb_k:=1-\sum_{k=1}^{N_K}\mbE[\fM_{k,k}].$$
        Similarly, we can obtain that
		\begin{align*}
			&\frac{1}{n}\Vert A_1\Vert_2^2\leq\frac{1}{n}\sum_{k=1}^{N_K}\alpha_{1,k}^2\mbE[\fM_{k,k}]+\mrO(n^{-1/5})\leq\alpha_{1,1}^2\mbE[\fM_{1,1}]+(1-\alpha_{1,1}^2)\mbE[\fM_{2,2}]+\mrO(n^{-1/5})
		\end{align*}
		with probability at least of \(1-C_{B,b,M_0,m_0,c}n^{-3/5}\). Consequently, it yields that
		\begin{align*}
			&\frac{\hla_1}{n}\leq\frac{1}{n}\Vert A_1\Vert_2^2+\frac{1}{n}\Vert B_1\Vert_2^2+\frac{2}{n}\Vert A_1\Vert_2\Vert B_1\Vert_2\\
			&\leq\alpha_{1,1}^2\mbE[\fM_{1,1}]+(1-\alpha_{1,1}^2)(\mbE[\fM_{2,2}]+\mfb_K)+2(1-\alpha_{1,1}^2)^{1/2}\mfb_K^{1/2}+\mrO(n^{-1/5})
		\end{align*}
		with probability at least of \(1-C_{B,b,M_0,m_0,c}n^{-3/5}\). On the other hand,
		$$\frac{\hla_1}{n}\geq\frac{1}{n}\sum_{i=1}^nM_{i;1,1}\geq\mbE[\fM_{1,1}]-n^{-1/5}$$
		with probability at least of \(1-\mrO(n^{-3/5})\). Combining the above two results, we obtain that
		\begin{align*}
			&(1-\alpha_{1,1}^2)\mbE[\fM_{1,1}]\leq(1-\alpha_{1,1}^2)(\mbE[\fM_{2,2}]+\mfb_K)+2(1-\alpha_{1,1}^2)^{1/2}\mfb_K^{1/2}+\mrO(n^{-1/5})\\
			&\Longrightarrow\quad1-\alpha_{1,1}^2\leq\frac{2\mfb_K^{1/2}+\mrO(n^{-1/5})}{\mbE[\fM_{1,1}-\fM_{2,2}]-\mfb_K}\leq\frac{6\mfb_K^{1/2}}{\mbE[\fM_{1,1}-\fM_{2,2}]},
		\end{align*}
		with probability at least of \(1-C_{B,b,M_0,m_0,c}n^{-3/5}\), where we choose a sufficiently large \(K>0\) such that \(\mfb_K=1-\sum_{k=1}^{N_K}\mbE[\fM_{k,k}]\ll\mbE[\fM_{1,1}-\fM_{2,2}]^2/4\). Moreover, it further implies that
		\begin{align*}
			&\left|\frac{\hla_1}{n}-\frac{1}{n}\sum_{i=1}^nM_{i;1,1}\right|\\
			&\leq(1-\alpha_{1,1}^2)(\mbE[\fM_{2,2}+\fM_{1,1}]+\mfb_K)+2(1-\alpha_{1,1}^2)^{1/2}\mfb_K^{1/2}+\mrO(n^{-1/5})\leq\mrO(\mfb_K^{1/2})
		\end{align*}
		with probability at least of \(1-C_{B,b,M_0,m_0,c}n^{-3/5}\). Therefore, for \(t\geq2\), it gives that
		\begin{align*}
			&\left|\frac{\hla_1}{n}-\frac{1}{n}\sum_{i=1}^nM_{i;t,t}\right|\geq\frac{1}{n}\left|\sum_{i=1}^n(M_{i;1,1}-M_{i;t,t})\right|-\mrO(\mfb_K^{1/2})\\
			&\geq\mbE[\fM_{1,1}-\fM_{t,t}]-\mrO(\mfb_K^{1/2})>\mbE[\fM_{1,1}-\fM_{2,2}]/2
		\end{align*}
		with probability at least of \(1-\mrO(n^{-3/5})\), so we conclude (\ref{Eq of denominators 1}). Similarly,
		\begin{align*}
			&\left|\frac{\hla_1}{n}-\frac{1}{n}\sum_{t=N_K+1}^{T-1}\sum_{i=1}^nM_{i;t,t}\right|\geq\frac{1}{n}\left|\sum_{i=1}^n\left(M_{i;1,1}-\sum_{t=N_K+1}^{T-1}M_{i;t,t}\right)\right|-\mrO(\mfb_K^{1/2})\\
			&\geq\mbE[\fM_{1,1}]-\mrO(\mfb_K^{1/2})>\mbE[\fM_{1,1}-\fM_{2,2}]/2
		\end{align*}
		with probability at least of \(1-C_{B,b,M_0,m_0,c}n^{-3/5}\), so we conclude (\ref{Eq of denominators 2}).
		\item Define
		$$\mcF_t:=\left\{\frac{1}{n}\left|\hla_1-\sum_{i=1}^nM_{i;t,t}\right|>\mbE[\fM_{1,1}-\fM_{2,2}]/2\right\},\quad t=1,\cdots,N_K,$$
		and
		$$\mcF_{N_K+1}:=\left\{\frac{1}{n}\left|\hla_1-\sum_{t=N_K+1}^{T-1}\sum_{i=1}^nM_{i;t,t}\right|>\mbE[\fM_{1,1}-\fM_{2,2}]/2\right\},$$
		then for \(2\leq t\leq N_K\), we have
		\begin{align*}
			&\sqrt{n}\mbE[\alpha_{1,t}^2]=\sqrt{n}\mbE[\alpha_{1,t}^2|\mcF_t]\mbP(\mcF_t)+\sqrt{n}\mbE[\alpha_{1,t}^2|\mcF_t^c]\mbP(\mcF_t^c)\\
			&\leq\sqrt{n}\mbE[\alpha_{1,t}^2|\mcF_t]+\sqrt{n}\mbP(\mcF_t^c)\\
			&\leq C_{B,b,M_0,m_0,c}n^{-1/10}+\frac{4n^{-3/2}}{\mbE[\fM_{1,1}-\fM_{2,2}]^2}\mbE\left[\left(\sum_{k\neq t}^{T-1}\alpha_{1,k}\sum_{i=1}^nM_{i;k,t}\right)^2\right],
		\end{align*}
		by the Cauchy's inequality, we have
		\begin{align*}
			&n^{-3/2}\mbE\left[\left(\sum_{k\neq t}^{T-1}\alpha_{1,k}\sum_{i=1}^nM_{i;k,t}\right)^2\right]\leq n^{-3/2}\sum_{k\neq t}^{T-1}\mbE\left[\left(\sum_{i=1}^nM_{i;k,t}\right)^2\right]\\
			&\leq n^{-3/2}\sum_{k\neq t}^{T-1}\left(\Var\left(\sum_{i=1}^nM_{i;k,t}\right)+\left(\sum_{i=1}^n\mbE[M_{i;k,t}]\right)^2\right).
		\end{align*}
		By (\ref{Eq of mean difference}), when \(k\neq t\), since $\mbE[\fM_{k,t}]=0$, we have 
		$$\left|\frac{1}{\sqrt{n}}\sum_{i=1}^n\mbE[M_{i;k,t}]\right|\leq\frac{C_{B,M_0,c}}{(kt)n^{\delta^2/2}}$$
		so
		$$n^{-3/2}\sum_{k\neq t}^{T-1}\left(\sum_{i=1}^n\mbE[M_{i;k,t}]\right)^2\leq\frac{C_{B,M_0,c}}{t^2n^{1/2+\delta^2/2}}.$$
		Moreover, by (\ref{Eq of finite variance}), we have
		\begin{align*}
			&n^{-1}\Var\left(\sum_{i=1}^nM_{i;k,t}\right)\leq C_{B,b,M_0,m_0}(kt)^{-2},
		\end{align*}
		then
		$$n^{-3/2}\sum_{k\neq t}^{T-1}\Var\left(\sum_{i=1}^nM_{i;k,t}\right)\leq C_{B,b,M_0,m_0,c}t^{-2}n^{-1/2},$$
		which implies that 
		\begin{align}
			n^{-3/2}\mbE\left[\left(\sum_{k\neq t}^{T-1}\alpha_{1,k}\sum_{i=1}^nM_{i;k,t}\right)^2\right]\leq C_{B,b,M_0,m_0,c}t^{-2}n^{-1/2},\label{Eq of alpha 1 t}
		\end{align}
		and \(\lim_{n\to\infty}\sqrt{n}\mbE[\alpha_{1,t}^2]=0\) for \(2\leq t\leq N_K\). Finally, let \(\mfc_K:=\sum_{t=N_K+1}^{T-1}\alpha_{1,t}^2\), similar as the case of \(t=2\), we have
		\begin{align*}
			&\sqrt{n}\mbE[\mfc_K]=\sqrt{n}\mbE[\mfc_K|\mcF_{N_K+1}]\mbP(\mcF_{N_K+1})+\sqrt{n}\mbE[\mfc_K|\mcF_{N_K+1}^c]\mbP(\mcF_{N_K+1}^c)\\
			&\leq C_{B,b,M_0,m_0,c}n^{-1/10}+\frac{4n^{-3/2}}{\mbE[\fM_{1,1}-\fM_{2,2}]^2}\sum_{t=K+1}^{T-1}\mbE\left[\left(\sum_{k\neq t}^{T-1}\alpha_{1,k}\sum_{i=1}^nM_{i;k,t}\right)^2\right]\leq C_{B,b,M_0,m_0,c}n^{-1/10},
		\end{align*}
		where we use (\ref{Eq of alpha 1 t}) in the last inequality.
	\end{itemize}
	Since \(\sqrt{n}\mbE[1-\alpha_{1,1}^2]=\sum_{t=2}^{N_K}\sqrt{n}\mbE[\alpha_{1,t}^2]+\sqrt{n}\mbE[\mfc_K]\leq C_{B,b,M_0,m_0,c}n^{-1/10}\), it completes our proof for \(k=1\). Finally, for general \(k\geq2\), we can inductively prove that \(\lim_{n\to\infty}\sqrt{n}\mbE[1-\alpha_{k,k}^2]=0\) based on \(\lim_{n\to\infty}\sqrt{n}\mbE[1-\alpha_{k-1,k-1}^2]=0\), just as what we have done in \S\ref{sec of alpha kt correlation}. For the choice of \(N_K\), we can still use (\ref{Eq of N_K}), since the arguments are totally the same, we omit the details here to save space.
\end{proof}
\subsection{Limit of the convergence in probability}\label{sec of mbP dependent}
In this subsection, we will first find the limit of the convergence in probability for $n^{-1}\hla_k$, i.e.
\begin{pro}\label{Thm of convergence in probability}
	Under Assumptions {\rm \ref{Ap of highdimensionality}, \ref{Ap of panel lag polynomial}, \ref{Ap of nonpanel}} and {\rm (\ref{Eq of Ap of panel lag polynomial})}, for any $K\in\mbN^+$ \(1\leq k\leq K\), let $\hla_k$ be the $k$-th largest eigenvalue of the sample correlation matrix of $\bbX=[X_1,\cdots,X_T]$ generated by \eqref{Eq of nonpanel Xt}, then
	\begin{align*}
		\frac{\hla_k}{n}\overset{\mbP}{\longrightarrow}\mbE[\fM_{k,k}],
	\end{align*}
	where $\fM_{k,k}$ is defined in \eqref{Eq of fM}.
\end{pro}
\begin{proof}
	For convenience, we only present the proof for \(k=1\), since others are the same. Recall that
	$$\frac{\hla_1}{\sqrt{n}}=\sum_{k,l=1}^{T-1}\alpha_{1,k}\alpha_{1,l}\frac{1}{\sqrt{n}}\sum_{i=1}^nM_{i;k,l}=\frac{1}{\sqrt{n}}\left(\Vert A_1\Vert^2+\Vert B_1\Vert^2+2\langle A_1,B_1\rangle\right),$$
    where \(A_1,B_1\) are defined in Lemma \ref{Lem of alpha}. We will show that
	$$\frac{1}{\sqrt{n}}(\Vert A_1\Vert_2^2)^{\circ}\overset{\mbP}{\longrightarrow}\frac{1}{\sqrt{n}}\sum_{i=1}^nM_{i;1,1}^{\circ},\quad\frac{1}{\sqrt{n}}(\Vert B_1\Vert_2^2)^{\circ},\frac{1}{\sqrt{n}}\langle A_1,B_1\rangle^{\circ}\overset{\mbP}{\longrightarrow}0.$$
	Since
	\begin{align*}
		&\frac{1}{\sqrt{n}}(\Vert A_1\Vert_2^2)^{\circ}=\sum_{k,l=1}^{N_K}\frac{1}{\sqrt{n}}\sum_{i=1}^n\left(\alpha_{1,k}\alpha_{1,l}(M_{i;k,l})^{\circ}+\alpha_{1,k}\alpha_{1,l}\mbE[M_{i;k,l}]-\mbE[\alpha_{1,k}\alpha_{1,l}M_{i;k,l}]\right),
	\end{align*}
	by (\ref{Eq of finite variance}) and Lemma \ref{Lem of alpha}, we can use the Chebyshev's inequality to imply that
	\begin{align*}
		&\sum_{k\neq1,l\neq1}^{N_K}\alpha_{1,k}\alpha_{1,l}\frac{1}{\sqrt{n}}\sum_{i=1}^nM_{i;k,l}^{\circ}\overset{\mbP}{\longrightarrow}0.
	\end{align*}
	Moreover, by Lemma \ref{Lem of alpha} and (\ref{Eq of mean difference}), it gives that
	\begin{align*}
		&\left|\sum_{k\neq1,l\neq1}^{N_K}\alpha_{1,k}\alpha_{1,l}\frac{1}{\sqrt{n}}\sum_{i=1}^n\mbE[M_{i;k,l}]\right|\leq\sum_{k\neq1,l\neq1}^{N_K}(kl)^{-1}\left(\delta_{k,l}\sqrt{n}+n^{-\delta^2/8}\right)\left|\alpha_{1,k}\alpha_{1,l}\right|\overset{\mbP}{\longrightarrow}0.
	\end{align*}
	By (\ref{Eq of finite variance}), we know that
	$$\sum_{k\neq1,l\neq1}^{N_K}\frac{1}{\sqrt{n}}\sum_{i=1}^n\Cov(\alpha_{1,k}\alpha_{1,l},M_{i;k,l})\leq\sum_{k\neq1,l\neq1}^{N_K}\mrO((kl)^{-1}\mbE[\alpha_{1,k}^2\alpha_{1,l}^2]^{1/2})\to0,$$
	and
	\begin{align*}
		&\left|\sum_{k\neq1,l\neq1}^{N_K}\mbE[\alpha_{1,k}\alpha_{1,l}]\frac{1}{\sqrt{n}}\sum_{i=1}^n\mbE[M_{i;k,l}]\right|\leq\sum_{k\neq1,l\neq1}^{N_K}(kl)^{-1}\left(\delta_{k,l}\sqrt{n}+n^{-\delta^2/8}\right)\mbE\left[\left|\alpha_{1,k}\alpha_{1,l}\right|\right]\to0,
	\end{align*}
	which implies that
	$$\sum_{k\neq1,l\neq1}^{N_K}\frac{1}{\sqrt{n}}\sum_{i=1}^n\mbE[\alpha_{1,k}\alpha_{1,l}M_{i;k,l}]\to0.$$
	Hence, we obtain that
	$$\frac{1}{\sqrt{n}}(\Vert A_1\Vert_2^2)^{\circ}\overset{\mbP}{\longrightarrow}\frac{1}{\sqrt{n}}\sum_{i=1}^nM_{i;1,1}^{\circ}+\frac{1}{\sqrt{n}}\sum_{i=1}^n\left((\alpha_{1,1}^2-1)M_{i;1,1}\right)^{\circ}\overset{\mbP}{\longrightarrow}\frac{1}{\sqrt{n}}\sum_{i=1}^nM_{i;1,1}^{\circ},$$
	where we use Lemma \ref{Lem of alpha} in the last step. Next, let's show that
	$$\frac{1}{\sqrt{n}}(\Vert B_1\Vert_2^2)^{\circ}\overset{\mbP}{\longrightarrow}0.$$
	By the Cauchy's inequality, we have
	\begin{align*}
		&\frac{1}{\sqrt{n}}\Vert B_1\Vert_2^2\leq\sqrt{n}(1-\alpha_{1,1}^2)\times\sum_{t=N_K+1}^{T-1}\frac{1}{n}\sum_{i=1}^nM_{i;t,t}\\
		&=\sqrt{n}(1-\alpha_{1,1}^2)\times\left(1-\sum_{t=1}^{N_K}\frac{1}{n}\sum_{i=1}^nM_{i;t,t}\right)\overset{\mbP}{\longrightarrow}0,
	\end{align*}
	where we use (\ref{Eq of concentration M kl}) in the last step, so combine with Lemma \ref{Lem of alpha}, it implies that \(n^{-1/2}(\Vert B_1\Vert_2^2)^{\circ}\overset{\mbP}{\longrightarrow}0\). Finally, since
	$$\frac{1}{\sqrt{n}}\left|\langle A_1,B_1\rangle\right|\leq\frac{1}{\sqrt{n}}\Vert A_1\Vert_2\times\Vert B_1\Vert_2\overset{\mbP}{\longrightarrow}0,$$
	and by the Cauchy's inequality, we have
	\begin{align*}
		&\frac{1}{n}\mbE\left[\langle A_1,B_1\rangle\right]^2=\frac{1}{n}\left(\sum_{k=1}^{N_K}\sum_{t=N_K+1}^{T-1}\sum_{i=1}^n\mbE\left[\alpha_{1,k}\alpha_{1,t}M_{i;k,t}\right]\right)^2\\
		&\leq\mbE[1-\alpha_{1,1}^2]\times\sum_{k=1}^{N_K}\sum_{t=N_K+1}^{T-1}\frac{1}{n}\mbE\left[\left(\sum_{i=1}^nM_{i;k,t}\right)^2\right]\\
		&\leq\mbE[1-\alpha_{1,1}^2]\times\sum_{k=1}^{N_K}\sum_{t=N_K+1}^{T-1}\frac{1}{n}\left(\Var\left(\sum_{i=1}^nM_{i;k,t}\right)+\left(\sum_{i=1}^n\mbE\left[M_{i;k,t}\right]\right)^2\right)\\
		&\leq\mbE[1-\alpha_{1,1}^2]\times\sum_{k=1}^{N_K}\sum_{t=N_K+1}^{T-1}C_{B,b,M_0,m_0,c}(kt)^{-2}\leq\sqrt{n}\mbE[1-\alpha_{1,1}^2]\times\log(n) n^{-1/2}\to0,
	\end{align*}
	where we use the (\ref{Eq of mean difference}), (\ref{Eq of finite variance}) and \ref{Lem of alpha} in the last line of above equations. Now, we conclude that
	$$\frac{1}{\sqrt{n}}\langle A_1,B_1\rangle^{\circ}\overset{\mbP}{\longrightarrow}0,$$
	which completes our proof.
\end{proof}
\subsection{Joint CLT for the extreme eigenvalues of the sample correlation matrix}\label{sec of CLT correlation dependent}
In Proposition \ref{Thm of convergence in probability}, we have shown that
$$\frac{\hla_k}{n}\overset{\mbP}{\longrightarrow}\mbE[\fM_{k,k}],$$
so in this subsection, we will further establish the joint CLT for $(\hla_1,\cdots,\hla_K)$. Here, let's make some necessary notations first. Let $\{\vec{z}_t=(z_{1,t},\cdots,z_{n,t})'\overset{i.i.d.}{\sim}\mcN(\boldsymbol{0},\tilde{\bbGa}):t\in\mbN^+\}$ be a sequence of i.i.d. normal vectors, where $\tilde{\bbGa}\in\mbR^{n\times n}$ is defined in \eqref{Eq of bbF}, then define
\begin{align}
    \hat{M}_{i;k,l}=\frac{(kl)^{-1}z_{i,l}z_{i,l}}{\sum_{t=1}^{\infty}t^{-2}z_{i,t}^2}.\label{Eq of hat Mikl}
\end{align}
By Lemmas \ref{Lem of remove dependence}, \ref{Lem of unify variance} and Proposition \ref{Thm of convergence in probability}, we can conclude that
$$\frac{\hla_k^{\circ}}{\sqrt{n}}\overset{\mbP}{\longrightarrow}\frac{1}{\sqrt{n}}\sum_{i=1}^n\hat{M}_{i;k,k}^{\circ},$$
where $\hla_k^{\circ}=\hla_k-\mbE[\hla_k]$. Therefore, to establish the CLT for \(\hla_k\), it suffices to establish the CLT for $n^{-1/2}\sum_{i=1}^n\hat{M}_{i;k,k}^{\circ}$ . However, note that all \(\hat{M}_{i;k,k}\) in \eqref{Eq of hat Mikl} are not independent. Hence, to establis the CLT for correlated random variables, we propose the following addition assumption:
\begin{Ap}\label{Ap of m dependent}
	Given the \(n\times n\) cross-sectional matrix \(\bbGa\), denote \(\Gamma_i\) to be the \(i\)-th row of \(\bbGa\) and we require that
	$$\{j=1,\cdots,n:\Gamma_{i_1,j}\neq0\}\cap\{j=1,\cdots,n:\Gamma_{i_2,j}\neq0\}=\emptyset$$
	for all \(|i_1-i_2|>m\), where \(m:=m(n)=\mro(n^{1/2})\).
\end{Ap}
\begin{remark}
	The above assumption suggests the sets of nonzero entries in the \(i_1\)-th row and \(i_2\)-th row of \(\bbGa\) do not have overlap if \(|i_1-i_2|>m\). And such matrix indeed exists, for example the \(m\) banded toeplitz matrices.
\end{remark}
\begin{remark}
	Moreover, the above assumption suggests \(\{\hat{M}_{1;k,k},\cdots,\hat{M}_{n;k,k}\}\) is a \(m\)-dependent sequence, see Definition 1 in \cite{hoeffding1948central}. Since
	$$\hat{M}_{i;k,k}=\frac{(kl)^{-1}z_{i,l}z_{i,l}}{\sum_{t=1}^{\infty}t^{-2}z_{i,t}^2},$$
	where \(\hat{x}_i=(\hat{x}_{i,1},\cdots,\hat{x}_{i,T-1})\sim\mcN\big(\boldsymbol{0},\bbI_{T-1}\big)\) and \(\Cov(\hat{x}_{i_1},\hat{x}_{i_2})=\rho_{i_1,i_2}\bbI_{T-1}\), where \(\rho_{i_1,i_2}\) is the \((i_1,i_2)\)-th entry of \(\tilde{\bbGa}\) defined in (\ref{Eq of bbF}), since 
	$$\rho_{i_1,i_2}=\Gamma_{i_1}\bbF\Gamma_{i_2}'=\sum_{j=1}^n\Gamma_{i_1,j}\Gamma_{i_2,j}f_j(0),$$
	and at least one of \(\Gamma_{i_1,j},\Gamma_{i_2,j}\) is zero for all \(j=1,\cdots,n\) when \(|i_1-i_2|>m\), it implies that \(\rho_{i_1,i_2}=0\) if \(|i_1-i_2|>m\), i.e. \(\hat{x}_{i_1},\hat{x}_{i_2}\) are independent, so that \(\hat{M}_{i_1;k,k},\hat{M}_{i_2;k,k}\) will be independent.
\end{remark}
Now, let's show that
\begin{lem}\label{Lem of preliminary CLT}
	Under Assumptions {\rm \ref{Ap of highdimensionality}, \ref{Ap of panel lag polynomial}, \ref{Ap of nonpanel}, \ref{Ap of m dependent}} and {\rm (\ref{Eq of Ap of panel lag polynomial})}, for any $K\in\mbN^+$ and $1\leq k\leq K$, we have
	$$\frac{1}{\sqrt{n}\mfm_{k,k}}\sum_{i=1}^n\hat{M}_{i;k,k}^{\circ}\overset{d}{\longrightarrow}\mcN(0,1),$$
    where $\hat{M}_{i;k,k}$ is defined in \eqref{Eq of hat Mikl} and
    \begin{align}
	    \mfm_{k,k}^2:=\Var\left(\frac{1}{\sqrt{n}}\sum_{i=1}^n\hat{M}_{i;k,k}\right)\asymp\mrO(1).\label{Eq of mfm kk}
	\end{align}
    Notation ``$\asymp$'' is defined in \eqref{Eq of asymp mbP}.
\end{lem}
\begin{proof}
	According to Lemma \ref{Lem of positive covariance}, it implies that $\Cov\left(\hat{M}_{i_1;k,k},\hat{M}_{i_2;k,k}\right)\geq0$, so
	$$\mfm_{k,k}^2=\frac{1}{n}\sum_{i=1}^n\Var\left(\hat{M}_{i;k,k}\right)+\frac{1}{n}\sum_{i_1\neq i_2}^n\Cov\left(\hat{M}_{i_1;k,k},\hat{M}_{i_2;k,k}\right)\geq\frac{1}{n}\sum_{i=1}^n\Var\left(\hat{M}_{i;k,k}\right)=\Var(\fM_{k,k}),$$
	where all \(\Var\big(\hat{M}_{i;k,k}\big)=\Var(\fM_{k,k})>0\) are equal. Combining with Lemma \ref{Lem of finite variance}, it implies that \(\mfm_{k,k}\asymp\mrO(1)\). By Theorem 1.4 in \cite{janson2021central}, it is enough to check that \(\{\hat{M}_{1;k,k}^{\circ},\cdots,\hat{M}_{n;k,k}^{\circ}\}\) satisfies the following version of the Lindeberg's condition:
	$$\lim_{n\to\infty}\frac{m}{n\mfm_{k,k}^2}\sum_{i=1}^n\mbE\left[(\hat{M}_{i;k,k}^{\circ})^21_{|\hat{M}_{i;k,k}^{\circ}|>\epsilon\sqrt{n}\mfm_{k,k}/m}\right]=0,\quad\forall\epsilon>0.$$
	By Assumption \ref{Ap of m dependent}, it is easy to see that \(\lim_{n\to\infty}\sqrt{n}\mfm_{k,k}/m=\infty\); on the other hand, \(|\hat{M}_{i;k,k}^{\circ}|\leq2\), then \(1_{|\hat{M}_{i;k,k}|>\epsilon\sqrt{n}\mfm_{k,k}/m}=0\) for sufficiently large \(n\), which concludes this lemma.
\end{proof}
Finally, we can conclude that
\begin{thm}\label{Thm of CLT I1}
	Under Assumptions {\rm \ref{Ap of highdimensionality}, \ref{Ap of panel lag polynomial}, \ref{Ap of nonpanel}, \ref{Ap of m dependent}} and {\rm (\ref{Eq of Ap of panel lag polynomial})}, for any $K\in\mbN^+$ and $1\leq k\leq K$, let \(\hla_k\) be the first \(k\)-th largest eigenvalue of the sample correlation matrix of \(\bbX=[X_1,\cdots,X_T]\) generated by {\rm (\ref{Eq of nonpanel Xt})}, then
	$$\frac{\sqrt{n}}{\mfm_{k,k}}\left(\frac{\hla_k}{n}-\mbE[\fM_{k,k}]\right)\overset{d}{\longrightarrow}\mcN(0,1),$$
	where $\mfm_{k,k}$ is defined in \eqref{Eq of mfm kk}. Moreover, let $\bbA_n=[A_{k,l}]$ be a $K\times K$ covariance matrix such that
	$$A_{k,l}:=\frac{1}{n}\Cov\left(\sum_{i=1}^n\hat{M}_{i;k,k},\sum_{i=1}^n\hat{M}_{i;l,l}\right),$$
    where $\hat{M}_{i;k,k}$ is defined in \eqref{Eq of hat Mikl}. If $\liminf_{n\to\infty}\sigma_{\min}(\bbA_n)>0$, where $\sigma_{\min}(\bbA_n)$ is the smallest singular value of $\bbA_n$, then
    $$\sqrt{n}\bbA_n^{-1/2}\left(\frac{\hla_1}{n}-\mbE[\fM_{1,1}],\cdots,\frac{\hla_K}{n}-\mbE[\fM_{K,K}]\right)'\overset{d}{\longrightarrow}\mcN(\boldsymbol{0},\bbI_K).$$
\end{thm}
\begin{proof}
	For simplicity, we only present the details for $k=1$, since the proofs for others are the same. According to Proposition \ref{Thm of convergence in probability} and Lemmas \ref{Lem of remove dependence}, \ref{Lem of unify variance}, it gives that
	$$\frac{\hla_k^{\circ}}{\sqrt{n}}\overset{\mbP}{\longrightarrow}\frac{1}{\sqrt{n}}\sum_{i=1}^nM_{i;k,k}^{\circ}.$$
	Combine with the above Lemma \ref{Lem of preliminary CLT}, we know that
	$$\frac{\hla_k^{\circ}}{\sqrt{n}\mfm_{k,k}}\overset{d}{\longrightarrow}\mcN(0,1),$$
	where we use the fact that \(\mfm_{k,k}\asymp\mrO(1)\) by \eqref{Eq of mfm kk}. Next, will show that 
	$$\lim_{n\to\infty}\frac{1}{\sqrt{n}}\left|\mbE\left[\hla_1-\sum_{i=1}^nM_{i;1,1}\right]\right|=0.$$
	Since
	\begin{align*}
		&\frac{1}{\sqrt{n}}\left|\mbE\left[\hla_1-\sum_{i=1}^nM_{i;1,1}\right]\right|\leq\frac{1}{\sqrt{n}}\sum_{i=1}^n\mbE\left[(1-\alpha_{1,1}^2)M_{i;1,1}\right]+\frac{1}{\sqrt{n}}\mbE\left[\left|\sum_{k\ {\rm or}\ l\neq1}^{T-1}\sum_{i=1}^n\alpha_{1,k}\alpha_{1,l}M_{i;k,l}\right|\right],
	\end{align*}
	by Lemma \ref{Lem of alpha}, we have
	$$\frac{1}{\sqrt{n}}\sum_{i=1}^n\mbE\left[(1-\alpha_{1,1}^2)M_{i;1,1}\right]\leq\sqrt{n}\mbE\left[(1-\alpha_{1,1}^2)\right]\longrightarrow0,$$
	and
	\begin{align*}
		&\frac{1}{\sqrt{n}}\mbE\left[\left|\sum_{k,l>1}^{T-1}\sum_{i=1}^n\alpha_{1,k}\alpha_{1,l}M_{i;k,l}\right|\right]\leq\mbE\left[\left(\sum_{k,l>1}^{T-1}\alpha_{1,k}^2\alpha_{1,l}^2\right)^{1/2}\times\left(\sum_{k,l>1}^{T-1}\frac{1}{n}\left(\sum_{i=1}^nM_{i;k,l}\right)^2\right)^{1/2}\right]\\
		&\leq\mbE\left[\left(1-\alpha_{1,1}^2\right)\times\left(\sum_{i=1}^n\sum_{k,l>1}^{T-1}M_{i;k,l}^2\right)^{1/2}\right]\leq\sqrt{n}\mbE\left[1-\alpha_{1,1}^2\right]\longrightarrow0.
	\end{align*}
	Moreover, by (\ref{Eq of mean difference}) and (\ref{Eq of finite variance}), we can further obtain that
	\begin{align*}
		&\frac{1}{\sqrt{n}}\mbE\left[\left|\sum_{l>1}^{T-1}\sum_{i=1}^n\alpha_{1,1}\alpha_{1,l}M_{i;1,l}\right|\right]\leq\mbE[1-\alpha_{1,1}^2]^{1/2}\times\mbE\left[\frac{1}{n}\sum_{l>1}^{T-1}\left(\sum_{i=1}^nM_{i;1,l}\right)^2\right]^{1/2}\\
		&\leq\mbE[1-\alpha_{1,1}^2]^{1/2}\times\left(\sum_{l>1}^{T-1}\frac{1}{n}\Var\left(\sum_{i=1}^nM_{i;1,l}\right)+\frac{1}{n^2}\mbE\left[\sum_{i=1}^nM_{i;1,l}\right]^2\right)^{1/2}\\
		&\leq\mrO\left(\mbE[1-\alpha_{1,1}^2]^{1/2}\right)\longrightarrow0.
	\end{align*}
	Now, we obtain that
    $$\frac{\sqrt{n}}{\mfm_{1,1}}\left(\frac{\hla_1}{n}-\mbE[\fM_{1,1}]\right)\overset{d}{\longrightarrow}\mcN(0,1).$$
    Finally, by the above arguments, we show that for $1\leq k\leq K$
    $$\sqrt{n}\left(\frac{\hla_k}{n}-\mbE[\fM_{k,k}]\right)\overset{\mbP}{\longrightarrow}\frac{1}{\sqrt{n}}\sum_{i=1}^n(\hat{M}_{i;k,k}-\mbE[\fM_{k,k}]).$$
    To establish the joint CLT for $(\hla_1,\cdots,\hla_K)'$, it suffices to establish the joint CLT for
    $$\frac{1}{\sqrt{n}}\left(\sum_{i=1}^n(\hat{M}_{i;1,1}-\mbE[\fM_{1,1}]),\cdots,\sum_{i=1}^n(\hat{M}_{i;K,K}-\mbE[\fM_{K,K}])\right)',$$
    whose covariance matrix is $\bbA_n$ defined in Theorem \ref{Thm of CLT I1}. Since $\liminf_{n\to\infty}\sigma_{\min}(\bbA_n)>0$, where $\sigma_{\min}(\bbA_n)$ is the smallest singular value of $\bbA_n$, then $\bbA_n$ is positive semi-definite and $\bbA_n^{-1/2}$ always exists. Consequently, for any unit $K$-dimensional vector $\bba=(a_1,\cdots,a_K)'$, we can use the same Lindeberg's condition for $m$-dependent random variables as in Lemma \ref{Lem of preliminary CLT} to show that
    $$\frac{1}{\sqrt{n\cdot \bba'\bbA_n\bba}}\sum_{k=1}^Ka_k\sum_{i=1}^n(\hat{M}_{i;k,k}-\mbE[\fM_{k,k}])\overset{d}{\longrightarrow}\mcN(0,1),$$
    we omit details here to save space, so we conclude the joint CLT in Theorem \ref{Thm of CLT I1}.
\end{proof}
\section{CLT for extreme eigenvalues of the sample covariance matrix of high-dimensional random walks}\label{Sec of covariance}
\setcounter{equation}{0}
\def\theequation{\thesection.\arabic{equation}}
\setcounter{subsection}{0}
Let's consider a $n$-dimensional random walk $X_t$ generated by 
\begin{align}
    X_t=X_{t-1}+e_t,\quad e_t=\sum_{k=0}^{\infty}\Psi_k\varepsilon_{t-k},\label{Eq of random walk covariance}
\end{align}
where $\varepsilon_t=(\varepsilon_{1,t},\cdots,\varepsilon_{n,t})'$ and $\{\Psi_k:k\in\mbN\}$ satisfy that
\begin{Ap}\label{Ap of noise}
	\(\varepsilon_{i,t}\) are independent for all \(1\leq i\leq n\) and \(t\in\mathbb{Z}\) such that \(\mathbb{E}[\varepsilon_{i,t}]=0,\mathbb{E}[\varepsilon_{i,t}^2]=1\) and \(\kappa_6:=\sup_{i,t}\mathbb{E}[\varepsilon_{i,t}^6]<\infty\).
\end{Ap}
\begin{Ap}\label{Ap of effective rank}
	There exists \(B>0\) such that $\sum_{k=0}^{\infty}(1+k)\Vert\Psi_k\Vert\leq B$. Further denote
	\begin{align}
		\Psi(1):=\sum_{k=0}^{\infty}\Psi_k\quad{\rm and}\quad\bbW:=\Psi(1)'\Psi(1),\label{Eq of bbW}
	\end{align}
	there exists a positive constant \(b\) such that \(n^{-1}\tr(\bbW)\geq b\), and the effective rank \(\tr(\bbW)/\Vert\bbW\Vert=\mrO(n)\).
\end{Ap}
In this section, for any $K\in\mbN^+$, we establish the joint CLT for the first $K$ largest eigenvalues of the sample covariance of \(\hat{\bbSig}\) in (\ref{Eq of covariance matrix}) of $\bbX=[X_1,\cdots,X_T]$ generated by \eqref{Eq of random walk covariance}. 
\begin{thm}\label{Thm of original CLT}
	Under Assumptions {\rm \ref{Ap of highdimensionality}, \ref{Ap of noise}} and {\rm \ref{Ap of effective rank}}, for any $K\in\mbN^+$ and $1\leq k\leq K$, let \(\tichi_k\) be the \(k\)-th largest eigenvalue of the sample covariance matrix \(\hat{\bbSig}\) in {\rm (\ref{Eq of covariance matrix})} of $\bbX=[X_1,\cdots,X_T]$ generated by \eqref{Eq of random walk covariance}, then
	$$\frac{\tr(\bbW)}{\sqrt{\tr(\bbW^2)}}\left(\frac{\hat{\chi}_1}{n\tr(\bbW)}-\frac{1}{(c\pi)^2},\cdots,\frac{\hat{\chi}_K}{n\tr(\bbW)}-\frac{1}{(c\pi K)^2}\right)'\overset{d}{\longrightarrow}\mcN(\boldsymbol{0},\mcC),$$
    where $\mcC=\diag(\mcC_{1,1},\cdots,\mcC_{K,K})$ is a $K\times K$ diagonal covariance matrix such that $\mcC_{k,k}=2(c\pi k)^{-4}$.
\end{thm}
Here, let's briefly outline the proof of Theorem \ref{Thm of original CLT}. The key idea is to leverage the joint CLT for extreme eigenvalues of the sample covariance matrix of the I(1) terms in Beveridge-Nelson decomposition of \(X_t\). Precisely, the Beveridge-Nelson decomposition of \(X_t\) is given as
\begin{align}
	\left\{\begin{array}{l}
		X_t=\Psi(1)\xi_t+\Psi^*(L)\varepsilon_t,\\
		\bbX=\Psi(1)\bbve\bbU+\Psi^*(L)\bbve,
	\end{array}\right.\label{Eq of BN decomposition}
\end{align}
where \(\Psi^*(L):=\sum_{k=0}^{\infty}\Psi_k^* L^k\) with \(\Psi_k^*:=-\sum_{i=k+1}^{\infty}\Psi_i\) and \(\xi_t:=\sum_{j=1}^t\varepsilon_t\), \(\bbve:=[\varepsilon_1,\cdots,\varepsilon_T]\). Then the sample covariance of the \(\mathrm{I}(1)\) term in the Beveridge-Nelson decomposition of \(\bbX\bbM\) is defined as
\begin{align}
	\tilde{\bbSig}:=\frac{1}{n}\bbM\bbU'\bbve'\Psi(1)'\Psi(1)\bbve\bbU\bbM=\frac{1}{n}\bbM\bbU'\bbve'\bbW\bbve\bbU\bbM.\label{Eq of BN covariance matrix}
\end{align}
Denote \(\hat{\chi}_1\geq\cdots\geq\hat{\chi}_T\) and \(\tilde{\chi}_1\geq\cdots\geq\tilde{\chi}_T\) to be the eigenvalues of \(\hat{\bbSig}\) and \(\tilde{\bbSig}\) respectively, with the corresponding normalized eigenvectors \(\hat{H}_1,\cdots,\hat{H}_T\) and \(\tilde{H}_1,\cdots,\tilde{H}_T\). Notice that \(\{\bbw_1,\cdots,\bbw_T\}\) in (\ref{Eq of vk}) forms an orthogonal basis of \(\mbR^T\), similar as (\ref{Eq of la1}), we have
\begin{align}
    \tichi_k=\tilde{H}_k'\tilde{\bbSig}\tilde{H}_k=\sum_{s,t=1}^{T-1}\sigma_s\sigma_t\alpha_{k,s}\alpha_{k,t}\bbv_s'\bbve'\bbW\bbve\bbv_t,\quad\tilde{H}_k:=\sum_{t=1}^{T-1}\alpha_{k,t}\bbw_t,\label{Eq of H1}
\end{align}
where \(\sum_{k=1}^{T-1}\alpha_{k,t}^2=1\) for \(1\leq k\leq T-1\). 

Now, based on $\tilde{\bbSig}$ in \eqref{Eq of BN covariance matrix}, we can prove Theorem \ref{Thm of original CLT} by the following two steps:
\begin{enumerate}
    \item Establish the joint CLT for $(\tichi_1,\cdots,\tichi_K)'$;
    \item Show that $n^{-3/2}|\tichi_k-\chi_k|\overset{\mbP}{\longrightarrow}0$.
\end{enumerate}
In the end, as a useful tool for our proof, we cite the following result:
\begin{lem}[Lemma 2 in \cite{onatski2021spurious}]\label{Lem of covariance}
	Under Assumptions {\rm \ref{Ap of highdimensionality}} and {\rm \ref{Ap of noise}}, let \(\bbve=[\varepsilon_1,\cdots,\varepsilon_T]\) to be the noise matrix such that \(\bbve_j\) be the \(j\)-th {\bf row} of \(\bbve\), further let \(\vec{a}_i\) and \(\bbA\) be any deterministic \(T\)-dimensional unit vectors for \(i=1,\cdots,4\) and \(n\times n\) matrix, respectively, then
	\begin{align*}
		&\mbE[\vec{a}_1'\bbve'\bbA\bbve\vec{a}_2]=\langle\vec{a}_1,\vec{a}_2\rangle\tr(\bbA),\\
		&\left|\Cov\big(\vec{a}_1'\bbve'\bbA\bbve\vec{a}_2,\vec{a}_3'\bbve'\bbA\bbve\vec{a}_4\big)-\Vert\bbA\Vert_F^2\left(\langle\vec{a}_1,\vec{a}_3\rangle\langle\vec{a}_2,\vec{a}_4\rangle+\langle\vec{a}_1,\vec{a}_4\rangle\langle\vec{a}_2,\vec{a}_3\rangle\right)\right|\\
		&\leq2\kappa_4\tr(\bbA\circ\bbA)\sum_{t=1}^T|\vec{a}_{1,t}\vec{a}_{2,t}\vec{a}_{3,t}\vec{a}_{4,t}|.
	\end{align*}
\end{lem}

\subsection{Joint CLT for extreme eigenvalues for the sample covariance matrix in the BN decomposition}\label{sec of CLT tisig}
\begin{pro}\label{Thm of pre CLT}
	Under Assumptions {\rm \ref{Ap of highdimensionality}, \ref{Ap of noise}, \ref{Ap of effective rank}} and {\rm \ref{Ap of effective rank}}, for any \(K\in\mbN^+\) and $1\leq k\leq K$, let $\tichi_k$ be the $k$-th largest eigenvalue of $\tilde{\bbSig}$ defined in \eqref{Eq of BN covariance matrix}, then
	\begin{align}
		\frac{\tr(\bbW)}{\sqrt{\tr(\bbW^2)}}\left(\frac{\tichi_1}{n\tr(\bbW)}-\frac{1}{(c\pi)^2},\cdots,\frac{\tichi_K}{n\tr(\bbW)}-\frac{1}{(c\pi K)^2}\right)'\overset{d}{\longrightarrow}\mcN(\boldsymbol{0},\mcC),\label{Eq of tilde joint CLT}
	\end{align}
	where \(\tichi_k\) is defined in {\rm (\ref{Eq of H1})} and \(\mcC=\diag(\mcC_{1,1},\cdots,\mcC_{K,K})\) is a \(K\times K\) diagonal matrix such that \(\mcC_{k,k}=2(c\pi k)^{-4}\).
\end{pro}
To prove Proposition \ref{Thm of pre CLT}, we will use the same frameworks as those for Theorems \ref{Thm of CLT} and \ref{Thm of CLT I1}. Basically, recall the $\alpha_{k,t}$ defined in \eqref{Eq of H1}, we will first show that
$$\lim_{n\to\infty}\sqrt{n}\mbE[1-\alpha_{k,k}^2]=0.$$
Therefore, we can conclude that
$$\sqrt{n}\left|\frac{\tichi_k}{n\tr(\bbW)}-\frac{1}{n}\bbv_k'\bbve'\tilde{\bbW}\bbve\bbv_k\right|\overset{\mbP}{\longrightarrow}0,$$
so it suffices to establish the joint CLT for $n^{-1}(\bbv_1'\bbve'\tilde{\bbW}\bbve\bbv_1,\cdots,\bbv_K'\bbve'\tilde{\bbW}\bbve\bbv_K)'$ to prove Proposition \ref{Thm of pre CLT}. In the following two parts, we will prove Proposition \ref{Thm of pre CLT} based on the above outline.
\subsubsection{Preliminary CLT}\label{sec of preliminary CLT covariance}
Here, we denote some necessary notations. Let
\begin{align}
	\tilde{\bbW}:=\bbW/\tr(\bbW)\label{Eq of tiW}
\end{align} 
by Assumption \ref{Ap of effective rank}, it yields that 
$$\tr(\tilde{\bbW})=1\quad{\rm and}\quad\Vert\tilde{\bbW}\Vert=\mathrm{O}(n^{-1}).$$
Moreover, by Lemma \ref{Lem of covariance}, for any \(k_1,k_2,t_1,t_2\in\{1,\cdots,T\}\), we have
\begin{itemize}
	\item \(\mathbb{E}[y_{k_1,t_1}]=\delta_{k_1,t_1}\), where \(\delta_{k_1,t_1}\) is the Kronecker delta;
	\item \(\big|\Cov(y_{k_1,t_1},y_{k_2,t_2})-\tr(\tilde{\bbW}^2)(\delta_{k_1,k_2}\delta_{t_1,t_2}+\delta_{k_1,t_2}\delta_{k_2,t_1})\big|\leq8\kappa_4 T^{-1}\tr(\tilde{\bbW}^2)\).
\end{itemize}
In this part, we first establish the joint CLT for $n^{-1}(\bbv_1'\bbve'\tilde{\bbW}\bbve\bbv_1,\cdots,\bbv_K'\bbve'\tilde{\bbW}\bbve\bbv_K)'$ as follows:
\begin{lem}\label{Lem of pre CLT}
	Under Assumptions {\rm \ref{Ap of highdimensionality}, \ref{Ap of noise}} and {\rm \ref{Ap of effective rank}}, for any \(K\in\mathbb{N}^+\), let
    \begin{align}
	   x_{j,k}:=\bbve_j\bbv_k\quad{\rm and}\quad y_{k,t}:=\bbv_k'\bbve'\tilde{\bbW}\bbve\bbv_t,\label{Eq of x y}
    \end{align}
    then
	\begin{align*}
		\frac{\tr(\bbW)}{\sqrt{\tr(\bbW^2)}}\big(y_{1,1}-1,\cdots,y_{K,K}-1\big)'\overset{d}{\longrightarrow}\mcN(\boldsymbol{0},2\bbI_K).
	\end{align*}
\end{lem}
\begin{proof}
	Let \(\bba=(a_1,\cdots,a_K)'\) be any \(K\)-dimensional unit vector, it is enough to show that
	$$\frac{\tr(\bbW)}{\sqrt{\tr(\bbW^2)}}\sum_{k=1}^Ka_k(y_{k,k}-1)\overset{d}{\longrightarrow}\mcN(0,2).$$
	Define a sequence of sigma fields \(\mcF_{l,K}:=\sigma\{x_{i,k}:1\leq i\leq l,1\leq k\leq K\}\) for \(l=0,1,\cdots,n-1\) and \(\mcF_{0,K}=\emptyset\), then
	\begin{align*}
		&\sqrt{n}\sum_{k=1}^Ka_k(y_{k,k}-1)=\sqrt{n}\sum_{i,j=1}^n\tilde{W}_{i,j}\sum_{k=1}^Ka_k(x_{i,k}x_{j,k}-\delta_{i,j})\\
		&=\sum_{l=1}^n\sqrt{n}\sum_{k=1}^Ka_k\left(\tilde{W}_{l,l}(x_{l,k}^2-1)+2\sum_{r=1}^{l-1}\tilde{W}_{l,r}x_{l,k}x_{r,k}\right):=\sum_{l=1}^nH_{l,K},
	\end{align*}
	where \(\{(H_{l,K},\mcF_{l,K}):l=1,\cdots,n\}\) is a sequence of martingale differences due to \(\mbE[H_{l,K}|\mcF_{l-1,K}]=0\). By the CLT of martingale differences (see \cite{alj2014conditions}), we first need to compute
	\begin{align*}
		&\sum_{l=1}^n\mbE[H_{l,K}^2|\mcF_{l-1,K}]=\sum_{l=1}^n\mbE\left[\left(\sum_{k=1}^Ka_k\left(\tilde{W}_{l,l}(x_{l,k}^2-1)+2\sum_{r=1}^{l-1}\tilde{W}_{l,r}x_{l,k}x_{r,k}\right)\right)^2\right]\\
		&=\sum_{l=1}^n\sum_{k=1}^Ka_k^2\mbE\left[\left(\tilde{W}_{l,l}(x_{l,k}^2-1)+2\sum_{r=1}^{l-1}\tilde{W}_{l,r}x_{l,k}x_{r,k}\right)^2\right]\\
		&=\sum_{l=1}^n\sum_{k=1}^K\left(a_k^2\tilde{W}_{l,l}^2\mbE\big[(x_{l,k}^2-1)^2\big]+4\sum_{r=1}^{l-1}\tilde{W}_{l,r}^2\right)\\
		&=2n\tr(\tiW^2)\sum_{k=1}^Ka_k^2+n\sum_{l=1}^n\sum_{k=1}^Ka_k^2\tilde{W}_{l,l}^2\left(\mbE\big[x_{l,k}^4\big]-3\right)\longrightarrow2n\tr(\tiW^2),
	\end{align*}
	where we use \(\mbE\big[x_{i_1,k_1}x_{i_2,k_2}\big]=\delta_{i_1,i_2}\delta_{k_1,k_2}\) and \(n\sum_{l=1}^n\tilde{W}_{l,l}^2\leq\mrO(1)\) and
	\begin{align*}
		&\mbE\big[x_{l,k}^4\big]=\mbE\left[\left(\sum_{t=1}^T\varepsilon_{l,t}v_{t,k}\right)^4\right]=3\sum_{t_1,t_2=1}^Tv_{t_1,k}^2v_{t_2,k}^2\mbE\big[\varepsilon_{t_1,k}^2\varepsilon_{t_2,k}^2\big]+\sum_{t=1}^Tv_{t,k}^4\big(\mbE[\varepsilon_{t,k}^4]-3\big)\\
		&=3+C_{\kappa_4}T^{-1}.
	\end{align*}
	Next, let's verify that all \(H_{l,K}\) satisfy the Lyapounov’s condition, i.e. \(\max_l\mbE[|\sqrt{n}H_{l,K}|^3]\leq\mrO(1)\). By the Hölder's inequality, we have
	\begin{align*}
		&\mbE\big[|H_{l,K}|^3\big]=n^3\mbE\left[\left|\sum_{k=1}^Ka_k\left(\tilde{W}_{l,l}(x_{l,k}^2-1)+2\sum_{r=1}^{l-1}\tilde{W}_{l,r}x_{l,k}x_{r,k}\right)\right|^3\right]\\
		&\leq n^3\left(\sum_{k=1}^K|a_k|^{3/2}\right)^2\sum_{k=1}^K\mbE\left[\left|\tilde{W}_{l,l}(x_{l,k}^2-1)+2\sum_{r=1}^{l-1}\tilde{W}_{l,r}x_{l,k}x_{r,k}\right|^3\right]\\
		&\leq4K^2n^3\tilde{W}_{l,l}^3\sum_{k=1}^K\mbE\big[|x_{l,k}^2-1|^3\big]+32K^2n^3\sum_{k=1}^K\mbE\big[|x_{l,k}|^3\big]\mbE\left[\left|\sum_{r=1}^{l-1}\tilde{W}_{l,r}x_{r,k}\right|^3\right].
	\end{align*}
	Since
	\begin{align}
		&\mathbb{E}\left[x_{j,k}^6\right]=\mathbb{E}\left[\left(\sum_{s=1}^T\varepsilon_{j,s}v_{s,k}\right)^6\right]=15\sum_{s_1,s_2,s_3=1}^T |v_{s_1,k}v_{s_2,k}v_{s_3,k}|^2\mathbb{E}\big[|\varepsilon_{j,s_1}\varepsilon_{j,s_2}\varepsilon_{j,s_3}|^2\big]\notag\\
		&\leq15\kappa_6\sum_{s_1,s_2,s_3=1}^T |v_{s_1,k}v_{s_2,k}v_{s_3,k}|^2\leq120\kappa_6,\notag
	\end{align}
	where we use the fact that \(\mathbb{E}\left[\prod_{l=1}^8\varepsilon_{j,s_l}\right]=0\) if there is one \(s_l\) different with others, and
	\begin{align*}
		&\mbE\left[\left|\sum_{r=1}^{l-1}\tilde{W}_{l,r}x_{r,k}\right|^6\right]=15\sum_{r_1,r_2,r_3=1}^{l-1}\big|\tilde{W}_{l,r_1}\tilde{W}_{l,r_2}\tilde{W}_{l,r_3}\big|^2\mbE\big[\big|x_{r_1,k}x_{r_2,k}x_{r_3,k}\big|^2\big]\leq C_{\kappa_6}\Vert\tiW\Vert^6,
	\end{align*}
	further combining with Assumption \ref{Ap of effective rank}, we can obtain that	
	\begin{align*}
		&\mbE\big[|H_{l,K}|^3\big]\leq C_{K,\kappa_6}+C_{K,\kappa_6}n^3\sum_{k=1}^K\mbE\left[\left|\sum_{r=1}^{l-1}\tilde{W}_{l,r}x_{r,k}\right|^6\right]^{1/2}\\
		&\leq C_{K,\kappa_6}+C_{K,\kappa_6}n^3\Vert\tiW\Vert^3\leq C_{K,\kappa_6}, 
	\end{align*}
	then we fulfill the two conditions in Theorem 1.3 of \cite{alj2014conditions}, so we conclude that
	\begin{align*}
		&\frac{\sqrt{n}\sum_{k=1}^Ka_k(y_{k,k}-1)}{\sqrt{2n\tr(\tiW^2)}}\overset{d}{\longrightarrow}\mcN(0,1),
	\end{align*}
	which completes our proof.
\end{proof}
\subsubsection{Asymptotic behavior of eigenvectors}\label{sec of alpha kt}
In this part, we will establish the asymptotic behaviors of \(\alpha_{t,k}\) defined in (\ref{Eq of H1}),
\begin{lem}\label{Lem of alpha convergence rate}
	Under Assumptions {\rm \ref{Ap of highdimensionality}, \ref{Ap of noise}} and {\rm \ref{Ap of effective rank}}, recall \(\sigma_k\) defined in {\rm (\ref{Eq of SVD of MU})}, define 
	\begin{align}
		\tisig_k=\frac{\sigma_k}{n}=\frac{1}{2n\sin(\pi k/(2T))},\label{Eq of tilde sigma}
	\end{align}	 
	then for any \(K\in\mbN^+\) and \(1\leq k\leq K\), we have 
	$$\lim_{n\to\infty}\sqrt{n}\mbE[1-\alpha_{k,k}^2]=0,$$
	where \(\alpha_{k,k}\) is defined in {\rm (\ref{Eq of H1})}.
\end{lem}
\begin{proof}
    First, let's show that \(\sqrt{n}\mbE[1-\alpha_{1,1}]=0\) through the following two steps 
    \begin{itemize}
        \item Similar as (\ref{Eq of A1 A2}), we have \(\tichi_1/(n\tr(\bbW))=n^{-2}(\vec{\gamma}_1'\tilde{\bbW}\vec{\gamma}_1+\vec{\gamma}_2'\tilde{\bbW}\vec{\gamma}_2+2\vec{\gamma}_1'\tilde{\bbW}\vec{\gamma}_2)\) by (\ref{Eq of H1}), where
    \begin{align}
	   \vec{\gamma}:=\sum_{k=1}^{T-1}\alpha_{1,k}\sigma_k\bbve\bbv_k=\sum_{k=1}^{N_K}\cdots+\sum_{k=N_K+1}^{T-1}\cdots:=\vec{\gamma}_1+\vec{\gamma}_2,\label{Eq of gamma}
    \end{align}
    and \(N_K\) is a pre-specified integer only depending on \(K\), and we will show that there exists a constant \(C_1>0\) such that
    \begin{align}
        &\mathbb{P}\big(|\tichi_1/(n\tr(\bbW))-\tisig_1^2y_{1,1}|\leq C_1N_K^{-1/2}\big)\geq1-\mathrm{O}(n^{-3/5}).\label{Eq of tichi1 upper bound}
    \end{align}
    Since
    \begin{align*}
        &|\tichi_1/(n\tr(\bbW))-\tisig_1^2y_{1,1}|\leq|n^{-2}\vec{\gamma}_1'\tilde{\bbW}\vec{\gamma}_1-\tisig_1^2y_{1,1}|+n^{-2}|\vec{\gamma}_2'\tilde{\bbW}\vec{\gamma}_2|+2n^{-2}|\vec{\gamma}_2'\tilde{\bbW}\vec{\gamma}_1|,
    \end{align*}
    let's first show that 
	\begin{align}
		\mathbb{P}\big(n^{-2}\vec{\gamma}_2\tilde{\bbW}\vec{\gamma}_2>C_cN_K^{-1}+\epsilon\big)<\epsilon^{-2}\mathrm{O}(n^{-1}).\label{Eq of Chebyshev 1}
	\end{align}
	Notice that 
	\begin{align}
		&n^{-2}\big|\vec{\gamma}_2'\tiW\vec{\gamma}_2\big|\leq\Vert\tiW\Vert\cdot n^{-2}\Vert\vec{\gamma}_2\Vert_2^2\leq\Vert\tiW\Vert\cdot\sum_{j=1}^n\left(\sum_{k=N_K+1}^{T-1}\alpha_{1,k}\tisig_k x_{j,k}\right)^2\notag\\
		&\leq\Vert\tiW\Vert\cdot(1-\alpha_{1,1}^2)\sum_{j=1}^n\sum_{k=N_K+1}^{T-1}\tisig_k^2x_{j,k}^2\leq(1-\alpha_{1,1}^2)\frac{C}{n}\sum_{k=N_K+1}^{T-1}\tisig_k^2\sum_{j=1}^nx_{j,k}^2,\label{Eq of gamma2 quadratic}
	\end{align}
	where we use the Cauchy's inequality and \(\Vert\tiW\Vert\leq\mrO(n^{-1})\). Since \(\tisig_k\leq(T/2n)k^{-1}\), then by Lemma \ref{Lem of covariance} and (\ref{Eq of x y}), it concludes that
    \begin{align}
		&\frac{1}{n}\sum_{k=N_K+1}^{T-1}\tisig_k^2\sum_{j=1}^n\mbE[x_{j,k}^2]=\sum_{k=N_K+1}^{T-1}\tisig_k^2\leq C_cN_K^{-1}.\label{Eq of gamma2 quadratic mean}
	\end{align}
    Similarly, we have
	\begin{align}
		&n^{-2}\Var\left(\sum_{k=N_K+1}^{T-1}\tisig_k^2\sum_{j=1}^nx_{j,k}^2\right)=n^{-2}\sum_{k_1,k_2=N_K+1}^{T-1}\sum_{j_1,j_2=1}^n\tisig_{k_1}^2\tisig_{k_2}^2\Cov\big(x_{j_1,k_1}^2,x_{j_2,k_2}^2\big)\notag\\
		&\leq C_{\kappa_4}n^{-2}\sum_{k_1,k_2=N_K+1}^{T-1}\sum_{j_1,j_2=1}^n\tisig_{k_1}^2\tisig_{k_2}^2\delta_{j_1,j_2}\langle\bbv_{k_1},\bbv_{k_2}\rangle^2\leq C_{\kappa_4}n^{-1}N_K^{-2}.\label{Eq of gamma2 quadratic variance}
	\end{align}
	Then we can conclude (\ref{Eq of Chebyshev 1}) by the Chebyshev's inequality. Moreover, by (\ref{Eq of x y}), we have that
	$$n^{-2}\vec{\gamma}_1'\tilde{\bbW}\vec{\gamma}_1=\sum_{t_1,t_2=1}^{N_K}\alpha_{1,t_1}\alpha_{1,t_2}\tisig_{t_1}\tisig_{t_2}y_{t_1,t_2},$$
	by Lemma \ref{Lem of covariance}, Chebyshev's inequality and Assumption \ref{Ap of effective rank}, we can conclude that
	$$\mbP\left(\big|y_{t_1,t_2}-\delta_{t_1,t_2}\big|>\epsilon\right)<\frac{3\tr(\bbW^2)}{\epsilon^2\tr(\bbW)^2}\leq\frac{3n\Vert\bbW\Vert^2}{\epsilon^2\tr(\bbW)^2}=\epsilon^{-2}\mrO(n^{-1}).$$
	Thus, it gives that
	\begin{align}
		&\mbP\left(n^{-2}\vec{\gamma}_1'\tilde{\bbW}\vec{\gamma}_1\leq\sum_{t=1}^{N_K}\alpha_{1,t}^2\tisig_t^2+C_K\epsilon\right)\geq1-C_K\epsilon^{-2}\mathrm{O}(n^{-1}).\label{Eq of Chebyshev 2}
	\end{align}
    Since \(\tiW\) is positive semi-definite, by the Cauchy's inequality, (\ref{Eq of Chebyshev 1}) and (\ref{Eq of Chebyshev 2}), we have
	$$n^{-2}|\vec{\gamma}_1'\tilde{\bbW}\vec{\gamma}_2|\leq|n^{-2}\vec{\gamma}_1'\tilde{\bbW}\vec{\gamma}_1|^{1/2}|n^{-2}\vec{\gamma}_2'\tilde{\bbW}\vec{\gamma}_2|^{1/2}\leq C_c N_K^{-1/2},$$
	where we use the fact that \(\sum_{t=1}^{N_K}\alpha_{1,t}^2\tisig_t^2<C_c\) by (\ref{Eq of tilde sigma}), so let \(\epsilon=n^{-1/5}\) in (\ref{Eq of Chebyshev 1}) and (\ref{Eq of Chebyshev 2}), it gives that
	\begin{align}
		&\frac{\tichi_1}{n\tr(\bbW)}=n^{-2}\vec{\gamma}_1'\tilde{\bbW}\vec{\gamma}_1+n^{-2}\vec{\gamma}_2'\tilde{\bbW}\vec{\gamma}_2+2n^{-2}\vec{\gamma}_1'\tilde{\bbW}\vec{\gamma}_2\leq \sum_{t=1}^{N_K}\alpha_{1,t}^2\tisig_t^2+C_cN_K^{-1}\notag\\
		&+C_cN_K^{-1/2}+C_Kn^{-1/5}\leq\sum_{t=1}^{N_K}\alpha_{1,t}^2\tisig_t^2+C_c N_K^{-1/2},\label{Eq of upper bound}
	\end{align}
	with probability at least of \(1-\mathrm{O}(n^{-3/5})\). On the other hand, since
	\begin{align}
		\frac{\tichi_1}{n\tr(\bbW)}\geq\tisig_1^2y_{1,1}\geq \tisig_1^2-n^{-1/5},\label{Eq of lower bound}
	\end{align}
	with probability at least of \(1-\mathrm{O}(n^{-3/5})\), then combine with (\ref{Eq of upper bound}) and (\ref{Eq of lower bound}), it gives that
	$$(1-\alpha_{1,1}^2)\tisig_1^2\leq\sum_{t=2}^{N_K}\alpha_{1,t}^2\tisig_t^2+C_cN_K^{-1/2}\leq(1-\alpha_{1,1}^2)\tisig_2^2+C_cK^{-1/2}$$
	with probability at least of \(1-\mathrm{O}(n^{-3/5})\), i.e.
	$$1-\alpha_{1,1}^2\leq C_c(\tisig_1^2-\tisig_2^2)^{-1}N_K^{-1/2}$$
	with probability at least of \(1-\mathrm{O}(n^{-3/5})\). Finally, since
	$$n^{-2}\vec{\gamma}_1'\tilde{\bbW}\vec{\gamma}_1-\tisig_1^2y_{1,1}=(1-\alpha_{1,1}^2)\tisig_1^2y_{1,1}+\sum_{k\neq1\ {\rm or\ }t\neq1}^{N_K}\alpha_{1,k}\alpha_{1,t}\tisig_k\tisig_ty_{k,t},$$
	we can conclude that
	\begin{align*}
		&\big|n^{-2}\vec{\gamma}_1'\tilde{\bbW}\vec{\gamma}_1-\tisig_1^2y_{1,1}\big|\leq(1-\alpha_{1,1}^2)\tisig_1^2+\sum_{k=2}^{N_K}\alpha_{1,k}^2\tisig_k^2+C_K n^{-1/5}\\
		&\leq (1-\alpha_{1,1}^2)\sum_{k=1}^{N_K}\tisig_k^2\leq C_c(\tisig_1^2-\tisig_2^2)^{-1}N_K^{-1/2}
	\end{align*}
	with probability at least of \(1-\mathrm{O}(n^{-3/5})\). Combining with (\ref{Eq of Chebyshev 1}), it yields that
	\begin{align}
		&\big|\tichi_1/(n\tr(\bbW))-\tisig_1^2y_{1,1}\big|\notag\\
		&\leq\big|n^{-2}\vec{\gamma}_1'\tilde{\bbW}\vec{\gamma}_1-\tisig_1^2y_{1,1}\big|+\big|n^{-2}\vec{\gamma}_2'\tilde{\bbW}\vec{\gamma}_2\big|+2\big|n^{-2}\vec{\gamma}_2'\tilde{\bbW}\vec{\gamma}_2\big|^{1/2}\big|n^{-2}\vec{\gamma}_1'\tilde{\bbW}\vec{\gamma}_1\big|^{1/2}\notag\\
		&\leq C_c(\tisig_1^2-\tisig_2^2)^{-1}N_K^{-1/2}+C_cN_K^{-1/2}:=C_1N_K^{-1/2}\notag
	\end{align}
	with probability greater than \(1-\mathrm{O}(n^{-3/5})\).
	\item According to (\ref{Eq of H1}) and (\ref{Eq of x y}), we have
	$$\frac{\alpha_{1,k}\tichi_1}{\tr(\bbW)}=\frac{1}{n}\bbw_k'\bbM\bbU'\bbve'\tilde{\bbW}\bbve\bbU\bbM\tiF_1=\frac{1}{n}\sigma_k\sum_{t=1}^{T-1}\sigma_t\alpha_{1,t}\bbv_k'\bbve'\tilde{\bbW}\bbve\bbv_t,$$
	i.e.
	\begin{align}
		\sqrt{n}\alpha_{1,k}^2=\frac{\tisig_k^2\sqrt{n}(\sum_{t\neq k}^{T-1}\tisig_t\alpha_{1,t}y_{k,t})^2}{(\tichi_1/(n\tr(\bbW))-\tisig_k^2y_{k,k})^2},\label{Eq of alpha}
	\end{align}
	where \(\tisig_k=n^{-1}\sigma_k\) and \(y_{k,t}=\bbv_k'\bbve'\tilde{\bbW}\bbve\bbv_t\). First, for the denominator of (\ref{Eq of alpha}), by (\ref{Eq of tichi1 upper bound}) and \(\mathbb{P}\big(|y_{k,k}-1|>\epsilon\big)\leq \epsilon^{-2}\mathrm{O}(n^{-1})\), we have for a sufficiently large \(N_K\)
	\begin{align*}
		&|\tichi_1/(n\tr(\bbW))-\tisig_k^2y_{k,k}|\geq|\tisig_1^2y_{1,1}-\tisig_k^2y_{k,k}|-|\tichi_1/(n\tr(\bbW))-\tisig_1^2y_{1,1}|\\
		&\geq\tisig_1^2-\tisig_k^2-C_1 N_K^{-1/2}>(\tisig_1^2-\tisig_2^2)/2:=\epsilon_{1,2}
	\end{align*}
	with probability at least of \(1-\mathrm{O}(n^{-3/5})\). Let's define an event \(\mathcal{E}_{1,k}:=\{|\tichi_1/(n\tr(\bbW))-\tisig_k^2y_{k,k}|>\epsilon_{1,2}\}\), then \(\mathbb{P}(\mathcal{E}_{1,k}^c)\leq\mathrm{O}(n^{-3/5})\) for \(k=2,\cdots,N_K\) and
	\begin{align}
		&\sqrt{n}\mathbb{E}[\alpha_{1,k}^2]=\sqrt{n}\mathbb{E}[\alpha_{1,k}^2|\mathcal{E}_{1,k}]\mathbb{P}(\mcE_{1,k})+\sqrt{n}\mathcal{E}[\alpha_{1,k}^2|\mathcal{E}_{1,k}^c]\mathbb{P}(\mathcal{E}_{1,k}^c)\notag\\
		&\leq\mathrm{O}(n^{-1/10})+\epsilon_{1,2}^{-2}\tisig_k^2\sqrt{n}\mathbb{E}\left[\left(\sum_{t\neq k}^{T-1}\tisig_t\alpha_{1,t}y_{k,t}\right)^2\right].\label{Eq of sum numerator 1}
	\end{align}
	Next, for \(k>N_K\), let's define
	\begin{align}
		A_K:=\sum_{s=N_K+1}^{T-1}\tisig_s^2y_{s,s}\quad{\rm and}\quad\mathcal{E}_{1,N_K+1}:=\big\{\big|\tichi_1/(n\tr(\bbW))-A_K\big|>\epsilon_{1,2}\big\}.\label{Eq of AK}
	\end{align}
	Since \(\mathbb{E}[A_K]=\sum_{s=N_K+1}^{T-1}\tisig_s^2\leq CN_K^{-1}\) and
	\begin{align}
		&\operatorname{Var}(A_K)=\sum_{s,r=N_K+1}^{T-1}\tisig_s^2\tisig_r^2\operatorname{Cov}(y_{s,s},y_{r,r})\leq C_{\kappa_4}\sum_{s,r=N_K+1}^{T-1}\tisig_s^2\tisig_r^2\delta_{s,r}\tr(\tiW^2)\notag\\
		&\leq C_{\kappa_4}\tr(\tilde{\bbW}^2)\sum_{s=N_K+1}^{T-1}\tisig_s^4\leq CK^{-2}\tr(\tilde{\bbW}^2)=\mathrm{O}(n^{-1})\notag
	\end{align}
	by Lemma \ref{Lem of covariance} and Assumption \ref{Ap of effective rank}, we can let \(N_K\) be sufficiently large such that \(\epsilon_{1,2}<\tisig_1^2-\mathbb{E}[A_K]\) and induce that
	$$\big|\tichi_1/(n\tr(\bbW))-A_K\big|\geq\tisig_1^2-\mathbb{E}[A_K]-C_{\kappa_4}N_K^{-1/2}>\epsilon_{1,2}$$
	with probability at least of \(1-\mathrm{O}(n^{-3/5})\). Then we obtain that
	\begin{align}
		&\sqrt{n}\sum_{k=N_K+1}^{T-1}\mathbb{E}[\alpha_{1,k}^2]\notag\\
		&=\sqrt{n}\sum_{k=N_K+1}^{T-1}\mathbb{E}[\alpha_{1,k}^2|\mathcal{E}_{1,N_K+1}]\mathbb{P}(\mathcal{E}_{1,N_K+1})+\sqrt{n}\sum_{k=N_K+1}^{T-1}\mathbb{E}[\alpha_{1,k}^2|\mathcal{E}_{1,N_K+1}^c]\mathbb{P}(\mathcal{E}_{1,N_K+1}^c)\notag\\
		&\leq\mathrm{O}(n^{-1/10})+\epsilon_{1,2}^{-2}\sum_{k=K+1}^{T-1}\tisig_k^2\sqrt{n}\mathbb{E}\left[\left(\sum_{t\neq k}^{T-1}\tisig_t\alpha_{1,t}y_{k,t}\right)^2\right].\label{Eq of sum numerator 2}
	\end{align}
	Hence, combining with (\ref{Eq of sum numerator 1}) and (\ref{Eq of sum numerator 2}), it is enough to show that the sum of the expectation of numerator in (\ref{Eq of alpha}) over all \(k>1\) converges to zero, i.e.
	$$\lim_{n\to\infty}\sum_{k=2}^{T-1}\tisig_k^2\sqrt{n}\mathbb{E}\left[\left(\sum_{t\neq k}^{T-1}\tisig_t\alpha_{1,t}y_{k,t}\right)^2\right]=0.$$ 
	By the Cauchy's inequality and Lemma \ref{Lem of covariance}, we have that
	\begin{align}
		&\sqrt{n}\mathbb{E}\left[\left(\sum_{t\neq k}^{T-1}\tisig_t\alpha_{1,t}y_{k,t}\right)^2\right]\leq\sqrt{n}\mathbb{E}\left[\left(\sum_{t\neq k}^{T-1}\alpha_{1,t}^2\right)\left(\sum_{t\neq k}^{T-1}\tisig_t^2y_{k,t}^2\right)\right]\leq\sqrt{n}\sum_{t\neq k}^{T-1}\tisig_t^2\mathbb{E}[y_{k,t}^2]\notag\\
		&\leq\sqrt{n}C_{\kappa_4}\tr(\tilde{\bbW}^2)\sum_{t\neq k}^{T-1}\tisig_t^2=\mathrm{O}(n^{-1/2}),\notag
	\end{align}
	where we use the fact that \(\tisig_t\asymp\mathrm{O}(t^{-1})\) (``$\asymp$'' is defined in \eqref{Eq of asymp mbP}) and \(\Vert\tiW\Vert\leq\mrO(n^{-1})\) in the last inequality above. Now, we conclude that
	$$\sum_{k=2}^{T-1}\tisig_k^2\sqrt{n}\mathbb{E}\left[\left(\sum_{t\neq k}^{T-1}\tisig_t\alpha_{1,t}y_{k,t}\right)^2\right]\leq\mathrm{O}(n^{-1/2})\sum_{k=2}^{T-1}\tisig_k^2=\mathrm{O}(n^{-1/2}).$$
	i.e.
	\begin{align*}
		&\sqrt{n}\mathbb{E}[1-\alpha_{1,1}^2]=\sqrt{n}\sum_{k=2}^{T-1}\mathbb{E}[\alpha_{1,k}^2]\\
		&\leq\mathrm{O}(n^{-1/10})+\epsilon_{1,2}^{-2}\sum_{k=2}^{T-1}\tisig_k^2\sqrt{n}\mathbb{E}\left[\left(\sum_{t\neq k}^{T-1}\tisig_t\alpha_{1,t}y_{k,t}\right)^2\right]=\mathrm{O}(n^{-1/10}),
	\end{align*}
    \end{itemize}
	Until now, we obtain that \(\lim_{n\to\infty}\sqrt{n}\mbE[1-\alpha_{1,1}^2]=0\). For general cases, we can inductively prove that \(\lim_{n\to\infty}\sqrt{n}\mbE[1-\alpha_{k,k}^2]=0\) based on \(\lim_{n\to\infty}\sqrt{n}\mbE[1-\alpha_{k-1,k-1}^2]=0\), just as what we have done in \S\ref{sec of alpha kt correlation}. For the choice of \(N_K\), we can still use (\ref{Eq of N_K}), since the arguments are totally the same, we omit the details here to save space.
\end{proof}
\begin{cor}\label{Cor of alpha convergence rate}
	As a consequence of Lemma {\rm \ref{Lem of alpha convergence rate}}, it gives that \(\sqrt{n}(1-\alpha_{k,k}^2)\overset{\mbP}{\longrightarrow}0\) for $1\leq k\leq K$.
\end{cor}
\subsubsection{Proof of Proposition \ref{Thm of pre CLT}}
\begin{proof}
	Let's first show that 
	\begin{align}
		\frac{\tichi_k^{\circ}}{\sqrt{n}\tr(\bbW)}\overset{\mbP}{\longrightarrow}\sqrt{n}\tisig_k^2(y_{k,k}-1).\label{Eq of mbP tichi_k}
	\end{align}
	Without loss of generality, we only provide the detailed proof of \(k=1\), others are totally the same. By (\ref{Eq of gamma}), we have
	$$\frac{\tichi_1^{\circ}}{\sqrt{n}\tr(\bbW)}=n^{-3/2}(\vec{\gamma}_1'\tilde{\bbW}\vec{\gamma}_1+\vec{\gamma}_2'\tilde{\bbW}\vec{\gamma}_2+2\vec{\gamma}_1'\tilde{\bbW}\vec{\gamma}_2)^{\circ}.$$
	Let's first prove that
	\begin{align*}
		n^{-3/2}(\vec{\gamma}_1'\tilde{\bbW}\vec{\gamma}_1)^{\circ}-\sqrt{n}\tisig_1^2y_{1,1}^{\circ}\overset{\mbP}{\longrightarrow}0,
	\end{align*}
	where
	$$n^{-3/2}(\vec{\gamma}_1'\tilde{\bbW}\vec{\gamma}_1)^{\circ}=\sqrt{n}\sum_{k,t=1}^{N_K}\tisig_k\tisig_t\big(\alpha_{1,k}\alpha_{1,t}y_{k,t}^{\circ}+\alpha_{1,k}\alpha_{1,t}\delta_{kt}-\mathbb{E}[\alpha_{1,k}\alpha_{1,t}y_{k,t}]\big).$$
	By Corollary \ref{Cor of alpha convergence rate} and Lemma \ref{Lem of covariance}, we have that \(\sqrt{n}\alpha_{1,k}^2\overset{\mbP}{\longrightarrow}0\) for \(k\geq2\) and \(n\Var(y_{k,t})\leq C_{\kappa_4}n\tr(\tiW^2)\leq\mrO(1)\), it implies that
	$$\sqrt{n}\sum_{k\neq1\ {\rm or\ }t\neq1}^{N_K}\tisig_k\tisig_t\alpha_{1,k}\alpha_{1,t}(y_{k,t}-\delta_{k,t})\quad{\rm and}\quad\sqrt{n}\sum_{k\neq1\ {\rm or\ }t\neq1}^{N_K}\tisig_k\tisig_t\alpha_{1,k}\alpha_{1,t}\delta_{kt}\overset{\mbP}{\longrightarrow}0.$$
	Furthermore, if \(k\neq1\) or \(t\neq1\), since
	\begin{align}
		&\big|\operatorname{Cov}(\alpha_{1,k}\alpha_{1,t},\sqrt{n}y_{k,t})\big|\leq\mathbb{E}[\alpha_{1,k}^2\alpha_{1,t}^2]^{1/2}\operatorname{Var}(\sqrt{n}y_{k,t})^{1/2}\notag\\
		&\leq C_{\kappa_4}\sqrt{n}\tr(\tilde{\bbW}^2)^{1/2}\mathbb{E}[\alpha_{1,k}^2\alpha_{1,t}^2]^{1/2}\longrightarrow0,\label{Eq of covariance of alpha and y}
	\end{align}
	it implies that
	$$\sqrt{n}\sum_{k\neq1\ {\rm or\ }t\neq1}^{N_K}\big|\mathbb{E}[\alpha_{1,k}\alpha_{1,t}y_{k,t}]-\mathbb{E}[\alpha_{1,k}\alpha_{1,t}]\mathbb{E}[y_{k,t}]\big|\to0.$$
	Therefore, we conclude that \(n^{-3/2}(\vec{\gamma}_1'\tilde{\bbW}\vec{\gamma}_1)^{\circ}-\sqrt{n}y_{1,1}^{\circ}\overset{\mbP}{\longrightarrow}0\). Next, we will show that \(n^{-3/2}(\vec{\gamma}_2'\tilde{\bbW}\vec{\gamma}_2)^{\circ}\overset{\mbP}{\longrightarrow}0\). By (\ref{Eq of gamma2 quadratic}), it has that
	\begin{align*}
		n^{-3/2}\vec{\gamma}_2'\tilde{\bbW}\vec{\gamma}_2\leq C\sqrt{n}(1-\alpha_{1,1}^2)\times\frac{1}{n}\sum_{j=1}^n\sum_{k=N_K+1}^{T-1}\tisig_k^2x_{j,k}^2,
	\end{align*}
	where \(\sqrt{n}(1-\alpha_{1,1}^2)\overset{\mbP}{\longrightarrow}0\) by Corollary \ref{Cor of alpha convergence rate}. Combining with (\ref{Eq of gamma2 quadratic mean}) and (\ref{Eq of gamma2 quadratic variance}), we can conclude that
	$$n^{-3/2}\vec{\gamma}_2'\tilde{\bbW}\vec{\gamma}_2\overset{\mbP}{\longrightarrow}0.$$
	Next, by (\ref{Eq of covariance of alpha and y}), it implies that
	\begin{align}
		&\sqrt{n}\mathbb{E}[\vec{\gamma}_2'\tilde{\bbW}\vec{\gamma}_2]\leq\sum_{k,t=N_K+1}^{T-1}\tisig_k\tisig_t\sqrt{n}\big|\mathbb{E}[\alpha_{1,k}\alpha_{1,t}y_{k,t}]\big|\notag\\
		&\leq\sum_{k=N_K+1}^{T-1}\tisig_k^2\sqrt{n}\mathbb{E}[\alpha_{1,k}^2]+2\sqrt{n}\tr(\tilde{\bbW}^2)^{1/2}\sum_{k,t=N_K+1}^{T-1}\tisig_k\tisig_t\mathbb{E}[\alpha_{1,k}^2\alpha_{1,t}^2]^{1/2}\notag\\
		&\leq K^{-1}\sqrt{n}\mathbb{E}[1-\alpha_{1,1}^2]+\mathbb{E}[1-\alpha_{1,1}^2]^{1/2}\sum_{k,t=N_K+1}^{T-1}\tisig_k\tisig_t\to0,\notag
	\end{align}
	where we use the fact that \(\tr(\tilde{\bbW}^2)=\mathrm{O}(n^{-1})\) and \(\sum_{k,t=N_K+1}^{T-1}\tisig_k\tisig_t\leq\big(\sum_{t=1}^{T-1}t^{-1}\big)^2\leq\log^2 T\) and \(\mathbb{E}[1-\alpha_{1,1}^2]\leq\mro(\sqrt{n})\) by Corollary \ref{Cor of alpha convergence rate}. Hence, we obtain that \(\sqrt{n}(\vec{\gamma}_2'\tilde{\bbW}\vec{\gamma}_2)^{\circ}\overset{\mbP}{\longrightarrow}0\). Finally, let's show that \(\sqrt{n}(\vec{\gamma}_1'\tilde{\bbW}\vec{\gamma}_2)^{\circ}\overset{\mbP}{\longrightarrow}0\). Notice that
	\begin{align}
		&\sqrt{n}\vec{\gamma}_1'\tilde{\bbW}\vec{\gamma}_2=\sqrt{n}\sum_{k=1}^{N_K}\sum_{t=N_K+1}^{T-1}\tisig_k\tisig_t\alpha_{1,k}\alpha_{1,t}y_{k,t}\notag
	\end{align}
	By the Cauchy's inequality, we have that
	\begin{align}
		&\sqrt{n}\mbE\big[\big|\vec{\gamma}_1'\tilde{\bbW}\vec{\gamma}_2\big|\big]\leq\sqrt{n}\mbE\left[\sum_{k=1}^{N_K}\sum_{t=N_K+1}^{T-1}\alpha_{1,k}^2\alpha_{1,t}^2\right]^{1/2}\mbE\left[\sum_{k=2}^{N_K}\sum_{t=N_K+1}^{T-1}\tisig_k^2\tisig_t^2y_{k,t}^2\right]^{1/2}\notag\\
		&\leq\big(\sqrt{n}\mbE[1-\alpha_{1,1}^2]\big)^{1/2}\left(\sqrt{n}\sum_{k=1}^{N_K}\sum_{t=N_K+1}^{T-1}\tisig_k^2\tisig_t^2\mbE[y_{k,t}^2]\right)^{1/2}.\notag
	\end{align}
	Since \(\mathbb{E}[y_{k,t}]=0\) for \(k\neq t\) and \(\Var(y_{1,t})\leq C_{\kappa_4}\tr(\tiW^2)\) by Lemma \ref{Lem of covariance}, then \(\mbE[y_{k,t}^2]\leq C_{\kappa_4}\tr(\tiW^2)\) and
	$$\sqrt{n}\sum_{k=1}^{N_K}\sum_{t=N_K+1}^{T-1}\tisig_k^2\tisig_t^2\mbE[y_{k,t}^2]\leq C_{\kappa_4}\sum_{k=1}^{N_K}\sum_{t=N_K+1}^{T-1}\tisig_k^2\tisig_t^2\sqrt{n}\tr(\tiW^2)\leq C_{\kappa_4}n^{-1/2}.$$
	Combining Lemma \ref{Lem of alpha convergence rate} and Markov's inequality, we can conclude that
	\begin{align*}
		&\sqrt{n}\mbE\big[\big|\vec{\gamma}_1'\tilde{\bbW}\vec{\gamma}_2\big|\big]\longrightarrow0\quad{\rm and}\quad\sqrt{n}\vec{\gamma}_1'\tilde{\bbW}\vec{\gamma}_2\overset{\mbP}{\longrightarrow}0,
	\end{align*}
	so (\ref{Eq of mbP tichi_k}) is valid. Next, let's show that
	$$\lim_{n\to\infty}\sqrt{n}\Big|\frac{\mathbb{E}[\tichi_k]}{n\tr(\bbW)}-\tisig_k^2\Big|=0.$$
	Here, we only present the proof for \(k=1\) as well, since others are the same. Note that
	$$\frac{\mathbb{E}[\tichi_1]}{n\tr(\bbW)}-\tisig_1^2=\tisig_1^2\mathbb{E}[\alpha_{1,1}^2y_{1,1}-1]+\sum_{k\neq1\ {\rm or\ }t\neq1}^{T-1}\tisig_k\tisig_t\mathbb{E}[\alpha_{1,k}\alpha_{1,t}y_{k,t}],$$
	then by Lemma \ref{Lem of covariance}, \ref{Lem of alpha convergence rate}, Assumption \ref{Ap of effective rank} and (\ref{Eq of covariance of alpha and y}), we have
	\begin{align*}
		&\sqrt{n}|\mathbb{E}[\alpha_{1,1}^2y_{1,1}-1]|=\sqrt{n}|\mathbb{E}[(1-\alpha_{1,1}^2)y_{1,1}]|\\
		&\leq\sqrt{n}\mathbb{E}[1-\alpha_{1,1}^2]+\sqrt{n}\mathbb{E}[(1-\alpha_{1,1}^2)^2]^{1/2}\operatorname{Var}(y_{1,1})^{1/2}\longrightarrow0,
	\end{align*}
	Similarly, for \(k\neq1\) or \(t\neq1\), by previous arguments, we can also conclude that
	\begin{align}
		&\sqrt{n}\left|\sum_{k\neq1\ {\rm or\ }t\neq1}^{T-1}\tisig_k\tisig_t\mathbb{E}[\alpha_{1,k}\alpha_{1,t}y_{k,t}]\right|\notag\\
		&\leq\sqrt{n}\sum_{k=2}^{T-1}\tisig_k^2\mathbb{E}[\alpha_{1,k}^2]+C\sqrt{n}\mathbb{E}[1-\alpha_{1,1}^2]^{1/2}\tr(\tiW^2)^{1/2}\sum_{k\neq1\ {\rm or\ }t\neq1}^{T-1}\tisig_k\tisig_t\notag\\
		&\leq C\sqrt{n}\mathbb{E}[1-\alpha_{1,1}^2]+C\mathbb{E}[\sqrt{n}(1-\alpha_{1,1}^2)]^{1/2}n^{-1/4}\log^2 T\to0.\notag
	\end{align}
	Hence, we obtain that
	$$\sqrt{n}\Bigg|\frac{\mathbb{E}[\tichi_1]}{n\tr(\bbW)}-\tisig_1^2\Bigg|\leq\sqrt{n}|\mathbb{E}[\alpha_{1,1}^2y_{1,1}-1]|+\sqrt{n}\Bigg|\sum_{k\neq1\ {\rm or\ }t\neq1}^{T-1}\tisig_k\tisig_t\mathbb{E}[\alpha_{1,k}\alpha_{1,t}y_{k,t}]\Bigg|\to0.$$
	Consequently, it yields that for \(1\leq k\leq K\)
	$$\sqrt{n}\left(\frac{\tichi_k}{n\tr(\bbW)}-\tisig_k^2\right)\overset{\mbP}{\longrightarrow}\sqrt{n}\tisig_k^2(y_{k,k}-1).$$
	By Assumption \ref{Ap of highdimensionality} and (\ref{Eq of tilde sigma}), we know that
	\begin{align*}
		\lim_{n\to\infty}\tisig_k=\lim_{n\to\infty}\frac{1}{2n\sin(\pi k/2T)}=(c\pi k)^{-1},
	\end{align*}
	combining with Lemma \ref{Lem of pre CLT} and Slutsky’s lemma, we conclude that (\ref{Eq of tilde joint CLT}).
\end{proof}
\subsection{Joint CLT for extreme eigenvalues of the sample covariance matrix}\label{sec of CLT sig}
Until now, we have establish the joint CLT for \(\tichi_1,\cdots,\tichi_K\) in Proposition \ref{Thm of pre CLT}. In this section, we will finally prove Theorem \ref{Thm of original CLT}. As we have mentioned before, let \(\hat{\chi}_k\) and \(\tichi_k\) be the \(k\)-th largest eigenvalue of \(\hat{\bbSig}\) and \(\tilde{\bbSig}\) defined in {\rm (\ref{Eq of covariance matrix})} and {\rm (\ref{Eq of BN covariance matrix})}, respectively, the key step is to show that
$$\frac{|\hat{\chi}_k-\tichi_k|}{\sqrt{n}\tr(\bbW)}\overset{\mbP}{\longrightarrow}0,$$
For preliminaries, let's first cite the following result:
\begin{lem}[Chapter 9.12.5 of \cite{bai2010spectral}]\label{Thm of Extreme eigenvalue}
	Let \(\bbX=[X_{ij}]_{p\times n}\) be a random matrix with size of \(p\times n\), whose entries \(\{X_{ij}\}\) are independent complex random variables with mean zero, variance one, and finite fourth moments, and \(|X_{ij}|\leq \eta_n n^{1/2}\), where \(\eta_n\to0\) and \(\eta_n n^{1/4}\to\infty\). If \(\frac{p}{n}\to y>0\), then for any \(x>(1+\sqrt{y})^2\) and \(l>0\), the spectral norm of \(\bbS_n=n^{-1}\bbX\bbX^*\) satisfies that
	$$\mathbb{P}\left(\Vert\bbS_n\Vert>x\right)\leq \mathrm{o}(n^{-l}).$$
\end{lem}
Now, we first prove the following lemma:
\begin{lem}\label{Lem of bound spectral norm}
	Under Assumptions {\rm \ref{Ap of highdimensionality}, \ref{Ap of noise}} and {\rm \ref{Ap of effective rank}}, then
	$$\mathbb{P}\big(n^{-1/2}\Vert\Psi^*(L)\bbve\bbM\Vert>C_{B,c}\big)=\mathrm{O}(n^{-1/4}),$$
    where $\Psi^*(L)$ is defined in \eqref{Eq of BN decomposition} and $\bbve=[\varepsilon_1,\cdots,\varepsilon_T]$ defined in \eqref{Eq of random walk covariance}.
\end{lem}
\begin{proof}
	Since \(\Vert\bbM\Vert=1\), then by the triangle inequality, it yields that
	\begin{align}
		\Vert\Psi^*(L)\bbve\bbM\Vert\leq\Vert\Psi^*(L)\bbve\Vert\leq\sum_{k=0}^T\Vert\Psi_k^*\Vert\Vert\bbve_{-k}\Vert+\Vert r_T\Vert,\label{Eq of bound e+ rT}
	\end{align}
	where \(\bbve_{-k}:=[\varepsilon_{1-k},\cdots,\varepsilon_{T-k}]\in\mbR^{n\times T}\) and \(r_T:=\sum_{k=T+1}^{\infty}\Psi_k^*\bbve_{-k}\). it is easy to see that \(\Vert\bbve_{-k}\Vert\leq\Vert\bbve_+\Vert\), where \(\bbve_+:=[\varepsilon_{1-T},\cdots,\varepsilon_T]\in\mbR^{n\times 2T}\), and
	$$\sum_{k=0}^T\Vert\Psi_k^*\Vert\Vert\bbve_{-k}\Vert\leq\Vert\bbve_+\Vert\sum_{k=0}^T\Vert\Psi_k^*\Vert\leq\Vert\bbve_+\Vert\sum_{k=0}^Tk\Vert\Psi_k\Vert\leq B\Vert\bbve_+\Vert,$$
	where we use the definition \(\Psi_k^*:=-\sum_{i=k+1}^{\infty}\Psi_i\) and Assumption \ref{Ap of effective rank}. Next, define an event
	$$\mathcal{E}:=\{|\varepsilon_{it}|\leq n^{3/8}:i=1,\cdots,n;t=1-T,\cdots,T\},$$
	then
	$$\mathbb{P}(\mathcal{E}^c)\leq\sum_{i=1}^n\sum_{t=1-T}^T\mathbb{P}(|\epsilon_{it}|>n^{3/8})\leq\kappa_6\sum_{i=1}^n\sum_{t=1-T}^T n^{-9/4}=\mathrm{O}(n^{-1/4}),$$
	where we use Assumption \ref{Ap of noise} and following Chebyshev's inequality:
	$$\mathbb{P}(|\epsilon_{it}|>n^{3/8})\leq n^{-9/4}\mathbb{E}[|\varepsilon_{it}|^6]\leq\kappa_6n^{-9/4}.$$
	Then for any \(x>(1+\sqrt{c/2})^2\), we have that
	$$\mathbb{P}((2T)^{-1}\Vert\bbve_+\bbve_+'\Vert>x)\leq\mathbb{P}((2T)^{-1}\Vert\bbve_+\bbve_+'\Vert>x|\mathcal{E})+\mathbb{P}(\mathcal{E}^c)=\mathrm{O}(n^{-1/4}),$$
	where we use the fact \(\varepsilon_{it}\) are independent with zero mean, unit variance and finite fourth moment, then apply Lemma \ref{Thm of Extreme eigenvalue} for \(l=1/4\), it further implies that
	\begin{align}
		\mathbb{P}(n^{-1/2}\Vert\bbve_+\Vert>x)=\mathrm{O}(n^{-1/4})\label{Eq of bound e+}
	\end{align}
	for any \(x>1+\sqrt{2/c}\). It remains to bound \(\Vert r_T\Vert\), since \(\Psi_k^*\) are all diagonal, then given a unit vector \(y=(y_1,\cdots,y_T)'\in\mathbb{R}^T\), we have that
	$$\mathbb{E}[\Vert r_T y\Vert_2^2]=\sum_{i=1}^n\sum_{t_1,t_2=1}^T y_{t_1}y_{t_2}\mathbb{E}[(r_T)_{it_1}(r_T)_{it_2}]\leq\sum_{i=1}^n\sum_{t_1,t_2=1}^T |y_{t_1}y_{t_2}|\mathbb{E}[(r_T)_{it_1}^2+(r_T)_{it_2}^2]/2,$$
	by Assumption \ref{Ap of noise} and \ref{Ap of effective rank}, we can obtain that
	$$\mathbb{E}[(r_T)_{it}^2]=\mathbb{E}\left[\left(\sum_{k=T+1}^{\infty}(\Psi_k^*)_{ii}\varepsilon_{i,t-k}\right)^2\right]=\sum_{k=T+1}^{\infty}(\Psi_k^*)_{ii}^2\leq\left(\sum_{k=T+1}^{\infty}|(\Psi_k^*)_{ii}|\right)^2$$
	and
	$$\sum_{k=T+1}^{\infty}(\Psi_k^*)_{ii}\leq T^{-1}\sum_{k=T+1}^{\infty}k|(\Psi_k^*)_{ii}|\leq BT^{-1}.$$
	Therefore,
	$$\sum_{i=1}^n\sum_{t_1,t_2=1}^T |y_{t_1}y_{t_2}|\mathbb{E}[(r_T)_{it_1}^2+(r_T)_{it_2}^2]/2\leq C_{B,c}T^{-1}\left(\sum_{t=1}^T|y_t|\right)^2=\mathrm{O}(1)$$
	and
	$$\mathbb{E}[\Vert r_T\Vert^2]=\sup_{\Vert y\Vert_2=1}\mathbb{E}[\Vert r_T y\Vert_2^2]=\mathrm{O}(1),$$
	by the Chebyshev's inequality, it yields that
	\begin{align}
		\mathbb{P}(\Vert r_T\Vert>\sqrt{n})=\mathrm{O}(n^{-1}).\label{Eq of bound rT}
	\end{align}
	As a result, for any \(x>1+\sqrt{2/c}\), combine with (\ref{Eq of bound e+ rT}), (\ref{Eq of bound e+}) and (\ref{Eq of bound rT}), we can conclude that
	$$\mathbb{P}\big(n^{-1/2}\Vert\Psi^*(L)\bbve\bbM\Vert>Bx+1\big)\leq\mathbb{P}(\Vert r_T\Vert>\sqrt{n})+\mathbb{P}(n^{-1/2}\Vert\bbve_+\Vert>x)=\mathrm{O}(n^{-1/4}),$$
	which completes our proof.
\end{proof}
Finally, let's give the proof for Theorem \ref{Thm of original CLT}.
\begin{proof}[Proof of Theorem \ref{Thm of original CLT}]
	By (\ref{Eq of BN decomposition}), it gives that
	\begin{align*}
		&\big|\hat{\chi}_k^{1/2}-\tichi_k^{1/2}\big|\leq n^{-1/2}\Vert\bbX\bbM-\Psi(1)\bbve\bbU\bbM\Vert=n^{-1/2}\Vert\Psi^*(L)\bbve\bbM\Vert,
	\end{align*}
	then
	$$\frac{|\hat{\chi}_k-\tichi_k|}{\sqrt{n}\tr(\bbW)}\leq\frac{\Vert\Psi^*(L)\bbve\bbM\Vert|\hat{\chi}_k^{1/2}+\tichi_k^{1/2}|}{n\tr(\bbW)}.$$
	According to Lemma \ref{Lem of bound spectral norm},
	$$\mathbb{P}\big(n^{-1}\Vert\Psi^*(L)\bbve\bbM\Vert^2\leq C_{B,c}\big)\geq1-\mathrm{O}_{\mathbb{P}}(n^{-1/4}),$$
	so
	$$\frac{|\hat{\chi}_k-\tichi_k|}{\sqrt{n}\tr(\bbW)}\leq C_{B,c}\frac{|\hat{\chi}_k^{1/2}+\tichi_k^{1/2}|}{\sqrt{n}\tr(\bbW)}$$
	with probability at least of \(1-\mathrm{O}_{\mathbb{P}}(n^{-1/4})\). By (\ref{Eq of BN covariance matrix}) and Lemma \ref{Thm of Extreme eigenvalue}, we know that
	$$|\tichi_k|^{1/2}\leq n^{-1/2}\Vert\tilde{\bbSig}\Vert^{1/2}\leq C_{B,c}n$$
	with probability at least of \(1-\mathrm{O}_{\mathbb{P}}(n^{-1/4})\). Similarly, by (\ref{Eq of BN decomposition}), we have
	$$|\hat{\chi}_k|^{1/2}\leq n^{-1/2}\Vert\hat{\bbSig}\Vert^{1/2}\leq n^{-1/2}\Vert\tilde{\bbSig}\Vert^{1/2}+n^{-1/2}\Vert\Psi^*(L)\bbve\bbM\Vert\leq C_{B,c}n$$
	with probability at least of \(1-\mathrm{O}_{\mathbb{P}}(n^{-1/4})\). Finally, since \(\tr(\bbW)\geq nb\) by Assumption \ref{Ap of effective rank}, then
	$$\frac{|\hat{\chi}_k-\tichi_k|}{\sqrt{n}\tr(\bbW)}\leq C_{B,c}\frac{|\hat{\chi}_k^{1/2}+\tichi_k^{1/2}|}{\sqrt{n}\tr(\bbW)}\leq C_{B,b,c}n^{-1/2}$$
	with probability at least of \(1-\mathrm{O}_{\mathbb{P}}(n^{-1/4})\). Finally, by Assumption \ref{Ap of effective rank}, we can conclude that
    $$\tr(\bbW)/\tr(\bbW^2)^{1/2}\asymp\mrO(\sqrt{n}),$$
    which completes our proof.
\end{proof}
\section{Asymptotic spectral properties of sample correlation matrices of high-dimensional autoregressive processes with cross-sectional dependence}\label{Sec of AR process}
\setcounter{equation}{0}
\def\theequation{\thesection.\arabic{equation}}
\setcounter{subsection}{0}
In this section, we will further investigate the asymptotic spectral properties of the sample correlation matrix of high-dimensional autoregressive (AR) processes. First, let's make some notations here. Let $X_t$ be $n$-dimensional stochastic process generated by a \(d\)-th (\(d\in\mbN^+\) is a fixed integer) order AR process as follows:
\begin{align}
	X_t+\sum_{l=1}^da_lX_{t-l}=e_t,\quad e_t=\bbGa\fe_t=\bbGa\sum_{k=0}^{\infty}\Psi_k\varepsilon_{t-k},\quad\varepsilon_t\overset{i.i.d.}{\sim}\mcN(\boldsymbol{0},\bbI_n),\label{Eq of AR process}
\end{align}
where \(\{\Psi_k:k\in\mbN\}\) and \(\bbGa\) satisfy Assumptions \ref{Ap of panel lag polynomial} and \ref{Ap of nonpanel}, respectively. Next, the characteristic polynomial of (\ref{Eq of AR process}) is defined as
\begin{align}
	f_X(z)=z^d+\sum_{l=1}^da_lz^{d-l}.\label{Eq of characteristic polynomial}
\end{align}
and let \(\mfa_1,\cdots,\mfa_d\) be the roots of \(f_X(z)=0\), i.e. \(f_X(z)=\prod_{l=1}^d(z-\mfa_l)\). Then, we can rewrite (\ref{Eq of AR process}) as follows:
\begin{align}
    &\prod_{l=1}^d(1-\mfa_lL)X_t=e_t,\label{Eq of AR roots}
\end{align}
where \(L\) is the time lag operator. For simplicity, we still denote \(\hR\) \eqref{Eq of correlation matrix} to be the sample correlation matrix of \(\bbX=[X_1,\cdots,X_T]\) generated by (\ref{Eq of AR process}). Based on whether \(\mfa_l\) is inside, onside or outside the unit circle, we classify all \(\mfa_l\) into three classes. Precisely, we call \(\mfa_l\) is a
\begin{enumerate}
	\item {\bf stationary roots} if \(|\mfa_l|<1\);
	\item {\bf nonstationary roots} if \(|\mfa_l|=1\);
	\item {\bf super nonstationary roots} if \(|\mfa_l|>1\).
\end{enumerate} 
In this section, we will show that 
\begin{thm}\label{Thm of AR spectral properties}
    Under Assumptions {\rm \ref{Ap of highdimensionality}, \ref{Ap of panel lag polynomial}, \ref{Ap of nonpanel}} and \eqref{Eq of Ap of panel lag polynomial}, let $\hR$ be the sample correlation matrix \eqref{Eq of correlation matrix} of $\bbX=[X_1,\cdots,X_T]$ generated by a $d$-th order AR process \eqref{Eq of AR process}, let $\mfa_1,\cdots,\mfa_d$ be roots of $X_t$'s characteristic polynomial \eqref{Eq of characteristic polynomial} and $\hla_1\geq\cdots\geq\hla_T$ be eigenvalues of $\hR$, then 
        \begin{enumerate}
            \item if all $\mfa_l$ are stationary, we have $n^{-1}\Vert\hR\Vert\overset{\mbP}{\longrightarrow}0$;
            \item if at least one $\mfa_l$ is super nonstationary, we have $\limsup_{n\to\infty}{\rm rank}(\hR)\leq d$, and there exists a positive constant $C=C_{d,B,b,M_0,m_0}\in(0,1)$ such that $\lim_{n\to\infty}\mbP(n^{-1}\Vert\hR\Vert>C)=1.$
        \end{enumerate}
        Moreover, suppose the noise process $e_t$ in \eqref{Eq of AR process} satisfies that $e_t=\bbGa\varepsilon_t$, where $\varepsilon_t\overset{i.i.d.}{\sim}\mcN(\boldsymbol{0},\bbI_n)$, and $\bbGa$ satisfies Assumption {\rm \ref{Ap of m dependent}}, then for any $K\in\mbN^+$, 
        \begin{itemize}
            \item[3.] if all $\tau_l$ are {\bf not} super nonstationary and at least one $\mfa_l$ is nonstationary, we have $\bbm_{k,n}^{-1}(\hla_k/n-\mbE[\mbM_{k,n}])\overset{d}{\longrightarrow}\mcN(0,1)$ for $1\leq k\leq K$, where the precise definitions of $\bbm_{k,n}$ and $\mbM_{k,n}$ are postponed to Proposition \ref{Thm of CLT multiple unit roots} later.
        \end{itemize}
\end{thm}
Before proving the above Theorem, we present some necessary notations here. Recall that \(\bbX=\bbe\bbU\) in (\ref{Eq of bbX}) when \(X_t\) is generated by a random walk (\ref{Eq of nonpanel Xt}). Similarly, when \(X_t\) is an AR process generated by (\ref{Eq of AR process}), we can still represent \(\bbX\) by \(\bbe\) times a upper toeplitz matrix. Precisely, for any \(x\in\mbC\), denote
\begin{align}
	\mcT(x)=\left(\begin{array}{ccccc}
		1&-x&0&\cdots&\\
		0&1&-x&0&\cdots\\
		&\ddots&\ddots&\ddots&\\
		&\cdots&0&1&-x\\
		&&\cdots&0&1
	\end{array}\right)\label{Eq of mcT(a)}
\end{align}
to be a upper toeplitz matrix with main diagonals are 1 and the fist sup-diagonals are \(-x\), while others are 0. It is easy to see that \(\mcT(x)\mcT(y)=\mcT(y)\mcT(x)\) for any \(x,y\in\mbC\), that is, the matrix \(\mcT(x)\) is commutative with any other \(\mcT(y)\) in multiplication. Moreover, by the Vieta's theorem, we can further show that
\begin{align}
    \prod_{l=1}^d\mcT(\mfa_l)=\left(\begin{array}{ccccccc}
	1&a_1&\cdots&a_d&0&\cdots&0\\
	0&1&a_1&\cdots&a_d&0&\\
	&\ddots&\ddots&\ddots&&\ddots&\\
	0&\cdots&0&1&a_1&\cdots&a_d\\
	&&&\ddots&\ddots&\ddots&\\
	0&&&\cdots&0&1&a_1\\
	0&&&&\cdots&0&1
\end{array}\right),\label{Eq of bbT}
\end{align}
where \(\prod_{l=1}^d\mcT(\mfa_l)\) is uniquely determined and independent of the sequence of \(\mcT(\mfa_l)\) in multiplication. Now, let \(X_0=\cdots=X_{1-d}=\boldsymbol{0}\), 
we have
$$\bbX\prod_{l=1}^{d-1}\mcT(\mfa_l)=\bbe\quad\Longrightarrow\quad\bbX=\bbe\prod_{l=1}^{d-1}\mcT(\mfa_l)^{-1},$$
where \(\bbX=[X_1,\cdots,X_t]\) and \(\bbe=[e_1,\cdots,e_T]\). Here, \(\mcT(\mfa_l)^{-1}\) exists due to all eigenvalues of \(\mcT(\mfa_l)\) are 1. For simplicity, denote 
\begin{align}
    \mbT:=\prod_{l=1}^d\mcT(\mfa_l),\quad\mbU:=\mbT^{-1}.\label{Eq of mbT and mbU}
\end{align}
Thus, according to \eqref{Eq of correlation matrix}, the sample correlation matrix of $\bbX$ will be
\begin{align}
    \hR=\bbM\mbU'\bbX'\diag(\bbX\mbU\bbM\mbU'\bbX')^{-1}\bbX\mbU\bbM.\label{Eq of correlation matrix AR process}
\end{align}
Similar as proofs of Theorems \ref{Thm of CLT} and \ref{Thm of CLT I1}, the key idea of proving Theorem \ref{Thm of AR spectral properties} is to leverage the SVD of $\bbM\mbU'$ to represent the eigenvalues of $\hR$ in \eqref{Eq of correlation matrix AR process}. In the following three subsections, we will prove the three situations in Theorem \ref{Thm of AR spectral properties}, respectively.
\subsection{Autoregressive processes with stationary roots}\label{sec of stationary roots}
In this subsection, let's assume that all $\mfa_l$ in \eqref{Eq of AR roots} are stationary, i.e. $|\mfa_l|<1$, then we will prove that $n^{-1}\Vert\hR\Vert\overset{\mbP}{\longrightarrow}0$. First, we need the following lemma:

\begin{lem}\label{Thm of stationary roots}
	Under Assumptions {\rm \ref{Ap of highdimensionality}, \ref{Ap of panel lag polynomial}} and \eqref{Eq of Ap of panel lag polynomial}, let \(\fe=[\fe_1,\cdots,\fe_T]\) be generated by \(\fe_t=\sum_{k=0}^{\infty}\Psi_k\varepsilon_{t-k}\) and \(\varepsilon_t\overset{i.i.d.}{\sim}\mcN(\boldsymbol{0},\bbI_n)\). For any given integer \(d>1\) and \(\mfa_1,\cdots,\mfa_d\) such that \(|\mfa_l|<1\) for all \(1\leq l\leq d\), define 
	$$\tilde{\fe}:=\fe\prod_{l=1}^d\mcT(\mfa_l)^{-1}=[\tilde{\fe}_1,\cdots,\tilde{\fe}_T],$$
	then there exists \(\{\tilde{\Psi}_k:k\in\mbN\}\) satisfying Assumption {\rm \ref{Ap of panel lag polynomial}} and \eqref{Eq of Ap of panel lag polynomial} such that \(\tilde{\fe}_t:=\sum_{k=0}^{\infty}\tilde{\Psi}_k\varepsilon_{t-k}\).
\end{lem}
\begin{proof}
	We will inductively prove this claim. First, consider \(\tilde{\fe}=\fe\mcT(\mfa_1)^{-1}\), i.e. \(\mcT(\mfa_1)\tilde{\fe}=\fe\), then
	$$\tilde{\fe}_t=\mfa_1\tilde{\fe}_{t-1}+\fe_t=\sum_{k=0}^{\infty}\mfa_1^k\fe_{t-k}=\sum_{k=0}^{\infty}\mfa_1^k\sum_{l=0}^{\infty}\Psi_l\varepsilon_{t-k-l}=\sum_{k=0}^{\infty}\left(\sum_{l=0}^k\mfa_1^{k-l}\Psi_l\right)\varepsilon_{t-k}.$$
	Hence, \(\tilde{\Psi}_k=\sum_{l=0}^k\mfa_1^{k-l}\Psi_l\) and it is easy to see that 
	\begin{align*}
		&\sum_{k=0}^{\infty}(1+k)^2\Vert\tilde{\Psi}_k\Vert\leq\sum_{k=0}^{\infty}(1+k)^2\sum_{l=0}^k|\mfa_1|^{k-l}\Vert\Psi_l\Vert=\sum_{k=0}^{\infty}\Vert\Psi_k\Vert\sum_{l=k}^{\infty}(1+l)^2|\mfa_1|^l\\
		&<C_{\mfa_1}\sum_{k=0}^{\infty}(1+k)^2\Vert\Psi_k\Vert<C_{\mfa_1}B.
	\end{align*}
	On the other hand, denote \(\tilde{\Psi}_k=\diag(\tilde{\varphi}_{1,k},\cdots,\tilde{\varphi}_{n,k})\), then
	$$\min_j\inf_{x\in[0,2\pi]}\left|\sum_{k=0}^{\infty}\tilde{\varphi}_{j,k}e^{{\rm i}tx}\right|=|1-\mfa_1|^{-1}\min_j\inf_{x\in[0,2\pi]}\left|\sum_{k=0}^{\infty}\varphi_{j,k}e^{{\rm i}tx}\right|>0,$$
	so \(\{\tilde{\Psi}_k:k\in\mbN\}\) satisfies Assumption \ref{Ap of panel lag polynomial}. Now, for any fixed general \(d\), we can repeat the following procedure and inductively prove \(\{\tilde{\Psi}_k:k\in\mbN\}\) satisfies Assumption \ref{Ap of panel lag polynomial}.
\end{proof}
\begin{remark}
    Importantly, the above Lemma shows that stationary roots will not change the asymptotic behaviors of extreme eigenvalues of the sample correlation matrix $\hR$ \eqref{Eq of correlation matrix AR process} of \(\bbX\). Suppose the roots of \(X_t\)'s characteristic polynomials are \(\mfa_1,\cdots,\mfa_{d-1},1\), where \(|\mfa_l|<1\) for \(l<d\). Since we know that \(\bbX=\bbGa\fe\prod_{l=1}^d\mcT(\mfa_l)^{-1}\), where the product \(\prod_{l=1}^d\mcT(\mfa_l)^{-1}\) is independent of the order \(\mcT(\mfa_l)^{-1}\), then we rewrite \(\bbX\) as follows:
	$$\bbX=\bbGa\left(\fe\prod_{l=1}^{d-1}\mcT(\mfa_l)^{-1}\right)\mcT(1)^{-1}=\bbGa\tilde{\fe}\bbU,$$
	where \(\bbU=[1_{s\leq t}]_{s,t}\in\mbR^{T\times T}\) and \(\tilde{\fe}\) satisfies Assumption \ref{Ap of panel lag polynomial}. Therefore, given other Assumptions \ref{Ap of highdimensionality}, \ref{Ap of nonpanel} and \ref{Ap of m dependent}, we can conclude Theorem \ref{Thm of CLT I1} again under this situation.
\end{remark}
Finally, by \eqref{Eq of AR process}, since $\bbX=\bbGa\fe\mbU$, where $\fe=[\fe_1,\cdots,\fe_T]$, by Lemma \ref{Thm of stationary roots}, we know that $\tilde{\fe}_t$ is a stationary noise process satisfying Assumption \ref{Ap of panel lag polynomial} and \eqref{Eq of Ap of panel lag polynomial}, where $\tilde{\fe}=\fe\mbU=[\tilde{\fe}_1,\cdots,\tilde{\fe}_T]$. Hence, $\bbX=\bbGa\tilde{\fe}$, it gives that $X_t=\bbGa\tilde{\fe}_t$, then we can apply Theorem \ref{Thm of H1 norm} (set $\bbPi=\boldsymbol{0}$) later to conclude the first claim in Theorem \ref{Thm of AR spectral properties}.
\subsection{Autoregressive processes with super nonstationary roots}\label{sec of super nonstationary roots}
In this subsection, we assume that at least one of $\mfa_1,\cdots,\mfa_d$ in \eqref{Eq of AR roots} is super nonstationary, that is, there exists at least one $|\mfa_l|>1$. Next, we will show that
\begin{pro}\label{Thm of totally nonstationary roots}
	Under Assumptions {\rm \ref{Ap of highdimensionality}, \ref{Ap of panel lag polynomial}} and {\rm \ref{Ap of nonpanel}}, suppose \(X_t\) is an \(n\)-dimensional AR process generated by {\rm (\ref{Eq of AR process})} and its characteristic polynomial {\rm (\ref{Eq of characteristic polynomial})} has \(d_0\) super nonstationary roots, where \(1\leq d_0\leq d\). Without loss of generality, assume \(\mfa_1=\max_{1\leq l\leq d}|\mfa_l|\). Let \(\hla_k\) be the \(k\)-th largest eigenvalues of the sample correlation matrix \(\hR\) \eqref{Eq of correlation matrix AR process} of \(\bbX=[X_1,\cdots,X_T]\), then
	\begin{align*}
	    &\mbP(\hla_k>\mrO(|\mfa_1|^{-T/16}))\leq\mrO(|\mfa_1|^{-T/16}),\quad \forall k>d_0,\quad\text{and}\quad\lim_{n\to\infty}\mbP(n^{-1}\hla_1>C_{d_0,B,b,M_0,m_0})=1.
	\end{align*}
\end{pro}
Before presenting the proof of Proposition \ref{Thm of totally nonstationary roots}, we make the following notations. For a matrix $A\in\mbR^{n\times m}$ such that $n\leq m$, we denote $\sigma_t(A)$ to be the $t$-th largest singular value of $A$, where $1\leq t\leq n$. Moreover, if $A\in\mbR^{n\times n}$ is symmetric, we denote $\lambda_t(A)$ to be $t$-th largest eigenvalue of $A$.
\begin{proof}[Proof of Proposition \ref{Thm of totally nonstationary roots}]
	Without loss of generality, let \(|\mfa_1|\geq\cdots\geq|\mfa_{d_0}|>1\), where \(1\leq d_0\leq d\), and \(|\mfa_l|\leq1\) for \(d_0<l\leq d\). Define 
    $$\mbT_1:=\prod_{l=1}^{d_0}\mcT(\mfa_l),\quad\mbU_1:=\mbT_1^{-1},\quad\mbT_2:=\prod_{l=d_0+1}^d\mcT(\mfa_l),\quad\mbU_2:=\mbT_2^{-1}.$$
    Let's first investigate the singular values of \(\mbU_1\bbM\). Recall that $\mfa_1,\cdots,\mfa_{d_0}$ are all super nonstationary, then denote 
	$$\prod_{l=1}^{d_0}(x-\mfa_l)=x^{d_0}+\sum_{l=1}^{d_0}b_lx^{d_0-l},$$
	and \(\mfb_0:=1+\sum_{l=1}^{d_0}b_l^2\) and \(\mfb_k:=b_k+\sum_{l=1}^{d_0-k}b_lb_{l+k}\) for \(1\leq k\leq d_0\). Here, we define a \(T\times T\) symmetric banded toeplitz matrix as follows:
	$$\bbT={\rm Toeplitz}(\mfb_0,\cdots,\mfb_{d_0},0,\cdots,0),$$
	where the main diagonals of $\bbT$ are \(\mfb_0\) and \(k\)-th sub and sup-diagonals are \(\mfb_k\), while others are 0. Furthermore, we define the characteristic function of $\bbT$ as follows:
	$$g(x)=\mfb_0+\sum_{k=1}^{d_0}\mfb_k(e^{{\rm i}kx}+e^{-{\rm i}kx})=\left|e^{-{\rm i}d_0x}+\sum_{l=1}^{d_0}b_le^{-{\rm i}kx}\right|^2=\prod_{l=1}^{d_0}\left|e^{-{\rm i}x}-\mfa_l\right|^2,$$
	it is easy to see that \(g(x)\geq\prod_{l=1}^{d_0}\left|1-\mfa_l\right|^2>0\), then by Lemma 4.1 in \cite{gray2006toeplitz}, we know that 
	\begin{align}
	    \prod_{l=1}^{d_0}\left|1-|\mfa_l|\right|^2\leq\lambda_T(\bbT)\leq\lambda_1(\bbT)\leq\prod_{l=1}^{d_0}\left|1+|\mfa_l|\right|^2,\label{Eq of bbT singular value bound}
	\end{align}
	where \(\lambda_t(\bbT)\) represents the \(t\)-th largest eigenvalue of \(\bbT\). Moreover, notice that
	\begin{align*}
		\mbT_1'\mbT_1+\sum_{k=1}^{d_0}\mfv_k\mfv_k'=\bbT,
	\end{align*}
	where \(\mfv_k=(b_k,\cdots,b_1,0,\cdots,0)'\in\mbR^T\) for \(k=1,\cdots,d_0\). According to Theorem 1.1 in \cite{tyrtyshnikov1996unifying}, we have
	$$\lambda_T(\bbT-\mfv_1\mfv_1')\leq\lambda_T(\bbT)\leq\lambda_{T-1}(\bbT-\mfv_1\mfv_1')\leq\lambda_{T-1}(\bbT)\leq\cdots\leq\lambda_1(\bbT-\mfv_1\mfv_1')\leq\lambda_1(\bbT),$$
	Inductively, we can obtain that
	$$\lambda_T(\mbT_1'\mbT_1)\leq\lambda_{T-1}(\mbT_1'\mbT_1)\leq\cdots\leq\lambda_{T-d_0+1}(\mbT_1'\mbT_1)\leq\lambda_T(\bbT)\leq\cdots,$$
	i.e.
	$$\lambda_1(\mbU_1\mbU_1')\geq\lambda_2(\mbU_1\mbU_1')\geq\cdots\geq\lambda_{d_0}(\mbU_1\mbU_1')\geq\lambda_T(\bbT)^{-1}\geq\cdots.$$
	Notice that \(\mbU_1\bbM\mbU_1'=\mbU_1\mbU_1'-\mbU_1\boldsymbol{1}_T\boldsymbol{1}_T'\mbU_1'/T\), then by Theorem 1.1 in \cite{tyrtyshnikov1996unifying}, we have
	$$\lambda_t(\mbU_1\mbU_1')\geq\lambda_t(\mbU_1\bbM\mbU_1'),\quad t=1,\cdots,T.$$
	Hence, let \(\sigma_t(\mbU_1\bbM)\) be the \(t\)-th largest singular value of \(\mbU_1\bbM\), then the number of \(\sigma_t(\mbU_1\bbM)\) greater than \(\lambda_T(\bbT)^{-1/2}\) is at most \(d_0\), i.e.
	\begin{align}
	    \sigma_1(\mbU_1\bbM)\geq\sigma_2(\mbU_1\bbM)\geq\cdots\sigma_{d_0}(\mbU_1\bbM)\geq\lambda_T(\bbT)^{-1/2}\geq\sigma_{d_0+1}(\mbU_1\bbM)\geq\cdots.\label{Eq of singular value inequality 1}
	\end{align}
	On the other hand, notice that 
	$$\mcT(\mfa_l)^{-1}=[1_{s\leq t}\mfa_l^{t-s}]_{s,t}\in\mbR^{T\times T}$$
	is an upper toeplitz matrix with main diagonals are \(1\) and the \(k\)-th sup-diagonals are \(\mfa_l^k\). Hence, it is easy to see that
	\begin{align*}
		\sigma_1(\mcT(\mfa_l)^{-1}\bbM)^2\geq\sum_{t=0}^{T-1}|\mfa_l|^{2t}/2=\frac{\mfa_l^{2T}-1}{2(\mfa_l^2-1)}\quad\Longrightarrow\quad\sigma_1(\mcT(\mfa_l)^{-1}\bbM)\geq\mrO(|\mfa_l|^{T-1}).
	\end{align*}
	Moreover, by Lemma 4.1 in \cite{gray2006toeplitz}, we have 
    $$|\mfa_l|-1<\sigma_T(\mcT(\mfa_l))\leq\sigma_1(\mcT(\mfa_l))<1+|\mfa_l|$$
    then it gives that $\sigma_T(\mcT(\mfa_l)^{-1})\geq\sigma_1(\mcT(\mfa_l))^{-1}>(1+|\mfa_l|)^{-1}$. So we can further conclude that
	$$\sigma_1(\mbU_1\bbM)\geq\sigma_1(\mcT(\mfa_1)^{-1}\bbM)\prod_{l=2}^{d_0}\sigma_T(\mcT(\mfa_l)^{-1})\geq C|\mfa_1|^{T-1}\prod_{l=2}^{d_0}(1+|\mfa_l|)^{-1}.$$
	Finally, for \(\mbT_2=\prod_{l=d_0+1}^d\mcT(\mfa_l)\), where \(|\mfa_l|\leq 1\) and \(l=d_0+1,\cdots,d\), since \(-\mfa_l^{-1}\mcT(\mfa_l)\) is a Jordan matrix, by Theorem 1 in \cite{erxiong1994bounds} and Lemma 4.1 in \cite{gray2006toeplitz}, we can imply that
	$$T^{-1}\leq\sigma_T(\mcT(\mfa_l))\leq\sigma_1(\mcT(\mfa_l))\leq2,$$
	then 
	$$\sigma_1(\mbU_2)\leq T^{d-d_0},\quad\sigma_T(\mbU_2)\geq2^{d_0-d}.$$
	Since $\mbU_1\mbU_2=\mbU_2\mbU_1$, we have
	\begin{align*}
		&\sigma_t(\mbU_1\mbU_2\bbM)\leq\sigma_1(\mbU_2)\sigma_t(\mbU_1\bbM)\leq T^{d-d_0}\sigma_t(\mbU_1\bbM),\\
		&\sigma_1(\mbU_1\mbU_2\bbM)\geq\sigma_1(\mbU_1\bbM)\sigma_T(\mbU_1\mbU_2)\geq2^{d_0-d}\sigma_1(\mcG^{-1}\bbM).
	\end{align*}
	Now, define
	$$\tsigma_t(\mbU_1\mbU_2\bbM)=\frac{\sigma_t(\mbU_1\mbU_2\bbM)}{\sigma_1(\mbU_1\mbU_2\bbM)},$$
	by \eqref{Eq of bbT singular value bound} and \eqref{Eq of singular value inequality 1}, we have \(\sigma_t(\mbU_1\bbM)\leq\prod_{l=1}^{d_0}|1-|\mfa_l||^{-1}\) for \(t>d_0\), then for sufficiently large \(T\), we have
	$$\tsigma_t(\mbU_1\mbU_2\bbM)\leq C\frac{(T/2)^{d-d_0}\prod_{l=1}^{d_0}|1-|\mfa_l||^{-1}}{|\mfa_1|^{T-1}\prod_{l=2}^{d_0}(1+|\mfa_l|)^{-1}}\leq C_{\mfa}|\mfa_1|^{-T/2},\quad t>d_0,$$
	where we use the fact that \(|\mfa_1|>1\). Next, let the SVD of \(\mbU_1\mbU_2\bbM\) be 
	$$\mbU_1\mbU_2\bbM=\sum_{t=1}^T\sigma_t(\mbU_1\mbU_2\bbM)\vec{\mfv}_t\vec{\mfw}_t',$$
	then by \eqref{Eq of correlation matrix AR process}, it yields that
	\begin{align*}
		\hR&=\bbM\bbX'\diag(\bbX\bbM\bbX')^{-1}\bbX\bbM\\
		&=\bbM\mbU_2'\mbU_1'\fe'\bbGa'\diag(\bbGa\fe\mbU_1\mbU_2\bbM\mbU_2'\mbU_1'\fe'\bbGa')^{-1}\bbGa\fe\mbU_1\mbU_2\bbM\\
		&=\sum_{k,l=1}^T\vec{\mfw}_k\vec{\mfw}_l'\sum_{i=1}^n\frac{\tsigma_k(\mbU_1\mbU_2\bbM)\tsigma_l(\mbU_1\mbU_2\bbM)(\bbGa_i\fe\vec{\mfv}_k)(\bbGa_i\fe\vec{\mfv}_l)}{\sum_{t=1}^T\tsigma_t(\mbU_1\mbU_2\bbM)^2(\bbGa_i\fe\vec{\mfv}_t)^2},
	\end{align*}
	where \(\fe=[\fe_1,\cdots,\fe_T]\) is defined in \eqref{Eq of AR process} and $\bbGa_i$ is the $i$-th row of $\bbGa$. Next, we will show that
	$$\sum_{k>d_0\ {\rm or}\ l>d_0}^T\vec{\mfw}_k\vec{\mfw}_l'\sum_{i=1}^n\frac{\tsigma_k(\mbU_1\mbU_2\bbM)\tsigma_l(\mbU_1\mbU_2\bbM)(\bbGa_i\fe\vec{\mfv}_k)(\bbGa_i\fe\vec{\mfv}_l)}{\sum_{t=1}^T\tsigma_t(\mbU_1\mbU_2\bbM)^2(\bbGa_i\fe\vec{\mfv}_t)^2}\overset{\mbP}{\longrightarrow}\boldsymbol{0}_{T\times T}.$$
	Here, let's define an event
	$$\mcE(1):=\left\{\big|\bbGa_i\fe\vec{\mfv}_1\big|>|\mfa_1|^{-T/8}:i=1,\cdots,n\right\},$$
	it is easy to see that
	$$\mbP\big(\mcE(1)^c\big)\leq\sum_{i=1}^n\mbP\left(\big|\bbGa_i\fe\vec{\mfv}_1\big|\leq|\mfa_1|^{-T/8}\right)\leq C_{B,M_0}n|\mfa_1|^{-T/8}.$$
	Moreover, since
	$$\mbE\left[\frac{|\bbGa_i\fe\vec{\mfv}_k\bbGa_i\fe\vec{\mfv}_l|^2}{\big(\sum_{t=1}^T\tsigma_t(\mbU_1\mbU_2\bbM)^2(\bbGa_i\fe\vec{\mfv}_t)^2\big)^2}\Bigg|\mcE(1)\right]\leq C_{B,M_0}|\mfa_1|^{T/2}\mbE\left[|\bbGa_i\fe\vec{\mfv}_k\bbGa_i\fe\vec{\mfv}_l|^2\right]\leq C_{B,M_0}|\mfa_1|^{T/2},$$
	then
	\begin{align*}
		&\mbE\left[\left\Vert\sum_{k>d_0\ {\rm or}\ l>d_0}^T\vec{\mfw}_k\vec{\mfw}_l'\sum_{i=1}^n\frac{\tsigma_k(\mbU_1\mbU_2\bbM)\tsigma_l(\mbU_1\mbU_2\bbM)(\bbGa_i\fe\vec{\mfv}_k)(\bbGa_i\fe\vec{\mfv}_l)}{\sum_{t=1}^T\tsigma_t(\mbU_1\mbU_2\bbM)^2(\bbGa_i\fe\vec{\mfv}_t)^2}\right\Vert_F^2\Bigg|\mcE(1)\right]\\
		&\leq\sum_{k>d_0\ {\rm or}\ l>d_0}^T\tsigma_k(\mbU_1\mbU_2\bbM)^2\tsigma_l(\mbU_1\mbU_2\bbM)^2\sum_{i=1}^n\mbE\left[\frac{|\bbGa_i\fe\vec{\mfv}_k\bbGa_i\fe\vec{\mfv}_l|^2}{\big(\sum_{t=1}^T\tsigma_t(\mbU_1\mbU_2\bbM)^2(\bbGa_i\fe\vec{\mfv}_t)^2\big)^2}\Bigg|\mcE(1)\right]\\
		&\leq C_{B,M_0}T^2n|\mfa_1|^{-T/2},
	\end{align*}
	and
	\begin{align*}
		&\mbE\left[\left\Vert\sum_{k>d_0\ {\rm or}\ l>d_0}^T\vec{\mfw}_k\vec{\mfw}_l'\sum_{i=1}^n\frac{\tsigma_k(\mbU_1\mbU_2\bbM)\tsigma_l(\mbU_1\mbU_2\bbM)(\bbGa_i\fe\vec{\mfv}_k)(\bbGa_i\fe\vec{\mfv}_l)}{\sum_{t=1}^T\tsigma_t(\mbU_1\mbU_2\bbM)^2(\bbGa_i\fe\vec{\mfv}_t)^2}\right\Vert_F^2\Bigg|\mcE(1)^c\right]\mbP(\mcE(1)^c)\\
		&\leq T^4n^2\mbP(\mcE(1)^c)\leq C_{B,M_0}T^4n^3|\mfa_1|^{-T/8},
	\end{align*}
	which implies that
	\begin{align}
	    \mbE\left[\left\Vert\sum_{k>d_0\ {\rm or}\ l>d_0}^T\vec{\mfw}_k\vec{\mfw}_l'\sum_{i=1}^n\frac{\tsigma_k(\mbU_1\mbU_2\bbM)\tsigma_l(\mbU_1\mbU_2\bbM)(\bbGa_i\fe\vec{\mfv}_k)(\bbGa_i\fe\vec{\mfv}_l)}{\sum_{t=1}^T\tsigma_t(\mbU_1\mbU_2\bbM)^2(\bbGa_i\fe\vec{\mfv}_t)^2}\right\Vert_F^2\right]\leq\mrO(|\tau_1|^{-T/8}).\label{Eq of finte rank super nonstationary}
	\end{align}
	Finally, define
	$$\hat{\mcR}:=\sum_{k,l=1}^{d_0}\vec{\mfw}_k\vec{\mfw}_l'\sum_{i=1}^n\frac{\tsigma_k(\mbU_1\mbU_2\bbM)\tsigma_l(\mbU_1\mbU_2\bbM)(\bbGa_i\fe\vec{\mfv}_k)(\bbGa_i\fe\vec{\mfv}_l)}{\sum_{t=1}^T\tsigma_t(\mbU_1\mbU_2\bbM)^2(\bbGa_i\fe\vec{\mfv}_t)^2},$$
	it is easy to see than \({\rm rank}(\hat{\mcR})\leq d_0\), by the Wielandt-Hoffman inequality (Theorem 2.5 in \cite{gray2006toeplitz}), it gives that
	$$\sum_{i=1}^{d_0}\big(\lambda_i(\hR)-\lambda_i(\hat{\mcR})\big)^2+\sum_{i=d_0+1}^T\lambda_i(\hR)^2\leq\Vert\hR-\hat{\mcR}\Vert_F^2,$$
	which implies that
	$$\lambda_i(\hR)\overset{L^2}{\longrightarrow}\lambda_i(\hat{\mcR}),\quad i=1,\cdots,d_0,\quad\lambda_i(\hR)\overset{L^2}{\longrightarrow}0,\quad i=d_0+1,\cdots,T,$$
    By \eqref{Eq of finte rank super nonstationary} and Markov's inequality, we can show that $\mbP(\hla_k>\mrO(|\mfa_1|^{-T/16}))\leq\mrO(|\mfa_1|^{-T/16})$. Finally, note that
    \begin{align}
        \hla_1=\Vert\hR\Vert\geq\vec{\mfw}_1'\hR\vec{\mfw}_1=\sum_{i=1}^n\frac{\tsigma_1(\mbU_1\mbU_2\bbM)^2(\bbGa_i\fe\vec{\mfv}_1)^2}{\sum_{t=1}^T\tsigma_t(\mbU_1\mbU_2\bbM)^2(\bbGa_i\fe\vec{\mfv}_t)^2},\label{Eq of super nonstationary inequality 1}
    \end{align}
    and
    \begin{align*}
        \frac{\tsigma_1(\mbU_1\mbU_2\bbM)^2(\bbGa_i\fe\vec{\mfv}_1)^2}{\sum_{t=1}^T\tsigma_t(\mbU_1\mbU_2\bbM)^2(\bbGa_i\fe\vec{\mfv}_t)^2}\geq\frac{\tsigma_1(\mbU_1\mbU_2\bbM)^2(\bbGa_i\fe\vec{\mfv}_1)^2}{\tsigma_1(\mbU_1\mbU_2\bbM)^2\sum_{t=1}^{d_0}(\bbGa_i\fe\vec{\mfv}_t)^2+\sum_{t=d_0+1}^T\tsigma_t(\mbU_1\mbU_2\bbM)^2(\bbGa_i\fe\vec{\mfv}_t)^2},
    \end{align*}
    since we have shown that $\tsigma_t(\mbU_1\mbU_2\bbM)\leq\mrO(|\tau_1|^{-T/2})$ for all $t>d_0$, it implies that
    \begin{align*}
        &\mbP\left(\sum_{t=d_0+1}^T\tsigma_t(\mbU_1\mbU_2\bbM)^2(\bbGa_i\fe\vec{\mfv}_t)^2>\mrO(T^2|\tau_1|^{-T})\right)\leq\mrO(|\tau_1|^{-T}),
    \end{align*}
    and
    \begin{align}
        &\mbP\left(\frac{1}{n}\sum_{i=1}^n\frac{\tsigma_1(\mbU_1\mbU_2\bbM)^2(\bbGa_i\fe\vec{\mfv}_1)^2}{\sum_{t=1}^T\tsigma_t(\mbU_1\mbU_2\bbM)^2(\bbGa_i\fe\vec{\mfv}_t)^2}\geq\frac{1}{n}\sum_{i=1}^n\frac{(\bbGa_i\fe\vec{\mfv}_1)^2}{\sum_{t=1}^{d_0}(\bbGa_i\fe\vec{\mfv}_t)^2}\right)\leq1-\mrO(|\tau_1|^{-T}).\label{Eq of super nonstationary inequality 2}
    \end{align}
    Recall $\fe=[\fe_1,\cdots,\fe_T]$ defined in \eqref{Eq of AR process}, since all \(\varepsilon_t\overset{i.i.d.}{\sim}\mcN(\boldsymbol{0},\bbI_n)\) and \(\Psi_k\) are diagonal, then 
    $$\fe\vec{\mfv}_t\sim\mcN(\boldsymbol{0},\diag(\vec{\mfv}_t'\mcA^1\vec{\mfv}_t,\cdots,\vec{\mfv}_t'\mcA^n\vec{\mfv}_t)),$$
    where \(\mcA^j\) is defined in (\ref{Eq of mcA}). By Assumption \ref{Ap of panel lag polynomial} and Lemma \ref{Lem of positive definite Toeplitz}, \(b^2\leq\vec{\mfv}_t'\mcA^j\vec{\mfv}_t\leq B\) for all \(1\leq j\leq n\). Hence, we know that $\bbGa_i\fe\vec{\mfv}_t\sim\mcN(0,\mfn_{i,t}^2)$, where $\mfn_{i,t}^2=\sum_{j=1}^n\Gamma_{i,j}^2\vec{\mfv}_t'\mcA^j\vec{\mfv}_t$. By Assumption \ref{Ap of nonpanel}, we know that $C_{b,m_0}\leq \mfn_{i,t}^2\leq C_{B,M_0}$. Hence, it gives that
    \begin{align*}
        \mbE\left[\frac{(\bbGa_i\fe\vec{\mfv}_1)^2}{\sum_{t=1}^{d_0}(\bbGa_i\fe\vec{\mfv}_t)^2}\right]\geq\frac{\mbE[|\bbGa_i\fe\vec{\mfv}_1|]^2}{\sum_{t=1}^{d_0}\mbE\big[(\bbGa_i\fe\vec{\mfv}_t)^2\big]}=\frac{2\mfn_{i,1}^2}{\pi\sum_{t=1}^{d_0}\mfn_{i,t}^2}\geq C_{d_0,B,b,M_0,m_0},
    \end{align*}
    and
    $$\frac{1}{n}\sum_{i=1}^n\mbE\left[\frac{(\bbGa_i\fe\vec{\mfv}_1)^2}{\sum_{t=1}^{d_0}(\bbGa_i\fe\vec{\mfv}_t)^2}\right]\geq C_{d_0,B,b,M_0,m_0}.$$
    Moreover, we can use the same method as Lemma \ref{Lem of finite variance} to show that
    $$\frac{1}{n}\Var\left(\sum_{i=1}^n\frac{(\bbGa_i\fe\vec{\mfv}_1)^2}{\sum_{t=1}^{d_0}(\bbGa_i\fe\vec{\mfv}_t)^2}\right)\leq\mrO(1),$$
    we omit details here to save space. Therefore, by Chebyshev's inequality, we can conclude that
    $$\lim_{n\to\infty}\mbP\left(\frac{1}{n}\sum_{i=1}^n\frac{(\bbGa_i\fe\vec{\mfv}_1)^2}{\sum_{t=1}^{d_0}(\bbGa_i\fe\vec{\mfv}_t)^2}\geq C_{d_0,B,b,M_0,m_0}\right)=1,$$
    combining \eqref{Eq of super nonstationary inequality 1} and \eqref{Eq of super nonstationary inequality 2} with the above equation, we can derive that $\lim_{n\to\infty}\mbP(n^{-1}\hla_1>C_{d_0,B,b,M_0,m_0})=1$.
\end{proof}
As a consequence of Proposition \ref{Thm of totally nonstationary roots}, once \(X_t\)'s characteristic polynomial has one totally nonstationary root, the rank of its sample correlation matrix will be asymptotically finite. Finally, let's further investigate how the super nonstationary roots effect the asymptotic spectral properties of the sample covariance maitrx of $\bbX=[X_1,\cdots,X_T]$ generated by \eqref{Eq of AR process}.
\begin{remark}\label{Rem of why not covariance 1}
	Given the observations \(\bbX\) defined in Proposition \ref{Thm of totally nonstationary roots}, denote \(\hSig\) to be the sample covariance matrix of \(\bbX\), i.e. 
	$$\hSig=\frac{1}{n}\bbM\bbX'\bbX\bbM,$$
	then we claim that \(\Vert\hSig\Vert\geq\mrO(|\mfa_1|^{CT})\) for some $C>0$. Recall that \(\bbX=\bbe\mbU_1\mbU_2\) in the previous proof, and the SVD of \(\mbU_1\mbU_2\bbM=\sum_{t=1}^T\sigma_t(\mbU_1\mbU_2\bbM)\vec{\mfv}_t\vec{\mfw}_t'\), then we have
	\begin{align*}
		&\Vert\hSig\Vert\geq\vec{\mfw}_1'\hSig\vec{\mfw}_1=\frac{\sigma_1(\mbU_1\mbU_2\bbM)^2}{n}\vec{\mfv}_t'\bbe'\bbe\vec{\mfv}_t=\frac{\sigma_1(\mbU_1\mbU_2\bbM)^2}{n}\vec{\mfv}_t'\fe'\bbGa'\bbGa\fe\vec{\mfv}_t,
	\end{align*}
	where \(\fe=[\fe_1,\cdots,\fe_T]\) and \(\fe_t=\sum_{k=0}^{\infty}\Psi_k\varepsilon_{t-k}\). Since all \(\varepsilon_t\overset{i.i.d.}{\sim}\mcN(\boldsymbol{0},\bbI_n)\) and \(\Psi_k\) are diagonal, then \(\fe\vec{\mfv}_1\sim\mcN(\boldsymbol{0},\diag(\vec{\mfv}_1'\mcA^1\vec{\mfv}_1,\cdots,\vec{\mfv}_1'\mcA^n\vec{\mfv}_1))\), where \(\mcA^j\) is defined in (\ref{Eq of mcA}). By Assumption \ref{Ap of panel lag polynomial} and Lemma \ref{Lem of positive definite Toeplitz}, \(b^2\leq\vec{\mfv}_1'\mcA^j\vec{\mfv}_1\leq B\) for all \(1\leq j\leq n\), so \(\diag(\vec{\mfv}_1'\mcA^1\vec{\mfv}_1,\cdots,\vec{\mfv}_1'\mcA^n\vec{\mfv}_1)^{-1/2}\fe\vec{\mfv}_1\sim\mcN(\boldsymbol{0},\bbI_n)\). Thus, by the Gaussian concentration inequality, we have
	\begin{align*}
		&\Vert\hSig\Vert\geq C_{B,m_0}\sigma_1(\mbU_1\mbU_2\bbM)^2=\mrO(|\mfa_1|^{CT})
	\end{align*}
	with probability of \(1-\mrO(e^{-Cn})\), i.e. \(\Vert\hSig\Vert\geq\mrO(|\mfa_1|^{CT})\) with probability of \(1\) by the previous proof.
\end{remark}
\subsection{Autoregressive processes with nonstationary roots}\label{sec of nonstationary roots}
As we have shown in Lemma \ref{Thm of stationary roots}, when the characteristic polynomial \eqref{Eq of characteristic polynomial} has both stationary roots and (super) nonstationary roots, the asymptotic spectral properties of $\hR$ in \eqref{Eq of correlation matrix AR process} only depend on the (super) nonstationary roots. Moreover, we have shown that super nonstationary roots makes ${\rm rank}(\hR)$ will be asymptotically finite in Proposition \ref{Thm of totally nonstationary roots}. In this subsection, we will investigate how the nonstationary roots alone effect the asymptotic spectral properties of $\hR$ in \eqref{Eq of correlation matrix AR process}, we assume that \(X_t\)’s characteristic polynomial (\ref{Eq of characteristic polynomial}) only has the nonstationary roots without loss of generality, i.e. all $|\mfa_l|=1$ \eqref{Eq of AR roots} for $1\leq l\leq d$. To overcome technical difficulties, we will simplify \(e_t\) by \(e_t\overset{i.i.d.}{\sim}\mcN(\boldsymbol{0},\bbGa\bbGa')\), i.e. \(X_t\) is generated by
$$\prod_{l=1}^d(1-\mfa_l L)X_t=e_t=\bbGa\fe_t,\quad\fe_t\overset{i.i.d.}{\sim}\mcN(\boldsymbol{0},\bbI_n),$$
where \(|\mfa_l|=1\) for \(1\leq l\leq d\). Thus, given the observations \(\bbX\) and noise matrix \(\bbe\), we have \(\bbX=\bbe\prod_{l=1}^d\mcT(\mfa_l)\). The reason is that the SVD structures of general toeplitz matrices (e.g. \(\prod_{l=1}^d\mcT(\mfa_l)\)) are unknown for us, so it is indeed hard to obtain the paralleling results of Lemma 10 in \cite{onatski2021spurious} as follows:
$$\left|\Cov(\fe_j\bbv_k,\fe_j\bbv_l)-2\pi\delta_{k,l}f_j(\theta_k/2)\right|<C_BT^{-1}.$$
On the other hand, if \(\fe=[\fe_1,\cdots,\fe_T]\) is just a standard Gaussian matrix, then
$$\Cov(\fe_j\bbv_k,\fe_j\bbv_l)=\delta_{k,l}.$$
For simplicity, we rewrite \(\prod_{l=1}^d(1-\mfa_l L)X_t=e_t\) in \eqref{Eq of AR roots} as follows:
\begin{align}
    \prod_{l=1}^{m_d}(1-\mfa_l L)^{d_l}X_t=\bbGa\fe_t,\quad\fe_t\overset{i.i.d.}{\sim}\mcN(\boldsymbol{0},\bbI_n),\label{Eq of AR nonstationary}
\end{align}
where \(\sum_{l=1}^{m_d}d_l=d\) and \(\mfa_l\) are different such that \(|\mfa_l|=1\) for \(1\leq l \leq m_d\), then we will show that
\begin{pro}\label{Thm of CLT multiple unit roots}
    Under Assumptions {\rm \ref{Ap of highdimensionality}, \ref{Ap of nonpanel}} and {\rm \ref{Ap of m dependent}}, for the cross-sectional matrix $\bbGa$ in \eqref{Eq of AR nonstationary}, define
    $$\bbXi=[\Xi_{i_1,i_2}]_{n\times n}:=\diag(\bbGa\bbGa')^{-1/2}\bbGa\bbGa'\diag(\bbGa\bbGa')^{-1/2}.$$
    Moreover, let $\{\vec{z}_t:=(z_{1,t},\cdots,z_{n,t})'\overset{i.i.d.}{\sim}\mcN(\boldsymbol{0},\bbXi):t=1,\cdots,T\}$ be a sequence of i.i.d. random normal vectors, then for any $K\in\mbN^+$ and \(1\leq k\leq K\), define
    $$\mbM_{k,n}=\frac{1}{n}\sum_{i=1}^n\frac{\sigma_k^2z_{i,k}^2}{\sum_{t=1}^T\sigma_t^2z_{i,t}^2},$$
    where \(\sigma_1\geq\cdots\geq\sigma_T\) are singular values of \(\mbT^{-1}\bbM\). Then given \(T\) observations \(\bbX=[X_1,\cdots,X_T]\) generated by \eqref{Eq of AR nonstationary}, we have
	$$\frac{\sqrt{n}}{\bbm_{k,n}}\left(\frac{\hla_k}{n}-\mbE\big[\mbM_{k,n}\big]\right)\overset{d}{\longrightarrow}\mcN(0,1),$$
	where \(\hla_k\) is the first \(k\)-th largest eigenvalue of the sample correlation matrix of \(\bbX\) and \(\bbm_{k,n}^2=n\Var(\mbM_{k,n})\).
\end{pro}
The key step of proving the above theorem is to show that (e.g \(k=1\))
$$\frac{\hla_1-\mbE[\hla_1]}{\sqrt{n}}\overset{\mbP}{\longrightarrow}\frac{1}{\sqrt{n}}\sum_{i=1}^n\frac{\tsigma_1^2(\bbGa_i\fe\bbv_1)^2}{\sum_{t=1}^T\tsigma_t^2(\bbGa_i\fe\bbv_t)^2}-\mbE\left[\frac{\tsigma_1^2(\bbGa_i\fe\bbv_1)^2}{\sum_{t=1}^T\tsigma_t^2(\bbGa_i\fe\bbv_t)^2}\right],$$
where $\bbGa_i$ is the $i$-th row of $\bbGa$ in \eqref{Eq of AR nonstationary}, \(\beta_k=\sigma_k/\sigma_1\) and 
\begin{align}
	\mbT^{-1}\bbM(\mbT^{-1})'=\sum_{k=1}^T\sigma_k^2\bbv_k\bbv_k'.\label{Eq of multiple unit roots SVD}
\end{align}
Here, we abuse the notations $\sigma_k$ in above definition \eqref{Eq of multiple unit roots SVD}. In general, for a matrix $A\in\mbR^{n\times m}$ such that $n\leq m$, we denote $\sigma_t(A)$ to be the $t$-th largest singular value of $A$, where $1\leq t\leq n$. Moreover, if $A\in\mbR^{n\times n}$ is symmetric, we denote $\lambda_t(A)$ to be $t$-th largest eigenvalue of $A$. 

Although we can repeat the proofs of Lemmas \ref{Lem of positive covariance}, \ref{Lem of alpha} and Theorem \ref{Thm of convergence in probability} to Theorem \ref{Thm of CLT multiple unit roots}, the essential question is to figure out the values of \(\tsigma_t,t=1,\cdots,T\). Unfortunately, it is indeed hard to obtain the explicit forms of \(\sigma_t\), but we can still provide some asymptotic properties of \(\sigma_t\) as follows:
\begin{lem}\label{Lem of singular values}
	Denote \(\sigma_t:=\sigma_t(\mbT^{-1}\bbM)\) to be the \(t\)-th largest singular value of \(\mbT^{-1}\bbM\), without loss of generality, assume that \(d_1=\max_{1\leq l\leq m_d}d_l\), then we have
	\begin{itemize}
		\item \(\sigma_1\asymp\mrO(T^{d_1})\) (``$\asymp$'' is defined in \eqref{Eq of asymp mbP});
		\item \(\sum_{t=1}^T\tsigma_t^2\asymp\mrO(1)\), where \(\tsigma_t=\sigma_t/\sigma_1\);
		\item \(\tsigma_t\asymp\mrO(l^{-d_1})\) for \(1\leq t\leq T^{\delta}\), where \(\delta>0\) is a sufficiently small fixed constant.
	\end{itemize}
\end{lem}
\begin{proof}
    Let's first focus on the singular values of $\mbT$. Recall that \(\prod_{l=1}^{m_d}(1-\mfa_l L)^{d_l}X_t=e_t\) in (\ref{Eq of AR nonstationary}), where \(d_l\in\mbN^+\) and \(\sum_{l=1}^{m_d}d_l=d\). For simplicity, we denote
    $$\prod_{l=1}^{m_d}(1-\mfa_l L)^{d_l}X_t=X_t+\sum_{l=1}^da_lX_{t-l}.$$
    Note that \(\mbT=\prod_{l=1}^{m_d}\mcT(\mfa_l)^{d_l}\) in (\ref{Eq of bbT}) is an upper toeplitz matrix, so \(\mbT^{-1}\) is also an upper toeplitz matrix, i.e. 
        $$\mbT^{-1}=\left(\begin{array}{ccccc}
	       1&b_1&\cdots&b_{T-2}&b_{T-1}\\
	       0&1&b_1&\cdots&b_{T-2}\\
	       &\ddots&\ddots&\ddots&\vdots\\
	       0&\cdots&0&1&b_1\\
	       0&&\cdots&0&1
        \end{array}\right),$$
        where \(b_k\) be the \(k\)-th sup-diagonal terms of \(\mbT^{-1}\) for \(1\leq k\leq T-1\). It is easy to see that
        $$b_k+\sum_{l=1}^da_lb_{k-l}=0,\quad {\rm for\ }d+1\leq l\leq T-1,$$
        and \(b_1,\cdots,b_d\) can be solved by
        $$\left(\begin{array}{cccc}
	       1&b_1&\cdots&b_d\\
	       
	       &\ddots&\ddots&\ddots\\
	       &\ddots&1&b_1\\
	       &&0&1
        \end{array}\right)=\left(\begin{array}{cccc}
	       1&a_1&\cdots&a_d\\
	       
	       &\ddots&\ddots&\ddots\\
	       &\ddots&1&a_1\\
	       &&0&1
        \end{array}\right)^{-1}.$$
        Hence, we can derive \(b_k=\sum_{l=1}^{m_d}\mfa_l^k\sum_{r=1}^{d_l}\mfc_{l,r}k^{r-1}\), where \(\mfc_{l,r}\) can be solved by \(b_1,\cdots,b_d\). Consequently, \(|b_k|\leq C_{d,m_d,\mfa}k^{d_1-1}\) and
        \begin{align*}
            &\Vert\mbT^{-1}\bbM\Vert\leq\Vert\mbT^{-1}\Vert_F\leq C_{d,m_d,\mfa}\left(\sum_{k=1}^T(T+1-k)k^{2(d_0-1)}\right)^{1/2}\leq C_{d,m_d,\mfa}T^{d_1}.
        \end{align*}
        On the other hand, due to all \(|\mfa_l|=1\), then \(\Vert\mcT(\mfa_l)\Vert\leq2\), i.e. \(\sigma_T(\mcT(\mfa_l)^{-1})\geq1/2\). Thus,
        $$\Vert\mbT^{-1}\bbM\Vert\geq\sigma_1(\mcT(\mfa_1)^{-d_1}\bbM)\prod_{l=2}^{m_d}\sigma_T(\mcT(\mfa_l)^{-d_l})>2^{d_1-d}\sigma_1(\mcT(\mfa_1)^{-d_1}).$$
        Moreover, by direct calculations, \(\mcT(\mfa_1)^{-d_1}\) is a \(T\times T\) upper toeplitz matrix with main diagonals are 1 and \(k\)-th sup-diagonals are \(\binom{k+d_1-1}{d_1-1}\mfa_1^k\). Now, if $\mfa_1\neq1$, let \(\vec{a}=T^{-1/2}(1,\mfa_1,\cdots,\mfa_1^{T-1})\) be a \(T\)-dimensional unit vector, then
        \begin{align*}
            &\Vert\mcT(\mfa_1)^{-d_1}\vec{a}\Vert_2^2=T^{-1}\sum_{t=1}^T\Bigg|\mfa_1^{t-1}\sum_{k=0}^{T-t}\binom{k+d_1-1}{d_1-1}\Bigg|^2=T^{-1}\sum_{t=1}^T\left(\sum_{k=0}^{t-1}\binom{k+d_1-1}{d_1-1}\right)^2,\\
            &T^{-2}\Vert\mcT(\mfa_1)^{-d_1}\boldsymbol{1}_T\boldsymbol{1}_T'\vec{a}\Vert_2^2=T^{-3}\left|\sum_{t=1}^T\mfa_1^{t-1}\right|^2\sum_{t=1}^T\left|\sum_{k=0}^{T-t}\binom{k+d_1-1}{d_1-1}\mfa_1^k\right|^2.
        \end{align*}
        so it gives that
        \begin{align*}
            &\Vert\mcT(\mfa_1)^{-d_1}\bbM\vec{a}\Vert_2\geq\Vert\mcT(\mfa_1)^{-d_1}\vec{a}\Vert_2-T^{-2}\Vert\mcT(\mfa_1)^{-d_1}\boldsymbol{1}_T\boldsymbol{1}_T'\vec{a}\Vert_2\\
            &=(T^{-1/2}-T^{-3/2}|1-\mfa_1|^{-1})\left(\sum_{t=1}^T\left(\sum_{k=0}^{t-1}\binom{k+d_1-1}{d_1-1}\right)^2\right)^{1/2}\\
            &>\frac{T^{-1/2}}{2}\left(\sum_{t=1}^T\left(\sum_{k=0}^{t-1}(k/d_1)^{d_1-1}\right)^2\right)^{1/2}>C_{d_1}T^{-1/2}\left(\sum_{t=1}^Tt^{2d_1}\right)^{1/2}>C_{d_1}T^{d_1},
        \end{align*}
        where we use the fact \(|\mfa_1|=1\). Otherwise, if $\mfa_1=1$, let $\vec{a}=T^{-1/2}(1,-1,1,-1,\cdots)\in\mbR^T$, then 
        \begin{align*}
            &\Vert\mcT(\mfa_1)^{-d_1}\bbM\vec{a}\Vert_2\geq\Vert\mcT(\mfa_1)^{-d_1}\vec{a}\Vert_2-T^{-2}\Vert\mcT(\mfa_1)^{-d_1}\boldsymbol{1}_T\boldsymbol{1}_T'\vec{a}\Vert_2\\
            &=T^{-1/2}\left(\sum_{t=1}^T\left(\sum_{k=0}^{t-1}(-1)^k\binom{k+d_1-1}{d_1-1}\right)^2\right)^{1/2}-T^{-3/2}\left(\sum_{t=1}^T\left(\sum_{k=0}^{t-1}\binom{k+d_1-1}{d_1-1}\right)^2\right)^{1/2}>C_{d_1}T^{d_1},
        \end{align*}
        so we conclude that \(\sigma_1(\mcT(\mfa_1)^{-d_1}\bbM)=\Vert\mcT(\mfa_1)^{-d_1}\bbM\Vert>C_{d_1}T^{d_1}\).
        As a result, we obtain that \(\Vert\mbT^{-1}\Vert>2^{d_1-d}\sigma_1(\mcT(\mfa_1)^{-d_1})>C_{d_1,d}T^{d_1}\), i.e. \(\Vert\mbT^{-1}\bbM\Vert\asymp\mrO(T^{d_1})\). 

    Next, we construct a \((T+d)\times(T+d)\) circulant matrix as follows:
		$$\mfC(T+d):=\left(\begin{array}{ccccccc}
			1&a_1&\cdots&a_d&0&\cdots&0\\
			0&1&a_1&\cdots&a_d&\cdots&\vdots\\
			\vdots&\vdots&\ddots&\ddots&&\ddots&\\
			&&&1&a_1&\cdots&a_d\\
			a_d&0&\cdots&0&1&\cdots&a_{d-1}\\
			\vdots&\ddots&&&\ddots&\ddots&\vdots\\
			a_1&\cdots&a_d&0&\cdots&0&1
		\end{array}\right).$$
		It is easy to see that \(\mbT\) can be derived by deleting the first \(d\) rows and columns of \(\mfC(T+d)\). According to \cite{karner2003spectral}, the singular values of \(\mfC(T+d)\) are
		$$s_k:=\left|1+\sum_{l=1}^da_le^{-2\pi{\rm i}(k-1)l/(T+d)}\right|=\prod_{l=1}^{m_d}\big|1-\mfa_l e^{-2\pi{\rm i}(k-1)/(T+d)}\big|^{d_l}=\prod_{l=1}^{m_d}\big|1-e^{2\pi{\rm i}[\mathtt{a}_l-(k-1)/(T+d)]}\big|^{d_l},$$
		where \(k=1,\cdots,T+d\) and \(\mfa_l=\exp(2\pi{\rm i}\mathtt{a}_l)\) such that \(\mathtt{a}_l\in[0,1)\) for \(1\leq l\leq m_d\). Here, we sort all \(s_k\) as follows:
		$$\sigma_1(\mfC(T+d))=\max\{s_1,\cdots,s_{T+d}\},\quad\sigma_l(\mfC(T+d)):=\max\{s_1,\cdots,s_{T+d}\}\backslash\{\sigma_1,\cdots,\sigma_{l-1}\},$$
		where \(l=2,\cdots,T+d\). Due to \(d_1=\max_{1\leq l\leq m_d}d_l\), denote \(k_0:=\arg\min\{|\mathtt{a}_1-(k-1)/(T+d)|:1\leq k\leq T\}\), it gives that \(\sigma_{T+d}(\mfC(T+d))=s_{k_0}\). Next, we claim that \(\sigma_{T+d-k}(\mfC(T+d))\asymp\mrO((k/T)^{d_1})\) for \(1\leq k\leq T^{\delta}\), where \(\delta>0\) is a sufficiently small fixed number. Since
		\begin{align*}
			&\left|e^{{\rm i}x}-1\right|^2=(1-\cos^2x)^2+\sin^2x=2(1-\cos x)=4\sin^2(x/2),
		\end{align*}
		then \(\left|e^{{\rm i}x}-1\right|\asymp\mrO(|x|)\) for \(x\in[-\pi,\pi]\). Since all \(\mathtt{a}_l\) are different, it implies that
		\begin{align*}
			&\sigma_{T+d-k}(\mfC(T+d))\asymp\mrO(s_{k_0\pm k})\asymp\mrO\big(\big|1-e^{\pm2\pi{\rm i}k/(T+d)}\big|^{d_1}\big)\asymp\mrO((k/T)^{d_1}).
		\end{align*}
		Finally, according to Cauchy's interlacing theorem, we know that
		$$\sigma_{t+2d}(\mfC(T+d))\leq\sigma_t(\mbT)\leq\sigma_t(\mfC(T+d)),$$
		where \(1\leq t\leq T-2d\). In other words, for $d<t\leq T$, we have
		$$\mrO(((t-d)/T)^{-d_1})\asymp\sigma_{T-t+2d}(\mfC(T+d))^{-1}\leq\sigma_t(\mbT^{-1})\leq\sigma_{T-t}(\mfC(T+d))^{-1}\asymp\mrO(((t+d)/T)^{-d_1}).$$
        By Cauchy's interlacing theorem again, we also have
        $$\sigma_t(\mbT^{-1})\geq\sigma_t(\mbT^{-1}\bbM)\geq\sigma_{t-1}(\mbT^{-1})$$
        Since $d\in\mbN^+$ is a fixed integer, combining with $\sigma_1(\mbT^{-1}\bbM)\asymp\mrO(T^{d_1})$, for \(d<k\leq T^{\delta}\), we have
		\begin{align*}
			\tsigma_k(\mbT^{-1}\bbM)=\frac{\sigma_k(\mbT^{-1}\bbM)}{\sigma_1(\mbT^{-1}\bbM)}\geq\frac{\sigma_{T+1+2d-k}(\mfC(T+d))^{-1}}{\sigma_1(\mbT^{-1}\bbM)}\geq\mrO(l^{-d_1}),
		\end{align*}
		and 
		\begin{align*}
			&\tsigma_k(\mbT^{-1}\bbM)\leq \frac{\sigma_{T+2d-k}(\mfC(T+d))^{-1}}{\sigma_1(\mbT^{-1}\bbM)}\leq\mrO(l^{-d_0}),
		\end{align*}
		and this upper bound can be extended to \(1\leq l\leq d\) due to \(d\) is a fixed integer. Now, we complete our proofs.
\end{proof}
Finally, let's prove Proposition \ref{Thm of CLT multiple unit roots}.
\begin{proof}[Proof of Proposition \ref{Thm of CLT multiple unit roots}]
	Recall that \(\fe\) is a standard Gaussian random matrix, so \(\Cov(\fe_j\bbv_k,\fe_j\bbv_l)=\delta_{k,l}\) and we do not need to remove the weak correlation among all $\fe_j\bbv_1,\cdots,\fe_j\bbv_T$ as in Lemmas \ref{Lem of remove dependence} and \ref{Lem of unify variance}. Hence, let's first repeat the proofs of Lemma \ref{Lem of positive covariance}, which will conclude that
	$$\sum_{i=1}^n\Var\left(\frac{\tsigma_k^2(\bbGa_i\fe\bbv_k)^2}{\sum_{t=1}^T\tsigma_t^2(\bbGa_i\fe\bbv_t)^2}\right)\leq\Var\left(\sum_{i=1}^n\frac{\tsigma_k^2(\bbGa_i\fe\bbv_k)^2}{\sum_{t=1}^T\tsigma_t^2(\bbGa_i\fe\bbv_t)^2}\right)\leq C_{M_0,m_0}\tsigma_k^4\Vert\tilde{\bbGa}\Vert_F^2,$$
	and
	$$\Var\left(\sum_{i=1}^n\frac{\tsigma_k\tsigma_l(\bbGa_i\fe\bbv_k)(\bbGa_i\fe\bbv_l)}{\sum_{t=1}^T\tsigma_t^2(\bbGa_i\fe\bbv_t)^2}\right)\leq C_{M_0,m_0}\tsigma_k^2\tsigma_l^2\Vert\tilde{\bbGa}\Vert_F^2,$$
	i.e. the paralleling results of (\ref{Eq of finite variance}). In fact, the proofs of (\ref{Eq of positive covariance 1}) are independent of \(\tsigma_t\), so we only focus on (\ref{Eq of positive covariance 2}). And the first concern is that whether (\ref{Eq of finite integration of positive covariance}) holds under Assumptions of Proposition \ref{Thm of CLT multiple unit roots}. Based on Lemma \ref{Lem of singular values}, we know that \(\tsigma_{10}\asymp C_{d,d_0}10^{-d_0}\), so (\ref{Eq of finite integration of positive covariance}) is valid. The other concern is that whether the summations in \(\frac{d}{d\tau}H_{k,l}(\tau)\) are finite, which is equivalent to whether \(\sum_{t=1}^T\tsigma_t^2\) is finite or not, and the second term in Lemma \ref{Lem of singular values} has concluded it.
	
	Next, we will give the paralleling results of Theorem \ref{Thm of convergence in probability}, and the key step is to show Lemma \ref{Lem of alpha}. Since \(\sum_{t=1}^T\tsigma_t^2\) is finite, it is straightforward to repeat the proofs of Lemma \ref{Lem of alpha} and conclude that
	$$\lim_{n\to\infty}\sqrt{n}\mbE\left[1-\alpha_{k,k}^2\right]=0$$
	under Assumptions of Proposition \ref{Thm of CLT multiple unit roots}, so we omit details here. Similarly, we can also obtain
	$$\frac{\hla_k-\mbE[\hla_k]}{\sqrt{n}}\overset{\mbP}{\longrightarrow}\frac{1}{\sqrt{n}}\sum_{i=1}^n\frac{\tsigma_k^2(\bbGa_i\fe\bbv_k)^2}{\sum_{t=1}^T\tsigma_t^2(\bbGa_i\fe\bbv_t)^2}-\mbE\left[\frac{\tsigma_k^2(\bbGa_i\fe\bbv_k)^2}{\sum_{t=1}^T\tsigma_t^2(\bbGa_i\fe\bbv_t)^2}\right],$$
	as what we have done in Theorem \ref{Thm of convergence in probability}. Finally, given the additional conditions like \(m\)-dependence in Assumption \ref{Ap of m dependent}, we can derive that
	$$\frac{\sqrt{n}}{\bbm_{k,n}}\left(\frac{\hla_k}{n}-\mbE[\hla_k]\right)\overset{d}{\longrightarrow}\mcN(0,1)$$
	by Lemma \ref{Lem of preliminary CLT}, where $\bbm_{k,n}$ is defined in Proposition \ref{Thm of CLT multiple unit roots}. Now, similar as what we have done in Theorem \ref{Thm of CLT I1}, we only need to show that
	$$\lim_{n\to\infty}n^{-1/2}\left|\mbE\left[\hla_k/n-\mbM_{k,n}\right]\right|=0,$$
	which can be concluded by the same arguments as in Theorem \ref{Thm of CLT I1} based on \(\lim_{n\to\infty}\sqrt{n}\mbE[1-\alpha_{k,k}^2]=0\) and \(\sum_{t=1}^T\tsigma_t^2\) is finite.
\end{proof}
Finally, let's further investigate how the nonstationary roots effect the asymptotic spectral properties of the sample covariance matrix of $\bbX=[X_1,\cdots,X_T]$ generated by \eqref{Eq of AR nonstationary}.
\begin{remark}\label{Rem of why not covariance 2}
	Given the observations \(\bbX\) as in Proposition \ref{Thm of CLT multiple unit roots}, and denote \(\hSig\) to be the sample covariance matrix of \(\bbX\), i.e.
	\begin{align*}
		&\hSig=\frac{1}{n}\bbM\bbX'\bbX\bbM=\frac{1}{n}\bbM(\mbT^{-1})'\bbe'\bbe\mbT^{-1}\bbM,
	\end{align*}
	by (\ref{Eq of multiple unit roots SVD}), the SVD of \(\mbT^{-1}\bbM\) is \(\sum_{k=1}^T\sigma_k\bbv_k\bbw_k'\), so we have
	$$\Vert\hSig\Vert\geq\bbw_1'\hSig\bbw_1=\frac{\sigma_1^2}{n}\bbv_1'\bbe'\bbe\bbv_1=\frac{\sigma_1^2}{n}\bbv_1'\fe'\bbGa'\bbGa\fe\bbv_1\geq\frac{\sigma_1^2m_0}{n}\Vert\fe\bbv_1\Vert_2^2,$$
	where we use Assumption \ref{Ap of nonpanel}. Since \(\fe\) is a standard Gaussian random matrix, it implies that \(\fe\bbv_1\sim\mcN(\boldsymbol{0},\bbI_n)\), by the Gaussian concentration inequality, \(\mbP(n^{-1}\Vert\fe\bbv_1\Vert_2^2\geq1/2)\geq1-\mrO(e^{-Cn})\). By Lemma \ref{Lem of singular values}, it further deduces that \(\Vert\hSig\Vert\geq\mrO(\sigma_1^2)=\mrO(T^{2d_1})\) with probability of 1.
\end{remark}
\section{Applications}\label{Sec of applications}
\setcounter{equation}{0}
\def\theequation{\thesection.\arabic{equation}}
\setcounter{subsection}{0}
\subsection{Unit root tests}\label{sec of unit root test}
For a $n$-dimensional stochastic process $X_t$ generated by $X_t=(\bbI-\bbPi)\phi+\bbPi X_{t-1}+e_t$, the unit root test is to test
\begin{align}
    H_0:\bbPi=\bbI_n,\quad\text{versus}\quad H_1:\Vert\bbPi\Vert<1.\label{Eq of unit root test}
\end{align}
Under $H_0$, $X_t$ is a random walk, and we have established the CLT for extreme eigenvalues of the sample correlation matrix of the high-dimensional random walks in \S\ref{Sec of correlation independent} and \S\ref{Sec of correlation dependent}. In this section, we will investigate the asymptotic spectral properties of the sample correlation matrix of $X_t$ under $H_1$. To distinguish with $X_t$, let's consider a $n$-dimensional stochastic process $Y_t$ generated by
\begin{align}
	Y_t=\bbPi Y_{t-1}+e_t,\quad e_t=\bbGa\fe_t=\bbGa\sum_{k=0}^{\infty}\Psi_k\varepsilon_{t-k},\label{Eq of Yt}
\end{align}
where $\bbPi\in\mbR^{n\times n}$ such that \(\tau_0:=\Vert\bbPi\Vert\leq1\) and the cross-sectional matrix $\bbGa$ and $\{\Psi_k:k\in\mbN\}$ satisfy Assumption \ref{Ap of nonpanel} and \ref{Ap of panel lag polynomial}, respectively. For the alternative hypothesis $H_1$, the basic form is that \(\tau_0=\Vert\bbPi\Vert<1\), and we say \(Y_t\) is totally stationary under this situation. On the other hand, \cite{pesaran2012interpretation} suggested that it would be more proper to consider the following $H_1$:
\begin{align}
	H_1:\bbPi=\left(\begin{array}{cc}
		\bbI_{n_1}&\boldsymbol{0}_{n_1\times n_2}\\
		\boldsymbol{0}_{n_2\times n_1}&\widetilde{\bbPi}
	\end{array}\right),\label{Eq of partially stationary}
\end{align}
where \(\Vert\widetilde{\bbPi}\Vert<1\) and \(\lim_{n\to\infty}n_1/n\in[0,1)\). In this case, we say \(Y_t\) is partially stationary. Now, given the data matrix \(\bbY=[Y_1,\cdots,Y_T]\in\mbR^{n\times T}\) generated by \eqref{Eq of Yt}, the corresponding sample correlation matrix of $\bbY$ is 
\begin{align}
	\hat{\bbR}:=\bbD^{-1/2}(n^{-1}\bbY\bbM\bbY')\bbD^{-1/2},\label{Eq of correlation Yt}
\end{align}
where \(\bbD:=n^{-1}\diag(\bbY\bbM\bbY')\) and we abuse the notation \(\hat{\bbR}\) and \(\bbD\) here. In \S\ref{ssec of totally stationary alternatives} and \S\ref{ssec of partially stationary alternatives}, we will investigate the asymptotic properties of $n^{-1}\Vert\hR\Vert$ defined in \eqref{Eq of correlation Yt} for totally stationary $H_1$ in \eqref{Eq of unit root test} and partially stationary $H_1$ in \eqref{Eq of partially stationary}, respectively.
\subsubsection{Totally stationary alternatives}\label{ssec of totally stationary alternatives}
In this section, we will show that 
\begin{thm}\label{Thm of H1 norm}
	Under Assumptions {\rm \ref{Ap of highdimensionality}, \ref{Ap of panel lag polynomial}} and {\rm \ref{Ap of nonpanel}}, suppose $\varepsilon_t=(\varepsilon_{1,t},\cdots,\varepsilon_{n,t})'$ in \eqref{Eq of Yt} satisfies that \(\varepsilon_t=(\varepsilon_{1,t},\cdots,\varepsilon_{n,t})'\) are independent random vectors such that all \(\varepsilon_{i,t}\) are independent and \(\mbE[\varepsilon_{i,t}]=0,\mbE[\varepsilon_{i,t}^2]=1\) and \(\kappa_{10}:=\sum_{i,t\in\mbZ}\mbE[\varepsilon_{i,t}^{10}]<\infty\). Let \(\hR\) \eqref{Eq of correlation Yt} be the sample correlation matrix of \(\bbY=[Y_1,\cdots,Y_T]\) generated by \eqref{Eq of Yt}, then we have
	\(n^{-1}\Vert\hat{\bbR}\Vert\overset{\mbP}{\longrightarrow}0\).
\end{thm}
We will prove the above Theorem by the following two steps:
\begin{itemize}
	\item[1.] Prove that \(\Vert\bbD\Vert=n^{-1}\Vert\diag(\bbY\bbM\bbY')\Vert>C_{B,b,c,\tau_0}\) with probability at least of \(1-\mrO(T^{-1})\).
	\item[2.] Prove that \(n^{-2}\Vert\bbY\bbM\bbY'\Vert\leq\mrO(n^{-1/15})\) with probability at least of \(1-\mrO(T^{-1/6}\log^5(T))\).
\end{itemize}
For simplicity, we make some notations here:
\begin{itemize}
	\item Given two matrices \(\bbA,\bbB\), the notation ``\(\bbA\otimes\bbB\)'' represents the kronecker product between \(\bbA\) and \(\bbB\).
	\item By (\ref{Eq of Yt}), \(Y_t=\bbPi^tY_0+\sum_{k=1}^t\bbPi^{t-k}\bbGa\Psi(L)\varepsilon_t\), where \(\fe_t=\sum_{k=0}^{\infty}\Psi_k\varepsilon_{t-k}\). Let
	\begin{align}
		\bbV(\bbPi):=\left(\begin{array}{ccccc}
			\bbI_n&\boldsymbol{0}&\cdots&\cdots&\boldsymbol{0}\\
			\bbPi&\bbI_n&\boldsymbol{0}&\cdots&\boldsymbol{0}\\
			\vdots&\ddots&\ddots&\ddots&\vdots\\
			\bbPi^{T-2}&\cdots&\bbPi&\bbI_n&\boldsymbol{0}\\
			\bbPi^{T-1}&\bbPi^{T-2}&\cdots&\bbPi&\bbI_n
		\end{array}\right)\quad\vec{\fe}=\left(\begin{array}{c}
			\fe_1\\\fe_2\\\vdots\\\fe_T
		\end{array}\right)\quad\vec{\bby}=\left(\begin{array}{c}
			Y_1\\Y_2\\\vdots\\Y_T
		\end{array}\right),\label{Eq of bbV bbPi}
	\end{align}
	where \(\bbV(\bbPi)\in\mbR^{nT\times nT}\) and \(\vec{\fe},\vec{\bby}\in\mbR^{nT}\). Then we obtain
	\begin{align*}
		\vec{\bby}=\bbV(\bbPi)(\bbI_T\otimes\bbGa)\vec{\fe}+\diag(\bbPi,\cdots,\bbPi^T)(\boldsymbol{1}_{T\times1}\otimes Y_0),
	\end{align*}
	where \(\otimes\) is the Kronecker product and \(\diag(\bbPi,\cdots,\bbPi^T)\in\mbR^{nT\times nT}\) has \(T\times T\) blocks such that its \(t\)-th diagonal block is \(\bbPi^t\).
	\item For \(\fe_t\) in \eqref{Eq of Yt}, by Assumption \ref{Ap of panel lag polynomial}, denote
	\begin{align}
		&\fe_t=\sum_{k=0}^{\infty}\Psi_k\varepsilon_{t-k}:=\sum_{k=0}^{T-1}\Psi_k\varepsilon_{t-k}+r_t\label{Eq of rt}
	\end{align}
	and
	\begin{align}
		\bbXi:=\left(\begin{array}{cccccc}
			\boldsymbol{0}&\cdots&\boldsymbol{0}&\Psi_0&\cdots&\Psi_{T-1}\\
			\vdots&\iddots&\iddots&\iddots&&\vdots\\
			\boldsymbol{0}&\Psi_0&\cdots&&\Psi_{T-1}&\cdots\\
			\Psi_0&\cdots&&\Psi_{T-1}&\cdots&\boldsymbol{0}
		\end{array}\right)\quad\vec{\bbve}=\left(\begin{array}{c}
			\varepsilon_T\\\varepsilon_{T-1}\\\vdots\\\varepsilon_{-T+2}
		\end{array}\right)\quad\vec{\bbr}=\left(\begin{array}{c}
			r_1\\\vdots\\r_t
		\end{array}\right),\label{Eq of bbXi}
	\end{align}
	where \(\bbXi\in\mbR^{nT\times n(2T-1)}\) and \(\vec{\bbve}\in\mbR^{n(2T-1)},\vec{\bbr}\in\mbR^{nT}\) so we obtain \(\vec{\fe}=\bbXi\vec{\bbve}+\vec{\bbr}\) and
	\begin{align}
		\vec{\bby}&=\bbV(\bbPi)(\bbI_T\otimes\bbGa)\bbXi\vec{\bbve}+\bbV(\bbPi)(\bbI_T\otimes\bbGa)\vec{\bbr}+\diag(\bbPi,\cdots,\bbPi^T)(\boldsymbol{1}_{T\times1}\otimes Y_0)\notag\\
		:&=\bbV(\bbPi)(\bbI_T\otimes\bbGa)\bbXi\vec{\bbve}+\vec{\bbz}.\label{Eq of vec bby}
	\end{align} 
\end{itemize}
Before proving $\lim_{n\to\infty}\mbP(\Vert\bbD\Vert>C_{B,b,c,\tau_0})=1$ and $\lim_{n\to\infty}\mbP(n^{-1}\Vert\bbY\bbM\bbY'\Vert\leq\mrO(n^{-1/15}))=1$, we need the following Lemmas \ref{Lem of bbV} and \ref{Lem of entrywise concentration} for preliminaries. First, we cite the following results in \cite{bai2010spectral}, which plays important roles in proving Lemma \ref{Lem of entrywise concentration}:
\begin{lem}[Lemma B.26, \cite{bai2010spectral}]\label{Lem of Bai 26}
	Let \(\bbA\) be an \(n\times n\) nonrandom matrix and \(\bbx=(x_1,\cdots,x_n)'\) be a random vector of independent entries. Assume that \(\mbE[x_i]=1,\mbE[x_i^2]=1\) and \(\max_{1\leq i\leq n}\mbE[|x_i|^l]\leq\kappa_l\).Then, for any \(p\geq1\),
	\begin{align*}
		\mbE[|\bbx'\bbA\bbx-\tr(\bbA)|^p]\leq C_{p,\kappa_{2p}}\tr(\bbA\bbA')^{p/2},
	\end{align*}
	where \(C_{p,\kappa_{2p}}\) is a constant depending on \(p\) and \(\kappa_{2p}\) only.
\end{lem}
The following lemma provide the upper and lower bound for the singular values of $\bbV(\bbPi)$ and $\bbXi$ defined in \eqref{Eq of bbV bbPi} and \eqref{Eq of bbXi}, respectively.
\begin{lem}\label{Lem of bbV}
	Denote \(\sigma_{\min}(\bbV(\bbPi))\) and \(\sigma_{\max}(\bbV(\bbPi))\) to be the smallest and largest singular value of \(\bbV(\bbPi)\) in \eqref{Eq of bbV bbPi}, then
	$$(1+|\tau_0|)^{-1}\leq\sigma_{\min}(\bbV(\bbPi))\leq\sigma_{\max}(\bbV(\bbPi))\leq(1-|\tau_0|)^{-1}.$$
	Moreover, under Assumption {\rm \ref{Ap of panel lag polynomial}}, we have 
	$$b\leq\sigma_{\min}(\bbXi)\leq\sigma_{\max}(\bbXi)\leq B,$$
    where $\bbXi$ is defined in \eqref{Eq of bbXi}.
\end{lem}
\begin{proof}
	For the first term, suppose the SVD of \(\bbPi\) is \(\mcU\diag(\bbtau)\mcV'\), where \(\mcU,\mcV\) are two orthogonal matrices and \(\diag(\bbtau)=\diag(\tau_1,\cdots,\tau_n)\) such that \(\max_{1\leq i\leq n}|\tau_i|<\tau_0<1\). It is easy to see that
	\begin{align*}
		\bbV(\bbPi)^{-1}=(\mcU\otimes\bbI_T)\left(\begin{array}{ccccc}
			\bbI_n&\boldsymbol{0}&\boldsymbol{0}&\cdots&\boldsymbol{0}\\
			-\diag(\bbtau)&\bbI_n&\boldsymbol{0}&\cdots&\vdots\\
			\vdots&&\ddots&\ddots&\vdots\\
			\boldsymbol{0}&\cdots&-\diag(\bbtau)&\bbI_n&\boldsymbol{0}\\
			\boldsymbol{0}&\cdots&\boldsymbol{0}&-\diag(\bbtau)&\bbI_n
		\end{array}\right)(\mcV'\otimes\bbI_T),
	\end{align*}
	so \((\mcU\otimes\bbI_T)\bbV(\bbPi)^{-1}(\mcV\otimes\bbI_T)\) is a block toeplitz matrix. By Lemma 4.3 in \cite{gutierrez2012block}, it gives that
	\begin{align*}
		\Vert\bbV(\bbPi)^{-1}\Vert=\Vert(\mcU\otimes\bbI_T)\bbV(\bbPi)^{-1}(\mcV\otimes\bbI_T)\Vert\leq1+\max_{1\leq i\leq n}|\tau_i|<1+\tau_0,
	\end{align*}
	where we use \(\Vert\mcU\otimes\bbI_T\Vert=\Vert\mcV\otimes\bbI_T\Vert=1\). On the other hand, \((\mcU\otimes\bbI_T)\bbV(\bbPi)^{-1}(\mcV\otimes\bbI_T)\) is diagonally dominated, by Corollary 2 in \cite{varah1975lower}, we have
	$$\sigma_{\min}(\bbV(\bbPi)^{-1})\geq1-\max_{1\leq i\leq n}|\tau_i|>1-\tau_0.$$
	For the second term, since \(\bbXi\bbXi'\) is a block toeplitz matrix as follows:
	\begin{align*}
		\bbXi\bbXi'=\left(\begin{array}{cccc}
			\Xi(0)&\Xi(1)&\cdots&\Xi(T-1)\\
			\Xi(1)&\Xi(0)&\Xi(1)&\cdots\\
			\vdots&\ddots&\ddots&\vdots\\
			\Xi(T-1)&\cdots&\Xi(1)&\Xi(0)
		\end{array}\right),
	\end{align*}
	where \(\Xi(k)=\sum_{t=0}^{T-1-k}\Psi_t\Psi_{t+k}\) are all diagonal by Assumption \ref{Ap of panel lag polynomial}. By Theorem 4.4 in \cite{gutierrez2012block}, it yields that
	\begin{align*}
		&\sigma_{\max}(\bbXi)^2\leq\max_{1\leq j\leq n}\sup_{x\in[0,2\pi]}\left|\sum_{t=0}^{T-1}\varphi_{j,t}^2+2\sum_{k=1}^{T-1}\cos(kx)\sum_{t=0}^{T-1-k}\varphi_{j,t}\varphi_{j,t+k}\right|\\
		&=\max_{1\leq j\leq n}\sup_{x\in[0,2\pi]}\left|\sum_{t=0}^{T-1}\varphi_{j,t}e^{{\rm i}tx}\right|^2\leq B^2,
	\end{align*}
	where we use Assumption \ref{Ap of panel lag polynomial}. On the other hand, by Theorem 4.4 in \cite{gutierrez2012block} and Assumption \ref{Ap of panel lag polynomial} again, we have
	\begin{align*}
		&\sigma_{\min}(\bbXi)^2\geq\min_{1\leq j\leq n}\inf_{x\in[0,2\pi]}\left|\sum_{t=0}^{T-1}\varphi_{j,t}^2+2\sum_{k=1}^{T-1}\cos(kx)\sum_{t=0}^{T-1-k}\varphi_{j,t}\varphi_{j,t+k}\right|\\
		&=\min_{1\leq j\leq n}\inf_{x\in[0,2\pi]}\left|\sum_{t=0}^{T-1}\varphi_{j,t}e^{{\rm i}tx}\right|^2\geq b^2,
	\end{align*}
	which completes our proof.
\end{proof}
Next, recall that the data matrix \(\bbY=[Y_1,\cdots,Y_T]\) is generated by \eqref{Eq of Yt}, the following Lemma \ref{Lem of entrywise concentration} is to provide an entrywise approximation for $n^{-1}\bbY\bbM\bbY'$. Precisely, by \eqref{Eq of vec bby}, it gives that
\begin{align}
	&Y_{k\cdot}=[(\bbI_T\otimes\bbi_k^{(n)})'\vec{\bby}]':=(\mcI_k\vec{\bby})',\label{Eq of mcI}
\end{align}
where \(Y_{k\cdot}\) is the \(k\)-th row of \(\bbY\) and \(\bbi_k^{(n)}\in\mbR^{n\times 1}\) such that its \(k\)-th entry is 1 while others are zero. Moreover, denote 
\begin{align}
	\bbH_{k,l}:=\bbXi'(\bbI_T\otimes\bbGa)'\bbV(\bbPi)'\mcI_k'\bbM\mcI_l\bbV(\bbPi)(\bbI_T\otimes\bbGa)\bbXi,\label{Eq of bbH}
\end{align}
where \(k,l\in\{1,\cdots,n\}\), and 
\begin{align}
	\bbP:=\bbV(\bbPi)(\bbI_T\otimes\bbGa)\bbXi\bbXi'(\bbI_T\otimes\bbGa)'\bbV(\bbPi)'=[\bbP^{k,l}]_{T\times T},\quad\mathring{\bbP}:=\frac{1}{n}\sum_{t=1}^T\bbP^{t,t},\label{Eq of bbP}
\end{align}
where \(\bbP\in\mbR^{nT\times nT}\) and \(\bbP^{k,l}\in\mbR^{n\times n}\) is the \((k,l)\)-th block of \(\bbP\). Now, we will show that
\begin{lem}\label{Lem of entrywise concentration}
	Under Assumptions in Theorem {\rm \ref{Thm of H1 norm}}, let \(\mathring{P}_{k,l}\) be the \((k,l)\)-th entry of \(\mathring{\bbP}\) defined in {\rm (\ref{Eq of bbP})}, where \(k,l\in\{1,\cdots,n\}\), then
	\begin{align}
		\mbP(|n^{-1}Y_{k\cdot}\bbM Y_{l\cdot}'-\mathring{P}_{k,l}|>n^{-1/15})\leq C_{\kappa_{10},B,M_0,\tau_0,c}n^{-13/6}.\label{Eq of Yk Yl concentration 2}
	\end{align}
\end{lem}
\begin{proof}
	According to (\ref{Eq of mcI}) and (\ref{Eq of vec bby}), we have
	\begin{align}
		&n^{-1}Y_{k\cdot}\bbM Y_{l\cdot}'=n^{-1}\vec{\bby}'\mcI_k'\bbM\mcI_l\vec{\bby}=n^{-1}\vec{\bbve}'\bbH_{k,l}\vec{\bbve}+n^{-1}\vec{\bbz}'\mcI_k'\bbM\mcI_l\vec{\bbz}\notag\\
		&+n^{-1}\vec{\bbz}'\mcI_k'\bbM\mcI_l\bbV(\bbPi)(\bbI_T\otimes\bbGa)\bbXi\vec{\bbve}+n^{-1}\vec{\bbz}'\mcI_l'\bbM\mcI_k\bbV(\bbPi)(\bbI_T\otimes\bbGa)\bbXi\vec{\bbve}.\label{Eq of Yk Yl}
	\end{align}
	Let's first show that 
	\begin{align}
		\mbP(n^{-1/2}\Vert\vec{\bbz}\Vert_2>n^{-1/4})\leq C_{\kappa_{10},B,\tau_0,c}T^{-5/2}.\label{Eq of bbz concetration}
	\end{align}
	By (\ref{Eq of vec bby}), since \(\vec{\bbz}=\bbV(\bbPi)(\bbI_T\otimes\bbGa)\vec{\bbr}+\diag(\bbPi,\cdots,\bbPi^T)(\boldsymbol{1}_{T\times 1}\otimes Y_0)\). It is enough to show that
	\begin{align*}
		n^{-1/2}\Vert\diag(\bbPi,\cdots,\bbPi^T)(Y_0\otimes\bbI_T)\Vert_2=\mrO(n^{-1/2})
	\end{align*}
	and
	\begin{align}
		\mbP(n^{-1/2}\Vert\bbV(\bbPi)(\bbI_T\otimes\bbGa)\vec{\bbr}\Vert_2>n^{-1/4})\leq C_{\kappa_{10},B,\tau_0,c}T^{-5/2}.\label{Eq of bbr concentration}
	\end{align}
	For the first term, it can be derived by
	\begin{align*}
		&\Vert\diag(\bbPi,\cdots,\bbPi^T)(Y_0\otimes\bbI_T)\Vert_2^2=\sum_{t=1}^T\Vert\bbPi^t Y_0\Vert_2^2<\Vert Y_0\Vert_2^2\sum_{t=1}^T\tau_0^{2t}<(1-\tau_0)^{-1}\Vert Y_0\Vert_2^2.
	\end{align*}
	Next, Recall the definition of \(r_t\) in (\ref{Eq of rt}), by Assumption \ref{Ap of finite integration}, we know that \(\varepsilon_{j,t}\) have uniformly bounded \(10\)-th moment, then
	$$\mathbb{E}[r_{j,t}^{10}]\leq C\sum_{k_1,k_2,k_3,k_4,k_5=T}^{\infty}\mbE\left[\prod_{\alpha=1}^5\varphi_{j,k_{\alpha}}^2\varepsilon_{j,k_{\alpha}}^2\right]\leq\frac{C_{\kappa_{10}}}{T^{10}}\left(\sum_{k=T}^{\infty}k|\varphi_{j,k}|\right)^{10}\leq C_{\kappa_{10},B}T^{-10},$$
	where we use Assumption \ref{Ap of panel lag polynomial}. By the Chebyshev's inequality, it gives that
	$$\mathbb{P}(|r_{j,t}|>\epsilon)\leq C_{\kappa_{10},B}(T\epsilon)^{-10}$$
	and
	$$\mathbb{P}(\Vert\vec{\bbr}\Vert_2>n^{1/4})\leq\sum_{j=1}^n\sum_{t=1}^T\mathbb{P}(|r_{j,t}|>n^{1/4}(nT)^{-1/2})\leq C_{\kappa_{10},B,c}T^{-5/2}.$$
	Combining with Lemma \ref{Lem of bbV} and \(\Vert\bbI_T\otimes\bbGa\Vert\leq\Vert\bbGa\Vert\leq M_0^{1/2}\) by Assumption \ref{Ap of nonpanel}, we can imply (\ref{Eq of bbr concentration}). Furthermore, according to Lemma \ref{Lem of Bai 26}, we know that
	\begin{align*}
		n^{-5}\mbE[|\vec{\bbve}'\bbH_{k,l}\vec{\bbve}-\tr(\bbH_{k,l})|^5]\leq C_{\kappa_{10}}n^{-5}\tr(\bbH_{k,l}\bbH_{k,l}')^{5/2}.
	\end{align*}
	Since \(\Vert\bbM\Vert,\Vert\mcI_k\Vert,\Vert\mcI_l\Vert\leq1\), by Lemma \ref{Lem of bbV} and Assumption \ref{Ap of nonpanel}, we have
	\begin{align*}
		&\tr(\bbH_{k,l}\bbH_{k,l}')\leq T\Vert\bbH_{k,l}\Vert^2\leq T(1-\tau_0)^{-2}M_0B^2.
	\end{align*}
	Hence, we obtain that
	\begin{align}
		n^{-5}\mbE[|\vec{\bbve}'\bbH_{k,l}\vec{\bbve}-\tr(\bbH_{k,l})|^5]\leq C_{\kappa_{10},B,M_0,\tau_0,c}n^{-5/2}\notag
	\end{align}
	and
	\begin{align}
		\mbP(n^{-1}|\vec{\bbve}'\bbH_{k,l}\vec{\bbve}-\tr(\bbH_{k,l})|>\epsilon)\leq C_{\kappa_{10},B,M_0,\tau_0,c}\epsilon^{-5}n^{-5/2}.\label{Eq of bbH concentration}
	\end{align}
	As a result, combine (\ref{Eq of bbz concetration}) and (\ref{Eq of bbH concentration}), by the Cauchy's inequality, we have
	\begin{align*}
		&n^{-1}|\vec{\bbz}'\mcI_k'\bbM\mcI_l\bbV(\bbPi)(\bbI_T\otimes\bbGa)\bbXi\vec{\bbve}|\leq n^{-1/2}\Vert\vec{\bbz}\Vert_2\cdot(n^{-1}\vec{\bbve}'\bbH_{l,l}\vec{\bbve})^{1/2}\\
		&\leq2n^{-1/4}(n^{-1}\tr(\bbH_{l,l}))^{1/2}\leq C_{B,M_0,\tau_0,c}n^{-1/4}
	\end{align*}
	with probability at least of \(1-C_{\kappa_{10},B,M_0,\tau_0,c}n^{-5/2}\), so does \(n^{-1}|\vec{\bbz}'\mcI_l'\bbM\mcI_k\bbV(\bbPi)(\bbI_T\otimes\bbGa)\bbXi\vec{\bbve}|\). Therefore, by (\ref{Eq of Yk Yl}), it gives that
	\begin{align}
		\mbP(n^{-1}|Y_{k\cdot}\bbM Y_{l\cdot}'-\tr(\bbH_{k,l})|>n^{-1/15})\leq C_{\kappa_{10},B,M_0,\tau_0,c}n^{-13/6}\label{Eq of Yk Yl concentration 1}
	\end{align}
	for all \(k,l\in\{1,\cdots,n\}\). Moreover, since
	\begin{align*}
		&\tr(\bbH_{k,l})=\tr(\bbXi'(\bbI_T\otimes\bbGa)'\bbV(\bbPi)'\mcI_l'\mcI_k\bbV(\bbPi)(\bbI_T\otimes\bbGa)\bbXi)\\
		&-T^{-1}\boldsymbol{1}_{1\times T}\mcI_k\bbV(\bbPi)(\bbI_T\otimes\bbGa)\bbXi\bbXi'(\bbI_T\otimes\bbGa)'\bbV(\bbPi)'\mcI_l'\boldsymbol{1}_{T\times1},
	\end{align*}
	by Lemma \ref{Lem of bbV} and Assumption \ref{Ap of nonpanel}, we know that
	\begin{align*}
		&T^{-1}|\boldsymbol{1}_{1\times T}\mcI_k\bbV(\bbPi)(\bbI_T\otimes\bbGa)\bbXi\bbXi'(\bbI_T\otimes\bbGa)'\bbV(\bbPi)'\mcI_l'\boldsymbol{1}_{T\times1}|\\
		&\leq\Vert\mcI_k\bbV(\bbPi)(\bbI_T\otimes\bbGa)\bbXi\bbXi'(\bbI_T\otimes\bbGa)'\bbV(\bbPi)'\mcI_l'\Vert\leq(1-\tau_0)^{-2}M_0(2B^2\log T).
	\end{align*}
	and
	\begin{align*}
		&\tr(\bbXi'(\bbI_T\otimes\bbGa)'\bbV(\bbPi)'\mcI_l'\mcI_k\bbV(\bbPi)(\bbI_T\otimes\bbGa)\bbXi)\\
		&=\sum_{t=1}^T(\bbi_k^{(n)}\otimes\bbi_t^{(T)})'\bbV(\bbPi)(\bbI_T\otimes\bbGa)\bbXi\bbXi'(\bbI_T\otimes\bbGa)'\bbV(\bbPi)'(\bbi_l^{(n)}\otimes\bbi_t^{(T)})=n(\bbi_k^{(n)})'\mathring{\bbP}\bbi_l^{(n)},
	\end{align*}
	where \(\mathring{\bbP}=[\mathring{P}_{k,l}]_{n\times n}\) has been defined in (\ref{Eq of bbP}). Now, combine the above results with (\ref{Eq of Yk Yl concentration 1}), we can conclude (\ref{Eq of Yk Yl concentration 2}).
\end{proof}
Here, we first conclude a upper bound for $n^{-2}\Vert\bbY\bbM\bbY'\Vert$ as follows:
\begin{pro}\label{Thm of H1 numerator}
	Under Assumptions in Theorem {\rm \ref{Thm of H1 norm}}, we have \(n^{-2}\Vert\bbY\bbM\bbY'\Vert\overset{\mbP}{\longrightarrow}0\).
\end{pro}
\begin{proof}
	Let \(\mcE\) be an event
	\begin{align*}
		\mcE:=\{k,l\in\{1,\cdots,n\}:|n^{-1}Y_{k\cdot}\bbM Y_{l\cdot}'-\mathring{P}_{k,l}|\leq n^{-1/15}\}.
	\end{align*}
	According the (\ref{Eq of Yk Yl concentration 2}), we know that
	\begin{align*}
		\mbP(\mcE)\geq1-\sum_{k,l=1}^n\mbP(|n^{-1}Y_{k\cdot}\bbM Y_{l\cdot}'-\mathring{P}_{k,l}|>n^{-1/15})\geq1-C_{B,\kappa_{10},\tau_0,c}n^{-1/6}.
	\end{align*}
	Therefore, under \(\mcE\), let
	\begin{align*}
		\Delta:=\frac{1}{n}\bbY\bbM\bbY'-\mathring{\bbP},
	\end{align*}
	where \(\Delta=[\Delta_{k,l}]\) such that \(|\Delta_{k,l}|\leq n^{-1/15}\). By the definition of \(\mathring{\bbP}\) in (\ref{Eq of bbP}), Lemma \ref{Lem of bbV} and Assumption \ref{Ap of nonpanel}, it implies that
	$$\Vert\mathring{\bbP}\Vert\leq\frac{1}{n}\sum_{t=1}^T\Vert\bbP^{t,t}\Vert\leq C_{M_0,\tau_0,c}.$$
	Moreover, \(\Vert\Delta\Vert\leq\Vert\Delta\Vert_F\leq n^{14/15}\). Hence, we obtain that
	\begin{align*}
		\frac{1}{n^2}\Vert\bbY\bbM\bbY'\Vert\leq\frac{1}{n}(\Vert\mathring{\bbP}\Vert+\Vert\Delta\Vert)=\mrO(n^{-1/15})
	\end{align*}
	under \(\mcE\), which completes our proof.
\end{proof}
Next, we provide the lower bound for $\Vert\bbD\Vert$ as follows:
\begin{pro}\label{Thm of H1 dominator}
	Under Assumptions in Theorem {\rm \ref{Thm of H1 norm}}, let \(\bbD=\diag(n^{-1}\bbY\bbM\bbY')\), then \(\Vert\bbD\Vert\geq C_{b,c,\tau_0,m_0}\) with probability at least of \(1-C_{\kappa_{10},B,M_0,\tau_0,c}n^{-7/6}\).
\end{pro}
\begin{proof}
	It is enough to show that each \(|D_{k,k}|\geq C_{b,c,\tau_0,m_0}\) with probability at least of \(1-C_{\kappa_{10},B,M_0,\tau_0,c}n^{-13/6}\). By Lemma \ref{Lem of entrywise concentration}, it is equivalent to
	\begin{align*}
		&n^{-1}\tr(\mcI_k\bbV(\bbPi)(\bbI_T\otimes\bbGa)\bbXi\bbXi'(\bbI_T\otimes\bbGa)'\bbV(\bbPi)'\mcI_k')\geq C_{b,c,\tau_0,m_0}.
	\end{align*}
	Since
	\begin{align*}
		&\sigma_l(\mcI_k\bbV(\bbPi)(\bbI_T\otimes\bbGa)\bbXi)\geq\sigma_{\min}(\bbV(\bbPi))\sigma_{\min}(\bbGa)\sigma_{\min}(\bbXi)\sigma_l(\mcI_1)\geq(1+|\tau_0|)^{-1}bm_0\sigma_l(\mcI_k),
	\end{align*}
	where we use Lemma \ref{Lem of bbV} and Assumptions \ref{Ap of nonpanel}, then
	\begin{align*}
		&n^{-1}\tr(\mcI_k\bbV(\bbPi)(\bbI_T\otimes\bbGa)\bbXi\bbXi'(\bbI_T\otimes\bbGa)'\bbV(\bbPi)'\mcI_k')=\sum_{l=1}^n\sigma_l(\mcI_k\bbV(\bbPi)(\bbI_T\otimes\bbGa)\bbXi)^2\\
		&\geq(1+|\tau_0|)^{-2}b^2m_0^2n^{-1}\sum_{l=1}^n\sigma_l(\mcI_k)^2=C_{b,c,\tau_0,m_0},
	\end{align*}
	where we use the definition of \(\mcI_k\) in (\ref{Eq of mcI}).
\end{proof}
Finally, combining Propositions \ref{Thm of H1 numerator} and \ref{Thm of H1 dominator}, we know that
\begin{align*}
	\frac{1}{n}\Vert\hat{\bbR}\Vert\leq\Vert\bbD^{-1}\Vert\cdot\frac{1}{n^2}\Vert\bbY\bbM\bbY'\Vert\overset{\mbP}{\longrightarrow}0,
\end{align*}
which completes the proof of Theorem \ref{Thm of H1 norm}.
\subsubsection{Partially stationary alternatives}\label{ssec of partially stationary alternatives}
\begin{thm}\label{Thm of partially stationary alternatives}
	Under Assumptions in Theorem {\rm \ref{Thm of H1 norm}}, assume \(\bbPi\) satisfies {\rm (\ref{Eq of partially stationary})} such that
	\begin{align}
		c_1:=\lim_{n\to\infty}\frac{n_1}{n}\in[0,1),\label{Eq of c1}
	\end{align}
	then we have
	$$\lim_{n\to\infty}\mbP(n^{-1}\Vert\hat{\bbR}\Vert\leq c_1\mbE[\fM_{1,1}])=1,$$
    where $\hR$ and $\fM_{1,1}$ are defined in \eqref{Eq of correlation Yt} and \eqref{Eq of fM}, respectively.
\end{thm}
\begin{proof}
	For simplicity, let's define
	\begin{align*}
		Y_t^{(1)}=(Y_{1,t},\cdots,Y_{n_1,t})'\quad{\rm and}\quad Y_t^{(2)}=(Y_{n_1+1,t},\cdots,Y_{n,t})',
	\end{align*}
	where \(Y_t=(Y_{1,t},\cdots,Y_{n,t})\), \(\bbY^{(1)}:=[Y_1^{(1)},\cdots,Y_T^{(1)}]\) and \(\bbY^{(2)}:=[Y_1^{(2)},\cdots,Y_T^{(2)}]\). Further denote
	\begin{align}
		\hat{\bbR}=\left(\begin{array}{cc}
			\hat{\bbR}^{11}&\hat{\bbR}^{12}\\
			\hat{\bbR}^{21}&\hat{\bbR}^{22}
		\end{array}\right),\label{Eq of bbR divide}
	\end{align}
	where \(\hat{\bbR}^{11}\in\mbR^{n_1\times n_1},\hat{\bbR}^{22}\in\mbR^{n_2\times n_2}\) and \(\hat{\bbR}^{12}=(\hat{\bbR}^{12})'\in\mbR^{n_1\times n_2}\). By (\ref{Eq of bbR divide}), \(\hat{\bbR}^{11}\) is the sample correlation matrix of \(\bbY^{(1)}\). According to Theorem \ref{Thm of CLT}, if \(c_1>0\), it gives that
	$$\frac{\Vert\hat{\bbR}^{11}\Vert}{n_1}\overset{\mbP}{\longrightarrow}\mbE[\fM_{1,1}].$$
	Similarly, since \(\hat{\bbR}^{22}\) is the sample correlation matrix of \(\bbY^{(2)}\). By Theorem \ref{Thm of H1 norm}, we have
	$$\frac{\Vert\hat{\bbR}^{22}\Vert}{n_2}\overset{\mbP}{\longrightarrow}0.$$
	For \(\hat{\bbR}^{12}\), notice that
	$$\hat{\bbR}^{12}=n^{-1}(\bbD^{(1)})^{-1/2}\bbY^{(1)}\bbM(\bbY^{(2)})'(\bbD^{(2)})^{-1/2},$$
	where
	$$\bbD^{(1)}:=n^{-1}\diag(\bbY^{(1)}\bbM(\bbY^{(1)})')\quad{\rm and}\quad\bbD^{(2)}:=n^{-1}\diag(\bbY^{(2)}\bbM(\bbY^{(2)})').$$
	Then
	\begin{align*}
		&\Vert\hat{\bbR}^{12}\Vert=n^{-1}\Vert(\bbD^{(1)})^{-1/2}\bbY^{(1)}\bbM^2(\bbY^{(2)})'(\bbD^{(2)})^{-1/2}\Vert\\
		&\leq n^{-1}\Vert(\bbD^{(1)})^{-1/2}\bbY^{(1)}\bbM\Vert\cdot\Vert\bbM(\bbY^{(2)})'(\bbD^{(2)})^{-1/2}\Vert=n^{-1}\Vert\hat{\bbR}^{11}\Vert^{1/2}\cdot\Vert\hat{\bbR}^{22}\Vert^{1/2}\overset{\mbP}{\longrightarrow}0.
	\end{align*}
	Finally, if \(c_1=0\), since all entries of \(\hat{\bbR}^{11}\) is no more than \(1\), then \(\Vert\hat{\bbR}^{11}\Vert\leq n_1\). Now, notice that
	$$\frac{\Vert\hat{\bbR}\Vert}{n}\leq\frac{\Vert\hat{\bbR}^{11}\Vert}{n}+\frac{\Vert\hat{\bbR}^{22}\Vert}{n}+\frac{\Vert\hat{\bbR}^{12}\Vert}{n}=\frac{n_1}{n}\frac{\Vert\hat{\bbR}^{11}\Vert}{n_1}+\frac{n_2}{n}\frac{\Vert\hat{\bbR}^{22}\Vert}{n_2}+\frac{\Vert\hat{\bbR}^{12}\Vert}{n},$$
	by our previous arguments, the later two terms will converge to 0 in probability. For the first one, if \(c_1>0\), we conclude that
	$$\frac{\Vert\hat{\bbR}\Vert}{n}\leq\frac{n_1}{n}\frac{\Vert\hat{\bbR}^{11}\Vert}{n_1}+\frac{n_2}{n}\frac{\Vert\hat{\bbR}^{22}\Vert}{n_2}+\frac{\Vert\hat{\bbR}^{12}\Vert}{n}\overset{\mbP}{\longrightarrow}c_1\mbE[\fM_{1,1}].$$
	Otherwise, \(\lim_{n\to\infty}\frac{n_1}{n}=0\), since \(\Vert\hat{\bbR}^{11}\Vert\leq n_1\), we can still derive that 
	$$\frac{\Vert\hat{\bbR}\Vert}{n}\leq\frac{n_1}{n}\frac{\Vert\hat{\bbR}^{11}\Vert}{n_1}+\frac{n_2}{n}\frac{\Vert\hat{\bbR}^{22}\Vert}{n_2}+\frac{\Vert\hat{\bbR}^{12}\Vert}{n}\overset{\mbP}{\longrightarrow}0,$$
	which completes our proof.
\end{proof}
Finally, since the test statistic for our unit root test \eqref{Eq of unit root test} is
\begin{align}
    \widehat{T}_n(0)=\frac{\sqrt{n}}{\mfm_{1,1}}\left(\frac{\Vert\hR\Vert}{n}-\mbE[\fM_{1,1}]\right),\label{Eq of unit root test statistic}
\end{align}
where $\mfm_{1,1}$ is defined in \eqref{Eq of mfm kk}. Although the explicit value of $\mfm_{1,1}$ is generally unknown, we can use a bootstrap method to solve this difficulty, see \S\ref{sec of estimation variance} for details. Under $H_0$, by Theorem \ref{Thm of CLT I1}, we know that $\widehat{T}_n(0)\overset{d}{\longrightarrow}\mcN(0,1)$. On the other hand, under the totally stationary alternative $H_1$ in \eqref{Eq of unit root test}, Theorem \ref{Thm of H1 norm} implies that $\lim_{n\to\infty}\mbP(\widehat{T}_n(0)\asymp\mrO(-\sqrt{n}))=1$ (``$\asymp$'' is defined in \eqref{Eq of asymp mbP}). Under the partially stationary alternative $H_1$ in \eqref{Eq of partially stationary}, by Theorem \ref{Thm of partially stationary alternatives}, we know that $\lim_{n\to\infty}\mbP(\mbE[\fM_{1,1}]-n^{-1}\Vert\hR\Vert>(1-c_1)\mbE[\fM_{1,1}])=1$. Since $c_1<1$, it gives that $\lim_{n\to\infty}\mbP(\widehat{T}_n(0)\asymp\mrO(-\sqrt{n}))=1$. In summary, under these two kinds alternatives (totally stationary and partially stationary), we always have $\lim_{n\to\infty}\mbP(\widehat{T}_n(0)\asymp\mrO(-\sqrt{n}))=1$, which is different with $\widehat{T}_n(0)\overset{d}{\longrightarrow}\mcN(0,1)$ under $H_0$. And this difference is the key of distinguishing $H_0$ and $H_1$.
\subsection{Determine the number of unit roots in high-dimensional autoregressive processes}\label{sec of number of unit roots}
In this section, we will establish a forward sequential test to determine the number of unit roots for high-dimensional time series data. Let \(X_t\) be an \(n\)-dimensional AR process and define the following hypothesis
\begin{center}
	\(\mbH_0^{(p)}\): \(X_t\) has \(p\) unit roots, i.e. \((1-L)^pX_t=e_t\),
\end{center}
where \(p\in\mbN^+\) and \(e_t=\bbGa\sum_{k=0}^{\infty}\Psi_k\varepsilon_{t-k}\) satisfying Assumptions \ref{Ap of panel lag polynomial} and \ref{Ap of nonpanel} and \(\varepsilon_t\overset{i.i.d.}{\sim}\mcN(\boldsymbol{0},\bbI_n)\). Moreover, define
\begin{center}
	\(\mbH_0^{(0)}\): \(X_t\) is stationary,\quad and\quad\(\mbH_0^{(\infty)}\): \(X_t\) has super nonstationary roots.
\end{center}
Currently, we have established the test procedure of \(\mbH_0^{(0)}\) versus \(\mbH_0^{(1)}\) in \S\ref{sec of unit root test}, which is the unit root test. Thus, starting from testing \(\mbH_0^{(0)}\) versus \(\mbH_0^{(1)}\), suppose we reject \(\mbH_0^{(0)}\), then we will test \(\mbH_0^{(1)}\) versus \(\mbH_0^{(2)}\).
\subsubsection{Test statistics}\label{ssec of test statistics}
When \(X_t\) is a \(n\)-dimensional random walk, i.e. under \(\mbH_0^{(1)}\), given the observations \(\bbX=[X_1,\cdots,X_T]\) satisfying Assumption \ref{Ap of highdimensionality}, we have established the CLT for the largest eigenvalue of \(\bbX\)'s sample correlation matrix in \S\ref{sec of CLT bbR} and \S\ref{sec of CLT correlation dependent}. For the sake of technical issues, when testing \(\mbH_0^{(p)}\) versus \(\mbH_0^{(p+1)}\) for \(p\in\mbN^+\), we will not directly construct the test statistics based on the largest eigenvalue of the sample correlation matrix of \(\bbX\). Precisely, let's first define an operator as follows:
\begin{align}
	\mathscr{T}_p(\bbX)=\left\{\begin{array}{ll}
		\bbX\bbU^{-p}(\bbM\bbU'\bbU\bbM)^{p/2}&p\equiv0\mod2,\\
		\bbX\bbU^{-p}\bbM\bbU'(\bbU\bbM\bbU')^{(p-1)/2}&p\equiv1\mod2.
	\end{array}\right.\label{Eq of operator msT}
\end{align}
Then construct
\begin{align}
	\hR(p)=\mathscr{R}(\mathscr{T}_p(\bbX))=\mathscr{T}_p(\bbX)'\diag(\mathscr{T}_p(\bbX)\mathscr{T}_p(\bbX)')^{-1}\mathscr{T}_p(\bbX),\label{Eq of hat Rp}
\end{align}
and we will use the largest eigenvalue of \(\hR(p)\) to construct the test statistic. For preliminaries, recall that \(\sigma_1\geq\cdots\geq\sigma_T\) are singular values of \(\bbM\bbU'\) in (\ref{Eq of SVD of MU}), as a generalization of $\fM_{k,l}$ defined in \eqref{Eq of fM}, given any \(x\in[1,\infty)\) and \(\{Z_t:t\in\mbN^+,z_t\overset{i.i.d.}{\sim}\mcN(0,1)\}\), define
\begin{align}
	\fM_{k,l}(x)=\frac{(kl)^{-x}Z_kZ_l}{\sum_{t=1}^{\infty}t^{-2x}Z_t^2}.\label{Eq of fMx}
\end{align}
Moreover, given \(\tilde{\bbGa}\) defined in (\ref{Eq of bbF}), let \(\{\vec{z}_t=(z_{1,t},\cdots,z_{n,t})'\sim\mcN(\boldsymbol{0},\tilde{\bbGa}):t=1,\cdots,T-1\}\) be a sequence of $n$-dimensional i.i.d. normal vectors, then define
\begin{align}
	\mfm_{k,k}^2(x)=\frac{1}{n}\sum_{i_1,i_2=1}^n\Cov\left(\frac{\sigma_k^{2x}z_{i_1,k}^2}{\sum_{t=1}^{T-1}\sigma_k^{2x}z_{i_1,t}^2},\frac{\sigma_k^{2x}z_{i_2,k}^2}{\sum_{t=1}^{T-1}\sigma_k^{2x}z_{i_2,t}^2}\right).\label{Eq of mfm}
\end{align}
Now, we will show that
\begin{thm}\label{Thm of forwards}
    Under Assumptions {\rm \ref{Ap of highdimensionality}, \ref{Ap of panel lag polynomial}, \ref{Ap of nonpanel}} and {\rm (\ref{Eq of Ap of panel lag polynomial})}, under \(\mbH_0^{(p)}:(1-L)^pX_t=e_t\), where \(e_t=\bbGa\sum_{k=0}^{\infty}\Psi_k\varepsilon_{t-k}\) and \(\varepsilon_t\overset{i.i.d.}{\sim}\mcN(\boldsymbol{0},\bbI_n)\), let \(\hla_1(p)\) be the largest eigenvalue of \(\hR(p)\) defined in {\rm (\ref{Eq of hat Rp})}, then we have
    \begin{align*}
	\left\{\begin{array}{ll}
		\frac{\sqrt{n}}{\mfm_{1,1}(p)}\left(\frac{\hla_1(p)}{n}-\mbE[\fM_{1,1}(p)]\right)\overset{d}{\longrightarrow}\mcN(0,1),&{\rm under\ }\mbH_0^{(p)},\\
		\frac{\sqrt{n}}{\mfm_{1,1}(p+1)}\left(\frac{\hla_1(p)}{n}-\mbE[\fM_{1,1}(p+1)]\right)\overset{d}{\longrightarrow}\mcN(0,1),&{\rm under\ }\mbH_0^{(p+1)},
	\end{array}\right.
    \end{align*}
    where \(\mfm_{1,1}(x)\asymp\mrO(1)\) for any fixed \(x\in[1,\infty)\) and ``$\asymp$'' is defined in \eqref{Eq of asymp mbP}.
\end{thm}
\begin{proof}
For simplicity, we only present the detailed proofs for
\begin{align}
	\frac{\sqrt{n}}{\mfm_{1,1}(1)}\left(\frac{\hla_1(1)}{n}-\mbE[\fM_{1,1}(1)]\right)\overset{d}{\longrightarrow}\mcN(0,1),\quad{\rm under\ }\mbH_0^{(1)},\label{Eq of I1 CLT}
\end{align}
and
\begin{align}
	\frac{\sqrt{n}}{\mfm_{1,1}(2)}\left(\frac{\hla_1(1)}{n}-\mbE[\fM_{1,1}(2)]\right)\overset{d}{\longrightarrow}\mcN(0,1),\quad{\rm under\ }\mbH_0^{(2)},\label{Eq of I2 CLT}
\end{align}
since the arguments for more general \(p\) are totally the same as those for (\ref{Eq of I1 CLT}) and (\ref{Eq of I2 CLT}). Under \(\mbH_0^{(p)}\), i.e. \(X_t\) has \(p\) unit roots, we have \(\bbX=\bbe\bbU^p\), where \(\bbe=[e_1,\cdots,e_T]\) and \(\bbU\) is an \(T\times T\) upper toeplitz matrix with the \(k\)-th sup-diagonals and main diagonals are \(1\). Thus, by (\ref{Eq of hat Rp}), it gives that
$$\left\{\begin{array}{ll}
     \hR(1)=\bbU\bbM\bbe'\diag(\bbe\bbM\bbU'\bbU\bbM\bbe')^{-1}\bbe\bbM\bbU',&{\rm under\ } \mbH_0^{(1)},\\
     \hR(1)=\bbU\bbM\bbU'\bbe'\diag(\bbe\bbU\bbM\bbU'\bbU\bbM\bbU'\bbe')^{-1}\bbe\bbU\bbM\bbU',&{\rm under\ }\mbH_0^{(2)}.
\end{array}\right.$$
Let's first prove (\ref{Eq of I1 CLT}). By the SVD of \(\bbM\bbU'\) in (\ref{Eq of SVD of MU}), denote $\Sigma=\diag(\sigma_1,\cdots,\sigma_T)$, where $\sigma_1,\cdots,\sigma_T$ are singular values of \(\bbM\bbU'\). Then under \(\mbH_0^{(1)}\), we have 
$$\hR(1)=\bbV\Sigma\bbW'\bbe'\diag(\bbGa\bbe\bbW\Sigma^2\bbW'\bbe'\bbGa')^{-1}\bbe\bbW\Sigma\bbV'.$$
Similar as (\ref{Eq of la1}), let \(\tF_1(1)\) be the eigenvector of \(\hla_1(1)\) such that \(\tF_1=\sum_{t=1}^{T-1}\talpha_{1,t}\bbv_t\), where \(\sum_{t=1}^{T-1}\talpha_{1,t}^2=1\), then
$$\hla_1(1)=\sum_{k,l=1}^{T-1}\talpha_{1,k}\talpha_{1,l}\sum_{i=1}^n\frac{\sigma_k\sigma_l(\bbe_i\bbw_k)(\bbe_i\bbw_l)}{\sum_{t=1}^{T-1}\sigma_t^2(\bbe_i\bbw_t)^2}.$$
Since the proof procedures of (\ref{Eq of I1 CLT}) are nearly the same as what we have done in Theorem \ref{Thm of CLT I1}, we only briefly present the key steps and omit the details to save space. 
\begin{enumerate}
	\item To distinguish \(\bbe_j\), denote \(\fe_j\) to be the \(j\)-th {\bf row} of \(\sum_{k=0}^{\infty}\Psi_k\varepsilon_{t-k}\). Paralleling with Lemma \ref{Lem of covariance adjust}, we have
	\begin{align}
		\left|\Cov(\fe_j\bbw_k,\fe_j\bbw_l)\right|\leq C_BT^{3\delta-2},\quad\left\{\begin{array}{cc}
			1\leq k\leq T^{\delta}&T^{1-\delta}<l<T-T^{1-\delta}\\
			1\leq l\leq T^{\delta}&T^{1-\delta}<k<T-T^{1-\delta}
		\end{array}\right..\label{Eq of Lem of covariance adjust}
	\end{align}
	Without loss of generality, assume \(1\leq k\leq T^{\delta}\ll T^{1-\delta}<l<T-T^{1-\delta}\). Notice that
	\begin{align*}
		&\fe_j\bbw_k=-\sqrt{\frac{2}{T}}\sum_{t=1}^Te_{j,t}\cos\left(\pi k(2t-1)/(2T)\right)=-\sqrt{\frac{2}{T}}\sum_{t=1}^Te_{j,t}\Re\left(\exp\left({\rm i}\pi k(2t-1)/(2T)\right)\right)\\
		&=-\sqrt{\frac{1}{2T}}\sum_{t=1}^Te_{j,t}\left(\exp\left({\rm i}\pi k(2t-1)/(2T)\right)+\exp\left(-{\rm i}\pi k(2t-1)/(2T)\right)\right)\\
		&=-\sqrt{\pi}\left(d_j(-\theta_k/2)e^{-{\rm i}\theta_k/4}+d_j(\theta_k/2)e^{{\rm i}\theta_k/4}\right),
	\end{align*}
	where \(d_j(\theta)\) is defined in (\ref{Eq of dj}). Hence,
	\begin{align*}
		&\Cov(\fe_j\bbw_k,\fe_j\bbw_l)=\pi\big(e^{{\rm i}(\theta_l-\theta_k)/4}\mbE\left[d_j(-\theta_k/2)d_j(\theta_l/2)\right]+e^{-{\rm i}(\theta_k+\theta_l)/4}\mbE\left[d_j(-\theta_k/2)d_j(-\theta_l/2)\right]\\
		&+e^{{\rm i}(\theta_k+\theta_l)/4}\mbE\left[d_j(\theta_k/2)d_j(\theta_l/2)\right]+e^{{\rm i}(\theta_k-\theta_l)/4}\mbE\left[d_j(\theta_k/2)d_j(-\theta_l/2)\right]\big).
	\end{align*}
	According to the proof of Lemma \ref{Lem of covariance adjust}, when \(k\not\equiv l\mod2\), we know that
	\begin{align*}
		&e^{{\rm i}(\theta_l-\theta_k)/4}\mbE[d_j(-\theta_k/2)\cdot d_j(\theta_l/2)]\\
		&=\frac{-e^{{\rm i}(\theta_l-\theta_k)/4}}{2\pi^2T\left[1-e^{{\rm i}\theta_l/2}\right]}\left(\phi_0^j+\sum_{r=1}^{T-1}\phi_r^j\left(\cos\left(r\theta_k/2\right)+\cos\left(r\theta_l/2\right)\right)\right)+C_BT^{3\delta-2},
	\end{align*}
	and
	\begin{align*}
		&e^{{\rm i}(\theta_k-\theta_l)/4}\mbE[d_j(\theta_k/2)\cdot d_j(-\theta_l/2)]\\
		&=\frac{-e^{{\rm i}(\theta_k-\theta_l)/4}}{2\pi^2T\left[1-e^{-{\rm i}\theta_l/2}\right]}\left(\phi_0^j+\sum_{r=1}^{T-1}\phi_r^j\left(\cos\left(r\theta_k/2\right)+\cos\left(r\theta_l/2\right)\right)\right)+C_BT^{3\delta-2}.
	\end{align*}
	Since
	\begin{align*}
		&\left|\frac{e^{{\rm i}(\theta_l-\theta_k)/4}}{1-e^{{\rm i}\theta_l/2}}+\frac{e^{{\rm i}(\theta_k-\theta_l)/4}}{1-e^{-{\rm i}\theta_l/2}}\right|=\left|\frac{4\sin(\theta_k/4)\sin(\theta_l/4)}{(1-\cos(\theta_l/2))^2+\sin^2(\theta_l/2)}\right|\\
		&\leq\frac{4\pi\sin(\theta_k/4)}{\sin(\theta_l/2)}\leq\frac{\pi^2\theta_k}{\theta_l}\leq\pi^2T^{2\delta-1},
	\end{align*}
	where we use the fact that \(1\leq k\leq T^{\delta}<T^{1-\delta}\leq l\) in the last step, then
	$$\left|e^{{\rm i}(\theta_l-\theta_k)/4}\mbE[d_j(-\theta_k/2)\cdot d_j(\theta_l/2)]+e^{{\rm i}(\theta_k-\theta_l)/4}\mbE[d_j(\theta_k/2)\cdot d_j(-\theta_l/2)]\right|\leq C_BT^{3\delta-1}.$$
	Similarly, we can also derive that
	$$\left|e^{{\rm i}(\theta_k-\theta_l)/4}\mbE[d_j(\theta_k/2)\cdot d_j(-\theta_l/2)]+e^{{\rm i}(\theta_l-\theta_k)/4}\mbE[d_j(-\theta_k/2)\cdot d_j(\theta_l/2)]\right|\leq C_BT^{3\delta-1}.$$
	Then we conclude (\ref{Eq of Lem of covariance adjust}) when \(k\not\equiv l\mod2\), for the other case, the proofs are totally the same, so we omit them here.
	\item Paralleling with Lemma \ref{Lem of variance adjust}, we have
	\begin{align}
		\left|\Var(\fe_j\bbw_t)-2\pi f_j(0)\right|\leq C_Bt/T.\label{Eq of Lem of variance adjust}
	\end{align}
	Initially, we can use the same method of Lemma 10 in \cite{onatski2021spurious} to show that \(\left|\Var(\fe_j\bbw_t)-2\pi f_j(\pi t/T)\right|\leq C_B/T\), then repeat the proof of Lemma \ref{Lem of variance adjust} to obtain (\ref{Eq of Lem of variance adjust}).
	\item Based on (\ref{Eq of Lem of covariance adjust}) and (\ref{Eq of Lem of variance adjust}), we can repeat the proofs of Lemmas \ref{Lem of remove dependence} and \ref{Lem of unify variance} to conclude that
	\begin{align}
		\frac{1}{\sqrt{n}}\sum_{i=1}^n\frac{\sigma_k\sigma_l(\bbe_i\bbw_k)(\bbe_i\bbw_l)}{\sum_{t=1}^{T-1}\sigma_t^2(\bbe_i\bbw_t)^2}\overset{L^2}{\longrightarrow}\frac{1}{\sqrt{n}}\sum_{i=1}^n\frac{\sigma_k\sigma_l z_{i,k}z_{i,l}}{\sum_{t=1}^{T-1}\sigma_k^2z_{i,t}^2},\label{Eq of Lem of L2 convergence}
	\end{align}
	where \(\{\vec{z}_t=(z_{1,t},\cdots,z_{n,t})'\sim\mcN(\boldsymbol{0},\tilde{\bbGa}):t=1,\cdots,T-1\}\) is a sequence of $n$-dimensional i.i.d. normal vectors and $\tilde{\bbGa}$ is defined in \eqref{Eq of bbF}. Combining with (\ref{Eq of Lem of L2 convergence}) and Lemma \ref{Lem of positive covariance}, it further gives that
	\begin{align}
		\frac{1}{n}\Var\left(\sum_{i=1}^n\frac{\sigma_k\sigma_l z_{i,k}z_{i,l}}{\sum_{t=1}^{T-1}\sigma_k^2z_{i,t}^2}\right)\leq C_{B,b,M_0,m_0}(kl)^{-2}.\label{Eq of Lem of bounded variance}
	\end{align}
    Hence, we can obtain that $\mfm_{1,1}^2(1)\asymp\mrO(1)$ in \eqref{Eq of mfm}.
	\item Based on (\ref{Eq of Lem of bounded variance}), we can establish the asymptotic behaviors of \(\talpha_{1,k}\) as in Lemma \ref{Lem of alpha}, i.e.
	$$\lim_{n\to\infty}\sqrt{n}\mbE\left[1-\talpha_{1,1}^2\right]=0.$$
	Again, by the same method in Theorem \ref{Thm of convergence in probability}, we derive that
	$$\frac{\hla_1(1)-\mbE[\hla_1(1)]}{\sqrt{n}}\overset{\mbP}{\longrightarrow}\frac{1}{\sqrt{n}}\sum_{i=1}^n\left(\frac{\sigma_1^2(\bbe_i\bbw_1)^2}{\sum_{t=1}^{T-1}\sigma_t^2(\bbe_i\bbw_t)^2}-\mbE\left[\frac{\sigma_1^2(\bbe_i\bbw_1)^2}{\sum_{t=1}^{T-1}\sigma_t^2(\bbe_i\bbw_t)^2}\right]\right).$$
	\item Finally, by the \(m\)-dependent condition of \(\tilde{\bbGa}\) in Assumption \ref{Ap of m dependent}, we repeat the proof of Theorem \ref{Thm of CLT I1} and obtain 
	$$\frac{\sqrt{n}}{\mfm_{1,1}(1)}\left(\frac{\hla_1(1)}{n}-\mbE[\fM_{1,1}(1)]\right)\overset{d}{\longrightarrow}\mcN(0,1),$$
	where \(\mfm_{1,1}(x)\) is defined in (\ref{Eq of mfm}).
\end{enumerate}
For the proofs of (\ref{Eq of I2 CLT}), by the SVD of \(\bbM\bbU'\) in (\ref{Eq of SVD of MU}), under \(\mbH_0^{(2)}\), we have
$$\hR(1)=\bbV\Sigma^2\bbV'\bbe'\diag(\bbe\bbV\Sigma^4\bbV'\bbe')^{-1}\bbe\bbV\Sigma^2\bbV',$$
then it is easy to see that 
$$\hla_1(1)=\sum_{k,l=1}^{T-1}\talpha_{1,k}\talpha_{1,l}\sum_{i=1}^n\frac{\sigma_k^2\sigma_l^2(\bbe_i\bbv_k)(\bbe_i\bbv_l)}{\sum_{t=1}^{T-1}\sigma_t^4(\bbe_i\bbv_t)^2},$$
where \(\hF_1(1)=\sum_{t=1}^{T-1}\talpha_{1,t}\bbv_t\) is the eigenvector of \(\hla_1(1)\). Since the asymptotic behaviors of \(\bbe_i\bbv_k\) have been well studied in Lemmas \ref{Lem of covariance adjust}, \ref{Lem of variance adjust}, \ref{Lem of remove dependence}, \ref{Lem of unify variance}, the proofs of (\ref{Eq of I2 CLT}) are totally the same as what we have done in Theorem \ref{Thm of CLT I1}, so we omit details here to save space.
\end{proof}
\subsubsection{Test procedures}\label{sec of multiple unit roots tests}
In this part, we will construct the test statistic for $\mbH_0^{(p)}$ versus $\mbH_0^{(p+1)}$. Recall $\hla_1(p)$
Here, let's construct
\begin{align}
	\widehat{T}_n(p):=\sqrt{n}\left(\frac{\hla_1(p)}{n}-\mbE[\fM_{1,1}(p)]\right),\label{Eq of hat Tp}
\end{align}
by Theorem \ref{Thm of forwards}, we have
\begin{align}
	\widehat{T}_n(p)/\mfm_{1,1}(p)\overset{d}{\longrightarrow}\mcN(0,1),\quad{\rm under\ }\mbH_0^{(p)}.\label{Eq of CLT hat Tp}
\end{align}
On the other hand, under \(\mbH_0^{(p+1)}\), Theorem \ref{Thm of forwards} implies that 
\begin{align*}
    \widehat{T}_n(p)/\mfm_{1,1}(p+1)+\sqrt{n}\big(\mbE[\fM_{1,1}(p)]-\mbE[\fM_{1,1}(p+1)]\big)/\mfm_{1,1}(p+1)\overset{d}{\longrightarrow}\mcN(0,1),
\end{align*}
where $\mfm_{1,1}(p+1)\asymp\mrO(1)$. By the definition of $\fM_{1,1}(x)$ in \eqref{Eq of fMx}, we know that $\fM_{1,1}(p+1)>\fM_{1,1}(p)$, so it yields that \(\widehat{T}_n(p)\asymp\mrO(\sqrt{n})\) under \(\mbH_0^{(p+1)}\). Therefore, for testing 
\begin{center}
	\(\mbH_0^{(p)}\): \(X_t\) has \(p\) unit roots.\quad versus\quad\(\mbH_0^{(p+1)}\): \(X_t\) has \(p+1\) unit roots,
\end{center}
we will reject \(\mbH_0^{(p)}\) if \(\widehat{T}_n(p)>\log(n)\) and the asymptotic power is
\begin{align}
	\lim_{n\to\infty}\mbP\left(\widehat{T}_n(p)>\log(n)\big|\mbH_0^{(p+1)}\right)=1.\label{Eq of power hat Tp}
\end{align}
In summary, we test the number of unit roots by the following inductively procedures:
\begin{enumerate}
	\item Given \(\bbX\), we first compute the all eigenvalues \(\hla_1\geq\cdots\geq\hla_T\) of the sample correlation matrix of \(\bbX\), if it shows the low rank structure, i.e. there exists a \(K\in\mbN\) such that \(\hla_k=0\) for \(k\geq K\), then we accept 
	\begin{center}
		\(\mbH_0^{(\infty)}\): \(X_t\) has totally nonstationary roots.
	\end{center}
	Otherwise, let's construct \(\widehat{T}_n(0)\) by (\ref{Eq of hat Tp}), (\ref{Eq of hat Rp}) and (\ref{Eq of operator msT}), if \(\widehat{T}_n(0)<-\log(n)\), we reject \(\mbH_0^{(0)}\); otherwise, we accept \(\mbH_0^{(0)}\) and stop.
	\item Suppose our current test is
	\begin{center}
		\(\mbH_0^{(p)}\): \(X_t\) has \(p\) unit roots,\quad versus\quad\(\mbH_0^{(p+1)}\): \(X_t\) has \(p+1\) unit roots,
	\end{center}
	where \(p\geq1\). Let's construct \(\widehat{T}_n(p)\) by (\ref{Eq of hat Tp}). If \(\widehat{T}_n(p)>\log(n)\), we reject \(\mbH_0^{(p)}\) and move to test \(\mbH_0^{(p+1)}\) versus \(\mbH_0^{(p+2)}\). Otherwise, we accept \(\mbH_0^{(p)}\) and stop.
\end{enumerate}
\subsection{Estimation of the asymptotic variance of the statistic for unit root tests}\label{sec of estimation variance}
Recall the statistic $\widehat{T}_n(0)$ in \eqref{Eq of unit root test statistic} for the unit root test \eqref{Eq of unit root test}, since the asymptotic variance $\mfm_{1,1}$ is generally unknown, we will estimate this $\mfm_{1,1}$ in this section. Actually, by the definition of \(\mfm_{1,1}(x)\) in (\ref{Eq of mfm}), we see that $\mfm_{1,1}$ in \eqref{Eq of mfm kk} is indeed a special case of \(\mfm_{1,1}(x)\) in (\ref{Eq of mfm}). Now, given the noise matrix \(\bbe=[e_1,\cdots,e_T]\), where $e_t$ is defined in \eqref{Eq of nonpanel Xt} satisfying Assumptions \ref{Ap of panel lag polynomial}, \ref{Ap of nonpanel} and \ref{Ap of m dependent}, we will estimate \(\mfm_{1,1}(x)\) in (\ref{Eq of mfm}) for \(x\in[1,\infty)\). Recall that 
$$\mfm_{1,1}^2(x)=\frac{1}{n}\Var\left(\sum_{i=1}^n\frac{\sigma_1^{2x}z_{i,1}^2}{\sum_{t=1}^{T-1}\sigma_t^{2x}z_{i,t}^2}\right),$$
where $\{\vec{z}_t=(z_{1,t},\cdots,z_{n,t})'\overset{i.i.d.}{\sim}\mcN(\boldsymbol{0},\tilde{\bbGa}):t=1,\cdots,T-1\}$ and $\tilde{\bbGa}$ is defined in \eqref{Eq of bbF} and $\sigma_t$ are singular values of $\bbM\bbU'$ in \eqref{Eq of SVD of MU}. Hence, to estimate $\mfm_{1,1}(x)$, we need to estimate $\tilde{\bbGa}$ first. Here, let
$$\vec{f}(0):=\frac{1}{2\pi}\bbGa\left(\sum_{l=0}^{\infty}\Psi_l^2\right)\bbGa'+\frac{1}{\pi}\sum_{k=1}^{\infty}\bbGa\left(\sum_{l=0}^{\infty}\Psi_l\Psi_{k+l}\right)\bbGa',$$
where $\bbGa$ is the cross-sectional matrix in \eqref{Eq of nonpanel Xt}, by (\ref{Eq of bbF}), we know that  
$$\tilde{\bbGa}=\diag(\vec{f}(0))^{-1/2}\vec{f}(0)\diag(\vec{f}(0))^{-1/2}.$$
Hence, to estimate \(\tilde{\bbGa}\), it suffices to estimate $\vec{f}(0)$. Here, we will use the hard thresholding method in \cite{sun2018large} and \cite{zhang2021convergence} to estimate \(\tilde{\Gamma}\). Precisely, \cite{sun2018large} proposed a estimation method for \(\vec{f}(0)\), one can find the extension version for the coherence \(\tilde{\bbGa}\) in Corollary 5 of \cite{zhang2021convergence}.
\begin{enumerate}
	\item First, define
	$$\bbH:=\frac{1}{2\pi T}\sum_{t=1}^Te_te_t'+\frac{1}{2\pi}\sum_{l=1}^{[T^{1/2}]}\frac{1}{T-l}\sum_{t=l+1}^T\left(e_te_{t-l}'+e_{t-l}e_t\right)',$$
    and \(\tilde{\bbH}:=\diag(\bbH)^{-1/2}\bbH\diag(\bbH)^{-1/2}\).
	\item According to Assumptions \ref{Ap of highdimensionality}, \ref{Ap of panel lag polynomial} and \ref{Ap of nonpanel}, we define
	$${\rm ess}\Vert\vec{f}\Vert:={\rm ess}\inf_{x\in[0,2\pi]}\Vert\vec{f}(x)\Vert\leq\frac{M_0B^2}{2\pi},$$
	and
	$$\Omega_T(\vec{f}):=\max_{1\leq s,t\leq n}\sum_{k=0}^T(1+k)\left|\Gamma_s\left(\sum_{l=0}^{\infty}\Psi_l\Psi_{k+l}\right)\Gamma_t'\right|\leq M_0B^2,$$
	and
	$$L_T(\vec{f}):=\max_{1\leq s,t\leq n}\sum_{k>T}^{\infty}\left|\Gamma_s\left(\sum_{l=0}^{\infty}\Psi_l\Psi_{k+l}\right)\Gamma_t'\right|\leq M_0B^2T^{-2},$$
	then given a sufficiently large constant \(R\) and \(M:=M_T\) such that \(\big({\rm ess}\Vert\vec{f}\Vert\big)^2\log(T)\leq M\leq T/\Omega_T(\vec{f})\), define a threshold
	$$\nu:=2R\cdot{\rm ess}\Vert\vec{f}\Vert\sqrt{\frac{\log n}{M}}+\frac{2M+\pi^{-1}}{T}\Omega_T(\vec{f})+\frac{L_T(\vec{f})}{\pi}.$$
	\item Next, construct the hard thresholding operator as follows:
	$$T_{\nu}(x):=\left\{\begin{array}{cc}
		x&|x|\geq\nu\\
		0&|x|<0
	\end{array}\right.,$$
	and define \(T_{\nu}(\tilde{\bbH}):=\big[T_{\nu}(\tilde{H}_{s,t})\big]_{s,t}\in\mbR^{n\times n}\). Combining Proposition 3.6 in \cite{sun2018large} and Corollary 5 of \cite{zhang2021convergence}, we can conclude that
	\begin{align}
		\mbP\big(n^{-1}\Vert T_{\nu}(\tilde{\bbH})-\tilde{\bbGa}\Vert_F^2\geq m\nu^2\big)\leq C_Rn^{-1},\label{Eq of positive definite 1}
	\end{align}
	where \(m=m_n=\mro(\sqrt{n})\) is defined in Assumption \ref{Ap of m dependent}.
	\item Since \(T_{\nu}(\tilde{\bbH})\) may not be positive semi-definite, then we will apply the method in \S2.2 of \cite{chen2013covariance} to remove the negative eigenvalues of \(T_{\nu}(\tilde{\bbH})\). Precisely, given the threshold \(\nu\) as above, \cite{chen2013covariance} suggested a new threshold as follows:
	$$\mu=\left(\frac{\nu^2}{n}\sum_{s,t=1}^n1_{|\tilde{H}_{s,t}|\geq\nu}\right)^{1/2}.$$
	For simplicity, we set \(\nu=T^{-1/3}\log T\) in our numerical experiments. Suppose the SVD of \(T_{\nu}(\tilde{\bbH})\) is
	$$T_{\nu}(\tilde{\bbH})=\sum_{i=1}^n\gamma_i\bbq_i\bbq_i',\quad\gamma_1\geq\cdots\geq\gamma_n,$$
	let's construct
	$$T_{\nu,\mu}(\tilde{\bbH})=\sum_{i=1}^n\max\{\gamma_i,\mu\}\bbq_i\bbq_i',$$
	it is easy to see that \(T_{\nu,\mu}(\tilde{\bbH})\) is positive semi-definite, and we show that
	$$\mbP\left(n^{-1}\Vert T_{\nu,\mu}(\tilde{\bbH})-\tilde{\bbGa}\Vert_F^2\geq Cm\nu^2\right)\leq C_Rn^{-1}$$
    by the method in \S2.2 of \cite{chen2013covariance}.
\end{enumerate}
In practice, when $X_t$ is a random walk generated by \eqref{Eq of nonpanel Xt}, we can recover $e_t$ by $X_t-X_{t-1}$. Consequently, we can repeat the above procedures to estimate \(T_{\nu,\mu}(\tilde{\bbH})\). Next, given a significance level \(\alpha\in(0,1)\), we can estimate the upper and lower \(\alpha\) quantile of \(\widehat{T}_n(0)\) in \eqref{Eq of unit root test statistic} as follows:
\begin{enumerate}
	\item Given \(T_{\nu,\mu}(\tilde{\bbH}),n,T\) and the number of simulations \(B\in\mbN^+\), simulate \(B\) standard Gaussian random matrices \(\bbe^{(1)},\cdots,\bbe^{(B)}\in\mbR^{n\times T}\), i.e. all \(\bbe^{(b)}\)'s entries are i.i.d. \(\mcN(0,1)\). And denote \(\bbOme\) to be the square root of \(T_{\nu,\mu}(\tilde{\bbH})\), i.e. \(\bbOme^2:=T_{\nu,\mu}(\tilde{\bbH})\).
	\item For each \(\bbe^{(b)}\), construct 
	$$\bbX^{(b)}:=\bbOme\bbe^{(b)}\diag(1,\cdots,T^{-1})\quad{\rm and}\quad\bbR^{(b)}:=(\bbX^{(b)})'\diag(\bbX^{(b)}(\bbX^{(b)})')^{-1}\bbX^{(b)},$$
	then let
	$$\widehat{T}_n^{(b)}(0):=\sqrt{n}\left(\frac{\Vert\bbR^{(b)}\Vert}{n}-\mbE[\fM_{1,1}(1)]\right).$$
	\item Sorting all \(\widehat{T}_n^{(b)}(0)\) in ascending order, i.e. \(\widehat{T}_n^{(1*)}(0)\leq\cdots\leq \widehat{T}_n^{(B*)}(0)\), we accept \(H_0\) \eqref{Eq of unit root test} if
	$$\widehat{T}_n(0)\in\left[T_n^{([\alpha B]*)}(0),T_n^{([(1-\alpha)B]*)}(0)\right],$$
	where \(\alpha\in(0,1)\) is the significance level.
\end{enumerate}

\end{document}